\newtheorem{theorem}{Theorem}
\newtheorem{lemma}{Lemma}
\newtheorem{corollary}{Corollary}
\newtheorem{claim}{Claim}
\newtheorem{observation}{Observation}
\newtheorem{remark}{Remark}
\newtheorem*{learning-lemma}{Learning Lemma}
\theoremstyle{definition}
\newtheorem{definition}{Definition}
\mathchardef\ordinarycolon\mathcode`\:
\DeclareMathOperator*{\argmax}{arg\,max}
\newcommand{\snr}{{\psi}}
\begin{document}
	
	\onehalfspacing

	\title{\textbf{Learning from Manipulable Signals}\footnote{This paper originated from two related but independent projects. All authors contributed equally to the current paper. {We are grateful for the helpful comments from Alp Atakan, Costas Cavounidis, Doruk Cetemen, Brett Green, Sam Jindani, Aaron Kolb,  Aditya Kuvalekar, Elliot Lipnowski,  Andrey Malenko, Harry Di Pei, Jo\~{a}o Ramos, Yiman Sun, Philipp Strack, as well as seminar and conference participants at Stanford, MIT Sloan, Boston University, Boston College and the 2020 Econometric Society World Congress.}  \textit{Email:} Ekmekci (mehmet.ekmekci@bc.edu); Gorno (leandro.gorno@fgv.br); Maestri (lucas.maestri@fgv.br); Sun (jiansun@mit.edu); Wei (dwei10@ucsc.edu).}}
	\author{%
		\begin{tabular}{ccccc}
			\textsc{\small Mehmet Ekmekci%\footnote{mehmet.ekmekci@bc.edu}
			} & \textsc{\small Leandro Gorno%\footnote{leandro.gorno@fgv.br}
			} & \textsc{\small Lucas Maestri%\footnote{lucas.maestri@fgv.br}
			} & \textsc{\small Jian Sun%\footnote{jiansun@mit.edu}
			} & \textsc{\small Dong Wei%\footnote{dong.wei@berkeley.edu}
			}\\
			{\small \textit{Boston College}} & {\small \textit{FGV EPGE}} & {\small \textit{FGV EPGE}} & {\small \textit{MIT Sloan}} & {\small \textit{UC Santa Cruz}}%
		\end{tabular}
	}
	\date{\today}
	\maketitle
	\thispagestyle{empty}
	
	\begin{abstract}
		We study a dynamic stopping game between a principal and an agent. The agent is privately informed about his type. The principal learns about the agent's type from a noisy performance measure, which can be manipulated by the agent via a costly and hidden action. We fully characterize the unique Markov equilibrium of this game. We find that terminations/market crashes are often preceded by a spike in (expected) performance. Our model also predicts that, due to endogenous signal manipulation, too much transparency can inhibit learning. As the players get arbitrarily patient, the principal elicits no useful information from the observed signal.
		\bigskip
		
		\noindent \textit{Keywords:} Asymmetric information, learning, signal manipulation, venture capital\\
		\noindent \textit{JEL classification:} C73, D82, D83, G24, M13
	\end{abstract}
	
	\newpage
	\clearpage
	\pagenumbering{arabic} 
	
	\section{Introduction}
	Asymmetric information is pervasive in long-term relationships; meanwhile, learning often takes place during the interactions between different parties. For instance, venture capital (VC) firms face asymmetric information in their investments: startups often have better information about the odds of success of their projects than the investors \citep{brealey1977,chan1983,gompers2004venture}. Moreover, due to the private benefits from receiving continuous funding,\footnote{An extreme example is the former CEO of WeWork, Adam Neumann, who allegedly purchased a corporate jet with the company's money for personal use.} startups are willing to pursue projects that are less viable than what VCs are willing to invest in. VCs, upon agreeing to finance a startup, receive periodical performance reports (subscription growth, number of patents, media and user reviews, etc.) from the startup. These reports may provide information about the viability of the startup. However, the startup may undertake hidden actions to inflate the performance report, tampering with its informativeness. Examples include rideshare platforms who periodically announce their numbers of users and could inflate such statistics by specialized promotions, and Luckin Coffee and Theranos who have been under investigation for fabricating key performance data.
	
	We  analyze learning problems with asymmetric information and hidden actions, and investigate the equilibrium learning dynamics. In our model, a principal (VC) and an agent (startup) are engaged in a relationship that takes place in continuous time. Performance reports are modeled as public signals evolving according to a Brownian motion whose drift depends on the agent's privately-known type and action. If the agent is an \textit{investible} type, then the drift is $\mu>0$; if the agent is a \textit{noninvestible} type, then the drift is $0$ by default, but the agent can take a costly action to boost the drift up to $\mu$. The signals serve only an informational role, and do not affect the principal's payoffs. The principal receives opportunities to terminate the relationship according to a Poisson process, and chooses whether to terminate the relationship whenever such an opportunity arises;\footnote{The Poisson arrival of stopping opportunities captures the frictions in the principal's decision making and implementation, and will technically help us avoid off-path histories in our equilibrium analysis.} she prefers to continue the relationship with the investible type and to terminate the relationship against the noninvestible type.
	
	We study Markov equilibria of this game where the state variable is the public belief that the agent is a noninvestible type. We call the complementary probability, i.e., the probability that the agent is an investible type, the agent's reputation. Our first result establishes the existence and uniqueness of Markov equilibrium. 
	
	In the unique equilibrium, the principal's termination strategy has a cutoff structure --- the principal terminates the relationship if and only if the agent's reputation is sufficiently bad. The agent's equilibrium strategy depends on the magnitude of his discount rate. If his discount rate is greater than a cutoff (i.e., if he does not care much about the future), then he never engages in costly performance boosting. If his discount rate is less than the cutoff (i.e., if he is patient enough), then the agent does not engage in performance boosting when his reputation is very good or very bad, but will do so with intermediate reputation. In particular, the intensity of performance boosting is hump-shaped in the agent's reputation and \textit{peaks} at the principal's termination cutoff.
	
	Our first qualitative finding concerns the relationship between the agent's reputation and the expected performance, measured by the expected drift of the signal from an outsider's perspective. If the agent is so impatient that he never engages in performance boosting, then the expected performance is increasing in the agent's reputation (decreasing in the state variable). However, when the agent is more patient and engages in some performance boosting, the expected performance is non-monotone in the agent's reputation. Starting from an initial good reputation, as the agent's reputation deteriorates, the expected performance first declines, reaching a local minimum, and then it rises, reaching a local maximum precisely at the principal's termination cutoff, and decreases again thereafter (see Figure \ref{fig:nonmonotone}). This finding may help explain why some startups deliver impressive performance reports, such as large sales growth (e.g., Luckin Coffee), extraordinary revenue flow (e.g., Theranos) or rapid expansion (e.g., WeWork), not long before investors pull their funds. It is also consistent with the observation that growing market suspicion and strong (expected) performance can coexist for a period of time.
	
	%Our second qualitative result concerns the relationship between the amount of information transmission and the transparency of the performance measure. Due to random events such as demand shocks and measurement errors, performance reports are imperfect signals of the agent's type and action,  and we use the signal-to-noise ratio of the process to capture its transparency. In reality, differences in disclosure standards can lead to differences in transparency. We investigate how the principal's equilibrium payoff changes with the signal-to-noise ratio, and show that, due to the agent's endogenous signal manipulation, the principal may be worse off as transparency improves. This result has a policy implication regarding disclosure requirements for startups: more stringent disclosure requirements may harm the investors.
	
	%{\color{purple} This paragraph may change. We don't want this to be about comparative statics. Here, we consider an environment where the volatility is low by looking at the limit case where the signal to noise ratio grows without bound. One finding is that, in the limit, frictions for the principal may help learning. Second, for low frictions,  the principal learns nothing. One takeaway is that a bit volatility of the environment or some frictions in the reaction speed of the principal may help the principal. How? Commitment possibility. Commitment to slow learning or commitment to slow firing may facilitate information revelation by the agent.} 
	Our second qualitative result concerns the relationship between the amount of information transmission and the transparency of the performance measure. Due to random events such as demand shocks and measurement errors, performance reports are imperfect signals of the agent's type and action,  and we use the signal-to-noise ratio of the process to capture its transparency. In reality, transparency may be determined by the volatility of the product market and may also be affected by how much detail a startup is required to disclose.  %{\color{blue} This is a dangerous sentence }
	We show that, due to the agent's endogenous signal manipulation, the principal may be worse off as transparency improves. This result suggests that VCs can sometimes fare better when the startup initially operates in a more volatile market, and that policies that require disclosing too precise information may end up hurting the investors. 
	We also find that exogenous delays in the principal's decision making can sometimes help her, as they facilitate information transmission by reducing the agent's incentive to manipulate the signal.

	Specifically, if the opportunity to terminate the relationship arrives at a rate less than a cutoff (the high-friction case), then in equilibrium, the agent never engages in performance boosting too aggressively  because termination is always unlikely. In this case, as the signal-to-noise ratio grows, the information flow in the principal's optimal stopping problem approaches immediate revelation of the agent's type, which benefits the principal.
	
	On the other hand, if the termination opportunity arrives at a rate greater than the aforementioned cutoff (the low-friction case), then the agent has stronger incentives to engage in performance boosting. We find, perhaps surprisingly, that the principal's payoff is nonmonotone in the signal-to-noise ratio of the performance report, implying that the principal can be worse off when the performance measure becomes more transparent (less noisy). We obtain this result by looking at two extreme cases. At one extreme, if the performance report is independent of the agent's type and action (i.e., uninformative signals), then the principal can never learn about the agent's type and will receive her ``no-information" value. At the other extreme, as the signal-to-noise ratio grows without bound, we show that the principal cannot utilize any information about the agent's type either. Intuitively, in this case the agent will engage in performance boosting aggressively, for otherwise his type would be revealed rapidly. Such aggressive performance boosting is anticipated by the principal, and thus largely reduces the informativeness of the signal. As a result, the principal's equilibrium payoff converges to her ``no-information" value. In contrast to the extreme cases, for intermediate values of the signal-to-noise ratio, the principal will learn some information about the agent's type and get a payoff strictly above her ``no-information" value.
	
	Finally, we investigate the equilibrium outcomes as players get arbitrarily patient. We find a strong manifestation of the ratchet effect in the patient limit of our model. Since the principal cannot commit to refraining from using future information against the agent, a patient agent will engage in performance boosting with almost full intensity in order to maintain his reputation. In the limit, no useful information is revealed, and the principal's lack of commitment hurts her in the most extreme way.\footnote{This result holds if both players get arbitrarily patient at the same rate, or if the agent gets patient at a faster rate than does the principal.}

	While the leading application of our model is the VC-startup relationships, we believe that the economic forces identified by our analysis are relevant in other scenarios, such as voter-politician, manager-worker and purchaser-supplier relationships, where learning with asymmetric information is a critical aspect. On the technical side, our choice of modeling this game in continuous time enables us to obtain semi-closed-form expressions that describe the key equilibrium properties.\footnote{Besides papers reviewed below, recent works that exploit the tractability of continuous-time methods include \cite{demarzo2012}, \cite{bonatti2016}, \cite{ortner2017}, \cite{cisternas2018res} and \cite{varas2020}, among others.} However, our most substantive technical contributions lie in the asymptotic analysis, namely Theorems \ref{t:noisehelps} and \ref{t:patientlimit}, where the lack of a fully closed-form solution presents additional challenges. To deal with them, we establish a new Learning Lemma (see Claim \ref{lem:learninglemma2} and Lemma OA.5) that allows us to measure how frequently those beliefs under which the agent's mimicking intensity is low are visited. This result, while interesting on its own and useful in other settings, %\footnote{Indeed, Lemma OA. 5 in the Online Appendix is a continuous-time analogue of the well-known learning result in \citet[Theorem A.1]{Fudenberg:1992p39}.} 
	enables us to tackle discontinuities which arise in the limit of equilibria of our model. At the end of Section \ref{sec:tranparency}, we explain in more detail how our proofs combine the Learning Lemma with the semi-closed-form solution to obtain these limit results.

	\medskip
	\noindent\textbf{Related Literature. }
	Our paper is most closely related to the reputation literature and the literature on dynamic games with stopping decisions. 
	
	Most of the reputation literature --- starting with \cite{Kreps:1982p524} and \cite{Milgrom:1982p371}, and later generalized by \cite{Fudenberg:1989p83,Fudenberg:1992p39} and most recently by \cite{pei2018reputation} --- investigates whether and how much a long-lived informed player can benefit from its private information in repeated games played against myopic opponents.  The focus is typically on the case where the informed party is arbitrarily patient, and on bounding the informed player's equilibrium payoffs.\footnote{There are also papers that bound equilibrium payoff of the informed player with long-lived uninformed players, e.g., \cite{Schmidt:1993p328,CrippsThomas97,celetani1996,ataekmek,atakan15}.} In contrast, our analysis fully characterizes (Markov) equilibrium behavior for all discount rates, and we uncover new qualitative features that the equilibrium dynamics exhibit.\footnote{Studies on reputation dynamics include \cite{mailath2001,phelan2006public,liu2011,ekmekci2011,liu2013,liu2014}. However, these papers do not share similar equilibrium dynamics or qualitative results that we obtain partly because they look at repeated moral hazard games and/or the uninformed parties are myopic.}
	
	\cite{faingold2011} study reputation effects in games played in continuous time with one long-lived informed player against myopic opponents, and they characterize the set of sequential equilibria. Unlike \cite{faingold2011}, the uninformed player in our game is forward-looking and can terminate the game. More importantly, the termination payoffs depend on the informed player's type, creating \textit{interdependence} of payoffs between the players \citep[similar to][]{pei2018reputation} and thus making their characterization not applicable to our model.
	
	%There is a growing interest in stopping time games in the continuous-time framework. A close and complementary paper to ours is \cite{sun2020}. He studies a stopping game %in continuous time 
	%where the public signal follows a Poisson process, and the agent can censor bad news by paying a cost. Despite some modeling differences, both his paper and ours study the incentives of an informed agent to manipulate the principal's signal to benefit from it. The models differ mainly in the signal process being a jump process in his and a diffusion process in ours, and that the agent can completely censor bad news after observing its realization in his model while the agent only controls the drift of the diffusion process in ours. The equilibrium censoring intensity in his model is monotone in the agent's reputation while it is non-monotone in our model. Besides, his analysis focuses on the welfare implications of the censoring activity, while we examine the welfare effects of better transparency and the ratchet effect at the patient limit. {\color{green} [censoring v.s manipulation]}
	
	There is a growing interest in dynamic games with stopping decisions. \cite{gry2020}, \cite{daley2012}, \cite{kolb2015,kolb2019}, \cite{dilme2019}, \cite{ekme2019} and \cite{sun2020} all study stopping games with two long-lived players, where the uninformed party receives information over time and obtains type-dependent payoffs. \cite{gry2020} introduces type dependence into a real options model wherein the principal's payoff depends on both an exogenously evolving state and the agent's type; in his model, the agent's action is perfectly monitored implying that the principal's inference problem is of a different nature.  In \cite{daley2012}, \cite{kolb2015} and \cite{dilme2019}, the informed player makes the stopping decision while in our paper such decision is made by the uninformed player. This makes the players' incentives in our model quite different from theirs. In \cite{kolb2019}, the agent can only influence the information process by irreversibly changing his type, while in our model the agent can directly manipulate the signal, with his type being persistent. Besides, the qualitative results on the equilibrium dynamics in our paper do not have a counterpart in these papers. \cite{ekme2019} study a similar setting in discrete time, and focus solely on the limiting case with arbitrarily patient players. %Our Theorem \ref{t:patientlimit} is the continuous-time version of their main finding, while we obtain much richer equilibrium dynamics for any fixed discount rate. 
	While our Theorem \ref{t:patientlimit} is a continuous-time analogue of their main finding, we obtain much richer equilibrium dynamics for any fixed discount rate and develop alternative techniques for conducting asymptotic analysis. 
	Finally, \cite{sun2020} studies dynamic censorship with Poisson news, wherein the agent can decide whether to show or hide the bad news after privately observing its realization.\footnote{In \cite{sun2020}, the equilibrium censoring intensity is monotone in the agent's reputation while in our model, the intensity of performance boosting is non-monotone. Besides, his analysis focuses on the welfare implications of the censoring activity, while we examine the welfare effects of better transparency and the ratchet effect at the patient limit.}

	\cite{jackson2016} and \cite{kuvalekar2020} also study dynamic games (in discrete and continuous time, respectively) between two long-run players with stopping decisions. However, the nature of uncertainty and agent's actions in their models are quite different from ours. Specifically, both papers look at a career-concern type of model with symmetric information between the two players, while the agent's actions affecting the signal process are \textit{costless} to the agent and \textit{observable} to the principal. By contrast, in our model the agent has private information about his type, and his action is \textit{costly} and \textit{hidden}. This necessarily makes the principal's inference problem more delicate, as she has to form a conjecture about the agent's action which need coincide with the agent's actual strategy in equilibrium.  Moreover, in our model the agent's trade-off is between improving his reputation and saving the mimicking cost, while in their models the agent is optimizing over the speed of learning (i.e., variance, rather than drift, of the belief process). \cite{orlov2020} also consider a dynamic setting with stopping decisions and symmetrically informed players, and they study the agent's optimal information disclosure policy in a persuasion game.

	\section{Model}
	
	\subsection{Players, types, actions, and information flow}
	A principal (she) and an agent (he), both risk-neutral, interact in continuous time $t\in\left[0,\infty\right)$. 
	
	At any time $t$, an exogenous stopping opportunity arrives according to a Poisson process \(\{J_t\}_{t \geq 0}\) with rate \(\lambda>0\). When the said opportunity arrives, the principal chooses whether to \textit{continue} or irreversibly \textit{stop} the game. The Poisson arrival of stopping opportunities captures the frictions in the principal's decision making and implementation.\footnote{Technically, this assumption ensures that there is no off-equilibrium history/belief. In Remark \ref{rm:1} we discuss what happens as the frictions vanish, i.e. as $\lambda\to \infty$.
		None of our main results requires the frictions to be significant: they hold either for all $\lambda$, or when $\lambda$ is sufficiently large.}
	
	The agent can be one of two types, denoted by $\theta$: an \textit{investible} type ($\theta=I$), or a \textit{noninvestible} type ($\theta=NI$). The agent's type is his private information. From the principal's viewpoint, the initial probability that the agent is a noninvestible type is $p_0\in(0,1)$.
	
	There is a public signal \(\{X_t\}_{t \geq 0}\) that evolves over time. If the agent is an investible type, then the public signal evolves according to the process: 
	\begin{equation*}
		dX_t = \mu dt+ \sigma dB_t,
	\end{equation*}
	where \(\{B_t\}_{t \geq 0}\) is a standard Brownian motion. Without loss, we assume that $\mu>0$ and $\sigma>0$, and we define the \textit{signal-to-noise ratio} $\snr$ of the process as $\snr\equiv \mu/\sigma$. If the agent is a noninvestible type, he chooses an $\alpha_t\in[0,1]$ at any time $t$ when the game has not stopped yet. In this case, his choice controls the drift of the public signal process:
	\begin{equation*}
		dX_t = \mu \alpha_t dt+\sigma dB_t.
	\end{equation*}
	%{\color{red}An important parameter of our model is the \textit{signal-to-noise ratio} $\snr$ of the signal process, defined as $\snr\equiv \mu/\sigma.$}
	
	The model assumes that the investible type does not have any action choice, and the evolution of the public signal is exogenous conditional on this type (always having a drift of $\mu$). Meanwhile, the noninvestible type chooses a \textit{mimicking intensity}, which can be interpreted as the probability with which the noninvestible type acts the same as the investible type.
	In our leading application of VC investments, we can interpret the public signal as performance reports from the startup and the mimicking action taken by the noninvestible type as \textit{performance boosting}.

	\subsection{Strategies}
	The investible type of the agent does not have an action choice. A strategy for the noninvestible type is a stochastic process \(\{\alpha_t\}_{t \geq 0}\), which takes values in $[0,1]$ and is progressively measurable with respect to the filtration generated by \( \{B_t\}_{t \geq 0}\). Let $\mathcal{A}$ be the set of strategies for the agent.
	
	A strategy for the principal is a stochastic process \(\beta \equiv\{\beta_t\}_{t \geq 0}\), progressively measurable with respect to the filtration generated by \(\{X_t,J_t\}_{t \geq 0}\), which represents the probability with which the principal takes the stopping action conditional on the arrival of a stopping opportunity. Let $\mathcal{B}$ be the set of strategies for the principal.\footnote{We note that the principal only observes the public signal, while the agent knows his own past actions, and thus can recover \(\{B_t\}_{t \geq 0}\) by removing the drift term.}
	
	%Every strategy profile, \(\left(\alpha,\beta\right)\), together with the prior belief $p_0$, induces a probability measure $\mathbb{P}_{(\alpha,\beta)}$ over the outcomes. To keep the notation simple, we will drop the dependence of $\mathbb{P}$ on $(\alpha,\beta)$ whenever it does not cause confusion. {\color{green} [remove first two sentences]}
	Given a strategy profile \(\left(\alpha,\beta\right)\) and a prior $p_0$, the principal updates her belief about the agent's type using Bayes' rule, and we let  \(\{p_t\}_{t \geq 0}\) denote the belief process defined by
	\begin{equation}\label{eq:ptdef}
		p_t := \mathbb{P}\left\{\theta = NI \phantom{.}\middle|\phantom{.} \{X_s\}_{s \leq t}\right\}.
	\end{equation}
	Note that the belief process $p_t$, conditional on a continuing relationship, is determined by the strategy of the agent and not affected by the strategy of the principal or the arrival of stopping opportunities.

	\subsection{Payoffs}
	If the game is stopped, the agent receives his outside option which we normalize to $0$. If the game is not yet stopped, the noninvestible agent receives a flow payoff that depends on his action, %$u+\mathbbm{1}_{\{A_t=0\}}c$
	$u+(1-\alpha_t)c$, where $u>0$ and $c>0$.\footnote{Interpreting the noninvestible type as choosing between mimicking ($A_t=1$) or not ($A_t=0$) and $\alpha_t$ as the probability of taking the mimicking action, we can think of the noninvestible type's flow payoff as defined by $u+\mathbbm{1}_{\{A_t=0\}}c$.}$^,$\footnote{The flow payoff of the investible type in the relationship is always some positive constant, say, $u+c$.} That is, if the noninvestible agent does (not) mimic the investible type then his flow payoff in the relationship is $u$ (resp., $u+c$); thus, $c$ is the flow cost of mimicking. For a given strategy profile,  \(\left(\alpha, \beta\right)\), the expected discounted payoff of the noninvestible agent at time \(t\) is given by
	\begin{equation*}
		U_1(t,%\left\{B_s\right\}_{s \leq t},
		\alpha, \beta) := \mathbb{E}\left\{\int_t^\mathbb{T} e^{-r_1 (\tau-t)} r_1 \left[u+(1-\alpha_\tau) c\right] d\tau \phantom{.}\middle|\phantom{.} \theta = NI,\left\{B_s\right\}_{s \leq t} \right \},
	\end{equation*}
	where $\mathbb{T}$ is the random time at which the game stops and the expectation is taken over $\mathbb{T}$. This expression can be simplified to
	\[
	U_1(t,%\left\{B_s\right\}_{s \leq t},
	\alpha, \beta) := \mathbb{E}\left\{\int_t^{\infty} e^{-\Lambda_1(t,\tau,\beta)} r_1 \left[u+(1-\alpha_{\tau}) c\right] d\tau \phantom{.}\middle|\phantom{.} \theta = NI,\left\{B_s\right\}_{s \leq t} \right\},
	\]
	where we define the discounting exponent (taking into account the agent's discount rate $r_1$ and the termination probability)
	\[
	\Lambda_1(t,\tau,\beta) := \int_t^\tau (r_1 + \lambda \beta_s)ds.
	\]
	
	The principal's flow payoff does not depend on the agent's action or the public signal,\footnote{This  assumption seems reasonable in our leading example of venture capital investments, wherein an investor's payoff is mainly driven by the viability (type) of the startup rather than its performance in the initial financing period, while the initial performance is still informative to the investor about the startup's type.} and we normalize her flow payoff to zero. However, the principal receives a lump-sum payoff of $w_{NI}>0$ if the game stops against a noninvestible type, and $w_I<0$ if the game stops against an investible type. That is, relative to continuing the relationship, the principal prefers stopping against a noninvestible type but dislikes terminating an investible type.  Thus, given a strategy profile  $(\alpha,\beta)$, the expected discounted payoff of the principal at time $t$ is given by
	\[
	U_2(t,%\left\{X_s\right\}_{s \leq t},
	\alpha,\beta) := \mathbb{E}\left\{\int_t^\infty e^{-\Lambda_2(t,\tau,\beta)} \lambda \beta_\tau \left(\mathbbm{1}_{\{\theta = NI\}} w_{NI}+\mathbbm{1}_{\{\theta = I\}} w_I\right)d\tau \phantom{.}\middle|\phantom{.} \left\{X_s\right\}_{s \leq t}  \right\},
	\]
	where we define the discounting exponent (taking into account the principal's discount rate $r_2$ and the termination probability)
	\[
	\Lambda_2(t,\tau,\beta) := \int_t^\tau (r_2 + \lambda \beta_s)ds.
	\]
	Note that $U_2(t,\alpha,\beta)$ is calculated conditional on the stopping opportunity \textit{not} arriving (or having been forgone) at time $t$.
	
	\section{Discussion of Model Assumptions}
	The essential ingredients of our model are the following:
	
	\begin{enumerate}
		
		\item  The agent has private information about his type, and wants to stay in the relationship for as long as possible.
		
		\item  The principal faces a learning problem about the agent's type; she prefers to terminate against the nonivestible type and to continue with the investible type.
		
		\item The noninvestible type can manipulate the drift of a noisy signal at a cost to mimic the investible type's performance.\footnote{Our equilibrium characterization in Theorem \ref{uniqueness} is robust to introducing some strategic behavior to the investible type. See Remark \ref{rm:2} for a discussion.}%\footnote{\color{red}The assumption that the investible type does not have an action choice makes him resemble the ``commitment/behavioral" type in the reputation literature. If the investible type can \textit{costlessly} choose a drift while the noninvestible type remains the same, then the unique equilibrium we later characterize will remain an equilibrium in this alternative game. However, multiple equilibria will appear. For example, both types choosing $0$ drift and the principal ignoring the signal is always an equilibrium in that case. \color{green}[We could put it here, or make it Remark 2 right after Remark 1 (Remark 1 is about $\lambda\rightarrow\infty$). Also, I think if we assume both players' Markov strategies are continuous, then the trivial equibrium and the equilibrium we characterize are the only two Markov equilibria in this alternative game.]}
		
		\item The signal only serves an informational role and is payoff-irrelevant to the principal.
	\end{enumerate}
	In addition to these assumptions, we adopt a normalization of flow payoffs and outside options to simplify the exposition. 
	Below, we present an alternative but equivalent formulation, which fits better with our VC examples and provides a foundation to the principal's simplified payoff structure.
	
	Suppose that the principal's outside option is independent of the agent's type and is equal to 0. By continuing the relationship the principal incurs a flow cost equal to $b>0$. There is a \textit{revealing event} that arrives according to a Poisson process with rate $\delta$, independent of the agent's type and the signal process. When the event arrives, the game ends delivering a lump-sum payoff to the principal. This payoff is equal to $\pi_I>0$ if the agent is investible and $\pi_{NI}=0$ otherwise. The flow cost represents the continuous financial inputs that the VC contributes to the startup. The revealing event corresponds to the VC's realization of the startup's profitability (type), and the ensuing type-dependent lump-sum payoffs correspond to the value of the startup to the VC upon learning its type. As in the original formulation, the principal can terminate the relationship whenever a stopping opportunity arrives. The arrival follows a Poisson process at rate $\hat{\lambda}$, and is independent of the revealing event. These intermittent stopping opportunities  capture the frictions that are inevitable in a VC's decision making and implementation. For example, the withdrawal of funding may be decided only through board meetings which are called upon once in a while; moreover, a VC that wants to liquidate its shares in a startup may have to wait some time until a buyer shows up.
	
	The (noninvestible) agent's flow payoff is $\hat{u}$ if he engages in performance boosting and  $\hat{u}+\hat{c}$ otherwise. The agent receives a payoff of 0 when the relationship ends, either because the principal terminates it or the revealing event occurs.\footnote{We could also assume that the investible type gets a positive lump-sum reward when the revealing event occurs, but this will not change the game in any way because the revealing event is out of everyone's control.} The discount rates of the agent and the principal are $\hat{r}_1$ and $\hat{r}_2$, respectively.
	
	This formulation is strategically equivalent to the benchmark model with a type-dependent outside option  for the principal. The equivalence is achieved through the following transformation of parameters, which can be verified by standard calculations. The agent's flow payoffs are identical across the two formulations, i.e., $u=\hat{u}$, $c=\hat{c}$, and so is the arrival rate of the principal's stopping opportunity, $\lambda=\hat{\lambda}$. The implied discount rates are augmented by the arrival rate of the revealing event, i.e., $r_1=\hat{r}_1+\delta$, $r_2=\hat{r}_2+\delta$. And finally, the principal's type-dependent outside options are given by
	\[    w_I = \frac{\hat{r}_2b-\delta\pi_{I}}{\hat{r}_2+\delta},\quad w_{NI}=\frac{\hat{r}_2b}{\hat{r}_2+\delta}.
	\]
	As long as $\pi_I>\frac{\hat{r}_2}{\delta}b$ (i.e., if the reward to the principal upon learning that the agent is investible is large enough), we have $w_I<0<w_{NI}$, as in our baseline model. 
	
	Finally, our model focuses solely on the adverse-selection aspect of the problem, by assuming that performance reports do not directly affect the principal's payoff. This seems reasonable in contexts such as VC-startup relationships, wherein an investor's payoff is mostly driven by the viability (type) of the startup rather than its performance in the initial financing period, though the initial performance is informative about the startup's type.\footnote{Moreover, we have verified that our equilibrium characterization still holds even after allowing for some dependence of the principal's flow payoff on the agent's action. This analysis is not included in the manuscript and is readily available upon request.}

	\section{Equilibrium Characterization}
	\subsection{Equilibrium Concept}
	An \textbf{equilibrium} is a strategy profile \((\alpha,\beta)\) such that 
	\begin{eqnarray*}
		U_1(t,%\left\{B_s\right\}_{s \leq t},
		\alpha,\beta) &\geq& U_1(t,%\left\{B_s\right\}_{s \leq t},
		\tilde{\alpha},\beta), \\
		U_2(t,%\left\{X_s\right\}_{s \leq t},
		\alpha,\beta) &\geq& U_2(t,%\left\{X_s\right\}_{s \leq t},
		\alpha,\tilde{\beta}),
	\end{eqnarray*}
	for all alternative strategies \(\tilde{\alpha}\in\mathcal{A}\) and \(\tilde{\beta}\in \mathcal{B}\), almost surely for all $t\geq 0$.%\footnote{Note that for any $t>0$, $U_i(t,\alpha,\beta)$ is player $i$'s expected payoff conditional on the respective filtration at time $t$. Thus, $U_i(t,\alpha,\beta)$ itself is a random variable and the inequality about $U_i$ in this definition is interpreted to hold almost surely (i.e., for almost all histories up to $t$).}
	
	Let%\[
	%\mathcal{P}^*:=\{f:(0,1)\to[0,1],f\text{ is measurable}\},
	%\] 
	%and
	\footnote{A function $f:(0,1)\to [0,1]$ is piecewise Lipschitz if there exist $n\in \mathbb{N}$ and $0= x_1<x_2<...<x_n=1$, such that for each $i\in \{1,...,n-1\}$, there exists a Lipschitz function $f_i$ on $[x_i,x_{i+1}]$ such that $f_i(p)=f(p)$ for all $p\in(x_i,x_{i+1})$.}
	\[
	\mathcal{P}:=\{f:(0,1)\to[0,1],f\text{ is right-continuous and piecewise Lipschitz}\}.
	\] 
	Recall that the belief process defined in \eqref{eq:ptdef} is determined by the agent's strategy. We say that a strategy $\alpha$ of the agent is \textit{Markovian} if there exists \textit{policy function} $a\in \mathcal{P}$ such that $\alpha_t=a(p_t)$ for all $t\geq 0$. An equilibrium \((\alpha,\beta)\) is \textbf{Markovian} if there exist policy functions \(a,b\in \mathcal{P}\) such that \(\alpha_t = a(p_t)\) and \(\beta_t = b(p_t)\) for all \(t\geq 0\). In this case, we say that the policy profile $(a,b)\in \mathcal{P}^2$ is \textit{induced} by \((\alpha,\beta)\). 
	
	Given a Markovian equilibrium $(\alpha,\beta)$, let $\text{SP}(\alpha)$ be the set of posteriors reached on the equilibrium path.\footnote{Consider a Markovian equilibrium, $(\alpha,\beta)$, and the underlying probability space $\left( \Omega ,\mathfrak{F,}%
		\mathbb{P}\right) $.  For each $p\in \left( 0,1\right) ,$ we
		define $\Phi \left( p\right) :=\left\{ \omega \in \Omega :\exists t\leq 
		\mathbb{T}\text{ such that }p_{t}(\omega )=p\right\} $, where $\mathbb{T}$ is the equilibrium stopping time. The belief span, $\text{SP}(\alpha)$,
		is the set of all $p$ such that $\mathbb{P}\left( \Phi \left( p\right)
		\right) >0$. Because $p_t$ in a continuing relationship depends only on the agent's strategy $\alpha$ and the principal's stopping opportunity may not arrive at any $t$, the belief span is also solely determined by $\alpha$. Consequently, this notion of belief span can be defined for any (Markovian) strategy of the agent.} 
	
	\begin{lemma}\label{lem:fullspan}
		Any Markovian equilibrium $(\alpha,\beta)$ with an induced policy profile $(a,b)$ satisfies  $i)$ $\sup_{p\in (0,1)}a(p)<1$, and $ii)$ $\text{SP}(\alpha)=(0,1)$.
	\end{lemma}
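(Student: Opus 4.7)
The plan is to establish (i) via a one-shot deviation on the Bayesian filter and then derive (ii) using standard diffusion/martingale facts together with independence of the Poisson termination clock.

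For (i), I would first record the filtering equation under the principal's conjecture $a(\cdot)$: the posterior evolves as $dp_{t}=\eta(p_{t})\,d\tilde B_{t}$ with $\eta(p):=p(1-p)\mu(a(p)-1)/\sigma$, where $\tilde B$ is the innovation Brownian motion, so the diffusion coefficient vanishes exactly where $a(p)=1$. Suppose for contradiction $a(p^{*})=1$ at some $p^{*}\in(0,1)$. Starting from $p^{*}$, $\eta(p_{t})\equiv 0$ pins the belief at $p^{*}$ regardless of the agent's action, because any alternative action only modifies the drift of $\tilde B$, and this modified drift is multiplied by $\eta(p_{t})=0$ in the belief equation. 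The noninvestible type's equilibrium value from $p^{*}$ is then $r_{1}u/(r_{1}+\lambda b(p^{*}))$, whereas the deviation $\alpha_{t}\equiv 0$ leaves the belief pinned yet pockets $c$ forever, yielding $r_{1}(u+c)/(r_{1}+\lambda b(p^{*}))$—a strict improvement, contradicting equilibrium. Because every $f\in\mathcal{P}$ decomposes into finitely many closed-interval-Lipschitz pieces, this pointwise statement gives $\sup a<1$, except possibly in the residual case where $a(p)\uparrow 1$ along a Lipschitz piece while $a$ has a downward jump at the break point $x_{i+1}$; I would close that gap via the interior HJB first-order condition $|V'(p)|\,p(1-p)\mu^{2}(1-a(p))/\sigma^{2}=r_{1}c$, which in that scenario would force $|V'(p)|\geq C/(x_{i+1}-p)$ for some $C>0$, contradicting boundedness of the noninvestible type's value $V$ via non-integrability of $V'$.

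For (ii), once (i) delivers $\eta(p)>0$ throughout $(0,1)$, the belief is a continuous martingale with strictly positive volatility on the interior; since $\eta(p)\sim p$ near $0$ and $\eta(p)\sim 1-p$ near $1$, Feller's test identifies both endpoints as natural (inaccessible in finite time). Bounded martingale convergence, together with the incompatibility of convergence and accumulated quadratic variation at an interior point, yields $p_{t}\to p_{\infty}\in\{0,1\}$ a.s., and optional stopping gives $\mathbb{P}(p_{\infty}=1)=p_{0}$ and $\mathbb{P}(p_{\infty}=0)=1-p_{0}$, both strictly positive. For any target $q\in(0,1)\setminus\{p_{0}\}$, path continuity forces the belief to cross $q$ on the positive-probability event that $p_{\infty}$ equals the boundary lying on the far side of $q$; hence $\mathbb{P}(\tau_{q}\le T)>0$ for some $T>0$. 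Since the Poisson process $\{J_{t}\}$ is independent of $\{B_{t}\}$, the event ``no stopping opportunity in $[0,T]$'' has probability $e^{-\lambda T}>0$ and is independent of $\{\tau_{q}\le T\}$; on their intersection the belief reaches $q$ strictly before any termination opportunity arrives, so $q\in\text{SP}(\alpha)$, giving $\text{SP}(\alpha)=(0,1)$.

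I expect the main obstacle to be the residual break-point case in (i): the one-shot-deviation argument handles the strict case $a(p^{*})=1$ cleanly, but ruling out $a(p)\uparrow 1$ without attainment along a Lipschitz piece requires the HJB/integrability sharpening above. After that, (ii) is a routine combination of Feller boundary classification, martingale convergence, and the diffusion/Poisson independence.
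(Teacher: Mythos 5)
Your argument is circular, and the circle is visible at exactly the step you flag as ``routine.'' The one-shot deviation at $a(p^*)=1$, and equally the HJB first-order condition you fall back on at break points, are optimality statements \emph{at the belief $p^*$}; the equilibrium concept only requires optimality ``almost surely for all $t$,'' i.e., along realized belief paths. At a belief that is never reached, nothing constrains $a$. So your part~(i) argument can at best deliver $a(p)<1$ (and the integrability bound on $V'$) for $p\in\text{SP}(\alpha)$. But $\text{SP}(\alpha)=(0,1)$ is part~(ii), which you derive \emph{from} (i) via Feller and martingale convergence --- and that step also needs (i): if $\text{SP}(\alpha)=(\underline p,\bar p)\subsetneq(0,1)$ with $a(p)\uparrow 1$ as $p\uparrow\bar p<1$, the diffusion coefficient degenerates at the interior level $\bar p$, the boundary classification of $\{0,1\}$ is beside the point, and the belief martingale can converge to $\bar p$ without ever hitting it, so no contradiction emerges from your sketch. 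That configuration is precisely what a proof of this lemma must rule out. (A subsidiary issue: the first-order condition $|V'(p)|\,p(1-p)\snr^2(1-a(p))=r_1c$ is obtained in the paper from a classical HJB with a $C^1$ value function; Lemma~\ref{lem:fullspan} concerns arbitrary Markovian equilibria, where that regularity is not supplied.)

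The paper breaks the circle by proving (ii) first, with no a priori bound on $1-a$. Claim~\ref{cl:spanopen} is your Step~A in disguise, but applied at the span's endpoint rather than at a hypothetical interior absorbing point: if $\bar p$ were in $\text{SP}(\alpha)$, the bounded martingale would be stuck at its supremum, the agent's flow incentive would force $a(\bar p)=0$, and the resulting positive volatility is a contradiction, so $\text{SP}(\alpha)$ is open. The real work is Claims~\ref{claim_2}--\ref{claim_4}, which rule out a strict subinterval using incentives tied to the principal's cutoff (Claim~\ref{cl:cutoffstructure}): when $\bar p>p^*$, termination near $\bar p$ is imminent regardless of play, so deviating to $\alpha\equiv 0$ gains $c$ essentially for free, and equilibrium forces the time-average of $1-a$ near $\bar p$ to stay bounded below, giving the belief enough quadratic variation to cross $\bar p$; when $\bar p\leq p^*$, the game never stops inside the span, a Girsanov/absolute-continuity argument (Claim~\ref{claim_3}) shows the deviation $\alpha\equiv 0$ also never triggers termination and so pays $u+c$ forever, forcing $a\equiv 0$ and again making $\bar p$ accessible. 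Only with $\text{SP}(\alpha)=(0,1)$ in hand does the paper deduce $a(p)<1$ pointwise (your absorbing-state argument) and then bound the two limits $p\to 0,1$ by explicit deviation comparisons (Claims~\ref{claim_5}--\ref{claim_6}) to get $\sup a<1$. Your Feller-plus-independence route to (ii) is correct and arguably cleaner than the paper's \emph{given} (i), but the dependence runs the other way: (ii) must come first.
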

	
	Lemma \ref{lem:fullspan} says that in any Markovian equilibrium the agent's action is always bounded away from full mimicking and that every posterior belief is reached with positive probability. Intuitively, if the noninvestible type is expected to choose $a=1$ at some belief $\hat{p}$, then the signal from that time on becomes uninformative, making $\hat{p}$ an absorbing state. But if the belief is not moving, the noninvestible type's best reply at state $\hat{p}$ is to choose $a=0$, which violates the (implicit) requirement that the agent's equilibrium action must coincide with the principal's conjecture about his action. In fact, one can show that an equilibrium policy function $a(\cdot)$ must be bounded away from $1$, and thus the variance of the belief process is always bounded away from $0$.\footnote{This result holds for any fixed value of the parameters. The lower bound on the volatility of the belief process, $1-\sup_{p\in (0,1)} a(p)$, depends on players' discount rates, the arrival rate of the Poisson process, and the
		signal to noise ratio. Therefore, the informativeness of the public signal can get arbitrarily
		close to zero in some cases (e.g., when the signal to noise ratio is very high or the agent
		is very patient, as we will see in Sections \ref{sec:tranparency} and \ref{sec:patience}).} Together with the Poisson arrival of stopping opportunities, this makes all interior beliefs reachable on the equilibrium path.
	
	Given a Markovian equilibrium $(\alpha,\beta)$, the continuation payoff at time $t$ depends only on the public belief $p_t$. Hence, we define the value function of the (noninvestible) agent as
	\[
	V(p) := \mathbb{E}\left\{U_1(t,\alpha,\beta)\phantom{.}\middle| \phantom{.} p_t = p, \theta = NI\right\}
	\]
	and the value function of the principal as
	\[
	W(p) := \mathbb{E}\left\{U_2(t,\alpha,\beta)\phantom{.}\middle| \phantom{.} p_t = p\right\}
	\]
	for every \(p \in (0,1)\).

	We say that a value function is \textbf{regular} if it is continuously differentiable everywhere, and twice continuously differentiable everywhere except perhaps at a finite number of points. We say that a Markovian equilibrium \((\alpha,\beta)\) is \textbf{smooth} if the associated value functions are regular and the agent's policy function $a(\cdot)$ is Lipschitz. We refer to smooth Markovian equilibria simply as \textbf{Markov equilibria}.\footnote{We emphasize that ``smoothness" is built into our definition of the term ``Markov equilibrium." We do not look for non-smooth Markovian equilibria in this paper, and any claim about equilibrium uniqueness does not rule out the possibility of non-smooth Markovian equilibria.} Moreover, when there is no confusion, we denote a Markov equilibrium by the policy profile $(a,b)$ that it induces.
	
	\subsection{Characterization}
	
	We first introduce some terminology to define properties of policy functions for the principal and the agent. Recall that the state variable $p$ is the principal's belief that the agent is noninvestible.
	\begin{definition}
		The policy function $b\in \mathcal{P}$ for the principal has a \textbf{cutoff structure} if there exists $\tilde{p}\in[0,1]$ such that $b(p)=0$ for $p<\tilde{p}$, and $b(p)=1$ for $p>\tilde{p}$. We refer to $\tilde{p}$ as the cutoff belief of $b$.
	\end{definition}
	
	\begin{definition}
		The policy function $a\in \mathcal{P}$ for the (noninvestible) agent is \textbf{fully separating} if $a(p)=0$ for all $p\in(0,1)$. 
	\end{definition}

	\begin{definition}\label{def:humpshaped}
		The policy function $a\in \mathcal{P}$ for the (noninvestible) agent is \textbf{hump-shaped} if $a$ is continuous and there are cutoffs $0<p_{L}<p^*<p_{R}<1$ such that $a(p)=0$ for $p\leq p_L$, strictly increasing on $(p_L,p^*)$, strictly decreasing on $(p^*,p_R)$, and $a(p)=0$ for $p\geq p_R$. 
	\end{definition}

	\begin{theorem}\label{uniqueness}
		There always exists a unique Markov equilibrium $(a,b)$. In this equilibrium, $b$ has a cutoff structure with some cutoff belief $p^*\in (0,1)$. Moreover, there exists $r^*>0$ such that
		\begin{enumerate}
			\item If $r_1 \geq r^*$, then $a$ is fully separating.
			\item If $r_1<r^*$, then $a$ is hump-shaped and is maximized at $p^*$.
		\end{enumerate}
	\end{theorem}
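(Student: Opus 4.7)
\textbf{Setup via filtering and HJBs.} I would first fix a Markov conjecture $a\in\mathcal{P}$ for the noninvestible agent and apply standard nonlinear filtering to obtain that, under the principal's filtration, the posterior is a martingale with $dp_t = -\frac{\mu}{\sigma}(1-a(p_t))\,p_t(1-p_t)\,d\tilde B_t$, where $\tilde B$ is the innovation Brownian motion. A noninvestible agent who deviates to action $\alpha$ instead sees an extra drift in $p_t$ proportional to $\alpha-a(p_t)$. This yields the principal's HJB $r_2 W = \tfrac{1}{2}\omega(p)^2 W''$ in the continuation region, with $\omega(p):=\frac{\mu}{\sigma}(1-a(p))p(1-p)$, and, since the agent's control $\alpha$ enters linearly, an indifference condition for any interior $a(p)\in(0,1)$ of the form
\[
a(p)\;=\;1-\frac{r_1 c\,\sigma^2}{\mu^2\,p(1-p)\,(-V'(p))},
\]
which I denote $(\star)$ below. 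Before invoking $(\star)$, I would establish that $V$ is strictly decreasing on $(0,1)$ (a better reputation is always preferred for the noninvestible type), so that $-V'>0$ and the formula is meaningful. Away from the indifference set, equilibrium forces $a(p)=0$.

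\textbf{Cutoff for the principal and definition of $r^*$.} The stopping payoff $p w_{NI}+(1-p)w_I$ is affine and strictly increasing in $p$; at $p=1$ stopping strictly dominates continuation (since $w_{NI}>0>0=W(1)$) while at $p=0$ continuation strictly dominates stopping (since $w_I<0$). Standard optimal-stopping arguments then yield a cutoff $p^*\in(0,1)$ such that $W$ satisfies value matching and smooth pasting at $p^*$, and the continuation region is $(0,p^*)$. I would then define $r^*$ as the largest $r_1>0$ for which the zero-mimicking ODE for $V$ (solving the agent's HJB on either side of $p^*$ with $a\equiv 0$ and discount rate augmented by $\lambda$ on the right of $p^*$) still satisfies the no-mimicking inequality $\mu^2 p(1-p)(-V'(p))\le r_1 c\sigma^2$ at every $p\in(0,1)$. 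A scaling monotonicity argument in $r_1$ (the LHS is decreasing in $r_1$ for the no-mimicking solution) shows that $r^*\in(0,\infty)$ and that the two cases of the theorem correspond exactly to the regimes where $a\equiv 0$ is consistent (resp.\ inconsistent) with the indifference condition.

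\textbf{Construction in each regime and hump shape.} For $r_1\geq r^*$, I set $a\equiv 0$; the agent's HJB becomes a linear second-order ODE whose piecewise solution on $(0,p^*)$ and $(p^*,1)$ is pinned down by the appropriate behaviour at $p=0,1$ and $C^1$-matching at $p^*$. The principal's ODE is then linear and can be solved with $W(0)=0$, value matching, and smooth pasting at $p^*$; a single equation in $p^*$ remains, and I would verify it has a unique solution. For $r_1<r^*$, the mimicking region is a (maximal) interval $(p_L,p_R)$ on which $(\star)$ binds; substituting $(\star)$ into the agent's HJB eliminates the dependence of the diffusion coefficient on $a$ in a tractable way and produces an autonomous (semi-explicit) ODE for $V$ on $(p_L,p_R)$. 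This is patched with the zero-mimicking ODEs on $(0,p_L)$ and $(p_R,1)$ via $C^1$-matching, automatic because $a(p_L)=a(p_R)=0$ by continuity of the right-hand side of $(\star)$. The hump shape and the peak at $p^*$ follow because $V''$ jumps across $p^*$ (the coefficient on $V$ in the HJB jumps from $r_1$ to $r_1+\lambda$), so that $p(1-p)(-V'(p))$ achieves its maximum precisely at $p^*$; strict monotonicity of $a$ on each side of $p^*$ is read off the sign of the derivative of $p(1-p)(-V'(p))$ through the respective ODE.

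\textbf{Uniqueness---the main obstacle.} The hardest step is uniqueness, because $p^*$, the mimicking region, and $V$ are determined simultaneously through a coupled free-boundary problem with a discontinuous coefficient at $p^*$. My plan is to parametrize candidate equilibria by a hypothetical cutoff $\hat p\in(0,1)$, construct the unique corresponding agent best-response $(V(\cdot;\hat p),a(\cdot;\hat p))$ from the ODE analysis of the previous paragraph, then let $\Phi(\hat p)$ be the principal's smooth-pasting cutoff given conjecture $a(\cdot;\hat p)$, and show that $\Phi:(0,1)\to(0,1)$ has a unique fixed point. The key step is a monotone-comparative-statics argument: raising $\hat p$ weakly raises $V(\cdot;\hat p)$ pointwise (lower termination exposure), which through $(\star)$ weakly lowers $a(\cdot;\hat p)$, hence increases the informativeness $\omega$ of the signal, hence raises $W$ in the continuation region, which shifts the principal's optimal cutoff in a predictable direction. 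Combined with continuity of $\Phi$ and boundary behaviour at $\hat p\to 0,1$, this single-crossing property yields uniqueness of the fixed point and hence of the Markov equilibrium.
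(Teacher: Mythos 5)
Your overall architecture lines up with the paper's: establish a cutoff for the principal, derive an affine-in-control agent HJB whose first-order condition yields $(\star)$, classify the agent's pseudo-best reply as either $a\equiv 0$ or hump-shaped, pin down $r^*$ by asking when the zero-mimicking solution violates the no-deviation inequality at the cutoff, and then close the loop with a fixed-point argument. The formula $(\star)$ is the correct $p$-space translate of the paper's Claim A.1 (after noting $v'(z)=V'(p)p(1-p)$). However, there is a genuine gap in the uniqueness step, and a smaller one in the hump-shape argument.

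The uniqueness argument you propose---a single-crossing/monotone-comparative-statics property of the map $\hat p\mapsto\Phi(\hat p)$---does not go through as stated. First, $(\star)$ depends on $V'$, not on $V$, so ``raising $V$ pointwise'' does not imply ``lowering $a$ pointwise.'' Second, and more fundamentally, the agent's problem in the log-likelihood ratio variable $z=\log(p/(1-p))$ is \emph{translation invariant}: raising the conjectured cutoff from $\hat z$ to $\hat z+\epsilon$ does not uniformly reduce $a$, it \emph{translates} the entire profile $a(\cdot)$ rightward by $\epsilon$. Thus $\omega$ and $W$ do not move monotonically in $\hat p$; they are reshuffled in a way that your chain of implications does not capture, and the direction you would need ($\Phi$ increasing slower than the identity, or decreasing) is not established. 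In fact the heuristic chain you sketch, to the extent it gives anything, would push $\Phi$ \emph{up} with $\hat p$, which is the wrong direction for single crossing. The paper sidesteps all of this: it proves a \emph{conditional} translation invariance for the principal's type-conditional payoffs (Lemma OA.2), which implies that the principal's first-order condition, evaluated at her own conjectured equilibrium cutoff, depends only on relative distances in $z$ and hence reduces to an equation $p^*=\frac{D_{NI}(0,0,0)w_{NI}}{D_{NI}(0,0,0)w_{NI}-D_I(0,0,0)w_I}$ whose right-hand side is \emph{independent} of $p^*$. That is the actual reason $\Phi$ has at most one fixed point (in effect $\Phi$ is constant), and it is not recoverable from your comparative-statics heuristic. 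You would need to discover and exploit this invariance (or something of equivalent strength) for your plan to close.

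Two secondary points. The hump shape and exact location of the peak at $p^*$ do not follow simply because ``$V''$ jumps across $p^*$''; the jump tells you $a$ is not smooth at $p^*$ but not that it is monotone on each side. The paper derives the ODE identity $1-a(z)-a'(z)=2(v(z)-u)/c$ in the mixing region (and its analogue to the right of $z^*$), and from this plus the boundary conditions $a(z_L)=a(z_R)=0$ deduces strict monotonicity on $(z_L,z^*)$ and $(z^*,z_R)$; that argument is genuinely needed. Finally, your plan omits the preliminary Markovian belief-span lemma (every interior belief is reached, $\sup a<1$), which is what makes the principal's best response and the fixed-point argument well-posed without off-path belief specifications; this should be proved, not assumed.
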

	
	Theorem \ref{uniqueness} characterizes the structure of the unique Markov equilibrium. First, the principal uses a cutoff strategy. This follows from the type-dependent stopping payoff of the principal, and the absence of flow payoffs.
	
	Second, the noninvestible agent's behavior depends on his discount rate. If he is impatient (i.e., with a high discount rate), then he never mimics the investible type, because he always finds the saving of the mimicking cost to outweigh the benefit of having a better reputation. A richer dynamics opens up if the agent is patient (i.e., with a low discount rate). In this case, his behavior can be described by three reputation phases: good, medium and bad, as depicted in Figure \ref{fig:agentbestreply}. Both in the good and the bad reputation phases, the noninvestible type does not mimic the investible type at all, but for different reasons: when his reputation is good ($p<p_L$), the relationship is highly stable, so the noninvestible agent gains little from further improving his reputation by mimicking; when his reputation is bad ($p>p_R$), termination is so imminent that the noninvestible agent gives up building reputation. In the intermediate phase, however, the noninvestible type first starts to mimic more often as his reputation worsens in order to slow down the principal's learning. We interpret this as a \textit{``scramble-to-rescue" effect}: the agent increases his mimicking intensity (before $p^*$) as the relationship gets less stable. After certain point, he gradually gives up as the relationship becomes doomed. His mimicking intensity is highest at belief $p^*$ when the principal's action switches  from continuing the relationship to termination.\footnote{Even though the principal's optimal action switches at $p^*$, the agent's mimicking intensity does not immediately drop to $0$ right after the belief passes $p^*$. This is because the relationship can only be terminated when a stopping opportunity arrives, leaving some hope for the agent to rebuild his reputation and avoid termination.}
	
	%{\color{red}That the agent's mimicking intensity reaches its peak around $p^*$ can be understood from an equilibrium perspective. If the agent manipulates more \textit{in equilibrium}, then the principal believes that the signal is less informative and her belief is less responsive to the signal realizations, which reduces the agent's incentive to manipulate. When the principal's decision is very sensitive to the agent's reputation (namely, at reputations around the cutoff $p^*$), the agent's manipulation has to be high in order to partially neutralize the effect of such sensitivity, so that in equilibrium, the marginal cost of manipulation equals the marginal benefit.}
	
	That the agent's mimicking intensity reaches its peak around $p^*$ can be understood from an equilibrium perspective. In general, the agent's incentive to manipulate depends on: i) how responsive the belief is to signal realizations; ii) how sensitive the principal's decision is to belief changes. If the agent manipulates more \textit{in equilibrium}, then the principal believes that the signal is less informative and her belief is less responsive to the signal realizations, which reduces the agent's incentive to manipulate through (i). When the principal's decision is very sensitive to the agent's reputation (namely, at reputations around the cutoff $p^*$), the agent's manipulation has to be high in order to partially neutralize the effect of such sensitivity, and ensure that the marginal cost of manipulation equals the marginal benefit.
	
	\begin{figure}[h!]
		\centering
		\includegraphics[width=0.8\linewidth]{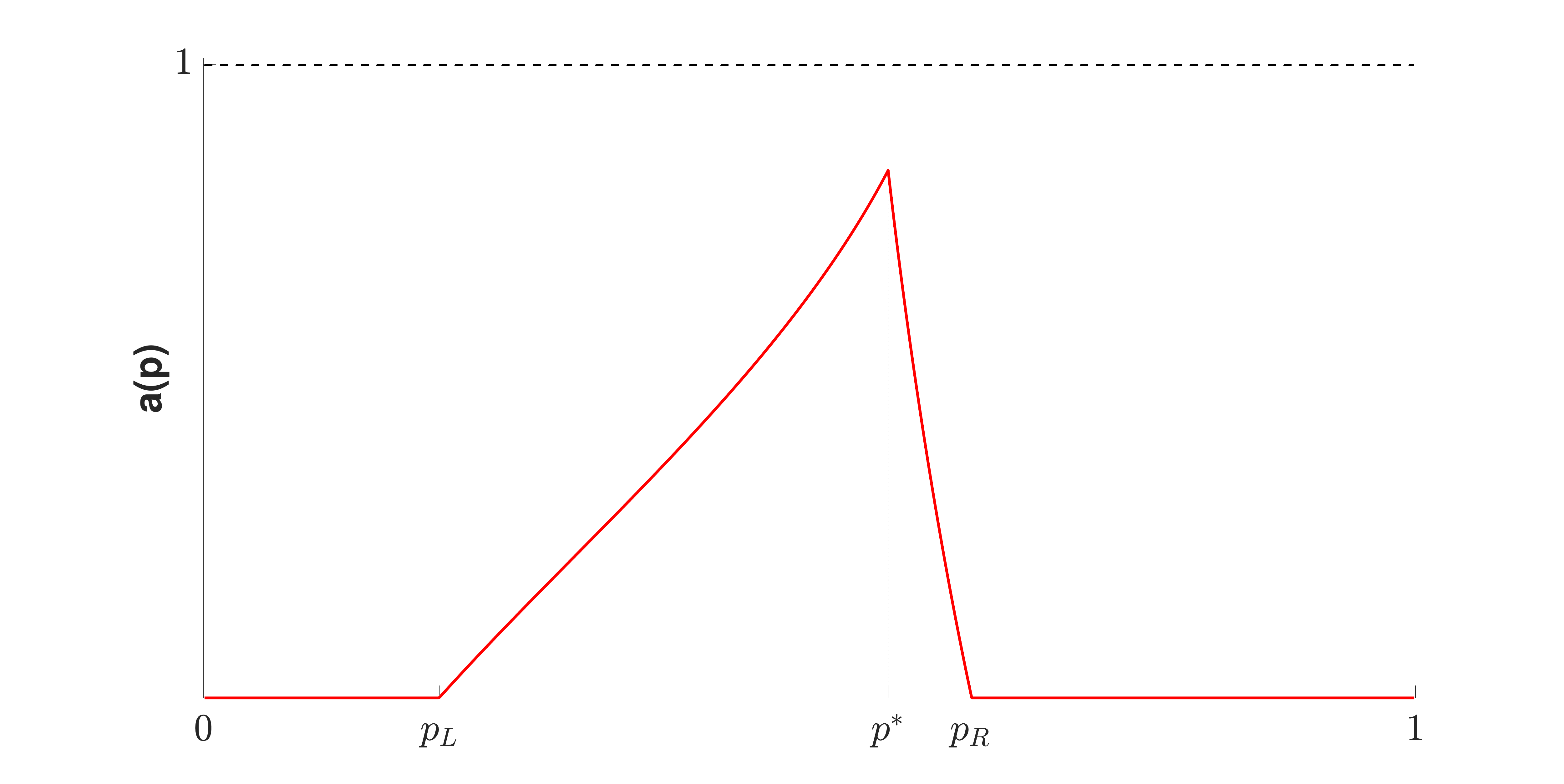}
		\caption{Agent's Equilibrium Policy Function When $r_1<r^*$. This figure is plotted under the following parameter values: $r_1=0.5$, $r_2=0.5$, $\lambda=2$, $\snr=1.5$, $u=1$, $c=1$, $w_{NI}=1$, $w_I=-1$. In equilibrium, $p_L\approx 0.195$, $p^*\approx 0.565$, $p_R\approx 0.633$.}
		\label{fig:agentbestreply}
	\end{figure}
	
	Finally, the cutoff discount rate, $r^*$, can be characterized in closed form. Specifically, $r^*$ is the unique solution to the following equation:
	\begin{equation}\label{eq:r*}
		r^*(\sqrt{1+8r^*/\snr^2} +\sqrt{1+8(r^*+\lambda)/\snr^2}) + \lambda(\sqrt{1+8r^*/\snr^2}+1)=4\lambda\left(\frac{u}{c}+1\right).
	\end{equation}
	Some comparative statics results are readily obtained from \eqref{eq:r*}. In particular, $r^*$ increases with $\lambda$, $\snr$ and $\frac{u}{c}$. This is intuitive, because the noninvestible type will have a higher incentive to mimic if: \textit{i)} the stopping opportunity arrives more frequently and thus the relationship is less stable; \textit{ii)} the signal-to-noise ratio is higher and thus a manipulation of signal is more profitable; \textit{iii)} mimicking is relatively less costly. We also note that $r^*$ does not depend on the principal's payoff parameters ($r_2$, $w_{NI}$ and $w_{I}$).
	
	%%%%%FOR NEXT VERSION, WE COULD THINK ABOUT THE PAYOFF CONSEQUENCE OF VARYING \LAMBDA
	%\noindent\textbf{Remark 1.} 
	\begin{remark}\label{rm:1}
		As the stopping frictions vanish (i.e., as $\lambda\to\infty$), the cutoff discount rate $r^*$ converges from below to a finite number $\bar{r}$. For any fixed $r<\bar{r}$, the agent's equilibrium policy function converges to one that resembles Figure \ref{fig:agentbestreply} for $p<p^*$ and is equal to $0$ for all $p>p^*$, that is, $p_R^*$ converges to $p^*$, creating a discontinuity at $p^*$. This is intuitive because in the limit the agent's incentive when $p<p^*$ is similar to what we explained before, but once the belief is above $p^*$ the agent expects the relationship to be terminated in the next instant regardless of what he does, and thus he should choose $a=0$ to save the mimicking cost.\footnote{Technically, if $\lambda$ is set to $\infty$, i.e., if we literally allow the principal to terminate the relationship whenever she wants, for that game additional refinement is needed to preserve equilibrium uniqueness, because once $p>p^*$ the agent expects the relationship to be terminated right away in which case his action choice in that instant has no payoff consequence. See \cite{kuvalekar2020} for a detailed discussion and a refinement that will select the limiting policy function we described in the limiting game. In our model, the Poisson arrival of stopping opportunities helps us avoid such complications and obtain equilibrium uniqueness for every finite $\lambda$ without additional refinements.}
	\end{remark}
	%\smallskip
	\begin{remark}\label{rm:2}
		The assumption that the investible type does not have an action choice makes him a ``commitment/behavioral" type in the sense of the reputation literature. However, many of the insights of this paper are robust to certain forms of strategic behavior. For instance, if we allow the investible type to costlessly choose a drift, then the unique equilibrium we characterize in Theorem \ref{uniqueness} remains an equilibrium in this modified game.\footnote{The equilibrium is no longer unique though. For example, both types choosing zero drift and the principal ignoring the signal is always an equilibrium in that case.}
	\end{remark}
	
	Below we describe our approach to proving Theorem \ref{uniqueness}. Finding a Markov equilibrium amounts to finding a policy profile $(a,b)$, and a conjecture that the principal holds about the agent's strategy such that: \textit{i)} the principal's conjecture determines her interpretation of public signal histories into her beliefs about the agent's type; \textit{ii)} the principal's policy $b$ is optimal given her conjecture about the agent's strategy; \textit{iii)} the agent's policy $a$ is optimal given $b$ and the principal's conjecture; \textit{iv)} the principal's conjecture coincides with $a$.
	
	Specifically, any Markov equilibrium $(a,b)$ satisfies the optimality conditions stated below.\footnote{The optimality conditions \eqref{eq:optimality-principal} and \eqref{eq:agent-optimality} restrict the players to maximize their payoffs over Markov controls in $\mathcal{P}$. This is for expository purposes and is without loss. In the proof of Theorem \ref{uniqueness} we verify that the equilibrium strategies are mutual best replies among all strategies in $\mathcal{B}$ and $\mathcal{A}$.} 
	
	\underline{Principal's Optimality:}
	
	\begin{equation}\label{eq:optimality-principal}
		b \in \argmax_{\tilde{b}\in \mathcal{P}}\hat{W}(p,a,\tilde{b}), 
	\end{equation}
	where
	\[
	\hat{W}(p,a,\tilde{b}):=\mathbb{E}\{e^{-r_2\nu}\left(\mathbbm{1}_{\{\theta = NI\}} w_{NI}+\mathbbm{1}_{\{\theta = I\}} w_I\right)\},
	\]
	where $p_0=p$, $\nu$ is the time when the game stops, controlled by both $\tilde{b}$ and $\{J_t\}_{t\geq 0}$, and the evolution of \(\{p_t\}_{t \geq 0}\) is given by the SDE \citep[for a formal derivation, see, e.g.,][]{bolten1999}
	\begin{equation}\label{eq:law-of-p}
		dp_t =  -\snr (1-a_t) \gamma(p_t) d\tilde{B}_t,
	\end{equation}
	In (\ref{eq:law-of-p}), $\snr$ is the signal-to-noise ratio parameter, $a_t$ is a function of $p_t$, \(\gamma : [0,1] \to \mathbb{R}_+\) is defined by \(\gamma(p):= p(1-p)\), and \((\tilde{B}_t)_{t \geq 0}\) is the innovation process associated with the filtering of the principal, i.e.,
	\begin{equation}\label{eq:law-of-tilde-B}
		d\tilde{B}_t := \frac{dX_t - \mu (p_ta_t+1-p_t)}{\sigma} = \frac{dX_t}{\sigma}-  \snr (p_ta_t+1-p_t) dt.
	\end{equation}
	
	The optimality condition (\ref{eq:optimality-principal}) requires that $b$  maximizes the principal's payoff when the agent is using policy $a$.
	
	\underline{Agent's Optimality:}

	\begin{equation}\label{eq:agent-optimality}
		a\in \argmax_{\tilde{a}\in \mathcal{P}}\hat{V}(p,\tilde{a},b;a),
	\end{equation}
	and 
	\[
	\hat{V}(p,\tilde{a},b;a):=\mathbb{E}\left\{\int_0^{\nu} e^{-r_1\tau} \left\{r_1 \left[(1-\tilde{a}(p_\tau)) c +  u\right]\right\}d\tau\right\}
	\]
	where $p_0=p$, $\nu$ is the time when the game stops, and the evolution of \(\{p_t\}_{t \geq 0}\) is given by substituting $dX_t=\mu \tilde{a}_tdt+\sigma dB_t$ into equations (\ref{eq:law-of-p}) and (\ref{eq:law-of-tilde-B}). Specifically, from the noninvestible type's perspective, the belief process satisfies:
	\begin{equation}\label{eq:dptagent}
		dp_t=\snr^2(1-a_t)[1-\tilde{a}_t-p_t(1-a_t)]\gamma(p_t)dt-\snr(1-a_t)\gamma(p_t)dB_t.
	\end{equation}
	In a Markov equilibrium $(a,b)$, the principal has a conjecture about the agent's behavior, which determines  how she interprets any history of signal realizations into her belief about the agent's type. This conjecture has to coincide with the agent's policy $a$ in equilibrium. If the agent contemplates a deviation from the equilibrium, this would not affect the processes in equations (\ref{eq:law-of-p}) and (\ref{eq:law-of-tilde-B}) (which jointly describe the dependence of beliefs on the public history), but would affect the process that governs the evolution of $X_t$ (public histories). The necessary condition (\ref{eq:agent-optimality}) requires that the agent does not have a profitable deviation from his equilibrium policy function $a$, when the principal conjectures that the agent is using this policy function.

	We now build on the implications of the necessary conditions outlined above. We first show that in any Markov equilibrium, the principal's policy function has a cutoff structure: she terminates the relationship if and only if the agent's reputation is bad enough. We then show that the agent's equilibrium policy function must be either fully separating (i.e., never mimicking) or hump-shaped. Finally, the existence and uniqueness of Markov equilibrium follow from a fixed-point argument.
	
	Let $R(p):=pw_{NI}+(1-p)w_I$ be the principal's expected payoff if the relationship is terminated at belief $p$. Define $p^{**}:=R^{-1}(0)>0$ and $p_H:=R^{-1}\left(\frac{\lambda}{r_2+\lambda}w_{NI}\right)<1$.
	
	\begin{lemma}\label{lemma:principalbestreply}
		If $(a,b)$ is a Markov equilibrium, then $b$ has a cutoff structure with a cutoff belief $p^*\in [p^{**},p_H]$.
	\end{lemma}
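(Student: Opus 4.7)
My plan is to exploit the fact that $W$ is the value function of an optimal stopping problem with a linear terminal payoff $R(p)=pw_{NI}+(1-p)w_I$ while the belief $(p_t)$ is a bounded martingale under the principal's filtration (by \eqref{eq:law-of-p}). This structure forces the stopping region to be an upper interval, and sandwich estimates coming from simple benchmark strategies pin down its endpoints within $[p^{**},p_H]$.

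First, I would establish three bounds on $W$ by exhibiting explicit feasible strategies. (i) Taking $\tilde{\beta}\equiv 0$ (never stop) yields payoff $0$, so $W(p)\geq 0$. (ii) Taking $\tilde{\beta}\equiv 1$ (stop at the first Poisson arrival $\tau_1$) and using independence of $\tau_1$ from $(p_t)$ together with $\mathbb{E}[R(p_t)]=R(p)$ (linearity of $R$ combined with the martingale property) gives
\[
W(p)\geq \mathbb{E}\bigl[e^{-r_2\tau_1}R(p_{\tau_1})\bigr]=\int_0^\infty \lambda e^{-(\lambda+r_2)t}R(p)\,dt=\tfrac{\lambda}{r_2+\lambda}R(p).
\]
(iii) For any feasible strategy the termination time satisfies $\mathbb{T}\geq \tau_1$ and $R(p_\mathbb{T})\leq w_{NI}$, so
\[
W(p)\leq w_{NI}\,\mathbb{E}\bigl[e^{-r_2\mathbb{T}}\bigr]\leq w_{NI}\,\mathbb{E}\bigl[e^{-r_2\tau_1}\bigr]=\tfrac{\lambda}{r_2+\lambda}w_{NI}=R(p_H).
\]

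The endpoint inclusions are then immediate: by (i), $R(p)<0\leq W(p)$ for $p<p^{**}$, so $b(p)=0$ there; by (iii), $R(p)>R(p_H)\geq W(p)$ for $p>p_H$, so $b(p)=1$ there. To upgrade this to a cutoff structure, I would prove that $W$ is convex on $(0,1)$, which together with linearity of $R$ forces $\{p:R(p)\geq W(p)\}$ to be a closed interval, hence of the form $[p^*,1]$ for some $p^*\in[p^{**},p_H]$. By regularity of a Markov equilibrium, $W$ is $C^1$ and piecewise $C^2$, so it is enough to show $W''\geq 0$ piecewise. On the continuation region the HJB reduces to $r_2 W=\tfrac{1}{2}\snr^2(1-a)^2\gamma(p)^2 W''$, and bound (i) delivers $W''\geq 0$. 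On the stopping region the HJB reads $(r_2+\lambda)W=\tfrac{1}{2}\snr^2(1-a)^2\gamma(p)^2 W''+\lambda R$; bound (ii) yields $(r_2+\lambda)W-\lambda R\geq 0$, and since $1-a$ is bounded away from $0$ by Lemma \ref{lem:fullspan} and $\gamma(p)>0$ on $(0,1)$, the diffusion coefficient is positive, so $W''\geq 0$ as well. $C^1$-pasting at the finite set of regime-switch points lifts piecewise convexity to global convexity.

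The main obstacle is establishing convexity inside the \emph{stopping} region, where the forcing term $\lambda R$ has indeterminate sign relative to $(r_2+\lambda)W$. Bound (ii), obtained by evaluating the ``always stop" strategy exactly via the martingale property of $(p_t)$ and linearity of $R$, is the one non-obvious ingredient: it is precisely what guarantees $(r_2+\lambda)W\geq \lambda R$ on the stopping region, and without it the convexity argument, and hence the cutoff conclusion, would fail.
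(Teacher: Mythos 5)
Your proposal is correct but follows a genuinely different route from the paper's. The paper (Claim OA.2 in the Online Appendix) works directly with $F := W - R$ and establishes single-crossing via a martingale argument at hitting times: assuming an interior interval $(p_a,p_b)$ on which $F>0$ (or an interior point at which $F=0$ in the stopping region), it derives a contradiction by comparing $W(\tilde p)$ to the expected discounted value of $R$ at a hitting boundary, exploiting that $R$ is affine and $p_t$ a bounded martingale. This argument requires neither a verification theorem nor $C^2$-regularity of $W$ and thus applies to general Markovian equilibria, not just smooth ones. You instead first establish global convexity of $W$ from the HJB together with the sandwich $0\leq W\leq\frac{\lambda}{r_2+\lambda}w_{NI}$ and the benchmark lower bound $W\geq\frac{\lambda}{r_2+\lambda}R$, and then read off the single-crossing from concavity of $R-W$; the ingredients you need — bounds (ii)--(iii), positivity of the diffusion coefficient from Lemma~\ref{lem:fullspan}, and the HJB~\eqref{eq:HJBprincipal} — are all legitimately available under the paper's definition of Markov equilibrium (which builds in regularity). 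Your convexity step essentially front-loads what the paper records after Theorem~\ref{uniqueness} as Corollary~\ref{cor:convexity}, proved there by exactly the argument you sketch, so you have reorganized rather than reinvented that piece; what you gain is a clean one-line derivation of the cutoff structure from convexity of $W$ and linearity of $R$, what you lose is applicability to non-smooth Markovian equilibria, which is the generality at which the paper's Claim~OA.2 is pitched. One small gap to close in a write-up: after establishing that $\{p: R(p)\geq W(p)\}$ is an interval of the form $[p^*,1)$, you still need that the strict inequality $R>W$ holds for all $p>p^*$ in order to conclude $b(p)=1$ on $(p^*,1)$, which requires ruling out an interval where $W\equiv R$; this follows from the HJB (if $W=R$ on an interval, then $W''=0$ there, so $r_2W=0$, hence $R=0$ on that interval, contradicting that $R$ is strictly increasing), and should be stated explicitly.
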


	To prove this result, we utilize the optimality condition in (\ref{eq:optimality-principal}). Observe that the equilibrium value function, $W(p)$, is such that $W(p):=\hat{W}(p,a,b)$. Then, the principal's value and policy functions must satisfy the following HJB equation:
	\begin{equation}\label{eq:HJBprincipal}
		r_2 W(p) = \max_{\tilde{b} \in [0,1]} \left\{\frac{1}{2} \snr^2 [1-a(p)]^2 \gamma(p)^2 W''(p) + \lambda \tilde{b} \left[R(p) - W(p)\right] \right\}.
	\end{equation}
	It is clear that \(b(p) = 0\) whenever \(R(p)<W(p)\), and \(b(p)=1\) whenever \(R(p)>W(p)\). In the proof, we show that these functions have a unique intersection point. Moreover, because terminating the relationship when $p<p^{**}$ gives the principal a negative payoff, and because the stopping opportunity arrives only once in a while which bounds her payoff from waiting by $\frac{\lambda}{r_2+\lambda}w_{NI}$, the principal's optimal stopping threshold must be between $p^{**}$ and $p_H$. See Figure \ref{fig:optimalcutoff} for an illustration.
	
	\begin{figure}[ht]
		\centering
		\includegraphics[width=0.8\linewidth]{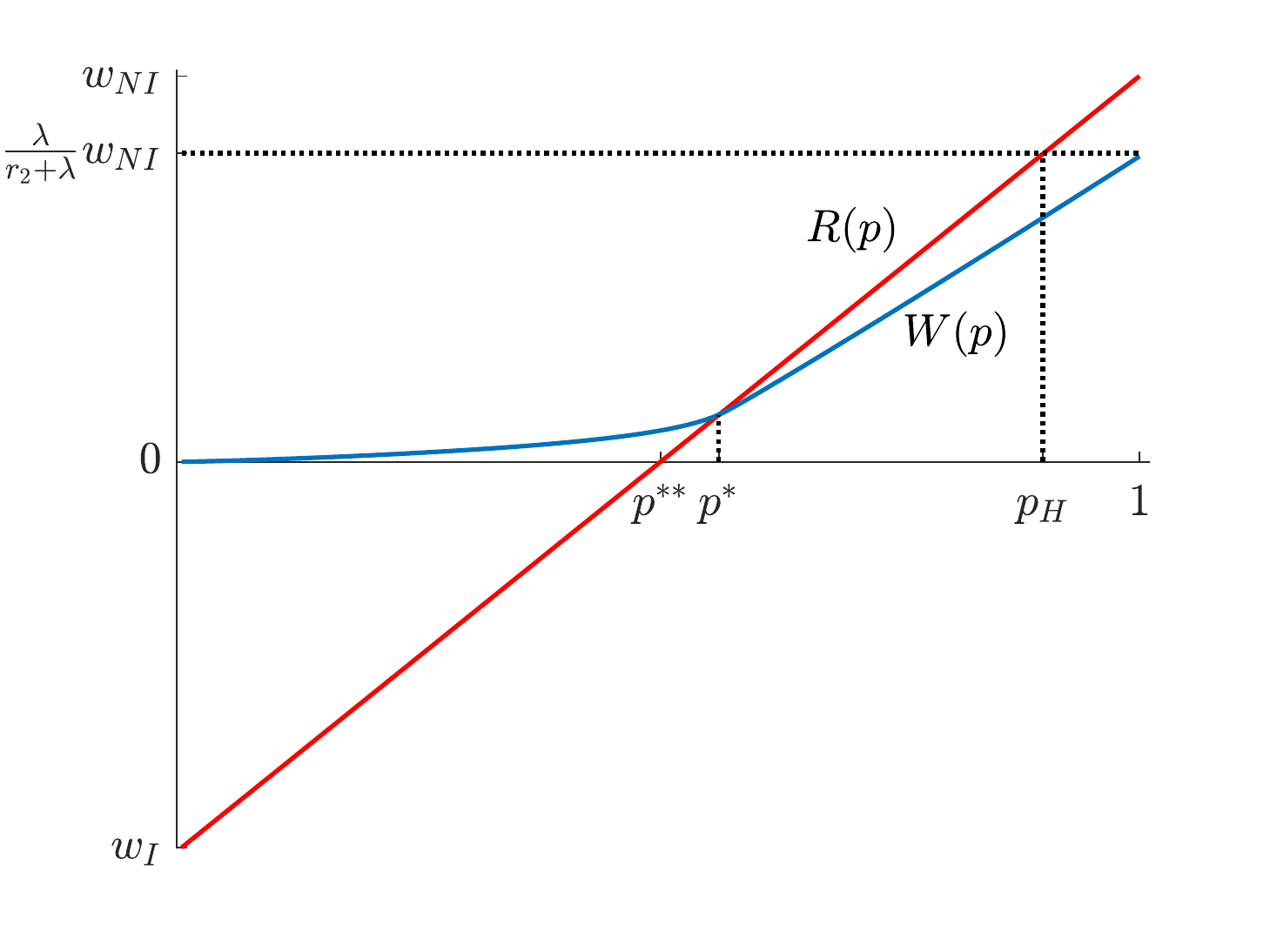}
		\caption{Principal's Equilibrium Cutoff. This figure is plotted under the following parameter values: $r_1=0.5$, $r_2=0.5$, $\lambda=2$, $\snr=1.5$, $u=1$, $c=1$, $w_{NI}=1$, $w_I=-1$. In equilibrium, $p^{**}= 0.5$, $p^*\approx 0.565$, $p_H=0.9$.}
		\label{fig:optimalcutoff}
	\end{figure}

	We now turn to the agent's behavior. 
	
	\begin{lemma}\label{lemma:optimality-agent}
		Suppose $b\in \mathcal{P}$ is a cutoff policy function for the principal with cutoff belief $p^*$. Then, there is a unique policy function  $a \in \mathcal{P}$ for the agent such that i) $\hat{V}(p,a(p),b(p);a(p))$ is a regular function of $p$, ii) $a$ is Lipschitz and $\sup_{p\in (0,1)}a(p)<1$, and iii) $a$ satisfies (\ref{eq:agent-optimality}).  Moreover, this unique policy function is fully separating if $r_1\geq r^*$, and is hump-shaped if $r_1< r^*$.
	\end{lemma}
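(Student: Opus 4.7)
The plan is to exploit the linearity of the agent's Hamilton--Jacobi--Bellman (HJB) equation in the instantaneous action $\tilde a$ to reduce the problem to an ODE for the value function $V$, and then to glue together the ``no-mimicking'' and ``mimicking'' pieces. Writing $V(p) := \hat V(p,a,b;a)$ and using \eqref{eq:dptagent}, standard verification gives the HJB
\begin{equation*}
(r_1+\lambda b(p))V(p) = \max_{\tilde a\in[0,1]} \{ r_1[(1-\tilde a)c+u] + \snr^2(1-a(p))[1-\tilde a - p(1-a(p))]\gamma(p) V'(p) + \tfrac{1}{2}\snr^2(1-a(p))^2\gamma(p)^2 V''(p) \}.
\end{equation*}
The bracket is affine in $\tilde a$ with coefficient $-r_1 c - \snr^2(1-a(p))\gamma(p)V'(p)$; since $V'\le 0$ (a worse reputation hurts the agent, shown by a simple coupling), the equilibrium condition $\tilde a = a$ forces one of two local regimes at each $p$: (i) $a(p) = 0$ with slackness $r_1 c \ge -\snr^2 \gamma(p) V'(p)$, or (ii) $a(p)\in(0,1)$ with indifference $r_1 c = -\snr^2(1-a(p))\gamma(p)V'(p)$.

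I would first analyze the fully separating candidate $a\equiv 0$. Substituting into the HJB yields the linear second-order ODE
\begin{equation*}
\tfrac{1}{2}\snr^2\gamma(p)^2 V_0''(p) + \snr^2(1-p)\gamma(p) V_0'(p) - (r_1+\lambda b(p))V_0(p) + r_1(u+c) = 0,
\end{equation*}
with coefficient $r_1$ on $(0,p^*)$ and $r_1+\lambda$ on $(p^*,1)$, plus boundedness at the endpoints and $C^1$ matching at $p^*$. The change of variable $q = \log(p/(1-p))$ turns the homogeneous equation into one with constant coefficients (the $\gamma(p)$ factors drop out cleanly because $\gamma' = 1 - 2p$ collapses the drift correction), yielding closed-form solutions that pin down a unique bounded $V_0$. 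The next step is to analyze the map $p\mapsto -\gamma(p) V_0'(p)$ and show that it is single-peaked on $(0,1)$, so the global slackness condition reduces to a single inequality at its maximizer. Rearranging this inequality yields $r_1 \ge r^*$ with $r^*$ defined by \eqref{eq:r*}, establishing case~(1) and identifying $r^*$.

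For $r_1 < r^*$ the slackness fails on some open set, so any candidate policy must mimic on an interval $(p_L,p_R)$; by the hump-shape we aim to produce, I will look for this mimicking region to be connected and to contain all slackness-violating beliefs. On this interval the indifference gives $1-a(p) = r_1 c/[-\snr^2 \gamma(p) V'(p)]$, and, using the equilibrium drift $\snr^2(1-a)^2(1-p)\gamma(p)V'(p) = -r_1 c\,(1-a)(1-p)$, the HJB simplifies to
\begin{equation*}
(r_1 + \lambda b(p)) V(p) = r_1 u + r_1 c\, p\,(1-a(p)) + \tfrac{1}{2}\snr^2(1-a(p))^2\gamma(p)^2 V''(p).
\end{equation*}
Differentiating the indifference in $p$ expresses $V''$ in terms of $(a, a', p)$, turning the HJB into a first-order ODE for $a$ on each side of $p^*$. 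The two sides are coupled through $V$ at $p^*$, while at $p_L, p_R$ I impose $a(p_{L,R}) = 0$ together with $C^1$ matching of $V$ onto the outer $V_0$-type solutions on $(0,p_L)$ and $(p_R,1)$. A shooting argument in the unknowns $(p_L, p_R)$ produces existence; monotone dependence of the $V$-matching residuals on $(p_L,p_R)$ gives uniqueness. Hump-shapedness of $a$ follows from the indifference relation combined with the monotonicity of $-\gamma V'$ on each side; Lipschitzness comes from the regularity of the constructed $V$ (which is $C^1$ globally and $C^2$ except at $\{p_L, p^*, p_R\}$); and $\sup a < 1$ is guaranteed because $V'(p^*)$ is bounded away from $-\infty$ --- otherwise the indifference would force $a(p^*) = 1$ in contradiction to Lemma \ref{lem:fullspan}.

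The main obstacle will be Step~3: the first-order ODE for $a$ degenerates as $a \downarrow 0$ at the endpoints $p_L, p_R$, and the two halves are coupled nonlocally through the value of $V$ at $p^*$, so standard existence/uniqueness theorems do not apply directly. I anticipate the uniqueness proof to rely on monotone comparison in $(p_L,p_R)$: shifting an endpoint outward decreases $|V'|$ nearby, which propagates monotonically through the matching conditions to the opposite endpoint, leaving exactly one valid pair. Once $(a, V)$ are in hand, a standard It\^o-based verification argument (exploiting $V \in C^1$ globally and $V \in C^2$ off a finite set) upgrades optimality from Markov best responses to arbitrary progressively measurable competitors $\tilde a \in \mathcal A$, thereby establishing \eqref{eq:agent-optimality} and completing the proof.
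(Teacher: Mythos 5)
Your framework matches the paper's at a high level---same logistic change of variables $z=\log\frac{p}{1-p}$, same exploitation of the linearity of the HJB in $\tilde a$ to split into a ``$a=0$ with slackness'' regime and an ``interior-$a$ indifference'' regime, and the same comparison of the global slackness inequality at the peak to extract the threshold $r^*$. However, there are two genuine gaps. First, you posit that the mimicking region ``is connected and contains all slackness-violating beliefs,'' but this is a claim that must be proved rather than imposed; the paper establishes it (Claims \ref{cl:zbounded} and \ref{cl:zconnected}) by showing boundedness from $v\in[0,u+c]$ and connectedness from the shape restriction on $a$ within any interior-mixing interval, so that any candidate policy is forced a priori to be either identically zero or single-humped with peak at $z^*$. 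Second---and this is the step you yourself flag as ``the main obstacle''---your uniqueness argument rests on an unproved ``monotone comparison'' for a two-parameter shooting in $(p_L,p_R)$, with a first-order ODE for $a$ that degenerates at both endpoints. That monotonicity is not at all obvious and is not how the paper closes the argument.

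The device you are missing is that the indifference ODE $v=u+\kappa_L\,(v''-v')/(v')^2$ (in $z$-space) admits an explicit closed-form solution $v(z)=u+\sqrt{\kappa_L}\,\Phi^{-1}(C_1 e^z+C_2)$, and similarly on the right with $\kappa_R$ and $(D_1,D_2)$. This is what makes the whole reduction tractable: imposing value-matching and smooth pasting at $z_L$ (resp.\ $z_R$) and boundedness as $z\to\mp\infty$ pins down $v_L=v(z_L)$ and $v_R=v(z_R)$ as explicit constants \emph{independent of $v^*$ and $p^*$} (equations \eqref{eq:vL} and \eqref{eq:vR}), which collapses your two-dimensional shooting to a one-dimensional fixed point in $v^*:=v(z^*)$. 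Continuity of $a$ at $z^*$ is then exactly the scalar equation $a^*_-(v^*)=a^*_+(v^*)$, and the paper's Claim \ref{cl:v*unique} shows $a^*_-$ strictly decreasing and $a^*_+$ strictly increasing on $[v_R,v_L]$ with opposite boundary signs, so a unique $v^*$ exists. Without the $\Phi^{-1}$ closed form (or some equivalent explicit integration), neither the one-dimensional reduction nor the monotonicity that delivers uniqueness is available, so the shooting argument you sketch would require a substantially different---and as yet unspecified---analysis to close.
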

	
	The proof of Lemma \ref{lemma:optimality-agent} is more involved. This is because finding a solution to program (\ref{eq:agent-optimality}) is akin to finding a fixed point: the policy $a$ for the agent is optimal when the principal holds the conjecture $a$. Nonetheless, we are able to characterize its unique solution in closed form.
	
	Intuitively, if the agent is impatient, the short-run incentives determine his behavior, and the noninvestible type will never pay the cost to mimic the investible type, leading to a fully separating policy function. If the agent is patient, full separation can no longer be part of an equilibrium. This is because the fully separating policy function, if conjectured by the principal, generates opportunities to build a reputation rather fast; and when the agent cares enough about the future, it will give strict incentives to the noninvestible type to mimic.
	
	What is the dynamics of the agent's mimicking intensity when he is patient? When the  public belief $p$ is very small, it takes a long time for the belief to increase all the way up to the termination cutoff. Hence, the limited benefit of further improving reputation cannot justify the mimicking cost. As a result, $a(p)=0$ for low $p$. When $p$ is very large, it takes so long for the agent to regain his reputation that he simply gives up. Consequently, $a(p)=0$ for high $p$. For intermediate $p$, the agent's short-run temptation and long-run benefits are more balanced, so that $a(p)\in(0,1)$.  Specifically, after the good reputation phase, for $p\in (p_L,p^*)$, the noninvestible type starts to mimic more often as his reputation worsens (i.e., the scramble-to-rescue effect). Such an incentive peaks at $p^*$ where the principal's action is most sensitive to a change in belief. After that, for $p\in (p^*,p_R)$, the noninvestible type gradually gives up restoring his reputation as termination becomes more imminent.

	\section{Non-Monotonicity of Expected Performance}\label{sec:nonmonotone}

	We saw in Theorem \ref{uniqueness} that the noninvestible type will engage in performance boosting whenever he is sufficiently patient, in which case $a(\cdot)$ peaks at the termination cutoff $p^*$. This has been referred to as the ``scramble-to-rescue" effect: the agent increases his mimicking intensity (before $p^*$) as the relationship gets less stable.
	
	What are the implications of this effect on the observables? An outsider (the principal or a modeler) does not see the agent's type or action, but can observe his performance, such as subscription growth or progress reports. In our model, the expected performance at time $t$ is given by  \[EP_t:=\frac{E[dX_t]}{dt}=\mu \left(\underbrace {1-p_t}_{\text{investible}}+\underbrace{p_ta(p_t)}_{\text{noninvestible}} \right).\]
	
	Holding constant $a<1$, the expected performance decreases with $p$. We call this the \textit{belief effect}: if the agent is more likely to be noninvestible, then the expected performance is lower. This is the entire story if the equilibrium is fully separating, in which case $a(p)=0$ everywhere and $EP(p)=\mu (1-p)$.
	
	However, if the equilibrium is not fully separating, the noninvestible type's mimicking intensity $a$ is no longer constant: it is increasing below $p^*$ due to the scramble-to-rescue effect. Hence, whether the expected performance increases or decreases with the public belief depends on which of the two effects is stronger. The following theorem characterizes the evolution of the expected performance when the stopping opportunity arrives sufficiently fast.
	
	\begin{theorem}\label{t:zigzag-shape}
		Fixing all parameters of the model other than $r_1$ and $\lambda$, there exists $\bar{\lambda}$ such that for all $\lambda>\bar{\lambda}$, $EP(p)$ is non-monotone whenever the equilibrium is not fully separating (i.e., whenever $r_1<r^*$). In particular, $EP(p)$ is
		
		\begin{itemize}
			\item strictly decreasing for $p\in[0,\underline{p})$, where $\underline{p}$ is  in $[p_L,p^*)$;
			
			\item strictly increasing for $p\in (\underline{p},p^*)$;
			
			\item strictly decreasing for $p \in (p^*,1]$. 
		\end{itemize}
	\end{theorem}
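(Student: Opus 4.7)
The plan is to split the analysis across the three regions $[0,p_L]$, $(p_L,p^*)$, and $(p^*,1]$, writing $EP(p)=\mu[1-p(1-a(p))]$ so that $EP'(p)=\mu\,\phi(p)$ with $\phi(p):=a(p)+pa'(p)-1$. The boundary regions are routine and do not require any bound on $\lambda$: on $[0,p_L]\cup[p_R,1]$, $a\equiv 0$ gives $EP(p)=\mu(1-p)$; on $(p^*,p_R)$, the hump-shape gives $a(p)<1$ and $a'(p)<0$, so $\phi(p)<a(p)-1<0$; and continuity of $a$ at $p_R$ (with $a(p_R)=0$) patches these to give strict decrease on $(p^*,1]$.

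The heart of the argument is the analysis on $(p_L,p^*)$. Since $b\equiv 0$ there, the agent's HJB does not involve $\lambda$; combining it with the interior first-order condition $V'(p)=-r_1 c/[\snr^2(1-a(p))\gamma(p)]$ and eliminating $V$ and $V''$ yields
\begin{equation*}
\bigl[p^2 a'(p)\bigr]'=\frac{2r_1}{\snr^2\,(1-a(p))(1-p)^2}>0.
\end{equation*}
A direct computation gives $\phi'(p)=[p^2 a'(p)]'/p>0$, so $\phi$ is strictly increasing on $(p_L,p^*)$ and hence $EP'$ changes sign at most once. It therefore suffices to establish $\phi(p^{*-})>0$ for all sufficiently large $\lambda$: combined with the monotonicity of $\phi$, this yields a unique threshold $\underline{p}\in[p_L,p^*)$ (equal to $p_L$ if $\phi(p_L^+)\geq 0$, and strictly interior otherwise) with the claimed behavior.

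For the remaining step, I match $V$ at $p^*$. An analogous derivation on $(p^*,p_R)$ (where the HJB acquires an extra $\lambda V$ term because $b\equiv 1$) gives $V(p)=u+cp(1-a)+\tfrac{c}{2}[(1-a)\gamma'-a'\gamma]$ on $(p_L,p^*)$ and $V(p)=\tfrac{r_1}{r_1+\lambda}\bigl\{u+cp(1-a)+\tfrac{c}{2}[(1-a)\gamma'-a'\gamma]\bigr\}$ on $(p^*,p_R)$. Writing $a^*:=a(p^*)$, $B^{\pm}:=a'(p^{*\pm})\gamma(p^*)$, and $A:=u+cp^*(1-a^*)+\tfrac{c}{2}(1-a^*)\gamma'(p^*)$, continuity of $V$ at $p^*$ simplifies (using $\gamma'(p^*)=1-2p^*$, so that $2A/c=2u/c+(1-a^*)$) to
\begin{equation*}
B^-=\tfrac{2u}{c}+(1-a^*)+\tfrac{r_1}{\lambda}(B^+-B^-).
\end{equation*}
An a priori bound $|a'(p^{*+})|=O(\sqrt{\lambda})$, obtained by integrating the analogous ODE on $(p^*,p_R)$ with boundary $a(p_R)=0$ and estimating $p_R-p^*=O(1/\sqrt{\lambda})$, shows the correction is $o(1)$, so $B^-=2u/c+(1-a^*)+o(1)$. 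Substituting into $\phi(p^{*-})=(a^*-1)+B^-/(1-p^*)$ yields
\begin{equation*}
\phi(p^{*-})=\frac{2u/c}{1-p^*}+(1-a^*)\,\frac{p^*}{1-p^*}+o(1)>0
\end{equation*}
for all $\lambda$ sufficiently large, completing the argument.

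The main obstacle is the a priori estimate $|a'(p^{*+})|=O(\sqrt{\lambda})$ used to make the error term in the $V$-matching vanish; this is a quantitative refinement of the qualitative convergence $p_R\to p^*$ noted in Remark~\ref{rm:1} and requires analyzing the shape of $a$ on $(p^*,p_R)$ via the ODE there together with $a(p_R)=0$. Once this estimate is in hand, the rest reduces to the elementary algebra above combined with the monotonicity of $\phi$.
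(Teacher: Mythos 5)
Your reduction of the problem and the ODE you derive on $(p_L,p^*)$ are both correct, and the latter is in fact cleaner than what the paper does. Differentiating the agent's first-order relation $1-a-a'_z = 2(V-u)/c$, substituting $v'(z)=-r_1c/[\snr^2(1-a)]$, and changing variables to $p$ does give $[p^2 a'(p)]'=\frac{2r_1}{\snr^2(1-a)(1-p)^2}>0$, hence $\phi$ strictly increasing on $(p_L,p^*)$. The paper establishes only the weaker ``at most one sign change'' property (Lemma \ref{lem:EPpossibilities}) by a more roundabout argument mirroring Claim \ref{cl:La<1}, so this is a genuine simplification. Your $V$-matching at $p^*$ is also algebraically right: carried out exactly (no asymptotics at all), it gives $B^-=(1-a^*)+\tfrac{2u}{c}-\tfrac{2v^*}{c}$ and hence the closed identity
\[
\phi(p^{*-})=\frac{(1-a^*)\,p^*}{1-p^*}+\frac{2(u-v^*)}{c(1-p^*)},
\]
which is precisely the paper's Lemma \ref{lem:EPmonotone}: non-monotonicity holds iff $(1-a^*)p^*>2(v^*-u)/c$.

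The gap is in the step you label ``a priori.'' Your $o(1)$ correction is exactly $-\tfrac{2v^*}{c(1-p^*)}$, and the claimed bound $|a'(p^{*+})|=O(\sqrt{\lambda})$ is equivalent to $v^*=O(1/\sqrt{\lambda})$, since $B^+=(1-a^*)+\tfrac{2u}{c}-\tfrac{2(r_1+\lambda)v^*}{r_1c}$. But $v^*$ is determined by the equilibrium fixed-point $a_-^*(v^*)=a_+^*(v^*)$ (equation \eqref{eq:determinev*}), and showing it is small, or merely below $u$, uniformly over all $r_1\in(0,r^*(\lambda))$ at large $\lambda$, is exactly where the paper expends its effort (Lemmas \ref{lem:lambda1}--\ref{lem:lambda3} and Claims \ref{cl:v*<u}, \ref{cl:a+bound}, with three separate sub-cases and delicate Gaussian-tail estimates). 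Indeed, for $r_1$ near $r^*(\lambda)$, $v^*$ does \emph{not} tend to zero --- it converges to $v_L(r^*)$, which is bounded away from $0$ --- and the conclusion there holds only because $v_L<u$ once $r^*(\lambda)>\snr^2$ (i.e., $\lambda>\lambda_1$). So the $v^*\to 0$ framing doesn't even cover the whole range of $r_1$ required by the theorem. Moreover, the estimate $p_R-p^*=O(1/\sqrt{\lambda})$ you propose to use for integrating the ODE on $(p^*,p_R)$ is itself controlled by $v^*$ (through \eqref{eq:zR}), so the argument as stated is circular. What you have, stated exactly rather than asymptotically, is a re-derivation of Lemma \ref{lem:EPmonotone}/Corollary \ref{cor:EPmonotone2}; the remaining and essential work --- proving $v^*<u$ uniformly in $r_1<r^*(\lambda)$ for all large $\lambda$ --- is not an ``a priori'' refinement of Remark \ref{rm:1} but the substance of the theorem.
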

	
	Theorem \ref{t:zigzag-shape} shows that if the arrival rate of the stopping opportunity is large enough,  the scramble-to-rescue effect will dominate the belief effect when the public belief is less than but close to the termination cutoff $p^*$. As a result, the expected performance reaches a local maximum at $p^*$ whenever the agent's equilibrium policy function is hump-shaped (see Figure \ref{fig:nonmonotone}).\footnote{The lower bound on $\lambda$ is not crucial for this qualitative predication. In fact, even if $\lambda$ is small, we can show that the expected performance is either decreasing or has the shape described in Theorem \ref{t:zigzag-shape} (see Lemma \ref{lem:EPpossibilities}). Moreover, for any $\lambda$, we can find a $\bar{r}$ (less than $r^*$) such that the expected performance is non-monotone whenever $r_1<\bar{r}$. The only difference for small $\lambda$ is that we cannot say definitively what will happen when $r_1\in (\bar{r},r^*)$.}$^,$\footnote{The sudden drop of expected performance after $p^*$ carries over to the time domain. Specifically, because $EP(p)$ is locally maximized (thus concave) at $p^*$, one can show that the stochastic process $EP_t$ is a local supermartingale when $p_t=p^*$.}
	
	\begin{figure}[ht]
		\centering
		\includegraphics[width=0.8\linewidth]{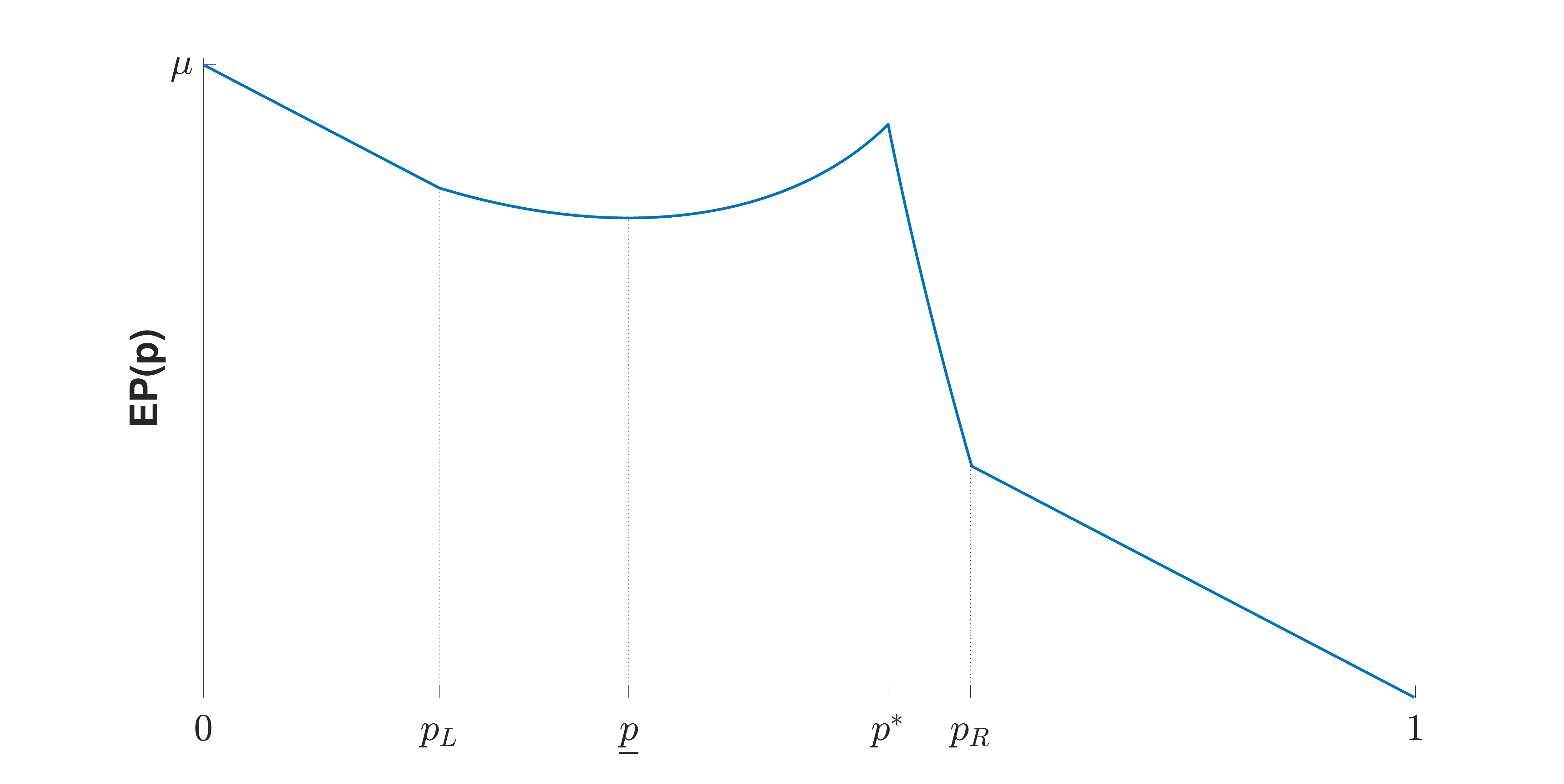}
		\caption{Agent's Expected Performance When $\lambda>\bar{\lambda}$ and $r_1<r^*$. This figure is plotted under the following parameter values: $r_1=0.5$, $r_2=0.5$, $\lambda=2$, $\snr=1.5$, $u=1$, $c=1$, $w_{NI}=1$, $w_I=-1$.}
		\label{fig:nonmonotone}
	\end{figure}

	In the context of our applications, Theorem \ref{t:zigzag-shape} offers an empirical prediction of our model: when performance boosting is expected to happen, terminations are preceded by a spike in expected performance. This seems consistent with a number of famous cases of corporate failure, such as Theranos, Luckin Coffee and WeWork: there were periods of time during which market suspicions about their business models grew, meanwhile the companies kept performing strongly and/or expanding aggressively prior to the crashes of their market values.\footnote{For example, in the third quarter of 2019, Luckin Coffee reported a 470.1\%  increase in the total items sold from 7.8 million in the same quarter of 2018. Its stock price was slashed by 75\% in April 2020, following suspicion and then admission of fabricating sales data. Likewise, before scandals started to unravel, Theranos falsely claimed in 2014 that the company had annual revenues of \$100 million, a thousand times more than the actual figure of \$100,000. In the case of WeWork, the company once had expanded to over 86 cities in 32 countries, despite growing suspicion about its profitability. However, in September 2019, the companied delayed its IPO, followed by a 90\% slash in valuation and enormous layoffs.
	}

	\section{Environments with Low Volatility / High Transparency}\label{sec:tranparency}
	The noise in the performance measure may come from various random events such as temporary demand shocks, measurement errors, etc., making $X_t$ only an imperfect signal of the agent's type. We say that a performance measure is more transparent if it is less affected by the noise component, and we can use the signal-to-noise ratio of the process to capture its \textit{transparency}. %In reality, transparency may be improved (reduced) by having more (less) stringent disclosure requirements. 
	In reality, transparency may be determined by the intrinsic volatility of the product market; it may also be affected by how much detail about the market or the project that a startup is required to disclose in a performance report. Indeed, one may expect that, other things equal, disclosing more details about the objective market conditions can help an investor better understand the numbers in the report. %\footnote{\color{green}[We could further argue that the amount of \textit{hard} (objective) information affect $\sigma$, while things like audit likelihood affect $c$ which is the cost of manipulating \textit{soft} information. Not sure if we want to go that far.]} 
	
	%In this section, we investigate whether improvements in transparency will benefit the principal. 
	In this section, we investigate the question: Do improvements in transparency always benefit the principal?
	We show that, under some parameters, improving transparency can inhibit learning and hurt the principal, due to the agent's endogenous response through signal manipulation. We also find that frictions in the principal's decision making (which cause decision delays) can sometimes help her, as they serve as a commitment of not terminating the relationship too quickly and thus reduce the agent's incentive to manipulate the signal.
	
	%In a two-armed bandit problem, better transparency always helps the experimenter: the higher the signal-to-noise ratio is, the better off the experimenter will be. In our model, the principal is facing a different type of two-armed bandit problem where the drift of the signal process depends on not only  the agent's type but also his action. Holding constant the (noninvestible) agent's action, the principal is better off when the signal-to-noise ratio is higher. However, the action of the agent is an equilibrium object and hence is endogenous. In particular, the equilibrium strategy of the agent also depends on the signal-to-noise ratio. The main finding of this section is that, for some parameters, as the signal-to-noise ratio increases without bound, the principal's ex ante payoff converges to its lower bound (i.e., her minmax payoff). This implies having a noisier signal process (with a lower signal-to-noise ratio) can sometimes help the principal.
	
	%Recall that the drift of the signal process is $\mu$ and that its diffusion coefficient $\sigma$ is normalized to $1$. Therefore, the signal-to-noise ratio in our model is simply $\mu$.\footnote{This normalization is without loss because before normalization $\mu$ and $\sigma$ always show up together as $(\mu/\sigma)$ in the calculation. Hence, only the signal-to-noise ratio is relevant, and increasing $\mu$ to $\infty$ is equivalent to shrinking $\sigma$ to $0$.} 
	Recall that the signal-to-noise ratio parameter is defined by $\snr = \mu/\sigma$. In what follows, we will fix all parameters other than $\snr$, and analyze the principal's payoff as $\snr$ increases. Hence, we make explicit the dependence of any variable or function on $\snr$.\footnote{As is clear from the belief processes \eqref{eq:law-of-p} and \eqref{eq:dptagent} and HJBs \eqref{eq:HJBprincipal} and \eqref{eq:HJBagent}, in equilibrium $\mu$ and $\sigma$ always affect the players' incentives and payoffs through $\snr$. So writing equilibrium objects as functions of $\snr$ (while dropping $\mu$ and $\sigma$) is without loss.}
	
	As a benchmark, suppose that the principal never receives any information about the agent's type. Recall that $R(p)=pw_{NI}+(1-p)w_I$ and $p^{**}=R^{-1}(0)$. In this case, the principal would continue the relationship if $p<p^{**}$, and she would terminate the relationship at the first stopping opportunity if $p>p^{**}$. This leads to the following \textbf{``no-information value function"} for the principal:  
	\[
	\underline{W}(p):=\frac{\lambda}{r_2+\lambda}\max\{0,pw_{NI}+(1-p)w_I\}=\frac{\lambda}{r_2+\lambda}\max\{0,R(p)\}.
	\]
	Note that when $\snr$ is near $0$, the signal process is close to pure noise regardless of the agent's action. So we have the following observation.
	
	\begin{observation}
		$\lim_{\snr\to 0}W(p_0;\snr)=\underline{W}(p_0)$ and $\lim_{\snr\to 0}p^*(\snr)=p^{**}$.
	\end{observation}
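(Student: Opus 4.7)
The plan is to exploit the fact that as $\snr\to 0$ the filtering SDE \eqref{eq:law-of-p} for the belief has instantaneous volatility $\snr(1-a^{\snr}(p_t))\gamma(p_t)\le \snr/4$, so by the It\^o isometry $\mathbb{E}[(p_t-p_0)^2]\le \snr^2 t/16$, independent of the equilibrium strategies. Hence $p_t\to p_0$ in $L^2$ on compact time intervals as $\snr\to 0$.

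To establish pointwise convergence of the value, I first record the trivial lower bound $W(p_0;\snr)\ge \underline{W}(p_0)$: the principal can attain $\underline{W}(p_0)$ by the ``open-loop'' strategy of terminating at the first stopping opportunity if and only if $R(p_0)>0$ (and never otherwise), whose payoff equals $\underline{W}(p_0)$ regardless of the agent's play. For the matching upper bound, I condition on the public history and apply the law of iterated expectations to rewrite
\[
W(p_0;\snr)=\mathbb{E}\left[\int_0^\infty e^{-\Lambda_2(0,\tau,\beta^{\snr})}\lambda \beta^{\snr}_\tau R(p_\tau)\,d\tau\right].
\]
Because $R$ is Lipschitz and $\int_0^\infty e^{-r_2\tau}\lambda\sqrt{\tau}\,d\tau<\infty$, Cauchy--Schwarz combined with the quadratic-variation bound above yields
\[
W(p_0;\snr)\le \sup_{\beta\in\mathcal{B}}\mathbb{E}\left[\int_0^\infty e^{-\Lambda_2(0,\tau,\beta)}\lambda \beta_\tau R(p_0)\,d\tau\right]+O(\snr)=\underline{W}(p_0)+O(\snr),
\]
where the last equality is a direct computation (optimally set $\beta\equiv 1$ when $R(p_0)>0$ and $\beta\equiv 0$ otherwise). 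Combining the two bounds gives $W(p_0;\snr)\to \underline{W}(p_0)$.

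For the cutoff, Lemma \ref{lemma:principalbestreply} confines $p^*(\snr)\in[p^{**},p_H]$, so any subsequence has a further subsequential limit $\bar p\in[p^{**},p_H]$. Suppose $\bar p>p^{**}$ and pick $p'\in(p^{**},\bar p)$: along this subsequence $b(p';\snr)=0$ for $\snr$ small, so $W(p';\snr)\ge R(p')$ by \eqref{eq:HJBprincipal}; but the value convergence implies $W(p';\snr)\to \underline{W}(p')=\frac{\lambda}{r_2+\lambda}R(p')<R(p')$, a contradiction. Hence $\bar p=p^{**}$ and $p^*(\snr)\to p^{**}$. The main technical obstacle is making the upper-bound step uniform in the unknown sequence of equilibrium policies $(a^{\snr},\beta^{\snr})$; the It\^o-isometry estimate sidesteps this by controlling $p_t-p_0$ in a way that depends only on $\snr$ and $t$, not on either player's strategy.
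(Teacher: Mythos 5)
Your proof is correct. The paper does not supply a formal argument for this Observation---it is presented with the one-line intuitive remark that ``when $\snr$ is near $0$, the signal process is close to pure noise regardless of the agent's action''---so what you have written is really a rigorous implementation of that remark rather than an alternative route. The key move, using the It\^o isometry on the martingale $p_t - p_0$ to get the strategy-uniform bound $\mathbb{E}[(p_t-p_0)^2]\le \snr^2 t/16$ (since $(1-a_t)\gamma(p_t)\le 1/4$), is exactly the right way to sidestep the fact that the equilibrium $(a^{\snr},b^{\snr})$ varies with $\snr$: the bound depends only on $\snr$ and $t$, not on either player's play. The rewrite of $W$ via iterated expectations so that the integrand involves $R(p_\tau)$ is valid because $e^{-\Lambda_2(0,\tau,\beta)}\lambda\beta_\tau$ is $\mathcal{F}^{X,J}_\tau$-measurable, $J$ is independent of $\theta$, and hence $\mathbb{E}[\mathbbm 1_{\{\theta=NI\}}\mid \mathcal{F}^{X,J}_\tau]=p_\tau$; the Lipschitz--Cauchy--Schwarz estimate against $\int_0^\infty e^{-r_2\tau}\lambda\sqrt{\tau}\,d\tau<\infty$ then gives the $O(\snr)$ error uniformly in $\beta$. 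The cutoff argument is also sound: $p^*(\snr)\in[p^{**},p_H]$ by Lemma~\ref{lemma:principalbestreply}, and for $p'\in(p^{**},\bar p)$ the implication $b(p')=0\Rightarrow W(p')\ge R(p')$ (from the HJB optimality of the principal) collides with $W(p';\snr)\to\underline{W}(p')=\tfrac{\lambda}{r_2+\lambda}R(p')<R(p')$ once $R(p')>0$. No gaps.
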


	Now suppose that the agent's type is exogenously and immediately revealed to the principal. In this case, the principal will obtain her highest possible payoff for each belief, summarized by her \textbf{``full-information value function"}:
	\[
	\overline{W}(p):=\frac{\lambda}{r_2+\lambda}\left[p\mbox{ }\max\left\{0,w_{NI}\right\}+(1-p)\max\left\{0,w_{I}\right\}\right]=\frac{\lambda}{r_2+\lambda}pw_{NI}.
	\]
	Note that the principal's equilibrium payoff is always strictly between $\underline{W}(p_0)$ and  $\overline{W}(p_0)$. This is because some learning will take place in equilibrium (as the noninvestible type never fully mimics), but the agent's type is not immediately revealed (as $\snr<\infty$).
	\begin{observation}
		For all $\snr\in (0,\infty)$ and $p_0\in (0,1)$, $W(p_0;\snr)\in (\underline{W}(p_0),\overline{W}(p_0))$.
	\end{observation}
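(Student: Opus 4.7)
The observation consists of two strict inequalities, which I would prove separately. A useful preliminary, obtained by iterated expectations and the martingale property of $\{p_t\}$, is the representation
\[W(p_0;\snr)=\mathbb{E}\bigl[e^{-r_2\nu}R(p_\nu)\mathbbm{1}_{\{\nu<\infty\}}\bigr],\]
where $\nu$ denotes the first Poisson arrival at which $p_t>p^*$. Since $p^*\geq p^{**}$ by Lemma \ref{lemma:principalbestreply}, $R(p_\nu)>0$ on $\{\nu<\infty\}$.

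\textbf{Lower bound $W(p_0;\snr)>\underline{W}(p_0)$.} If $p_0\leq p^{**}$, then $\underline{W}(p_0)=0$, and Lemma \ref{lem:fullspan} combined with the Poisson arrival mechanism gives $\mathbb{P}(\nu<\infty)>0$, so the representation above immediately yields $W(p_0;\snr)>0$. If $p_0>p^{**}$, I would invoke principal optimality against the deviation strategy ``at the first Poisson arrival $J_1$, stop iff $p_{J_1}>p^{**}$; otherwise never stop.'' Using independence of $J_1$ from $\{p_t\}$ and the martingale identity $\mathbb{E}[R(p_{J_1})\mid J_1]=R(p_0)$, this strategy's value equals
\[\underline{W}(p_0)-\mathbb{E}\bigl[e^{-r_2 J_1}R(p_{J_1})\mathbbm{1}_{\{p_{J_1}\leq p^{**}\}}\bigr].\]
The subtracted term is strictly negative because $R<0$ on $(0,p^{**})$ and the bounded martingale $p_t$ converges almost surely to $\{0,1\}$ with $\mathbb{P}(p_\infty=0)=1-p_0>0$, forcing $\mathbb{P}(p_{J_1}<p^{**})>0$.

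\textbf{Upper bound $W(p_0;\snr)<\overline{W}(p_0)$.} Decomposing by type using iterated expectations,
\begin{align*}
\overline{W}(p_0)-W(p_0;\snr) = {} & p_0 w_{NI}\Bigl(\tfrac{\lambda}{r_2+\lambda}-\mathbb{E}\bigl[e^{-r_2\nu}\mathbbm{1}_{\{\nu<\infty\}}\bigm|\theta=NI\bigr]\Bigr) \\
& + (1-p_0)(-w_I)\,\mathbb{E}\bigl[e^{-r_2\nu}\mathbbm{1}_{\{\nu<\infty\}}\bigm|\theta=I\bigr].
\end{align*}
Both summands are nonnegative ($\nu\geq J_1$ on $\{\nu<\infty\}$ yields the first; $-w_I>0$ yields the second). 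It suffices to show $\mathbb{P}(\nu<\infty\mid\theta=I)>0$. The event $\{\nu<\infty\}$ is a measurable functional of the $X$-path, and the laws of $\{X_t\}$ under $\theta=I$ and $\theta=NI$ are mutually absolutely continuous on each $[0,T]$ by Girsanov's theorem (the Radon-Nikodym density is an exponential martingale driven by the bounded drift gap $\mu(1-a(p_t))/\sigma$). Therefore Lemma \ref{lem:fullspan}'s positive-probability statement for $\{\nu\leq T\}$ (unconditional, for $T$ sufficiently large) transfers to positive probability under $\theta=I$, completing the argument.

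\textbf{Main obstacle.} The delicate step is the upper bound: Lemma \ref{lem:fullspan} is an unconditional recurrence statement, whereas the bound calls for recurrence under the measure conditional on $\theta=I$---a measure under which the belief has a strictly negative drift, so direct diffusion arguments are less clean. The Girsanov-based transfer is the cleanest bridge, but it requires verifying that the likelihood ratio is a genuine (not merely local) martingale; the boundedness $1-a(p_t)\in[0,1]$ from Lemma \ref{lem:fullspan} makes this routine via Novikov's criterion on each compact time interval.
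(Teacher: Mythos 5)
Your proof is correct, and—notably—it fills in something the paper leaves unproved: the authors label this a mere ``Observation'' and justify it only with the one-sentence heuristic that some learning occurs (so $W > \underline{W}$) but the type is not revealed instantly (so $W < \overline{W}$), without spelling out either half. Your argument formalizes exactly that intuition. The representation $W(p_0)=\mathbb{E}[e^{-r_2\nu}R(p_\nu)\mathbbm{1}_{\{\nu<\infty\}}]$ and the type-conditioned decomposition for the upper bound are both correct, and you are right that the sole non-routine point is showing $\mathbb{P}(\nu<\infty\mid\theta=I)>0$; your Girsanov transfer handles this properly once restricted to the finite-horizon events $\{\nu\leq T\}$, and Novikov applies because the drift gap is bounded. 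Two small remarks. First, there is a more elementary path to the same conclusion: Lemma~\ref{lem:fullspan} gives $\sup_p a(p)<1$, so the diffusion coefficient $\snr(1-a_t)\gamma(p_t)$ of the belief is uniformly nondegenerate on any compact subinterval of $(0,1)$ regardless of the conditioning type (only the drift changes), so the belief reaches any level with positive probability under $\theta=I$ directly, without change of measure. Second, in the lower bound you invoke almost-sure martingale convergence of $p_t$ to $\{0,1\}$; what you actually need is just that $\mathbb{P}(p_t<p^{**})>0$ for some (then almost all) $t>0$, which again follows from the nondegenerate diffusion, so the convergence argument is correct but heavier machinery than required.
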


	The next theorem characterizes the limiting behavior of the principal's payoff as $\snr$ goes to infinity. %{\color{purple} Add label to two environments: 1- High frictions 2- Low frictions, and make each part in two subitems, principal and agent.}
	
	\begin{comment}
	{\color{red}
	\begin{theorem}\label{t:noisehelps}  
	Letting $\tilde{\lambda}:=r_1\left(\frac{c}{u}\right)$,
	\begin{enumerate}
	\item If $\lambda<\tilde{\lambda}$ (high frictions), then 
	\begin{itemize}
	\item for the agent, $\lim_{\snr\to\infty}a(p;\snr)<1$ for all $p\in (0,1)$;
	\item for the principal,
	$\lim_{\snr\to\infty}||W(\cdot;\snr)-\overline{W}(\cdot)||_{\infty}=0$ and $\lim_{\snr\to\infty}p^*(\snr)=1$. 
	\end{itemize}
	
	\item If $\lambda>\tilde{\lambda}$ (low frictions), then \begin{itemize}
	\item for the agent, $\lim_{\snr\to\infty}a(p;\snr)=1$ for $p\in (0,p^*]$ and $\lim_{\snr\to\infty}a(p;\snr)<1$ for $p\in (p^*,1)$;
	\item for the principal, $\lim_{\snr\to\infty}||W(\cdot;\snr)-\underline{W}(\cdot)||_{\infty}=0$ and $\lim_{\snr\to\infty}p^*(\snr)=p^{**}$.
	\end{itemize}
	\end{enumerate}
	\end{theorem}
	}
	\end{comment}
	
	%\begin{comment}
	\begin{theorem}\label{t:noisehelps} 
		Letting $\tilde{\lambda}:=r_1\left(\frac{c}{u}\right)$,
		\begin{enumerate}
			\item If $\lambda<\tilde{\lambda}$, then %$\lim_{\mu\to\infty} W(p_0;\mu)=\overline{W}(p_0)$. 
			$\lim_{\snr\to\infty}||W(\cdot;\snr)-\overline{W}(\cdot)||_{\infty}=0$ and $\lim_{\snr\to\infty}p^*(\snr)=1$.
			
			\item If $\lambda>\tilde{\lambda}$, then %$\lim_{\mu\to\infty} W(p_0;\mu)=\underline{W}(p_0)$.
			$\lim_{\snr\to\infty}||W(\cdot;\snr)-\underline{W}(\cdot)||_{\infty}=0$ and $\lim_{\snr\to\infty}p^*(\snr)=p^{**}$.
		\end{enumerate}
	\end{theorem}
	%\end{comment}
	
	{This result is heavily driven by the agent's equilibrium behavior.} When the stopping opportunity arrives slowly ($\lambda<\tilde{\lambda}$), the noninvestible type's incentives to mimic are not strong. Intuitively, the relationship is relatively stable from the agent's viewpoint because, due to the lack of stopping opportunity, it will take a long time for the relationship to end even if the principal has decided to terminate it. As a result, the agent's equilibrium action is bounded away from ``full mimicking" for all $\snr$.\footnote{For sufficiently small $\lambda$, the noninvestible type does not mimic at all (i.e., $a(p)=0$ for all $p$), and the principal's problem becomes identical to a standard two-armed bandit problem. For relatively large $\lambda$ (still less than $\tilde{\lambda}$), some mimicking appears in equilibrium, but $a(\cdot)$ is still uniformly bounded away from $1$ for all $\snr$.} As $\snr$ grows without bound, the public signal becomes increasingly informative about the agent's type, and in the end, the agent's type is almost immediately revealed. Thus, the principal can afford to wait until being very certain that the agent is noninvestible, and her equilibrium value function converges to $\overline{W}$ (see Figure \ref{fig:Wconverge}, left panel).

	On the other hand, when the stopping opportunity arrives fast ($\lambda>\tilde{\lambda}$), the noninvestible type has stronger incentives to mimic the investible type. In particular, as $\snr$ increases without bound, the equilibrium mimicking intensity at the termination cutoff converges to 1. The speed of this convergence is so fast that the variance of the belief process vanishes there (i.e., $p^{*}$ becomes an almost absorbing state). Meanwhile, the equilibrium policy function $a(\cdot)$  converges to 1 also for $p<p^{*}$, and it converges to a function that is strictly less than 1 for $p>p^{*}$. In both of these regions, the principal will learn some information about the agent's type from the public signal. However, this information is not useful (payoff-relevant) for the principal since it does not lead to an action change. Hence, the principal's termination cutoff converges to $p^{**}$ and her equilibrium value function converges to $\underline{W}$---as if no information would ever arrive (see Figure \ref{fig:Wconverge}, right panel).
	
	\begin{figure}[ht]
		\centering
		\begin{subfigure}[t]{.49\linewidth}
			\includegraphics[width=1\linewidth]{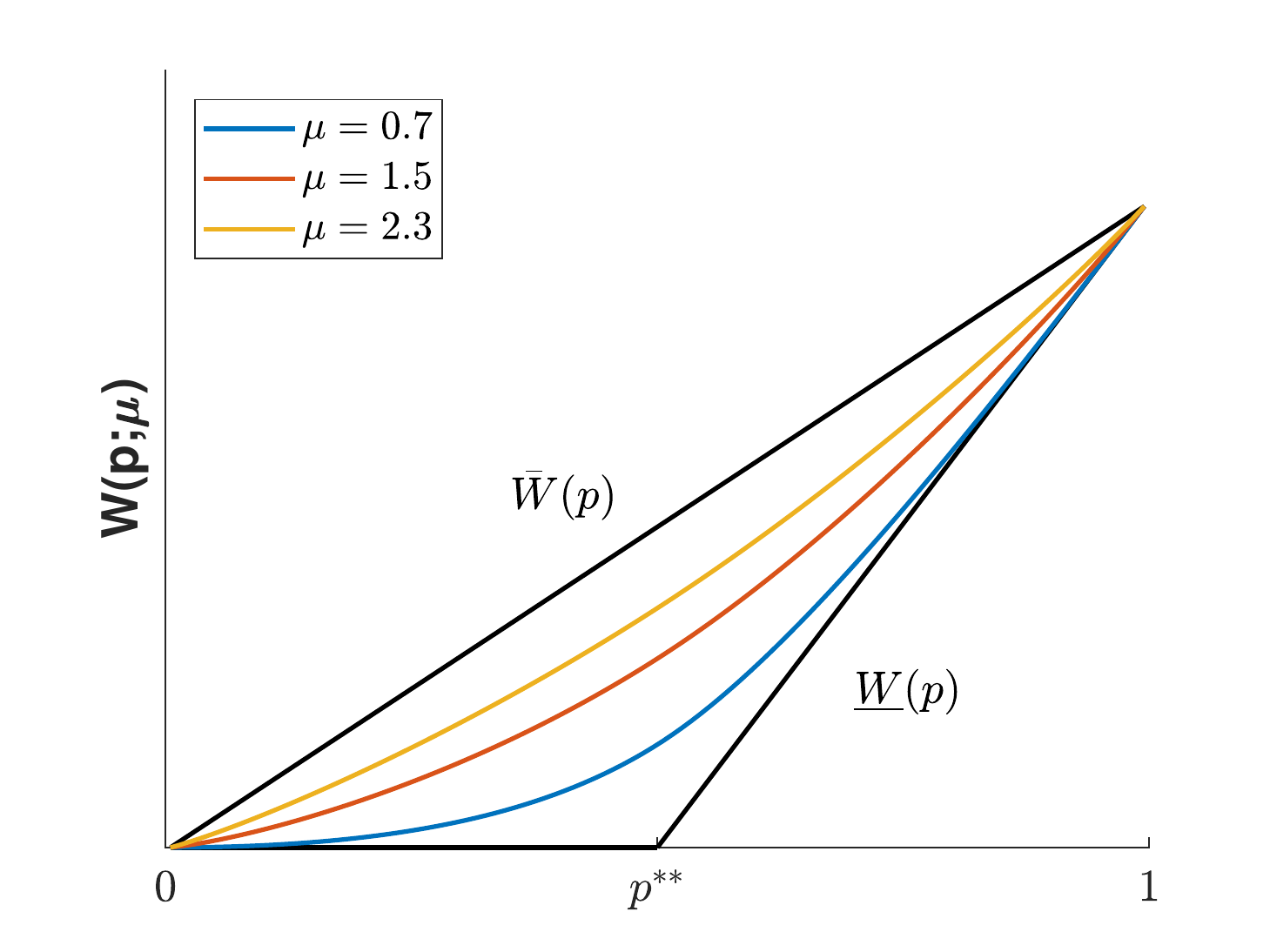}
			\caption{$\lambda<\tilde{\lambda}$}
		\end{subfigure}
		\begin{subfigure}[t]{.49\linewidth}
			\includegraphics[width=1\linewidth]{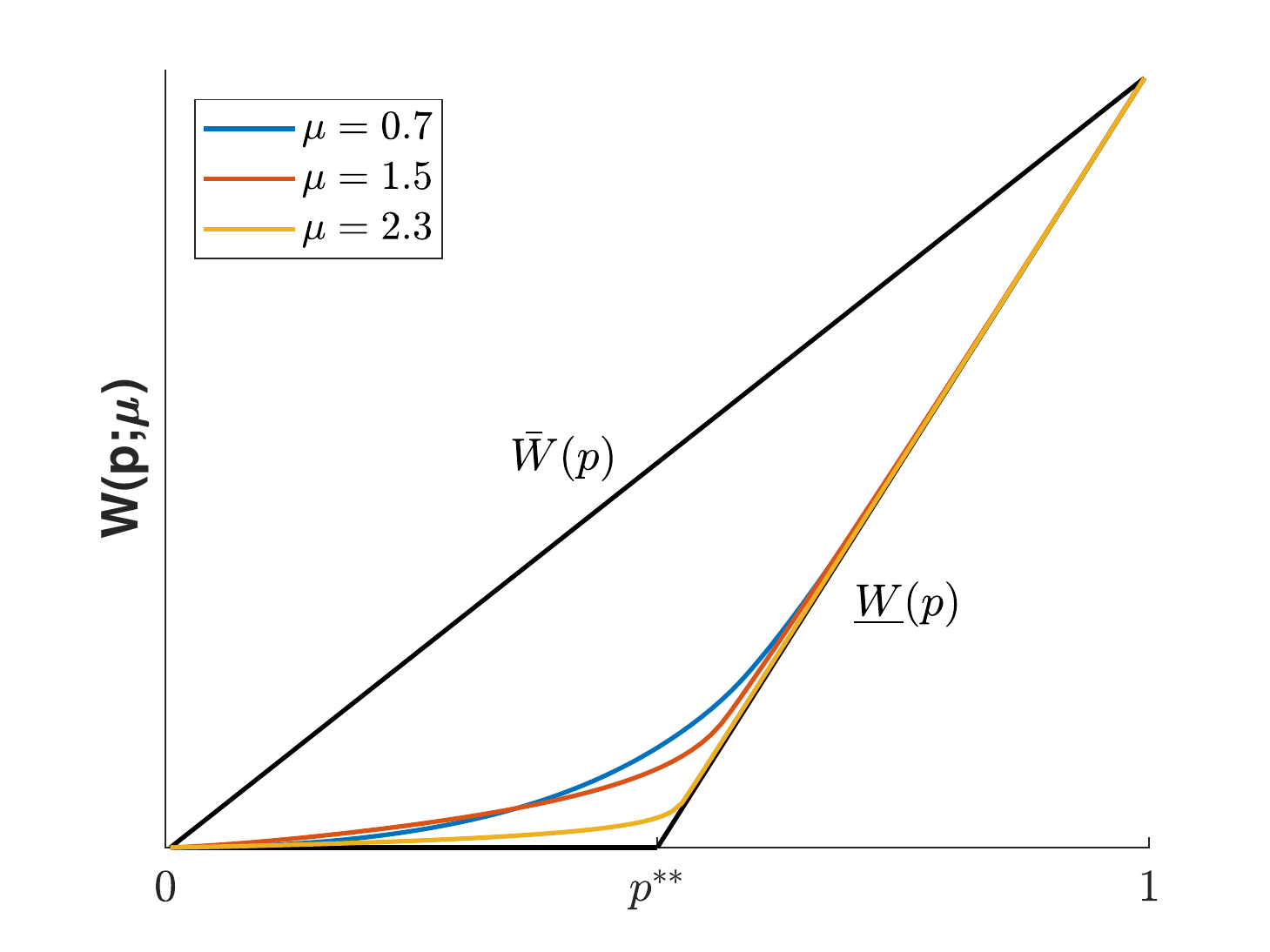}
			\caption{$\lambda>\tilde{\lambda}$}
		\end{subfigure}
		\caption{Convergence of the Principal's Equilibrium Value Function. This figure is plotted under the following parameter values: $r_1=0.5$, $r_2=0.5$, $u=1$, $c=1$, $w_{NI}=1$, $w_I=-1$ (so that $p^{**}=0.5$, $\tilde{\lambda}=0.25$); $\lambda=0.1$ for (a), $\lambda=2$ for (b).}
		\label{fig:Wconverge}
	\end{figure}
	
	To better understand the limiting equilibrium dynamics when $\lambda>\tilde{\lambda}$, note from equation \eqref{eq:law-of-p} that the diffusion coefficient of the principal's belief process at time $t$ is proportional to $\snr(1-a_t)$. For any fixed $p$, we can write it as $$
	\underbrace{\snr}_{\text{direct effect}} [1-\underbrace{a(p;\snr)}_{\text{equilibrium effect}}].$$ As $\snr$ increases, the \textit{direct effect} through the multiplier accelerates information revelation while the \textit{equilibrium effect} through the agent's strategy slows down learning. Its limit depends on the value of $p$; in particular, one can show that 
	\begin{align*}
		&\lim_{\snr\rightarrow\infty}\snr[1-a(p;\snr)] \in (0,\infty), \text{ for all }p\in (0,p^{**}),\\
		&\lim_{\snr\rightarrow\infty}\snr[1-a(p;\snr)]=\infty, \text{ for all }p\in (p^{**},1),\\
		&\lim_{\snr\rightarrow\infty}\snr[1-a(p^*(\snr);\snr)] = 0.
	\end{align*}
	This suggests the following limiting equilibrium dynamics. If the prior belief is above $p^{**}$, there is an immediate split of belief to either $1$ or very close to $p^{**}$ and then it (almost) stops moving. If the prior belief is below $p^{**}$, then learning takes place gradually but becomes slower and slower as the posterior approaches $p^{**}$. In both cases the principal's learning does not stop despite the agent's extreme manipulation, but because her posterior never moves across $p^{**}$, the principal's action never changes with the information she learns, and so payoff-wise it is as if no information ever arrives.
	
	The next two corollaries follow immediately from Theorem \ref{t:noisehelps}.
	\begin{corollary}\label{cor:noisehelps}
		For any $\lambda>\tilde{\lambda}$ and $p_0\in (0,1)$, there exist $\snr_1,\snr_2$ such that $\snr_1>\snr_2$ and $W(p_0;\snr_1)<W(p_0;\snr_2)$.
	\end{corollary}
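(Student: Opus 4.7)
The plan is to combine Observation 2 with the second case of Theorem \ref{t:noisehelps}. Observation 2 tells us that, for every finite $\snr > 0$ and every interior prior $p_0 \in (0,1)$, the principal's equilibrium payoff strictly exceeds her no-information value, i.e., $W(p_0;\snr) > \underline{W}(p_0)$. Theorem \ref{t:noisehelps}.2 says that when $\lambda > \tilde{\lambda}$, this strict gap is washed out in the limit: $W(p_0;\snr) \to \underline{W}(p_0)$ as $\snr \to \infty$ (the sup-norm convergence implies pointwise convergence at $p_0$).

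Given these two ingredients, the argument is almost immediate. First, I would pick any $\snr_2 > 0$ (for concreteness, one can take $\snr_2 = 1$), and set
\[
\varepsilon := W(p_0;\snr_2) - \underline{W}(p_0) > 0,
\]
where strict positivity of $\varepsilon$ comes from Observation 2. Second, invoking Theorem \ref{t:noisehelps}.2, I would choose $\snr_1 > \snr_2$ large enough that
\[
\bigl| W(p_0;\snr_1) - \underline{W}(p_0) \bigr| < \varepsilon.
\]
This directly yields $W(p_0;\snr_1) < \underline{W}(p_0) + \varepsilon = W(p_0;\snr_2)$, as required.

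There is essentially no obstacle here: the corollary is a packaging statement whose entire substance is already carried by Theorem \ref{t:noisehelps} and the strict lower bound in Observation 2. The only care needed is to ensure the strict inequality $W(p_0;\snr_2) > \underline{W}(p_0)$ holds for some \emph{fixed} $\snr_2$, which is guaranteed since Observation 2 holds for \emph{every} $\snr \in (0,\infty)$, not merely in the limit. No extra regularity (continuity or monotonicity of $W(p_0;\cdot)$ in $\snr$) is required for this argument, which makes the proof robust to any intricate $\snr$-dependence that $W$ might exhibit at intermediate values.
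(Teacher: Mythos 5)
Your argument is correct and is precisely the route the paper intends: the paper states that Corollary~\ref{cor:noisehelps} "follows immediately from Theorem~\ref{t:noisehelps}," and the only additional ingredient needed is the strict separation $W(p_0;\snr)>\underline{W}(p_0)$ at finite $\snr$, which is exactly Observation~2 as you invoke it. Your explicit $\varepsilon$-argument is just an unpacked version of the same logic, and your remark that no continuity or monotonicity of $W(p_0;\cdot)$ in $\snr$ is required is a nice clarification of why the corollary really is immediate.
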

	Corollary \ref{cor:noisehelps} illustrates that a more transparent performance measure (higher $\snr$) can sometimes reduce the principal's payoff. 
	%Hence, this result has a policy implication: too stringent disclosure requirements may harm the investors.
	This result indicates that VCs can sometimes fare better when the startup operates in a more volatile environment. Moreover, policies that require the startup to disclose too precise information may end up hurting the investors. 
	Intuitively, when the signal-to-noise ratio is very high, the mimicking incentives of the noninvestible type can be so strong that the public signal provides little useful information to the principal about the agent's type. In that case, introducing more noise to the public signal can lessen such perverse incentives of the noninvestible type, which leads to a more informative equilibrium signal process, benefiting the principal.\footnote{The possibility that better monitoring/more transparency may hurt a principal or relationship has appeared in other settings, such as career-concern models \citep{holmstrom1999,dewatripont1999}, contracting in insurance markets \citep{hirshleifer1971,schlee2001}, contracting with moral hazard \citep{zhu2020}, and dynamic team production \citep{Cetemen2020,Ramos2020}. In our model, this effect shows up for a different reason: better monitoring may give stronger incentives to the agent to engage in performance boosting, which depresses the informativeness of the public signals.} This result also suggests that there is usually a strictly positive but finite level of transparency that maximizes the principal's value.
	
	\begin{corollary}\label{cor:frictionhelps}
		For some large $\snr$ and any $p_0\in (0,1)$, there exists $\lambda_1$, $\lambda_2$ such that  $\lambda_1>\lambda_2$ and $W(p_0;\snr,\lambda_1)<W(p_0;\snr,\lambda_2)$.
	\end{corollary}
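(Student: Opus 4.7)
The plan is to derive the non-monotonicity as an immediate consequence of the discontinuity in the $\snr \to \infty$ limit of the principal's payoff at $\lambda = \tilde{\lambda}$ established in Theorem \ref{t:noisehelps}. Specifically, I will pick $\lambda_2 < \tilde{\lambda} < \lambda_1$ close to the threshold so that the full-information limit $\overline{W}(p_0;\lambda_2)$ on the left of $\tilde{\lambda}$ strictly exceeds the no-information limit $\underline{W}(p_0;\lambda_1)$ on the right, then push $\snr$ large so that the actual equilibrium payoffs sit near their respective limits, yielding the reversal.

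Fix $p_0 \in (0,1)$ and write $\phi(\lambda) := \lambda/(r_2+\lambda)$. The two limits to compare are $\overline{W}(p_0;\lambda_2) = \phi(\lambda_2)\, p_0 w_{NI}$ and $\underline{W}(p_0;\lambda_1) = \phi(\lambda_1)\max\{0,R(p_0)\}$. I would verify that, for $\lambda_2 < \tilde{\lambda} < \lambda_1$ chosen sufficiently close to $\tilde{\lambda}$, the strict inequality $\overline{W}(p_0;\lambda_2) > \underline{W}(p_0;\lambda_1)$ holds. When $p_0 \leq p^{**}$, the right-hand side vanishes, so any positive $\lambda_2$ works. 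When $p_0 > p^{**}$, the desired inequality rearranges to $\phi(\lambda_2)/\phi(\lambda_1) > R(p_0)/(p_0 w_{NI})$; the right-hand side is strictly less than $1$ because $w_I < 0$, while the left-hand side tends to $1$ as $|\lambda_1-\lambda_2|\to 0$ by continuity of $\phi$, so a suitable pair straddling $\tilde{\lambda}$ exists.

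With $\Delta := \overline{W}(p_0;\lambda_2) - \underline{W}(p_0;\lambda_1) > 0$ in hand, part $1$ of Theorem \ref{t:noisehelps} applied at $\lambda_2$ and part $2$ applied at $\lambda_1$ yield the convergences $W(p_0;\snr,\lambda_2) \to \overline{W}(p_0;\lambda_2)$ and $W(p_0;\snr,\lambda_1) \to \underline{W}(p_0;\lambda_1)$ as $\snr \to \infty$; choosing $\snr$ large enough that both distances drop below $\Delta/3$ then delivers the strict comparison $W(p_0;\snr,\lambda_1) < W(p_0;\snr,\lambda_2)$. There is no substantive new obstacle here: once Theorem \ref{t:noisehelps} is available, the corollary follows from a short continuity argument, and the only care needed is to secure the strict gap between the two limits, which rests on $w_I < 0$ (equivalently $R(p_0) < p_0 w_{NI}$).
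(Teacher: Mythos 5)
Your proof is correct and matches the paper's intended argument: the paper gives no separate proof of this corollary (it states that it ``follows immediately from Theorem \ref{t:noisehelps}''), and you have spelled out exactly how---bracket $\tilde{\lambda}$ with $\lambda_2<\tilde{\lambda}<\lambda_1$, verify that $\overline{W}(p_0;\lambda_2)>\underline{W}(p_0;\lambda_1)$ for a pair close enough to $\tilde{\lambda}$ (which rests on $R(p_0)<p_0 w_{NI}$, i.e., $w_I<0$), and then invoke the convergence of Theorem \ref{t:noisehelps} for $\snr$ large enough.
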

	Corollary \ref{cor:frictionhelps} demonstrates that more frictions in the principal's decision making (i.e., less frequent arrivals of the stopping opportunity) may sometimes improve the principal's equilibrium payoff. Intuitively, such frictions instill some commitment to not terminating the relationship too soon in the principal's behavior, which, similar to before, weakens the noninvestible type's incentive to manipulate the signals. In turn, this indirect effect through the agent allows the signal process to provide more information to the principal and possibly increases her equilibrium payoff. Because of this, the principal does not always have an incentive to reduce the frictions that prevent prompt decisions.

	\subsection*{Explanation of the Proof}
	For interested readers, we provide below an explanation of our proof of Theorem \ref{t:noisehelps} and highlight our technical contributions. This subsection can be skipped without interrupting the flow of the paper.
	
	It is technically difficult to analyze players' behavior and payoffs as $\snr \rightarrow \infty $ because
	incentives change discontinuously at the cutoff $p^{\ast }$ in the limit
	problem. This discontinuity makes the analysis of the semi-closed-form solution quite elusive and obliges to develop a new set of methods to deal with asymptotics. In particular, our analysis leverages three key elements:
	\begin{enumerate}
		\item[(i)] A Learning Lemma, which stipulates a ``zero-one law" between the agent's mimicking behavior and the principal's learning;
		\item[(ii)] The semi-closed-form solution of the unique equilibrium;
		\item[(iii)] The martingale property of the belief process.
	\end{enumerate}
	
	Let us begin by stating the Learning Lemma.
	\begin{learning-lemma}[Claim \ref{lem:learninglemma2}]
		Fix any prior \(p_{0}\in \left( 0,1\right)\),   any $r_1>0$, and some open interval $S=(\underline{s},\bar{s})$ such that $0<\underline{s}<p_0<\bar{s}<1$. For each \(\snr >0\), let \(a_{\snr}\left( \cdot \right)\) be the policy function induced by some Markovian strategy $\alpha_\snr$. Take any \(\varepsilon >0\) and let \(\mathbb{\bar{T}}\) be %the random time that stops the play in the first time that \(p_t\notin S\). 
		the first time at which $p_t$ reaches the boundary of $S$. 
		Then we have:
		%Then, for any \(\varepsilon >0\), we have:
		\[
		\limsup_{\snr\uparrow \infty}\mbox{ }\mathbb{E}^{NI}\left\{ r_1\int_{0}^{\mathbb{\mathbb{\bar{T}}}}e^{-r_1t}\mathbb{I}_{\left\{a _{\snr}\left( p_{t}\right) \leq 1-\varepsilon \right\} }dt\right\} =0.
		\]
	\end{learning-lemma}
	The lemma says that as $\snr\rightarrow \infty$, little time will be spent on those beliefs under which the noninvestible type is not mimicking intensively (i.e. those $p$ such that $a_\snr(p)<1-\epsilon$). In other words, for any interval of beliefs, either the noninvestible type mimics (almost) fully, or the principal's learning is extremely fast so that the posterior moves out of the interval in (almost) no time.\footnote{The Learning Lemma also holds conditional on the investible type, %Moreover, a simple calculation extends the result to the principal: \[
		%\limsup_{\snr\uparrow \infty}\mathbb{E}\left\{ r_2\int_{0}^{\mathbb{\mathbb{\bar{T}}}}e^{-r_2t}\mathbb{I}_{\left\{a _{\snr}\left( p_{t}\right) \leq 1-\varepsilon \right\} }dt\right\} =0.
		%\]
		and (thus) it holds unconditionally from the principal's perspective.} 
	This Learning Lemma is an important input to the proof of our asymptotic results, but it may also be of independent interest itself. Indeed, it holds for \textit{any} Markovian strategy, not only the equilibrium one, and a variation of this lemma will be used to prove our result in the next section.\footnote{This Learning Lemma resonates with the learning result in \cite{cripps2004}. Their result states that when the monitoring technology is fixed, in the long run, either the agent's type is revealed or the agent's behavior converges to full mimicking. Our Learning Lemma (proved in a continuous-time setting) requires the noise in monitoring to get arbitrarily small, but its relevance is not limited to the long run.}
	
	We now provide an explanation of our proof of Theorem 3 in the low-friction case where $\lambda >r_{1}\left( \frac{u}{c}\right)$. The proof proceeds in three steps.
	
	In \textbf{Step 1}, we show that $u$ is an upper bound for the agent's payoff at any belief $p\in (0,1)$ as $\snr \rightarrow \infty $, and this bound is tight at beliefs below the termination cutoff. Since the agent's value function $v(p)
	$ is decreasing, it suffices to prove the result for $p$ small. As the agent
	loses reputation on average, we can use our semi-closed-form solution to prove that for $%
	p<p^{\ast }$ the agent's payoff in a mixing region is bounded above by 
	\[
	u+(1-a(p))c.
	\]%
	We then note that $u$ is an upper bound for the agent's payoff at $p^{\ast}$ because the mimicking
	intensity at $p^{\ast}$ converges to $1$. Indeed, if this was not the case
	then the agent would be revealing information very fast as $\snr \rightarrow
	\infty ,$ yielding a payoff no greater than $\frac{\lambda }{\lambda +r_{1}}%
	(u+c)$ which, in the low-friction case, is smaller than $u$ by assumption. The bound at $p^{\ast }$ is then used to bound the agent's limit payoff at any (fixed) $p\in \left( 0,p^{\ast }\right) $. Indeed, the Learning Lemma tells us that either the agent mimics with (almost) full intensity, or the posterior reaches $p^*$ (almost) immediately, and in both cases his payoff is no greater than $u$. It turns out that this upper bound is tight and the agent's payoff converges to $u$ as $\snr \rightarrow \infty $ at any belief in $\left( 0,p^{\ast }\right)$. Indeed, should his limit payoff fall short of $u$, the noninvestible type would have too strong mimicking incentives which, in turn, would be incompatible with the indifference condition in the mixing region.\footnote{The martingale property of the belief process implies that by fully mimicking the investible type, the agent can lengthen the expected duration of the game while guaranteeing a flow payoff of $u$.}
	%Specifically, the martingale property of the belief process implies that by fully mimicking the investible type, the agent can lengthen the expected duration of the game while guaranteeing a flow payoff of $u$.

	In \textbf{Step 2}, we prove that starting from any belief below $p^*$, the \textit{expected discounted duration} of the game, $1-\mathbb{E}^{NI}\left[ e^{-r_1\mathbb{T}}\right]$, converges to $1$ as $\snr
	\rightarrow \infty$; roughly speaking, this means that the game tends to continue for a very long time. To establish this, recall first that the Learning Lemma tells us that during the game (almost) all time will be spent on beliefs where the agent (almost) fully mimics. This implies that at almost all time the agent is getting a flow payoff of $u$. But since the agent's value converges to $u$, as shown in Step 1, the expected duration of the game has to converge to $1$.   %To establish this, recall first that the Learning Lemma tells us that during the game (almost) all time will be spent on beliefs where the agent (almost) fully mimics. In addition, we exploit the martingale property of the belief process to show that the termination region is always reached with higher probability when the agent follows his equilibrium strategy than when he fully mimics the investible type. Taken together, these two observations imply that, if $\mathbb{E}^{NI}\left[ e^{-r_1\mathbb{T}}\right]$ didn't vanish in the limit (i.e., if the relationship didn't already last almost forever), then the agent could increase his overall payoff by adopting the investible type's strategy, as this would lead to a longer relationship without significantly affecting the agent's average action. 
	We remark that since $\mathbb{E}^{NI}\left[ e^{-r_1\mathbb{T}}\right] \rightarrow 0$ and the termination region is more likely reached by the noninvestible type (due to the martingale property of the belief process), we must also have $\mathbb{E}^{I}\left[e^{-r_1\mathbb{T}}\right]\rightarrow 0$ (and hence $\mathbb{E}^{{}}\left[e^{-r_2\mathbb{T}}\right] \rightarrow 0$).
	
	In \textbf{Step 3}, we establish the convergence of the principal's equilibrium cutoff and value function as $\snr \rightarrow \infty$. Recall that we always have $p^{\ast }\geq p^{\ast \ast }$, as the option value of waiting is always nonnegative. Assume that in the limit we had $p^{\ast }>$ $p^{\ast \ast }.$ Then take a posterior $p\in \left( p^{\ast \ast },p^{\ast }\right) $ and recall from Step 2 that $\mathbb{E}\left[ e^{-r_2\mathbb{T}}\right] \rightarrow 0$. Thus the principal's payoff must converge to zero at that posterior (because the principal's payoff from staying in the relationship forever is normalized to $0$). However, $p\in \left(
	p^{\ast \ast },p^{\ast }\right) $ implies that $R(p)>0,$ and hence the principal could profitably deviate by terminating the relationship in the next opportunity. So we conclude that $\lim_{\snr\to\infty}p^*(\snr)\rightarrow p^{**}$. Finally, Step 2 also implies that $W(p;\snr)\rightarrow 0$ for every $p<p^{\ast \ast}$. Since the limit value function $\lim_{\snr \rightarrow \infty}W(p;\snr )$ is convex, it agrees with $\underline{W}(p)$ in $\left[0,p^{\ast \ast }\right],$ and $\underline{W}(p)$ is affine in $\left[p^{\ast \ast },1\right] ,$ it follows that $\lim_{\snr \rightarrow \infty}W(p;\snr )=\underline{W}(p)$.

	\section{Information at the Patient Limit}\label{sec:patience}
	
	We now investigate the equilibrium outcomes as players get arbitrarily patient. The purpose is to see more clearly the role of patience in the principal's incentives to wait for more information and in the agent's incentives to engage in performance boosting.
	
	First, consider an extreme case where $r_1$ is constant and $r_2$ goes to 0 (i.e., the principal gets arbitrarily more patient than the agent). In this case, the agent's mimicking intensity (which is independent of $r_2$) stays bounded away from 1 everywhere, implying that the public signal always reveals some information about the agent's type. As the principal gets more patient, her marginal cost of waiting for new information becomes lower. Consequently, a patient principal will terminate the relationship only when $p$ is very high. Indeed, the termination cutoff converges to 1 and the principal's payoff converges to $\overline{W}$.
	
	Next, consider the other extreme case where $r_2$ is constant and  $r_1$ goes to 0 (i.e., the agent gets arbitrarily more patient than the principal). As the agent gets more patient, he cares more about staying in the relationship for long and less about the instantaneous mimicking cost. Thus, the noninvestible type has stronger incentives to mimic the investible type, and the equilibirum mimicking intensity approaches one at and below the termination cutoff. In the limit, the outcome is similar to the case for large $\snr$ and $\lambda$ characterized in the previous section. That is, it is as if no information ever arrives, with the termination cutoff converging to $p^{**}$ and the principal's value function converging to $\underline{W}$.

	What happens in between the two extreme cases? Take a sequence $\{r_{1,n},r_{2,n}\}_n$ of discount rates such that $r_{i,n}\to 0$ for both $i=1,2$ and  $\lim_n\frac{r_{2,n}}{r_{1,n}}=\chi\in(0,\infty)$. Consider a sequence of games along which all other parameters are fixed, and let $\{W_n,V_n\}_n$ be the corresponding sequence of value functions for the principal and the agent, respectively. The following theorem displays their limits.
	
	\begin{theorem}\label{t:patientlimit}
		$W_n(\cdot)$ converges uniformly to $\max\{0,R(\cdot)\}$,  %$\underline{W}$,\footnote{Recall that we define $\underline{W}(p):=\frac{\lambda}{r_2+\lambda}\max\{0,R(p)\}$ in Section \ref{sec:tranparency}, which depends on $r_2$. As $r_2\to 0$, $\underline{W}(\cdot) \xrightarrow{unif.} \max\{0,R(\cdot)\}$. Here, with a slight abuse of notation, the theorem claims that $W_n(\cdot)\xrightarrow{unif.} \max\{0,R(\cdot)\}$.} 
		and $V_n(\cdot)$ converges pointwise to $V^*(\cdot)$ which satisfies
		\[
		V^*(p):=\begin{cases} u, &\text{ if }p<p^{**}\\ 0, &\text{ if }{p>p^{**}} \end{cases}.
		\]
	\end{theorem}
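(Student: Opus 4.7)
The approach parallels the three-step strategy used for Theorem \ref{t:noisehelps}, but with the asymptotic parameter now being the discount rates rather than $\snr$. The central ingredient is a variant of the Learning Lemma: for any open interval $S=(\underline{s},\bar{s})\subset(0,1)$ containing the prior $p_0$, any $\varepsilon>0$, and the equilibrium policy $a_n$,
\begin{equation*}
\limsup_n \mathbb{E}^{NI}\left\{r_{1,n}\int_0^{\bar{\mathbb{T}}} e^{-r_{1,n} t}\mathbb{I}_{\{a_n(p_t)\leq 1-\varepsilon\}}\,dt\right\}=0,
\end{equation*}
where $\bar{\mathbb{T}}$ is the first exit time of $p_t$ from $S$. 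The intuition is that wherever $a_n\leq 1-\varepsilon$ the belief's quadratic variation is bounded below, so the belief leaves any fixed neighborhood in a calendar time of order $1$; against the measure $r_{1,n} e^{-r_{1,n} t}\,dt$, which assigns weight of order $r_{1,n}$ to any bounded calendar interval, this occupation time vanishes as $r_{1,n}\to 0$. This resonates with the long-run learning result of \cite{cripps2004} and is the discount-rate analogue of Claim \ref{lem:learninglemma2}.

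Granted this, I would prove convergence of the agent's value. For $p<p^{**}$, I bound $V_n(p)$ from above by $u+o(1)$ using the HJB indifference condition in the mixing region together with the Learning Lemma, so the discounted expected flow payoff $u+(1-a_n)c$ is at most $u+\varepsilon c+o(1)$. For the lower bound, consider the deviation in which the noninvestible type always fully mimics: his flow payoff is then exactly $u$, while (using that $p^*_n\geq p^{**}$ by Lemma \ref{lemma:principalbestreply}) martingale hitting-time estimates imply that the $r_{1,n}$-discounted probability of termination vanishes. For $p>p^{**}$, after establishing $p^*_n\to p^{**}$ (below), the principal terminates at rate $\lambda$ at every such belief; combined with the upward belief drift in (\ref{eq:dptagent}), the noninvestible type's expected termination time is bounded independently of $n$, so his $r_{1,n}$-discounted payoff is at most $\tfrac{r_{1,n}(u+c)}{r_{1,n}+\lambda}\to 0$.

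For the principal's value, I first show $p^*_n\to p^{**}$ by contradiction. Any subsequential limit satisfies $p^\dagger\geq p^{**}$. If $p^\dagger>p^{**}$, pick $p\in(p^{**},p^\dagger)$: eventually $p<p^*_n$ so the principal waits, and by the convex solution of the waiting-region HJB with boundary values $W_n(0)=0$ and $W_n(p^*_n)=R(p^*_n)$, we have $W_n(p)\leq R(p^*_n)$. Combined with the Learning Lemma, which forces near-absorption and hence $W_n(p)\to 0$, this contradicts the feasibility of terminating at the next opportunity, which yields $\tfrac{\lambda}{r_{2,n}+\lambda}R(p)\to R(p)>0$. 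With $p^*_n\to p^{**}$ in hand, the same convexity bound gives $W_n(p)\leq R(p^*_n)\to 0=\max\{0,R(p)\}$ for $p<p^{**}$, while the closed form $W_n(p)=\tfrac{\lambda}{r_{2,n}+\lambda}R(p)\to R(p)$ on the termination region settles $p>p^{**}$. Uniform convergence follows because the $W_n$'s are convex on $(0,p^*_n)$ with endpoint slopes of order $R(p^*_n)/p^*_n\to 0$ and linear on $(p^*_n,1)$ with bounded slope, yielding a uniform Lipschitz bound and hence equicontinuity; pointwise convergence plus equicontinuity then deliver uniform convergence to the continuous function $\max\{0,R(\cdot)\}$.

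The main obstacle is the patient-limit Learning Lemma itself. Unlike the high-$\snr$ case where a fast diffusion accomplishes exit from the ``bad mimicking'' region in a single short episode, here calendar time is measured on the long horizon $1/r_{1,n}$ and the belief may revisit such regions many times; one must carefully bound the cumulative occupation time of $\{a_n\leq 1-\varepsilon\}$ rather than a single exit time, which is the most delicate step. A secondary subtlety is ensuring that all bounds are insensitive to the ratio $\chi=\lim r_{2,n}/r_{1,n}$, so that the conclusion holds uniformly for every $\chi\in(0,\infty)$; this requires keeping constants in the Learning Lemma and in the contradiction argument for $p^*_n$ independent of how $r_{2,n}$ scales relative to $r_{1,n}$.
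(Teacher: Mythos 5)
Your overall scaffolding is correct, and the key lemma you isolate is indeed the one the paper proves: the discount-rate analogue of the Learning Lemma, which is Lemma OA.5 (called \texttt{lemma\_a7} internally), with the same submartingale/occupation-time structure you describe. Your reconstruction of the $p<p^{**}$ analysis and of the contradiction argument for $p^*_n\to p^{**}$ tracks the paper reasonably closely, though the paper derives $\lim_n v_n(z)=u$ below the cutoff from the ODE $a_n'(z)=1-a_n(z)-2(v_n(z)-u)/c$ (Lemmas OA.6--OA.8) rather than from an explicit full-mimicking deviation, and the link from the Learning Lemma to ``$W_n\to 0$'' passes through the chain $v_n\to u$ $\Rightarrow$ discounted duration $1-\mathbb{E}^{NI}_n[e^{-r_{1,n}\mathbb{T}}]\to 1$ $\Rightarrow$ $\mathbb{E}_n[e^{-r_{2,n}\mathbb{T}}]\to 0$, not through a direct ``near-absorption'' claim.

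The genuine gap is in your treatment of the region $p>p^{**}$. You assert that, because the principal terminates at rate $\lambda$ there and the belief is an $NI$-submartingale, ``the noninvestible type's expected termination time is bounded independently of $n$,'' so $V_n(p)\le \tfrac{r_{1,n}(u+c)}{r_{1,n}+\lambda}\to 0$. This is not justified and is not what is true. The submartingale has positive drift, but the drift-to-diffusion ratio is of order $1-a_n$, and the belief can easily cross into $(0,p^*_n)$ before the first Poisson arrival; once below $p^*_n$ the mimicking intensity is near one and the belief is nearly frozen, so calendar termination time is \emph{not} bounded in $n$. What must be shown --- and what the paper establishes in Lemma OA.9 --- is the weaker but still delicate statement $\mathbb{E}_n\{e^{-r_{i,n}\mathbb{T}}\}\to 1$ for $z_0>z^{**}$. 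Its proof is a two-step contradiction: if $\limsup_n \mathbb{E}_n\{e^{-r_{2,n}\mathbb{T}}\}<1$, then (by principal optimality and the convexity of $W_n$) the discounted occupation time of beliefs within $\varepsilon$ of $p(z_n^*)$ cannot vanish, and then a Bayes-rule domination $Q^I_n(A)\ge \xi\, Q^{NI}_n(A)$ on the sub-$p^*$ region shows that full mimicking would be a strictly profitable deviation for the noninvestible type. Nothing in your sketch plays the role of this incentive-compatibility argument.

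A secondary error is the claimed ``closed form $W_n(p)=\tfrac{\lambda}{r_{2,n}+\lambda}R(p)$ on the termination region.'' This formula is only the lower bound from the strategy of stopping blindly at the first Poisson arrival; the equilibrium value also reflects the option to continue when the belief drops below $p^*_n$, so $W_n$ is strictly above this formula in general. The correct route to $W_n(p)\to R(p)$ for $p>p^{**}$ is exactly $\mathbb{E}_n\{e^{-r_{2,n}\mathbb{T}}\}\to 1$, combined with the martingale property of beliefs and the affine form of $R$. Both $V_n\to 0$ and $W_n\to R$ above $p^{**}$ therefore hinge on Lemma OA.9, which is the one step your proposal is missing.
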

	
	Theorem \ref{t:patientlimit} shows that if both players get arbitrarily patient at comparable rates, then it is as if the principal does not receive any information.\footnote{Recall that the ``no-information'' value function $\underline{W}(\cdot) = \frac{\lambda}{r_2+\lambda}\max\{0,R(\cdot)\}$, so $\max\{0,R(\cdot)\}$ is the limiting ``no-information'' value function as $r_2$ tends to $0$.} Along the sequence, both the agent's incentives to mimic and the principal's resolve to wait for more information get stronger, but it turns out that the former effect dominates the latter.
	
	We view this result as a strong manifestation of the ratchet effect in the patient limit of our model. Since the principal cannot commit to not using future information against the agent, the noninvestible type will engage in performance boosting with almost full intensity in order to maintain his reputation. In the end, no useful information is ever revealed, and the principal's lack of commitment hurts her in the most extreme way. In our applications, this result suggests that the use of other instruments, such as some form of commitment (e.g., setting a deadline and/or grace period), additional screening devices (e.g., performance-based investment levels and/or salaries), or huge fines that increase the expected cost of performance boosting, may be necessary to help the principal get more information.

	\section{Concluding Remarks}
	
	In this paper, we study a stopping game with asymmetric information where the performance measures that reflect the fundamental can be manipulated by an agent at a cost. Despite the model being stylized, we obtain rich equilibrium dynamics. Our model illustrates that inflated performance can coexist with growing suspicion about a project's viability. Our analysis also implies that too much transparency may hinder the principal's ability to learn, by encouraging excessive performance boosting. This result suggests that some noise in the monitoring technology may be beneficial for the principal. Furthermore, we find an extreme form of ratchet effect in the patient limit, precluding any useful learning. This happens because the principal lacks the commitment to refrain from using the information obtained during the relationship against the agent, giving a highly patient noninvestible type strong incentives to boost performance and maintain his reputation.
	
	Several ways to extend our analysis are worth mentioning. While our main focus is the adverse selection problem, in some settings moral hazard is a prominent issue. Thus, it would be interesting to allow the agent's action to directly influence the principal's payoff. Relatedly, in that setting the principal might want to use history-dependent flow payoffs to reward/punish the agent.
	Another possibility is to expand the choice set of the principal by allowing her to elevate the ``status" of the relationship, such as promoting the agent or upgrading the terms of financing. Finally, optimal contracting in this setting remains an open problem. %\footnote{\cite{McClellan2020} studies optimal design of approval mechanisms in a model with an exogenous signal process, costly delay and interim participation constraints on the agent.}
	We leave all these aspects as interesting directions for future research.
	%{\color{green} [Add some discussion about commitment? We could also add a subsection to discuss commitment, having two strategic types, and other extensions together; but I'm totally fine with not doing any of these.]}
	
	\small
	\singlespacing
	\bibliography{references}
	\bibliographystyle{aer}
	
	%\onehalfspacing
	\newpage
	\appendix
	\section{Appendix: Proofs}
	\noindent\textbf{Remark.} Because only the noninvestible type of the agent has an active choice to make, whenever there is no confusion we simply refer to the ``noninvestible-type agent" as the agent.
	\subsection{Equilibrium Characterization: Toward a Proof of Theorem \ref{uniqueness}}\label{app:equilibrium}
	To establish Theorem \ref{uniqueness}, we use the results that the equilibrium belief process must have full support (Lemma \ref{lem:fullspan}), and that the principal's equilibrium strategy must have a cutoff structure (Lemma \ref{lemma:principalbestreply}). These two lemmas are proved in the Online Appendix. The main proof characterizes the agent's (pseudo-)best reply to any cutoff termination rule (Lemma \ref{lemma:optimality-agent}). Finally, we prove equilibrium existence and uniqueness using a fixed point argument.

	\subsubsection{Proof of Lemma \ref{lemma:optimality-agent}}
	In light of Lemma \ref{lemma:principalbestreply}, let us fix a cutoff termination rule of the principal. We define a new state variable $Z_t:=\log\frac{p_t}{1-p_t}$, which is a strictly increasing transformation of $p_t$. Note that $Z_t$ is defined on $(-\infty,\infty)$. 
	
	Given the principal's conjecture $a(\cdot)$ about the agent's policy function and the agent's actual policy function $\tilde{a}(\cdot)$, the law of motion of $p_t$ is given by \eqref{eq:dptagent}. By It\^{o}'s lemma, the law of motion of $Z_t$ is
	\begin{equation}\label{eq:dZt}
		dZ_t=\snr^2(1-a_t)\left[1-\tilde{a}_t-\tfrac{1}{2}(1-a_t)\right]dt-\snr(1-a_t)dB_t.
	\end{equation}
	
	Now, suppose that the principal uses a particular cutoff policy function $b$ with cutoff belief $p^*\in (0,1)$. Suppose also that the noninvestible type's policy function $a$ satisfies the conditions in Lemma \ref{lemma:optimality-agent}: $a(\cdot)$ is Lipschitz, $\sup_{p\in (0,1)}a(p)<1$, and it satisfies \eqref{eq:agent-optimality}, i.e., $a\in \argmax_{\tilde{a}\in \mathcal{P}}\hat{V}(p,\tilde{a},b;a)$; moreover, the resulting $V(p)=\hat{V}(p,a(p),b(p);a(p))$ is regular. \textbf{For brevity, we call any $a(\cdot)$ that satisfies these conditions a \textit{pseudo-best reply} to $b(\cdot)$.}\footnote{We call such $a(\cdot)$ pseudo-best reply because $b(\cdot)$ by itself does not lead to a well-defined strategy of the principal; the principal's interpretation of the observed signal into her posterior belief depends on (her conjecture of) the agent's strategy. The equilibrium condition that the principal's conjecture coincides with the agent's actual strategy is imposed as part of the definition of a pseudo-best reply.} Lipschitz continuity of $a(\cdot)$ implies that, for any control in $\mathcal{P}$ or $\mathcal{A}$,  the controlled process $p_t$ or $Z_t$ in the agent's problem always admits a unique strong solution.

	Let $Z_t$ be the new state variable and define $v(z):=V\left(\frac{e^z}{1+e^z}\right)$ and $z^*:=\log\frac{p^*}{1-p^*}$. Because we work with $Z_t$ most of the time in this appendix, \textit{we will write $a(z)$ to mean $a\left(\frac{e^z}{1+e^z}\right)$ whenever there is no confusion.} The HJB for the agent is\footnote{Since $v$ is regular, $v$ is $C^2$ except at possibly finite points. This HJB holds on any interval over which $v$ is $C^2$.}
	\begin{equation}\label{eq:HJBagent}
		\left[r_1+b(z)\lambda\right]v(z)=\max_{\tilde{a}\in [0,1]} r_1[u+(1-\tilde{a})c]+\snr^2[1-a(z)]\left[1-\tilde{a}-\tfrac{1}{2}(1-a(z))\right]v'(z)+\tfrac{1}{2}\snr^2[1-a(z)]^2v''(z).
	\end{equation}
	
	The following sequence of claims establishes some necessary properties of any pseudo-best reply $a(\cdot)$. 
	
	\begin{claim}\label{cl:aoptimal}
		$a(z)=1-\frac{r_1c}{\max\left\{r_1c,-\snr^2 v'(z)\right\}}$, for all $z\in (-\infty,\infty)$. 
	\end{claim}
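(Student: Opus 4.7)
\medskip
\noindent\textbf{Proof plan for Claim \ref{cl:aoptimal}.}

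The plan is to exploit the fact that the HJB equation \eqref{eq:HJBagent} is \emph{linear} in the agent's control $\tilde{a}$, so that the inner maximization reduces to a sign check. First I would collect all terms on the right-hand side that depend on $\tilde{a}$: the flow payoff contributes $-r_1 c\,\tilde{a}$ and the drift of $Z_t$ contributes $-\snr^2[1-a(z)]\,\tilde{a}\,v'(z)$. Thus the maximization in \eqref{eq:HJBagent} is equivalent to
\[
\max_{\tilde{a}\in[0,1]} -\tilde{a}\bigl\{r_1 c+\snr^2[1-a(z)]\,v'(z)\bigr\}.
\]
The optimizer is therefore $\tilde{a}=0$ if the bracketed coefficient is strictly positive, $\tilde{a}=1$ if it is strictly negative, and any value in $[0,1]$ if it is zero.

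Next I would impose the pseudo-best reply requirement that the principal's conjecture coincides with the agent's play, i.e., $\tilde{a}(z)=a(z)$. This immediately rules out the strictly negative case: it would force $a(z)=1$, contradicting the condition $\sup_{p\in(0,1)}a(p)<1$ built into the definition of a pseudo-best reply (and also guaranteed by Lemma \ref{lem:fullspan}). So at every $z$ we must have
\[
r_1 c+\snr^2[1-a(z)]\,v'(z)\geq 0,
\]
with equality whenever $a(z)>0$. When the inequality is strict, the unique maximizer is $a(z)=0$, which is consistent precisely when $-\snr^2 v'(z)\leq r_1 c$; when equality holds, solving for $a(z)$ gives $a(z)=1+\dfrac{r_1 c}{\snr^2 v'(z)}=1-\dfrac{r_1 c}{-\snr^2 v'(z)}$, which lies in $[0,1)$ precisely when $-\snr^2 v'(z)\geq r_1 c$.

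Finally, combining the two regimes yields the compact formula
\[
a(z)=1-\frac{r_1 c}{\max\{r_1 c,\,-\snr^2 v'(z)\}},
\]
since on the region $\{-\snr^2 v'(z)\leq r_1 c\}$ the denominator is $r_1 c$ and the expression equals $0$, while on $\{-\snr^2 v'(z)\geq r_1 c\}$ the denominator is $-\snr^2 v'(z)$ and the expression reproduces the interior solution. (Note that the formula is automatically well-posed even at points where $v'(z)\geq 0$, since then the maximum is $r_1 c$ and $a(z)=0$.) The formula holds at every $z$: at points where $v$ is $C^2$ the HJB applies directly, and at the finitely many exceptional points regularity of $v$ ensures $v'$ is still continuous, so the formula extends by taking limits. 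The only subtle step is ruling out the coefficient being strictly negative, and this is handled cleanly by the uniform bound $\sup a<1$ from the pseudo-best reply definition.
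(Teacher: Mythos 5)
Your proof is correct and follows essentially the same route as the paper's: exploit linearity of the HJB in the control, reduce to a sign check on the coefficient $r_1 c+\snr^2[1-a(z)]\,v'(z)$, use the pseudo-best-reply fixed-point requirement to resolve the cases (the paper in fact relies on the same observation you make — note also that if $a(z)=1$ then $1-a(z)=0$ makes the coefficient $r_1c>0$, so the strictly-negative branch is self-contradictory even without invoking $\sup a<1$), and then upgrade from almost everywhere to everywhere by continuity of both sides, using that $a$ is Lipschitz and $v\in C^1$.
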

	\begin{proof}
		Since the RHS of \eqref{eq:HJBagent} is affine in the choice variable, optimality requires that, for almost every $z\in \mathbb{R}$,\footnote{Since $\sup_{z\in \mathbb{R}}a(z)<1$ by definition of a pseudo-best reply (and by Lemma \ref{lem:fullspan}), the law of motion \eqref{eq:dZt} implies that  the distribution of $Z_t$ has full support for any $t>0$, i.e., $\mbox{supp}(Z_t)=\mathbb{R},\forall t>0$.}
		\begin{equation*}
			a(z)\begin{cases}
				=0, &\text{ if }r_1c+\snr^2[1-a(z)]v'>0\\
				\in [0,1] &\text{ if }r_1c+\snr^2[1-a(z)]v'=0\\
				=1, &\text{ if }r_1c+\snr^2[1-a(z)]v'<0\\
			\end{cases}.
		\end{equation*}
		This implies that, for almost every $z\in \mathbb{R}$, we have
		\begin{equation}\label{eq:aoptimal}
			a(z)=1-\frac{r_1c}{\max\left\{r_1c,-\snr^2 v'(z)\right\}}.
		\end{equation}
		Since both sides of \eqref{eq:aoptimal} are continuous in $z$ (recall that $a(\cdot)$ is Lipschitz by assumption, and $v(\cdot)$ is $C^1$ by assumption), we conclude that \eqref{eq:aoptimal} must hold for every $z\in \mathbb{R}$.\footnote{This is because any continuous function that is $0$ almost everywhere is equal to $0$ everywhere.}
	\end{proof}
	
	\begin{claim}\label{cl:La=1}
		Fix any $z_1<z_2\leq z^*$ and suppose that  $a(z)=0$ for all $z\in (z_1,z_2)$. Then,
		\begin{equation}\label{eq:vLa=1}
			v(z)=u+c + A_1 e^{\xi_L z} + A_2 e^{\xi'_L z}, \forall z\in (z_1,z_2)
		\end{equation}
		for some $A_1,A_2\in\mathbb{R}$, where $\xi_L>0>\xi'_L$ are the two roots of the characteristic equation $ \xi^2 +\xi = 2r_1/\snr^2.$
	\end{claim}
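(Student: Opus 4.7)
The plan is to reduce the HJB equation \eqref{eq:HJBagent} to a constant-coefficient inhomogeneous linear ODE on the interval $(z_1, z_2)$ and then solve it in closed form. First, I will argue that on this interval the HJB simplifies substantially: since $(z_1, z_2)$ is open and contained in $(-\infty, z^*]$, the principal's cutoff policy gives $b(z) = 0$ throughout; and by Claim \ref{cl:aoptimal}, the hypothesis $a(z) = 0$ implies $v'(z) \geq -r_1 c/\snr^2$ on $(z_1,z_2)$, which makes $\tilde{a} = 0$ a maximizer of the affine-in-$\tilde{a}$ expression on the right-hand side of \eqref{eq:HJBagent} (and is consistent with equilibrium, where the agent's actual play must match the principal's conjecture).

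Second, substituting $a(z) = b(z) = \tilde{a} = 0$ into \eqref{eq:HJBagent} yields
\[
r_1 v(z) \;=\; r_1(u+c) + \tfrac{1}{2}\snr^2 v'(z) + \tfrac{1}{2}\snr^2 v''(z)
\]
on any open subinterval of $(z_1, z_2)$ where $v$ is $C^2$. Since $v$ is regular, $v$ is $C^1$ everywhere and $C^2$ off a finite set, so the ODE holds away from finitely many points; the constant-coefficient structure together with $C^1$-regularity will let me patch the closed-form solution across those isolated points so a single expression holds throughout $(z_1, z_2)$.

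Third, I will solve the ODE by the standard method. Dividing through by $\tfrac{1}{2}\snr^2$ gives $v'' + v' - (2r_1/\snr^2)\, v = -2r_1(u+c)/\snr^2$, whose associated characteristic equation is $\xi^2 + \xi = 2r_1/\snr^2$ with two real roots $\xi_L > 0 > \xi'_L$. A particular solution is the constant $v \equiv u + c$ (economically, this is the agent's undiscounted flow payoff if he never mimics and the relationship never ends). Adding a linear combination of the homogeneous solutions yields $v(z) = u + c + A_1 e^{\xi_L z} + A_2 e^{\xi'_L z}$ for some constants $A_1, A_2$, which are left free here and are pinned down later by matching conditions at the boundary of the non-mimicking region.

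The main obstacle is not the ODE itself, which is routine, but the bookkeeping around the HJB: verifying that $\tilde{a} = 0$ really attains the HJB maximum under the hypothesis $a(z) = 0$ (via Claim \ref{cl:aoptimal}), and carefully extending the closed-form solution across any finite set of points in $(z_1, z_2)$ at which $v$ might fail to be twice differentiable. Both are handled using only the definition of a pseudo-best reply and the regularity built into that definition.
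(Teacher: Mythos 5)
Your proposal is correct and follows essentially the same route as the paper's own proof: set $b(z)=0$ and $a(z)=0$ in the HJB \eqref{eq:HJBagent}, note (via Claim \ref{cl:aoptimal}) that $\tilde{a}=0$ attains the maximum, and solve the resulting constant-coefficient linear ODE $r_1 v = r_1(u+c) + \tfrac{1}{2}\snr^2(v'+v'')$. The paper states this tersely while you spell out the optimality of $\tilde a=0$ and the patching across the finitely many points where $v''$ may fail to exist, but those are just the details the paper leaves implicit.
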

	\begin{proof}
		Since $b(z)=0$ and $a(z)=0$ for all $z\in (z_1,z_2)$, equation \eqref{eq:HJBagent} becomes
		\begin{equation*}
			r_1v(z)= r_1(u+c)+\frac{1}{2}\snr^2[v'(z)+v''(z)].
		\end{equation*}
		It is easy to verify that its general solution is given by \eqref{eq:vLa=1}.
	\end{proof}
	
	\begin{claim}\label{cl:Ra=1}
		Fix any $ z^*\leq z_1<z_2$ and suppose that $a(z)=0$ for all $z\in (z_1,z_2)$. Then,
		\begin{equation}\label{eq:vRa=1}
			v(z)=\frac{r_1}{r_1+\lambda}(u+c) + B_1 e^{\xi_R z} + B_2 e^{\xi'_R z}, \forall z\in (z_1,z_2)
		\end{equation}
		for some $B_1,B_2\in\mathbb{R}$, where $\xi_R<0<\xi'_R$ are the two roots of the characteristic equation $ \xi^2 +\xi = 2(r_1+\lambda)/\snr^2.$
	\end{claim}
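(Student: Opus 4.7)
The plan is to mirror the argument of Claim~\ref{cl:La=1}: reduce the HJB to a linear constant-coefficient ODE on $(z_1,z_2)$ and then write down its general solution. The only substantive change is that now $z_1 \geq z^*$, so the principal's cutoff strategy gives $b(z)=1$ throughout the interval, contributing an extra $\lambda v(z)$ term on the left-hand side of \eqref{eq:HJBagent}.

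First I would pin down the agent's maximizer. Since $a(z)=0$ on $(z_1,z_2)$, Claim~\ref{cl:aoptimal} gives $\snr^2 v'(z)\geq -r_1 c$ there, so the coefficient of $-\tilde a$ inside the bracket of \eqref{eq:HJBagent}, namely $r_1 c + \snr^2[1-a(z)]v'(z) = r_1 c + \snr^2 v'(z)$, is nonnegative. Hence the maximizer is $\tilde a = 0$, and substituting $a=\tilde a=0$, $b=1$ into \eqref{eq:HJBagent} collapses it to
$$(r_1+\lambda)v(z) \;=\; r_1(u+c) \;+\; \tfrac12\snr^2 v'(z) \;+\; \tfrac12 \snr^2 v''(z),$$
which is the analogue of the ODE from the previous claim with $r_1$ replaced by $r_1+\lambda$ on the LHS.

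Next I would solve this linear second-order ODE. A constant particular solution $v_p \equiv \tfrac{r_1}{r_1+\lambda}(u+c)$ handles the inhomogeneity. The associated homogeneous equation $\tfrac12\snr^2(v''+v') - (r_1+\lambda)v = 0$ has characteristic polynomial $\xi^2 + \xi = 2(r_1+\lambda)/\snr^2$, whose two real roots $\xi_R$ and $\xi'_R$ have opposite signs because their product $-2(r_1+\lambda)/\snr^2$ is negative; label them so that $\xi_R<0<\xi'_R$. Linearity then yields \eqref{eq:vRa=1} on $(z_1,z_2)$.

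There is no real obstacle: the only step requiring care is verifying that the HJB holds in the classical sense on the interval, which follows from the regularity of $v$ built into the definition of a pseudo-best reply (so $v'$ and $v''$ exist on $(z_1,z_2)$), and from the fact that Claim~\ref{cl:aoptimal} forces the maximizer in the HJB to be $\tilde a = 0$ wherever $a=0$. Everything else is routine ODE theory.
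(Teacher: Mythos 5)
Your proposal is correct and follows essentially the same route as the paper: substitute $b(z)=1$, $a(z)=0$ into \eqref{eq:HJBagent} to obtain the linear ODE $(r_1+\lambda)v = r_1(u+c) + \tfrac{1}{2}\snr^2(v'+v'')$, then solve it by the usual particular-plus-homogeneous decomposition. The only addition over the paper's terse argument is your explicit verification via Claim~\ref{cl:aoptimal} that the maximizer on the right side of the HJB is $\tilde a = 0$, which is a sound and worthwhile clarification but not a different method.
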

	%\vspace{-0.5cm}
	\begin{proof}
		Since $b(z)=1$ and $a(z)=0$ for all $z\in (z_1,z_2)$, equation \eqref{eq:HJBagent} becomes
		\begin{equation*}
			(r_1+\lambda)v(z)= r_1(u+c)+\frac{1}{2}\snr^2[v'(z)+v''(z)].
		\end{equation*}
		It is easy to verify that its general solution is given by \eqref{eq:vRa=1}.
	\end{proof}
	
	Now, let us denote by $\Phi$ and $\phi$ the CDF and PDF of the standard normal distribution, respectively.

	\begin{claim}\label{cl:La<1}
		Fix any $z_1<z_2\leq z^*$ and suppose that  $a(z)\in (0,1)$ for all $z\in (z_1,z_2)$. Then,
		\begin{equation}\label{eq:vLa<1}
			v(z)=u+\sqrt{\kappa_L}\Phi^{-1}(C_1e^{z}+C_2),\forall z\in (z_1,z_2)
		\end{equation}
		and
		\begin{equation}\label{eq:aL<1}
			a(z)=1+\frac{\sqrt{2r_1}}{\snr}\frac{\phi\left(\Phi^{-1}(C_1e^z+C_2)\right)}{C_1e^z}
		\end{equation}
		for some $C_1<0$ and $C_2\in \mathbb{R}$, where 
		$
		\kappa_L:=\tfrac{r_1c^2}{2\snr^2}.
		$
		
		Moreover, $a(z)$ is strictly increasing, or strictly decreasing, or first strictly decreasing and then strictly increasing on $(z_1,z_2)$. 
	\end{claim}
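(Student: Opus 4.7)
The plan is to derive a second-order ODE for $v$ valid on the interior mixing region, reduce it to a linear first-order ODE via a substitution, solve in closed form, and then analyze the sign of $a'(\cdot)$ to get the three-way monotonicity dichotomy.

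First, since $a(z)\in(0,1)$ throughout $(z_1,z_2)$ and the HJB \eqref{eq:HJBagent} is linear in $\tilde a$ with slope $-r_1c-\snr^2[1-a(z)]v'(z)$, interior optimality forces
\[
1-a(z)=-\frac{r_1c}{\snr^2 v'(z)},
\]
which in particular requires $v'<0$. Substituting this and $b(z)=0$ (since $z\leq z^*$) back into the HJB and simplifying, I expect to obtain the master ODE
\[
v-u \;=\; \kappa_L\cdot\frac{v''-v'}{(v')^2},\qquad \kappa_L=\frac{r_1c^2}{2\snr^2}.
\]
Rescaling by $v(z)=u+\sqrt{\kappa_L}\,f(z)$ reduces this to $f''=f'+f(f')^2$. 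The key trick is that this equation is autonomous: writing $p(f):=f'(z)$ as a function of $f$ so that $f''=p\,(dp/df)$, the nonlinear second-order ODE collapses to the linear first-order ODE $dp/df-fp=1$. The integrating factor $e^{-f^2/2}$ then integrates it to $p(f)=e^{f^2/2}[\sqrt{2\pi}\,\Phi(f)+D]$ for some constant $D$.

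Now $f'=p(f)$ is separable; the substitution $y=\Phi(f)$ (so $dy=\phi(f)\,df=(1/\sqrt{2\pi})e^{-f^2/2}\,df$) makes the integral elementary, yielding $\int dy/(y+D/\sqrt{2\pi})=z+\text{const}$. Exponentiating gives $\Phi(f(z))=C_1e^z+C_2$, hence $f(z)=\Phi^{-1}(C_1e^z+C_2)$ and the announced formula for $v$. The formula for $a$ then follows by differentiating $\Phi(f)=C_1e^z+C_2$ to obtain $v'(z)=\sqrt{\kappa_L}\,C_1e^z/\phi(f(z))$, substituting into $1-a=-r_1c/(\snr^2v')$, and simplifying using $\sqrt{\kappa_L}=c\sqrt{r_1}/(\snr\sqrt{2})$. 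The requirement $v'<0$ pins down $C_1<0$.

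For the monotonicity claim, write $a(z)=1+(\sqrt{2r_1}/\snr)\,F(z)$ with $F(z):=\phi(f(z))/(C_1e^z)$. A short computation using $\phi'(f)=-f\phi(f)$ and $f'=C_1e^z/\phi(f)$ yields $F'(z)=-[f(z)+F(z)]$. Setting $H:=f+F$, we get $H'=f'-H$, and since $C_1<0$ we have $f'(z)<0$ throughout $(z_1,z_2)$. Hence at any zero of $H$ we have $H'=f'<0$, so $H$ can only cross zero downward and therefore has at most one zero on $(z_1,z_2)$. Since $a'=-(\sqrt{2r_1}/\snr)H$, the three possible sign patterns of $H$ ($H<0$ everywhere, $H>0$ everywhere, or $H$ positive then negative) correspond exactly to $a$ strictly increasing, strictly decreasing, or first strictly decreasing and then strictly increasing. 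The main obstacle I anticipate is the reduction step: the nonlinear term $f(f')^2$ only becomes tractable after recognizing that the equation is autonomous (so $f'$ can be treated as a function of $f$, linearizing the ODE) and that $y=\Phi(f)$ is the exact substitution that makes the separated integral elementary. The monotonicity analysis is conceptually short but relies on the subtle observation that every zero of $H$ is a transversal down-crossing; a brief consistency check — substituting $H\equiv 0$ on a subinterval into the explicit formulas and deriving a contradiction — rules out local constancy and upgrades each piece to strict monotonicity.
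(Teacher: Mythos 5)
Your proposal is correct, and it covers the same three steps the paper does (interior first-order condition, closed-form ODE solution, monotonicity from a quasi-linear identity), but with two parts done differently and a bit more explicitly. First, the paper only \emph{verifies} that $v=u+\sqrt{\kappa_L}\Phi^{-1}(C_1e^z+C_2)$ solves $v-u=\kappa_L(v''-v')/(v')^2$; you actually \emph{derive} it by the autonomous reduction $p=f'$ as a function of $f$, linearizing to $dp/df-fp=1$ and integrating with $e^{-f^2/2}$, then separating with $y=\Phi(f)$. Second, your monotonicity argument is a phase-line restatement of the paper's: the paper establishes the identity $1-a-a'=2(v-u)/c$ and then grinds through three sub-claims (monotonicity of $a'$'s sign to the left, no interval of constancy, propagation of nonnegativity to the right). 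Writing $a=1+\tfrac{\sqrt{2r_1}}{\snr}F$ with $F=\phi(f)/(C_1e^z)$ and $H:=f+F$, your $H'=f'-H$ is exactly the same identity (one can check $H=-a'\cdot\snr/\sqrt{2r_1}$ and $f=2(v-u)/(c)\cdot\snr/\sqrt{2r_1}$, so $H=f+F$ reads $1-a-a'=2(v-u)/c$), but the conclusion becomes a one-line phase-plane observation: since $f'<0$ everywhere (because $C_1<0$), at any zero of $H$ we have $H'=f'<0$, so every critical point of $a$ is a strict local minimum, giving at most one critical point and hence precisely the stated trichotomy. Your route is tighter; the paper's is more elementary in that it avoids introducing the auxiliary $H$. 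Both are correct.
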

	
	\begin{proof}
		Fix any $z_1<z_2\leq z^*$ such that $a(z)\in (0,1)$  for all $z\in (z_1,z_2)$. Claim \ref{cl:aoptimal} implies that
		\begin{equation}\label{eq:a<1}
			a(z)=1+\frac{r_1c}{\snr^2v'(z)}, \forall z\in (z_1,z_2).
		\end{equation}
		Substituting \eqref{eq:a<1} into \eqref{eq:HJBagent} and setting $b(z)=0$, we have
		\begin{equation}\label{eq:ODEvLa<1}
			v(z)=u+\kappa_L\frac{v''(z)-v'(z)}{v'(z)^2}.
		\end{equation}
		It is easy to verify that its general solution is given by \eqref{eq:vLa<1}, and that the resulting $a(\cdot)$ implied by \eqref{eq:a<1} is given by \eqref{eq:aL<1}. Moreover, since $a(z)\in (0,1)$, we must have $v'(z)<0$, i.e., $C_1<0$.
		
		To analyze the monotonicity of $a(\cdot)$ on $(z_1,z_2)$, we first establish the following equality which links $a'(z)$ to $a(z)$: 
		\begin{equation}\label{eq:a+a'}
			1- a(z)-a'(z) = 2\left(\frac{v(z)-u}{c}\right).
		\end{equation}
		By \eqref{eq:a<1},
		\[
		1-a(z) = - \frac{r_1c}{\snr^2v'(z)}.
		\]
		Differentiating this expression, we obtain
		\[
		-a'(z) = \frac{r_1c }{\snr^2}\frac{v''(z)}{v'(z)^2} = - \frac{v''(z)}{v'(z)} [1-a(z)].
		\]
		Recall, from the agent's HJB \eqref{eq:ODEvLa<1} in this case, that
		\[
		\frac{v''(z)}{v'(z)} = 1 + \frac{\left[v(z)-u\right] v'(z)}{\kappa_L}  = 1 - 2\left(\frac{v(z)-u}{c}\right)\frac{1}{1-a(z)}, 
		\]
		where the second equality follows from \eqref{eq:a<1} and $\kappa_L=\frac{r_1c^2}{2\snr^2}$. Equation \eqref{eq:a+a'} then follows immediately.
		
		Equation \eqref{eq:a+a'} implies that
		\begin{enumerate}
			\item[(i)] If $a'(\tilde{z})<0$ for some $\tilde{z}\in (z_1,z_2)$, then $a'(z)<0$ for all $z\in (z_1,\tilde{z})$. 
			\item[(ii)] There does not exist an interval $I\subseteq (z_1,z_2)$ s.t. $a'(z)=0$ for all $z\in I$.
			\item[(iii)] If $a(\tilde{z})\geq 0$ for some $\tilde{z}\in (z_1,z_2)$, then $a(\cdot)$ is strictly increasing on $(\tilde{z},z_2)$.
			
		\end{enumerate}
		
		To see (i), suppose that $a'(\tilde{z})<0$ for some $\tilde{z}\in (z_1,z_2)$. Since $a(\cdot)$ given by \eqref{eq:aL<1} is a smooth function on $(z_1,z_2)$, we can define 
		$\underline{z}=\inf\left\{z\in [z_1,\tilde{z}):a'(\cdot)|_{(z,\tilde{z}]}<0\right\}.
		$ 
		Result (i) is proved if $\underline{z}=z_1$. Suppose (for a contradiction) that $\underline{z}>z_1$. Continuity of $a'$ implies that $a'(\underline{z})=0$. Moreover, since $a'(z)<0$ for all $z\in (\underline{z},\tilde{z}]$, we have $a(\underline{z})>a(\tilde{z})$. Then,
		\[
		2\left(\frac{v(\underline{z})-u}{c}\right)=1-a(\underline{z})-a'(\underline{z})<1-a(\tilde{z})-a'(\tilde{z})=2\left(\frac{v(\tilde{z})-u}{c}\right)
		\]
		where the equalities follow from \eqref{eq:a+a'} and the strict inequality follows from $a(\underline{z})>a(\tilde{z})$ and $a'(\underline{z})=0>a'(\tilde{z})$. But this is a contradiction to $v(\underline{z})>v(\tilde{z})$ because $C_1<0$ and $v$ is strictly decreasing on $(z_1,z_2)$.

		To see (ii), suppose (for a contradiction) that there exists an interval $I\subseteq (z_1,z_2)$ s.t. $a'(z)=0$ for all $z\in I$. Then, the LHS of \eqref{eq:a+a'} is constant on $I$ while the RHS is strictly decreasing, a contradiction.
		
		To see (iii), suppose that $a'(\tilde{z})\geq 0$ for some $\tilde{z}\in (z_1,z_2)$. Then, we must have $a'(z)\geq 0$ for all $z\in (\tilde{z},z_2)$, for otherwise we would reach a contradiction to (i). Further, take any $z_3,z_4$ s.t. $\tilde{z}\leq z_3<z_4\leq z_2$. Since $a'(z)\geq 0$, we know that $a(z_3)\leq a(z_4)$. But this inequality must be strict, for otherwise $a(z_3)=a(z)=a(z_4)$ for all $z\in (z_3,z_4)$ contradicting Result (ii). So, $a(\cdot)$ is strictly increasing on  $(\tilde{z},z_2).$
		
		Finally, to establish the monotonicity of $a(\cdot)$, suppose first that $a'(z)\geq 0$ for all $z\in (z_1,z_2)$. The same argument for (iii) above proves that $a$ must be strictly increasing on $(z_1,z_2)$. Suppose now that $a'(\tilde{z})< 0$ for some $\tilde{z}\in (z_1,z_2)$. Let $Z\subseteq (z_1,z_2)$ be the largest interval containing $\tilde{z}$ s.t. $a'(z)<0$ for all $z\in Z$. Result (i) immediately implies that $\inf Z = z_1$. If $\sup Z = z_2$, then $a(\cdot)$ is strictly decreasing on $(z_1,z_2)$. If $\sup Z <z_2$, continuity of $a'$ implies that $a'(\sup Z) = 0$. Then, Result (iii) implies that $a(\cdot)$ is strictly increasing on $(\sup Z, z_2)$. In summary, $a(\cdot)$ is either strictly increasing, or strictly decreasing, or first strictly decreasing and then strictly increasing on $(z_1,z_2)$. 
	\end{proof}
	
	\begin{claim}\label{cl:Ra<1}
		Fix any $z^*\leq z_1<z_2\ $ and suppose that $a(z)\in (0,1)$ for all $z\in (z_1,z_2)$. Then,
		\begin{equation}\label{eq:vRa<1}
			v(z)=\frac{r_1}{r_1+\lambda}u+\sqrt{\kappa_R}\Phi^{-1}(D_1e^{z}+D_2),\forall z\in (z_1,z_2)
		\end{equation}
		and
		\begin{equation}\label{eq:aR<1}
			a(z)=1+\frac{\sqrt{2(r_1+\lambda)}}{\snr}\frac{\phi\left(\Phi^{-1}(D_1e^z+D_2)\right)}{D_1e^z}
		\end{equation}
		for some $D_1<0$ and $D_2\in \mathbb{R}$, where 
		$
		\kappa_R:=\tfrac{r_1^2c^2}{2(r_1+\lambda)\snr^2}.
		$
		
		Moreover, $a(z)$ is strictly increasing, or strictly decreasing, or first strictly decreasing and then strictly increasing on $(z_1,z_2)$.
	\end{claim}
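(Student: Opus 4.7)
The proof is essentially a mirror image of the argument for Claim~\ref{cl:La<1}, differing only because now $z_1\geq z^*$ forces $b(z)=1$ on $(z_1,z_2)$, which introduces the extra discount term $\lambda v(z)$ in the HJB. First, because $a(z)\in(0,1)$ on $(z_1,z_2)$, Claim~\ref{cl:aoptimal} yields
\[
a(z)=1+\frac{r_1 c}{\snr^2 v'(z)}, \qquad 1-a(z)=-\frac{r_1 c}{\snr^2 v'(z)},
\]
so in particular $v'(z)<0$. Substituting into the HJB \eqref{eq:HJBagent} with $b(z)=1$, the maximizer $\tilde a=a(z)$, and simplifying the coefficients of $v'$ and $v''$ using the closed form for $1-a(z)$, the flow/drift terms telescope to give the ODE
\[
v(z)=\frac{r_1}{r_1+\lambda}u+\kappa_R\cdot\frac{v''(z)-v'(z)}{v'(z)^2}, \qquad \kappa_R=\frac{r_1^2 c^2}{2(r_1+\lambda)\snr^2}.
\]
This ODE has exactly the same form as the one derived in the proof of Claim~\ref{cl:La<1}, just with the constant $u$ replaced by $\frac{r_1}{r_1+\lambda}u$ and $\kappa_L$ replaced by $\kappa_R$, reflecting the effective discount augmentation by $\lambda$.

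Second, I would verify that any solution admits the representation \eqref{eq:vRa<1}. Setting $w=\Phi^{-1}(D_1 e^z+D_2)$ so that $\Phi(w)=D_1 e^z+D_2$, differentiating twice gives $w'=D_1 e^z/\phi(w)$ and, using $\phi'(w)=-w\phi(w)$, $w''=w'+w(w')^2$. Plugging $v=\frac{r_1}{r_1+\lambda}u+\sqrt{\kappa_R}\,w$ into the ODE and simplifying shows the identity $v''-v'=\frac{(v-\frac{r_1}{r_1+\lambda}u)(v')^2}{\kappa_R}$ holds, confirming \eqref{eq:vRa<1}. The constant $D_1$ must be negative since $v'<0$, and substituting back into the expression for $a(z)$ yields \eqref{eq:aR<1}. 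A standard ODE argument (second-order equation, two free constants) shows that every regular solution in this region has this form.

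Third, for the monotonicity of $a(z)$ on $(z_1,z_2)$, I would establish the analog of the identity \eqref{eq:a+a'}. Differentiating $1-a(z)=-r_1 c/(\snr^2 v'(z))$ gives $-a'(z)=-\frac{v''(z)}{v'(z)}[1-a(z)]$, and substituting $v''/v'=1+(v-\frac{r_1}{r_1+\lambda}u)v'/\kappa_R$ together with $\snr^2 \kappa_R/(r_1 c)=r_1 c/(2(r_1+\lambda))$ yields
\[
1-a(z)-a'(z)=\frac{2}{c}\Bigl(\tfrac{r_1+\lambda}{r_1}v(z)-u\Bigr).
\]
Since $v$ is strictly decreasing on $(z_1,z_2)$, the right-hand side is strictly decreasing in $z$. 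The three sub-claims used in the proof of Claim~\ref{cl:La<1} now carry over verbatim: (i) if $a'<0$ somewhere, then $a'<0$ on the entire interval to its left (otherwise, taking the infimum of the region of decrease leads, via the monotonicity of the right-hand side, to a contradiction); (ii) $a'$ cannot vanish identically on any subinterval, because its left-hand side would be constant while the right-hand side is strictly decreasing; (iii) once $a'\geq 0$ at some point, $a$ is strictly increasing thereafter. Combining these gives the three possible shapes for $a$: strictly increasing throughout, strictly decreasing throughout, or strictly decreasing then strictly increasing.

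The main obstacle is simply the bookkeeping: carefully tracking the extra $\lambda$ term through the HJB to obtain the correct constant $\kappa_R$ and the correct asymptote $\frac{r_1}{r_1+\lambda}u$, since a small algebra slip at that stage would propagate through the solution form and the monotonicity identity. Apart from this, every step follows the same template already developed for Claim~\ref{cl:La<1}, so no genuinely new technique is required.
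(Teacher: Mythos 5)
Your proposal is correct and mirrors the paper's own argument exactly: the paper likewise derives the ODE $v = \frac{r_1}{r_1+\lambda}u + \kappa_R \frac{v''-v'}{(v')^2}$ by substituting the first-order condition from Claim~\ref{cl:aoptimal} into the HJB with $b(z)=1$, verifies the closed-form solution, and establishes the monotonicity via the identity $1-a(z)-a'(z) = 2\bigl(\frac{v(z)-\frac{r_1}{r_1+\lambda}u}{c}\bigr)\bigl(\frac{r_1+\lambda}{r_1}\bigr)$, which is algebraically identical to your $\frac{2}{c}\bigl(\frac{r_1+\lambda}{r_1}v(z)-u\bigr)$. The three sub-claims (i)--(iii) transfer exactly as you indicate, since the linking identity again has a strictly decreasing right-hand side.
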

	\begin{proof}
		The idea of this proof is completely analogous to that of Claim \ref{cl:La<1}. Fix any $z^*\leq z_1<z_2$ such that $a(z)\in (0,1)$  for all $z\in (z_1,z_2)$. In this case, equation \eqref{eq:a<1} still holds. Substituting it into \eqref{eq:HJBagent} and setting $b(z)=1$, we have
		\[
		v(z)=\frac{r_1}{r_1+\lambda}u+\kappa_R\frac{v''(z)-v'(z)}{v'(z)^2}.
		\]
		It is easy to verify that its general solution is given by \eqref{eq:vRa<1}, and that the resulting $a$ implied by \eqref{eq:a<1} is given by \eqref{eq:aR<1}. Moreover, since $a(z)\in (0,1)$, we must have $v'(z)<0$, i.e., $D_1<0$.
		
		Analogous to \eqref{eq:a+a'}, the following equation links $a(z)$ to $a'(z)$ in this case:
		\begin{equation}\label{eq:a+a'R}
			1- a(z)-a'(z) =2\left(\frac{v(z)-\frac{r_1}{r_1+\lambda}u}{c}\right)\left(\frac{r_1+\lambda}{r_1}\right). 
		\end{equation}
		The proof of equation \eqref{eq:a+a'R} and the monotonicity of $a(\cdot)$ is along the same lines as in the proof of Claim \ref{cl:La<1}, and is thus omitted.
	\end{proof}
	
	\begin{claim}\label{cl:zbounded}
		If \(a(\tilde{z})>0\), then there are \(z_L, z_R \) such that \(z_L < \tilde{z} < z_R\) and \(a(z_L)=a(z_R)=0\). 
	\end{claim}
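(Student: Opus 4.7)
The plan is to prove the existence of $z_L$ and of $z_R$ separately, each by contradiction, by leveraging the closed-form expressions for $v$ and $a$ from Claims \ref{cl:La<1} and \ref{cl:Ra<1} together with the a priori bounds $0\le a\le 1$ and $0\le v\le u+c$ (the value bound follows because the discounted integrand in the definition of $\hat V$ is at most $r_1(u+c)$).

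For the existence of $z_L$, I suppose for contradiction that $a(z)>0$ for every $z<\tilde z$. Combined with continuity of $a$ and Lemma \ref{lem:fullspan}, this gives $a(z)\in(0,1)$ throughout $(-\infty,\tilde z)$. Hence Claim \ref{cl:La<1} (applied to each bounded subinterval and patched together by uniqueness of solutions to the underlying ODE with matching $C^1$ initial data) yields, on $(-\infty,\min\{\tilde z,z^*\})$,
\[
v(z)=u+\sqrt{\kappa_L}\,\Phi^{-1}(C_1 e^{z}+C_2),\qquad a(z)=1+\frac{\sqrt{2r_1}}{\snr}\frac{\phi(\Phi^{-1}(C_1 e^{z}+C_2))}{C_1 e^{z}},
\]
with $C_1<0$ and $C_1e^z+C_2\in(0,1)$. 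As $z\to-\infty$ we have $C_1e^z\to 0^-$, so the constraint forces $C_2\in(0,1]$. If $C_2\in(0,1)$, then $\phi(\Phi^{-1}(C_1e^z+C_2))\to \phi(\Phi^{-1}(C_2))>0$ while $C_1e^z\to 0^-$, so $a(z)\to-\infty$, contradicting $a\ge 0$. If instead $C_2=1$, then $\Phi^{-1}(C_1e^z+1)\to+\infty$ and hence $v(z)\to+\infty$, contradicting $v\le u+c$. Either way we obtain a contradiction, so some $z_L<\tilde z$ with $a(z_L)=0$ must exist.

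For the existence of $z_R$, I argue symmetrically: suppose $a(z)>0$ for every $z>\tilde z$, so that $a\in(0,1)$ on $(\tilde z,\infty)$. Then on $(\max\{\tilde z,z^*\},\infty)$, Claim \ref{cl:Ra<1} gives
\[
v(z)=\frac{r_1}{r_1+\lambda}u+\sqrt{\kappa_R}\,\Phi^{-1}(D_1 e^{z}+D_2),
\]
with $D_1<0$ and $D_1e^z+D_2\in(0,1)$. Positivity of this argument at the left endpoint combined with $D_1<0$ forces $D_2>0$; the argument is strictly decreasing in $z$ and therefore vanishes at the finite point $z^{\dagger}:=\log(D_2/|D_1|)$. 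Approaching $z^{\dagger}$ from the left gives $\Phi^{-1}(D_1e^z+D_2)\to-\infty$, hence $v(z)\to-\infty$, contradicting $v\ge 0$. Therefore some $z_R\in(\tilde z,z^{\dagger})$ with $a(z_R)=0$ must exist.

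The main obstacle is handling the boundary case $C_2=1$ in the left-side argument, where the explicit formula for $a$ itself does \emph{not} produce a contradiction (the limiting value of $a$ can fall in $(0,1)$), and one must instead invoke the upper bound $v\le u+c$ to rule it out. Aside from this subtlety, the proof is a direct asymptotic analysis of the closed-form expressions delivered by Claims \ref{cl:La<1} and \ref{cl:Ra<1}, combined with the natural bounds on $a$ and $v$ and the fact that $\Phi^{-1}$ blows up at the endpoints of $(0,1)$.
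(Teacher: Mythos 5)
Your proof is correct, but it takes a substantially longer route than the paper's. The paper simply observes from Claim~\ref{cl:aoptimal} that on any interval where $a(z)>0$ one has $v'(z) = -\frac{r_1c}{\snr^2[1-a(z)]} \leq -\frac{r_1c}{\snr^2}$, a uniform negative bound on the derivative, and then integrating this bound to the right yields $v\to-\infty$ (contradicting $v\geq 0$) while integrating to the left yields $v\to+\infty$ (contradicting $v\leq u+c$). This one-line differential inequality is all that is needed, and it sidesteps entirely the closed-form solutions of Claims~\ref{cl:La<1} and~\ref{cl:Ra<1}, the ODE patching you invoke, and the asymptotics of $\Phi^{-1}$ near its boundary. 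Your asymptotic analysis does verify the same uniform decay of $v$, but it buries the key observation inside a case analysis of the integration constants.

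One factual note on the ``obstacle'' you flag: the boundary case $C_2=1$ is not actually a genuine subtlety for the formula for $a$. Writing $w = C_1 e^z \to 0^-$ and $y = \Phi^{-1}(1+w)\to+\infty$, Mill's ratio gives $1-\Phi(y)\sim \phi(y)/y$, so $\phi(y)/w = -\phi(y)/(1-\Phi(y)) \sim -y\to -\infty$ and hence $a(z)\to-\infty$, violating $a\geq 0$ directly; you do not actually need to fall back on the bound $v\leq u+c$ there. That said, your fallback via $v\leq u+c$ is valid, so the misidentified subtlety does not compromise the argument.
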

	\begin{proof}
		Let \(a(\tilde{z})>0\) and suppose, seeking a contradiction, that \(a(z)>0\) for all \(z>\tilde{z}\). Then, for all such \(z\), we would have (by Claim \ref{cl:aoptimal})
		$
		v'\left( z\right) =-\frac{r_1c}{\snr^2 [1-a \left( z\right)] }\leq -\frac{r_1c}{\snr^2}.
		$ 
		Taking the limit for arbitrary large $z$, we obtain \(\lim_{z \to +\infty}v\left(z\right) =-\infty\), a contradiction as \(v\) is always nonnegative. Similarly, if \(a(z)>0\) for all \(z<\tilde{z}\), then \(\lim_{z \to -\infty}v\left(z\right) =+\infty\), which contradicts that \(v\) is bounded above by \(u+c\).
	\end{proof}
	
	\begin{claim}\label{cl:zconnected}
		For any  \(z_1,z_2 \) such that either \(z_1<z_2\leq z^*\) or \(z^*\leq z_1<z_2\), \(a(z_1)=a(z_2)=0\) implies \(a(z)=0\) for all $z\in [z_1,z_2]$. 
	\end{claim}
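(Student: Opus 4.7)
The plan is to argue by contradiction. Suppose there exist $z_1 < z_2$ on the same side of $z^*$ with $a(z_1)=a(z_2)=0$ but some interior point $\hat z \in (z_1,z_2)$ with $a(\hat z)>0$. Since $a$ is continuous (Lipschitz, by the pseudo-best reply definition), the superlevel set $\{z:a(z)>0\}$ is open, so the connected component of this set containing $\hat z$ has the form $(z_L,z_R)$ with $z_1\leq z_L<z_R\leq z_2$ and $a(z_L)=a(z_R)=0$. Note that $(z_L,z_R)$ still sits entirely in $(-\infty,z^*]$ or $[z^*,\infty)$, because we inherited this from $[z_1,z_2]$.

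On this component we have $a(z)>0$ by choice, and $a(z)<1$ by Lemma \ref{lem:fullspan} (which guarantees $\sup_p a(p)<1$ for any pseudo-best reply). So $a(z)\in(0,1)$ throughout $(z_L,z_R)$, and I can apply the appropriate closed-form characterization from either Claim \ref{cl:La<1} (if $(z_L,z_R)\subseteq(-\infty,z^*]$) or Claim \ref{cl:Ra<1} (if $(z_L,z_R)\subseteq[z^*,\infty)$). The key output I want to pull from those claims is the monotonicity trichotomy: on $(z_L,z_R)$, $a$ is either strictly increasing, strictly decreasing, or first strictly decreasing and then strictly increasing.

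Now I derive a contradiction with the boundary conditions $a(z_L)=a(z_R)=0$ together with $a>0$ on the interior. If $a$ were strictly increasing on $(z_L,z_R)$, then for any interior $z'$ one would have $a(z_R)>a(z')>0$ by continuity at $z_R$, contradicting $a(z_R)=0$; strictly decreasing is symmetric. If $a$ were first strictly decreasing on $(z_L,z_m)$ and then strictly increasing on $(z_m,z_R)$, then continuity at $z_L$ forces $a(z_L)\geq a(z)$ for all $z$ in the decreasing piece, hence $0=a(z_L)\geq a(z)>0$, again a contradiction. Combining the two cases (left of $z^*$ and right of $z^*$) establishes the claim.

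The main obstacle I anticipate is purely bookkeeping: making sure the component $(z_L,z_R)$ really lies in one of the two monotonicity regimes, i.e. that $z^*$ is not strictly interior to $(z_L,z_R)$. This is guaranteed by the hypothesis that the original interval $[z_1,z_2]$ is itself contained in $(-\infty,z^*]$ or $[z^*,\infty)$, so no case-splitting across $z^*$ is needed, and the boundary case where $z^*$ coincides with an endpoint is handled because the ODEs in Claims \ref{cl:La<1} and \ref{cl:Ra<1} apply on open intervals and the monotonicity conclusions extend to the closed endpoints by continuity of $a$.
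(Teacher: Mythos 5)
Your argument is correct and follows essentially the same route as the paper: both isolate a maximal open component $(z_L,z_R)$ of the superlevel set $\{a>0\}$ with $a(z_L)=a(z_R)=0$, invoke the monotonicity trichotomy from Claim \ref{cl:La<1} or Claim \ref{cl:Ra<1}, and contradict the zero boundary values. The only nit is sourcing the bound $a<1$: for a pseudo-best reply this is part of the definition (item $ii$ of Lemma \ref{lemma:optimality-agent}), whereas Lemma \ref{lem:fullspan} states the analogous fact for Markovian equilibria rather than for the object at hand.
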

	
	\begin{proof}
		Fix any $z_1<z_2\leq z^*$ such that $a(z_1)=a(z_2)=0$. Suppose (for a contradiction) that there exists $\tilde{z}\in (z_1,z_2)$ s.t. $a(\tilde{z})\in (0,1)$. Let $Z$ be the largest interval containing $\tilde{z}$ such that $a(z)\in (0,1)$ for all $z\in Z$. Obviously, $z_1\leq \inf Z<\sup Z\leq z_2$, and $a(\inf Z)=a(\sup Z)=0$ because $a(\cdot)$ is continuous. By Claim \ref{cl:La<1}, $a(\cdot)$ is strictly increasing, or strictly decreasing, or first strictly decreasing and then strictly increasing on $Z$. Since $a(\inf Z)=0$, $a(\cdot)$ can only be strictly increasing on $Z$, but this contradicts the continuity of $a(\cdot)$ at $\sup Z$. An analogous argument which invokes Claim \ref{cl:Ra<1} establishes same result for any $z^*\leq z_1<z_2$ such that $a(z_1)=a(z_2)=0$.
	\end{proof}

	\begin{corollary}\label{cor:constantorhump}
		One of the following must hold for any pseudo-best reply $a(\cdot)$:
		\begin{itemize}
			\item $a(z)=0$ for all $z\in \mathbb{R}$;
			\item $a(z)$ is hump-shaped (and maximized at $z^*$).
		\end{itemize}
	\end{corollary}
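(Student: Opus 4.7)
The plan is to classify the support of $a(\cdot)$ using the structural claims already established. Let $I := \{z \in \mathbb{R} : a(z) > 0\}$. Since $a$ is Lipschitz continuous (part of the definition of a pseudo-best reply) and $\sup_z a(z) < 1$, the set $I$ is open and on each connected component we have $a \in (0,1)$, so Claims \ref{cl:La<1} and \ref{cl:Ra<1} apply directly. If $I = \emptyset$ the first alternative holds, so assume $I \neq \emptyset$ and aim to show that $I = (z_L, z_R)$ for some $z_L < z^* < z_R$, with $a$ strictly increasing on $(z_L, z^*)$ and strictly decreasing on $(z^*, z_R)$.

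The first and main step is to rule out any connected component $(z_a, z_b) \subseteq I$ satisfying $z_b \leq z^*$ or $z_a \geq z^*$. By Claim \ref{cl:zbounded} every component has finite endpoints, where $a$ vanishes by continuity. For a component with $z_b \leq z^*$, Claim \ref{cl:La<1} asserts that $a$ on $(z_a, z_b)$ is either strictly monotone or first strictly decreasing and then strictly increasing. Neither shape is compatible with the boundary data $a(z_a) = a(z_b) = 0$ combined with $a > 0$ on the interior: strict monotonicity extends continuously to yield $a(z_a) \neq a(z_b)$, contradicting $a(z_a)=a(z_b)=0$; a decreasing-then-increasing shape forces $a$ to be strictly below its right-limit at $z_a$ on an initial subinterval, i.e., strictly negative, contradicting $a > 0$. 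A symmetric application of Claim \ref{cl:Ra<1} rules out components with $z_a \geq z^*$. Hence every component must straddle $z^*$ strictly ($z_a < z^* < z_b$); since two such components would both contain $z^*$ and thus overlap, $I$ consists of a single interval $(z_L, z_R)$ with $z_L < z^* < z_R$ and $a(z_L)=a(z_R)=0$.

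The second step pins down the monotonicity on $(z_L, z_R)$. Applying Claim \ref{cl:La<1} to $(z_L, z^*)$: given $a(z_L)=0$ and $a > 0$ on the interior, the strictly decreasing and decreasing-then-increasing shapes would each force $a$ to be negative immediately to the right of $z_L$, so the only surviving possibility is that $a$ is strictly increasing on $(z_L, z^*)$. A symmetric application of Claim \ref{cl:Ra<1} on $(z^*, z_R)$ shows $a$ is strictly decreasing there. Together with Lipschitz continuity of $a$ at $z^*$ and the fact that $a=0$ outside $(z_L, z_R)$, this establishes the hump shape with peak attained at $z^*$. The strictly increasing change of variable $z = \log(p/(1-p))$ transports this structure back to the belief $p$, mapping $z^*$ to $p^*$ and preserving the hump shape, so the corollary follows in the original coordinate.

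The main obstacle is the component-classification step, because Claims \ref{cl:La<1} and \ref{cl:Ra<1} treat the two sides of $z^*$ separately, so candidate components touching or crossing $z^*$ require careful bookkeeping: a component with $z^*$ as an interior point must be split at $z^*$ before either claim can be invoked, and a component with $z^*$ as an endpoint must be handled by whichever of the two claims applies on the relevant side. Fortunately the continuity of $a$ (from Lipschitz regularity) ties these pieces together seamlessly, so in each subcase the contradiction with $a(z_a)=a(z_b)=0$ and $a > 0$ on the interior goes through by the same two alternatives outlined above.
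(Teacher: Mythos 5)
Your argument is correct and follows essentially the same route as the paper: maximal positive components must straddle $z^*$, are unique, and the shape claims pin down monotonicity on each side. The only cosmetic difference is that you re-derive inline (from the three-alternative shape dichotomy in Claims \ref{cl:La<1}--\ref{cl:Ra<1} plus the boundary values $a(z_a)=a(z_b)=0$) what the paper packages as Claim \ref{cl:zconnected} and then merely cites.
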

	\begin{proof}
		Suppose that $a(\cdot)$ is not always equal to $0$. Then there exists $\tilde{z}$ s.t. $a(\tilde{z})>0$. Let $Z$ be the largest interval containing $\tilde{z}$ such that $a(z)>0$ for all $z\in Z$. Let $z_L=\inf Z$ and $z_R=\sup Z$. By Claim \ref{cl:zbounded}, $-\infty<z_L<z_R<\infty$. By continuity of $a$, $a(z_L)=a(z_R)=0$. Then we must have $z^*\in (z_L,z_R)$, for otherwise we would reach a contradiction to Claim \ref{cl:zconnected}. Moreover, for any $z\in (z_L,z_R)^c$, we must have $a(z)=0$, for otherwise we can construct another $Z'$ which also contains $z^*$ such that $Z'-Z\neq \emptyset$, contradicting the maximality of $Z$. Since $a(z_L)=0$,  Claim \ref{cl:La<1} implies that $a(\cdot)$ must be strictly increasing on $(z_L,z^*)$. Similarly, since $a(z_R)=0$, Claim \ref{cl:Ra<1} implies that $a$ must be strictly decreasing on $(z^*,z_R)$. 
		
		In summary, if $a(\cdot)$ is not always equal to $0$, then there exist $z_L<z_R$ s.t. $a(\cdot)=0$ on  $(-\infty,z_L)\cup (z_R,\infty)$, $a(\cdot)$ is strictly increasing on $(z_L,z^*)$, and strictly decreasing on $(z^*,z^R)$; that is, $a(\cdot)$ is hump-shaped and maximized at $z^*$.
	\end{proof}
	
	\begin{claim}\label{cl:ashaper}
		$a(\cdot)$ is hump-shaped if and only if $r_1<r^*$, where $r^*$ is the unique solution to \eqref{eq:r*}.
	\end{claim}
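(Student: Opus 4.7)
The plan is to analyze the fully-separating candidate $a\equiv 0$ in closed form and identify the precise threshold on $r_1$ at which this candidate ceases to be a pseudo-best reply. By Corollary \ref{cor:constantorhump} any pseudo-best reply is either fully separating or hump-shaped, and by Lemma \ref{lemma:optimality-agent} it is unique, so the entire claim reduces to pinning down when $a\equiv 0$ works. By Claim \ref{cl:aoptimal}, this amounts to the pointwise condition $-\snr^2 v'(z)\leq r_1 c$ on $\mathbb{R}$.

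First, I would compute $v$ explicitly under the conjecture $a\equiv 0$. Claims \ref{cl:La=1} and \ref{cl:Ra=1} give the general solutions on $(-\infty,z^*)$ and $(z^*,\infty)$; boundedness of $v$ on $\mathbb{R}$ (since $v\in[0,u+c]$) kills the exploding exponential on each side, leaving $v(z)=u+c+A_1 e^{\xi_L z}$ on the left and $v(z)=\tfrac{r_1}{r_1+\lambda}(u+c)+B_1 e^{\xi_R z}$ on the right. Value matching and smooth pasting at $z^*$ (using $v\in C^1$) determine $A_1,B_1$ uniquely and yield
\[
v'(z^*) \;=\; -\,\frac{\lambda(u+c)\,\xi_L\xi_R}{(r_1+\lambda)(\xi_R-\xi_L)}.
\]
Moreover, $|v'|$ is strictly monotone on each side of $z^*$ (increasing on the left since $\xi_L>0$, decreasing on the right since $\xi_R<0$) and vanishes at $\pm\infty$, so the pointwise constraint reduces to the single inequality $-\snr^2 v'(z^*)\leq r_1 c$.

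Next, I would reduce this single inequality to \eqref{eq:r*}. Write $S_L:=\sqrt{1+8r_1/\snr^2}$ and $S_R:=\sqrt{1+8(r_1+\lambda)/\snr^2}$, so that $\xi_L=(S_L-1)/2$, $\xi_R=-(S_R+1)/2$, and the characteristic equations give the identities $\snr^2(S_L^2-1)=8r_1$ and $S_R^2-S_L^2=8\lambda/\snr^2$. A direct substitution (tracking signs carefully, since multiplying by $(r_1+\lambda)(\xi_R-\xi_L)<0$ flips the inequality) transforms $-\snr^2 v'(z^*)\leq r_1 c$ into
\[
4\lambda(u+c)(S_R+1)\;\leq\; c(r_1+\lambda)(S_L+1)(S_L+S_R).
\]
The crucial algebraic step is the tautological identity
\[
(r_1+\lambda)(S_L+1)(S_L+S_R) \;=\; (S_R+1)\bigl[r_1(S_L+S_R)+\lambda(S_L+1)\bigr],
\]
which follows by expanding and applying the two identities above (the cross terms cancel exactly). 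Dividing through by $c(S_R+1)>0$ then produces the equivalent inequality $r_1(S_L+S_R)+\lambda(S_L+1)\geq 4\lambda(u/c+1)$, i.e., LHS of \eqref{eq:r*} at least RHS.

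Finally, the LHS of \eqref{eq:r*} is continuous and strictly increasing in $r_1$ (each of $S_L,S_R$ is increasing, as is the prefactor $r_1$), equals $2\lambda<4\lambda(u/c+1)$ at $r_1=0$, and diverges to $\infty$, so there is a unique $r^*>0$ satisfying \eqref{eq:r*} with equality, and the inequality holds iff $r_1\geq r^*$. Consequently $a\equiv 0$ is the (unique) pseudo-best reply iff $r_1\geq r^*$, and otherwise Corollary \ref{cor:constantorhump} forces $a$ to be hump-shaped. The main obstacle is the algebraic reduction in Step 3: the tautological identity is not obvious a priori, and one must carefully keep track of signs (especially when multiplying by $(r_1+\lambda)(\xi_R-\xi_L)$, which is negative) to ensure the inequality direction is preserved throughout the manipulation.
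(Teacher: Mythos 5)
Your algebraic work is correct and, for one direction, essentially coincides with the paper's ``if'' part: you compute $v$ under the candidate $a\equiv0$, observe that $|v'|$ peaks at $z^*$ (since $v'(z)=A_1\xi_L e^{\xi_L z}$ with $\xi_L>0$ on the left and $v'(z)=B_1\xi_R e^{\xi_R z}$ with $\xi_R<0$ on the right, both tending to zero at $\pm\infty$), reduce the Claim~\ref{cl:aoptimal} constraint to $-\snr^2 v'(z^*)\le r_1 c$, and carry out the substitutions $\xi_L=(S_L-1)/2$, $\xi_R=-(S_R+1)/2$ to reach \eqref{eq:r*}. I checked your ``tautological identity'': expanding both sides, the difference is $r_1 S_L^2+\lambda S_L^2 - r_1 S_R^2 - \lambda$, which vanishes using $S_L^2=1+8r_1/\snr^2$ and $S_R^2-S_L^2=8\lambda/\snr^2$. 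The monotonicity and endpoint argument for uniqueness of $r^*$ is also fine.

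The difficulty is that your reduction ``the entire claim reduces to pinning down when $a\equiv 0$ works'' is circular. You invoke Lemma~\ref{lemma:optimality-agent} for uniqueness of the pseudo-best reply, but the paper's proof of that lemma explicitly cites Claim~\ref{cl:ashaper} (``By Corollary~\ref{cor:constantorhump} and Claim~\ref{cl:ashaper}, any pseudo-best reply $a(\cdot)$ must be such that $a(z)=0$...''), and it must, since uniqueness requires knowing the shape of the pseudo-best reply for each $r_1$. Without uniqueness, showing that $a\equiv 0$ satisfies the optimality condition when $r_1\ge r^*$ does not rule out a hump-shaped pseudo-best reply also existing in that parameter range; that possibility is exactly the ``only if'' direction (hump-shaped $\Rightarrow r_1<r^*$), and it does not follow from your analysis of the $a\equiv 0$ candidate. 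The paper closes this by a separate, self-contained argument: if $a$ is hump-shaped then $v$ is strictly decreasing on the mixing region, hence $v_L>v_R$, and since $v_L$ and $v_R$ are given in closed form by \eqref{eq:vL} and \eqref{eq:vR} \emph{independently of} $v^*$, the inequality $v_L>v_R$ is algebraically equivalent to $r_1<r^*$. You would need to incorporate an argument of this kind (not rely on the downstream uniqueness lemma) to complete the ``only if'' direction.
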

	\begin{proof}
		\textbf{(``Only if" part)} Suppose first that $a(\cdot)$ is hump-shaped. We will show that this implies $r_1<r^*$. By Definition \ref{def:humpshaped}, there exist $z_L,z_R$ s.t. $-\infty<z_L<z^*<z_R<\infty$ such that $a(z)=0$ on $(-\infty,z_L]\cup [z_R,\infty)$ and $a(z)\in (0,1)$ on $(z_L,z_R)$. Let 
		\begin{equation*}
			v^*:=v(z^*),\quad
			v_L:=v(z_L),\quad
			v_R:=v(z_R).
		\end{equation*}
		
		We now calculate the undetermined coefficients in Claims \ref{cl:La=1} to \ref{cl:Ra<1} ($A_1,A_2,B_1,B_2,C_1,C_2,D_1,D_2$ and $z_R,z_L,v_R,v_L$) in various parts of the agent's policy and value, as functions of $v^*$ and model parameters. First, consider $z<z^*$. As $z\to-\infty$, the agent's value function is given in Claim \ref{cl:La=1} by \eqref{eq:vLa=1}. Because $v(\cdot)$ is bounded, we must have 
		\begin{equation}\label{eq:A2}
			A_2=0.
		\end{equation}
		Note that Claim \ref{cl:aoptimal} implies that $r_1c=-\snr^2v'(z_L)$, that is, 
		$v'(z_L)=-\frac{r_1c}{\snr^2}.$ 
		By Claims \ref{cl:La=1} and \ref{cl:La<1}, the value function $v(\cdot)$ must satisfy the following value-matching and smooth-pasting conditions at $z_L$ and $z^*$:
		\begin{align*}
			u+c+A_{1}e^{\xi_{L}z_{L}}\phantom{.}&=\phantom{.}v_{L}, \tag{value-matching at $z_L$}\\
			u+\sqrt{\kappa_{L}}\Phi^{-1}\left(C_{1}e^{z_{L}}+C_{2}\right)\phantom{.}&=\phantom{.}v_{L},\tag{value-matching at $z_L$}\\
			u+\sqrt{\kappa_{L}}\Phi^{-1}\left(C_{1}e^{z^{*}}+C_{2}\right)\phantom{.}&=\phantom{.}v^{*}, \tag{value-matching at $z^*$}\\
			A_{1}\xi_{L}e^{\xi_{L}z_{L}}\phantom{.}&=\phantom{.}-\tfrac{r_1c}{\snr^2},\tag{smooth-pasting at $z_L$}\\
			\tfrac{\sqrt{\kappa_{L}}C_{1}e^{z_{L}}}{\phi\left(\frac{v_L-u}{\sqrt{\kappa_{L}}}\right)}\phantom{.}&=\phantom{.}-\tfrac{r_1c}{\snr^2}. \tag{smooth-pasting at $z_L$}
		\end{align*}
		These five conditions can uniquely pin down the undetermined vector $(v_L,z_L,A_1,C_1,C_2)$ as:
		\begin{align}
			v_{L}&=u+c-\frac{r_{1}c}{\xi_{L}\snr^{2}},\label{eq:vL}\\
			e^{z_{L}}&=e^{z^{*}}\left[\frac{\frac{\sqrt{2r_{1}}}{\snr}\phi\left(\frac{v_{L}-u}{\sqrt{\kappa_{L}}}\right)}{\frac{\sqrt{2r_{1}}}{\snr}\phi\left(\frac{v_{L}-u}{\sqrt{\kappa_{L}}}\right)+\Phi\left(\frac{v_{L}-u}{\sqrt{\kappa_{L}}}\right)-\Phi\left(\frac{v^{*}-u}{\sqrt{\kappa_{L}}}\right)}\right],\label{eq:zL}\\
			A_{1}&=-\frac{r_{1}c}{\xi_{L}\snr^{2}}e^{-\xi_Lz_L}
			\label{eq:A1}\\
			C_{1}&=e^{-z^{*}}\left[\Phi\left(\frac{v^{*}-u}{\sqrt{\kappa_{L}}}\right)-\Phi\left(\frac{v_{L}-u}{\sqrt{\kappa_{L}}}\right)-\frac{\sqrt{2r_{1}}}{\snr}\phi\left(\frac{v_{L}-u}{\sqrt{\kappa_{L}}}\right)\right],\label{eq:C1}\\
			C_{2}&=\Phi\left(\frac{v_{L}-u}{\sqrt{\kappa_{L}}}\right)+\frac{\sqrt{2r_{1}}}{\snr}\phi\left(\frac{v_{L}-u}{\sqrt{\kappa_{L}}}\right).\label{eq:C2}
		\end{align}
		%We note that $v_L$ is independent of $v^*$.
		
		Now, consider $z>z^*$. As $z\to\infty$, the agent's value function is given in Claim \ref{cl:Ra=1} by \eqref{eq:vRa=1}. Because $v(\cdot)$ is bounded, we must have 
		\begin{equation}\label{eq:B2}
			B_2=0.
		\end{equation} 
		Note that Claim \ref{cl:aoptimal} implies that $r_1c=-\snr^2v'(z_R)$, that is, 
		$v'(z_R)=-\frac{r_1c}{\snr^2}.$ 
		By Claims \ref{cl:Ra=1} and \ref{cl:Ra<1}, the value function $v(\cdot)$ must satisfy the following value-matching and smooth-pasting conditions at $z_R$ and $z^*$:
		\begin{align*}
			\left(u+c\right)\frac{r_{1}}{r_{1}+\lambda}+B_{1}e^{\xi_{R}z_{R}}\phantom{.}&=\phantom{.}v_{R}, \tag{value-matching at $z_R$}\\
			\tfrac{r_{1}}{r_{1}+\lambda}u+\sqrt{\kappa_{R}}\Phi^{-1}\left(D_{1}e^{z_{R}}+D_{2}\right)\phantom{.}&=\phantom{.}v_R, \tag{value-matching at $z_R$}\\
			\tfrac{r_{1}}{r_{1}+\lambda}u+\sqrt{\kappa_{R}}\Phi^{-1}\left(D_{1}e^{z^{*}}+D_{2}\right)\phantom{.}&=\phantom{.}v^{*}, \tag{value-matching at $z^*$}\\
			B_{1}\xi_{R}e^{\xi_{R}z_{R}}\phantom{.}&=\phantom{.}-\tfrac{r_1c}{\snr^2},\tag{smooth-pasting at $z_R$}\\
			\tfrac{\sqrt{\kappa_{R}}D_{1}e^{z_{R}}}{\phi\left(\frac{v\left(z_{R}\right)-\frac{r_{1}}{r_{1}+\lambda}u}{\sqrt{\kappa_{R}}}\right)}\phantom{.}&=\phantom{.}-\tfrac{r_1c}{\snr^2}. \tag{smooth-pasting at $z_R$}
		\end{align*}
		These five conditions can uniquely pin down the undetermined vector $(v_R,z_R,B_1,D_1,D_2)$ as:
		\begin{align}
			v_{R}&=\frac{r_{1}}{r_{1}+\lambda}\left(u+c\right)-\frac{r_{1}c}{\xi_{R}\snr^{2}},\label{eq:vR}\\
			e^{z_{R}}&=e^{z^{*}}\left[\frac{\frac{\sqrt{2\left(r_{1}+\lambda\right)}}{\snr}\phi\left(\frac{v_{R}-\frac{r_{1}}{r_{1}+\lambda}u}{\sqrt{\kappa_{R}}}\right)}{\frac{\sqrt{2\left(r_{1}+\lambda\right)}}{\snr}\phi\left(\frac{v_{R}-\frac{r_{1}}{r_{1}+\lambda}u}{\sqrt{\kappa_{R}}}\right)+\Phi\left(\frac{v_{R}-\frac{r_{1}}{r_{1}+\lambda}u}{\sqrt{\kappa_{R}}}\right)-\Phi\left(\frac{v^{*}-\frac{r_{1}}{r_{1}+\lambda}u}{\sqrt{\kappa_{R}}}\right)}\right],\label{eq:zR}\\
			B_{1}&=-\frac{r_{1}c}{\xi_{R}\snr^{2}}e^{-\xi_{R}z_R},\label{eq:B1}\\
			D_{1}&=e^{-z^{*}}\left[\Phi\left(\frac{v^{*}-\frac{r_{1}}{r_{1}+\lambda}u}{\sqrt{\kappa_{R}}}\right)-\Phi\left(\frac{v_{R}-\frac{r_{1}}{r_{1}+\lambda}u}{\sqrt{\kappa_{R}}}\right)-\frac{\sqrt{2\left(r_{1}+\lambda\right)}}{\snr}\phi\left(\frac{v_{R}-\frac{r_{1}}{r_{1}+\lambda}u}{\sqrt{\kappa_{R}}}\right)\right],\label{eq:D1}\\
			D_{2}&=\Phi\left(\frac{v_{R}-\frac{r_{1}}{r_{1}+\lambda}u}{\sqrt{\kappa_{R}}}\right)+\frac{\sqrt{2\left(r_{1}+\lambda\right)}}{\snr}\phi\left(\frac{v_{R}-\frac{r_{1}}{r_{1}+\lambda}u}{\sqrt{\kappa_{R}}}\right)\label{eq:D2}.
		\end{align}
		
		Given $v^*$ and model parameters, equations \eqref{eq:vLa=1} through \eqref{eq:aL<1}, \eqref{eq:vRa<1} and \eqref{eq:aR<1} with coefficients given by \eqref{eq:A2} through \eqref{eq:D2} fully determine the agent's policy function $a(\cdot)$ and his value function $v(\cdot)$ on $\mathbb{R}$. Since $a(z)\in (0,1)$ on $(z_L,z_R)$, Claim \ref{cl:aoptimal} implies that $v'(z)<0$ on $(z_L,z_R)$, which in turn implies that $v_L>v_R$. Note that $v_L$ and $v_R$ are given by \eqref{eq:vL} and \eqref{eq:vR}, both of which are independent of $v^*$. In particular, $v_L>v_R$ amounts to 
		$   u+c-\frac{r_{1}c}{\xi_{L}\snr^{2}}>\frac{r_{1}}{r_{1}+\lambda}\left(u+c\right)-\frac{r_{1}c}{\xi_{R}\snr^{2}}$. 
		Straightforward calculation shows that this is equivalent to 
		\begin{equation*}
			r_1(\sqrt{1+8r_1/\snr^2} +\sqrt{1+8(r_1+\lambda)/\snr^2}) + \lambda(\sqrt{1+8r_1/\snr^2}+1)<4\lambda\left(\tfrac{u}{c}+1\right),
		\end{equation*}
		that is, $r_1<r^*$ (see condition \eqref{eq:r*}). Therefore, we have shown that
		``$   a(\cdot) \text{ is hump-shaped } \Rightarrow r_1<r^*$", 
		so the ``only if" part of the claim is proved.
		
		\textbf{(``If" part)} Suppose now that $a(\cdot)$ is not hump-shaped. We will show that this implies $r_1\geq r^*$. By Corollary \ref{cor:constantorhump}, we know that $a(z)=0$ for all $z\in \mathbb{R}$. Then, by Claims \ref{cl:La=1} and \ref{cl:Ra=1}, $v(\cdot)$ is given by \eqref{eq:vLa=1} for $z<z^*$ and by \eqref{eq:vRa=1} for $z>z^*$. We now pin down the undetermined coefficients ($A_1,A_2,B_1,B_2$) as functions of model parameters. Since $v(\cdot)$ is bounded as $z\to-\infty$ or $z\to+\infty$, we must have
		\begin{align}
			A_2&=0,\label{eq:A2a=1}\\
			B_2&=0.\label{eq:B2a=1}
		\end{align}
		Also, the value function $v(\cdot)$ must satisfy the following value-matching and smooth-pasting condition at $z^*$:
		\begin{align*}
			u+c+A_{1}e^{\xi_{L}z^{*}}=\tfrac{r_{1}}{r_{1}+\lambda}\left(u+c\right)+B_{1}e^{\xi_{R}z^{*}},\tag{value-matching at $z^*$}\\
			A_{1}\xi_{L}e^{\xi_{L}z^{*}}=B_{1}\xi_{R}e^{\xi_{R}z^{*}}.\tag{smooth-pasting at $z^*$}
		\end{align*}
		These two conditions uniquely pin down $(A_1,B_1)$ as:
		\begin{align}
			A_{1}&=\tfrac{\xi_R}{\xi_L-\xi_R}\tfrac{\lambda}{r_1+\lambda}(u+c)e^{-\xi_{L}z^{*}},\label{eq:A1a=1}\\
			B_{1}&=\tfrac{\xi_L}{\xi_L-\xi_R}\tfrac{\lambda}{r_1+\lambda}(u+c)e^{-\xi_{R}z^{*}}.\label{eq:B1a=1}
		\end{align}
		Since $a(z)=0$ for all $z\in \mathbb{R}$, by Claim \ref{cl:aoptimal} we must have $r_1c\geq -\snr^2v'(z)$ for all $z\in \mathbb{R}$. In particular, this should hold at $z^*$: 
		$ r_1c\geq -\frac{\xi_R\xi_L}{\xi_L-\xi_R}\frac{\lambda}{r_1+\lambda}(u+c)\snr^2$.
		Straightforward calculation shows that this is equivalent to
		\begin{equation*}
			r_1(\sqrt{1+8r_1/\snr^2} +\sqrt{1+8(r_1+\lambda)/\snr^2}) + \lambda(\sqrt{1+8r_1/\snr^2}+1)\geq 4\lambda\left(\tfrac{u}{c}+1\right),
		\end{equation*}
		that is, $r_1\geq r^*$  (see condition \eqref{eq:r*}). Therefore, we have shown that
		``$a(\cdot) \text{ is not hump-shaped } \Rightarrow r_1\geq r^*$",
		so the ``if" part of the claim is also proved.
	\end{proof}
	
	In Claim \ref{cl:ashaper}, we find that if $r_1<r^*$, then $a(\cdot)$ is hump-shaped, in which case we can express all coefficients and cutoffs in the agent's policy and value functions in closed form with respect to $v^*:=v(z^*)$ and model parameters. However, $v^*$ itself is an endogenous object that needs to be determined. Our next claim, proved in the Online Appendix, paves the final way for establishing Lemma \ref{lemma:optimality-agent}.
	
	\begin{claim}\label{cl:v*unique}
		Suppose that $r_1<r^*$, and let $v_L$ and $v_R$ be given by \eqref{eq:vL} and \eqref{eq:vR}, respectively. Define $a_-^*,a_+^*:[v_R,v_L]\rightarrow \mathbb{R}$ by
		\begin{align}
			a_-^*(x)&:=1-\frac{\frac{\sqrt{2r_{1}}}{\snr}\phi\left(\frac{x-u}{\sqrt{\kappa_{L}}}\right)}{\frac{\sqrt{2r_{1}}}{\snr}\phi\left(\frac{v_{L}-u}{\sqrt{\kappa_{L}}}\right)+\Phi\left(\frac{v_{L}-u}{\sqrt{\kappa_{L}}}\right)-\Phi\left(\frac{x-u}{\sqrt{\kappa_{L}}}\right)},\label{eq:a*-}\\
			a_+^*(x)&:=1-\frac{\frac{\sqrt{2\left(r_{1}+\lambda\right)}}{\snr}\phi\left(\frac{x-\frac{r_{1}}{r_{1}+\lambda}u}{\sqrt{\kappa_{R}}}\right)}{\frac{\sqrt{2\left(r_{1}+\lambda\right)}}{\snr}\phi\left(\frac{v_{R}-\frac{r_{1}}{r_{1}+\lambda}u}{\sqrt{\kappa_{R}}}\right)+\Phi\left(\frac{v_{R}-\frac{r_{1}}{r_{1}+\lambda}u}{\sqrt{\kappa_{R}}}\right)-\Phi\left(\frac{x-\frac{r_{1}}{r_{1}+\lambda}u}{\sqrt{\kappa_{R}}}\right)}.\label{eq:a*+}
		\end{align}
		Then, $a_-^*(\cdot)$ is strictly decreasing on $[v_R,v_L]$ with $a_-^*(v_R)\in (0,1)$ and $a_-^*(v_L)=0$; $a_+^*(\cdot)$ is strictly increasing on $[v_R,v_L]$ with $a_+^*(v_R)=0$ and $a_+^*(v_L)\in (0,1)$. 
	\end{claim}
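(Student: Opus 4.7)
The plan is to study the two functions under the natural substitutions $y = (x-u)/\sqrt{\kappa_L}$ (for $a_-^*$) and $\tilde{y} = (x - \tfrac{r_1}{r_1+\lambda}u)/\sqrt{\kappa_R}$ (for $a_+^*$), which put them in the clean form $1 - \alpha\phi(\cdot)/D(\cdot)$ with $D$ an affine combination of $\Phi$ and $\phi$. Two of the four boundary values then fall out by inspection: plugging $x=v_L$ into \eqref{eq:a*-} (resp.\ $x=v_R$ into \eqref{eq:a*+}) collapses the $\alpha\phi$ in the denominator to coincide with the numerator, giving $a_-^*(v_L) = a_+^*(v_R) = 0$.

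For monotonicity of $a_-^*$, set $\alpha = \sqrt{2r_1}/\snr$ and $y_L = (v_L-u)/\sqrt{\kappa_L}$; differentiating shows the sign of $[a_-^*]'$ opposes that of $\phi(y) - y D(y)$, where $D(y) = \alpha\phi(y_L) + \Phi(y_L) - \Phi(y)$. For $y \leq 0$ positivity is immediate. For $y \in (0, y_L]$ I would use the Mill's-ratio-style inequality $\int_y^{y_L}\phi(s)\,ds \leq \tfrac{1}{y}\int_y^{y_L} s\phi(s)\,ds = \tfrac{\phi(y)-\phi(y_L)}{y}$, which rearranges to
\[
\phi(y) - yD(y) \;\geq\; \phi(y_L)\bigl(1-\alpha y\bigr).
\]
Combining \eqref{eq:vL} with the characteristic identity $\xi_L^2+\xi_L = 2r_1/\snr^2$ yields $\alpha y_L = 1 - \xi_L$, which lies in $(0,1)$ because $\xi_L > 0$. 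Hence $\alpha y < 1$ on the image of $[v_R,v_L]$, and $a_-^*$ is strictly decreasing.

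A parallel argument handles $a_+^*$. With $\alpha_R = \sqrt{2(r_1+\lambda)}/\snr$ and $\tilde{y}_R = (v_R - \tfrac{r_1}{r_1+\lambda}u)/\sqrt{\kappa_R}$, monotonicity reduces to the reverse inequality $\tilde{y} D_+(\tilde{y}) > \phi(\tilde{y})$, where $D_+(\tilde{y}) = \alpha_R\phi(\tilde{y}_R) + \Phi(\tilde{y}_R) - \Phi(\tilde{y})$. Running the Mill's-ratio bound in the opposite direction (valid since $\tilde{y}_R > 0$, because $\xi_R < 0$) gives
\[
\tilde{y} D_+(\tilde{y}) - \phi(\tilde{y}) \;\geq\; \phi(\tilde{y}_R)\tfrac{\tilde{y}}{\tilde{y}_R}\bigl(\alpha_R \tilde{y}_R - 1\bigr) + \phi(\tilde{y})\bigl(\tfrac{\tilde{y}}{\tilde{y}_R} - 1\bigr),
\]
and from \eqref{eq:vR} with $\xi_R < 0$ one reads $\alpha_R \tilde{y}_R = 1 - \xi_R > 1$, so both terms are nonnegative and at least one is strict for $\tilde{y} > \tilde{y}_R$. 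The remaining inclusions $a_-^*(v_R), a_+^*(v_L) \in (0,1)$ then follow from monotonicity (giving the strict lower bounds) together with positivity of the relevant denominators (giving the strict upper bounds), the latter obtained from the same Mill's-ratio estimates. The main obstacle is the careful signing of $\alpha y_L - 1$ and $\alpha_R \tilde{y}_R - 1$: these land on opposite sides of $0$ precisely because $\xi_L > 0$ while $\xi_R < 0$, and it is this asymmetry that makes $a_-^*$ and $a_+^*$ move in opposite directions across $[v_R, v_L]$, ultimately guaranteeing the unique crossing that will determine $v^*$.
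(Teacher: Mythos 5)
Your proposal is correct and closely shadows the paper's own argument. Both reduce the sign of $\bigl(a_+^*\bigr)'$ to the positivity of $S(q) := q\,D_+(q) - \phi(q)$ in the rescaled variable $q$ (your $\tilde{y}$), and both pivot on the single inequality $\alpha_R q_R > 1$, which you phrase cleanly as $\alpha_R\tilde{y}_R = 1-\xi_R > 1$ via the characteristic identity and $\xi_R<0$, and which the paper phrases as the bound \eqref{eq:qR>mu/r}. Where you differ is in establishing $S>0$: the paper notes $S'(q)=D_+(q)$, shows $D_+>0$ past $q_R$ (via \eqref{eq:decreasing}), and integrates from $S(q_R)>0$, whereas you bound $S$ directly with a Mill's-ratio estimate. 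These are the same bound in different clothing, but your route makes the two pieces of the argument manifestly parallel, and the identities $\alpha y_L=1-\xi_L$ and $\alpha_R\tilde{y}_R=1-\xi_R$ expose the sign asymmetry (hence why $a_-^*$ decreases and $a_+^*$ increases) more sharply than the paper, which treats $a_+^*$ and relegates $a_-^*$ to ``analogous.'' One small imprecision on your side: you assert $\alpha y_L=1-\xi_L$ ``lies in $(0,1)$ because $\xi_L>0$.'' But $\xi_L>0$ only gives $1-\xi_L<1$; positivity fails when $r_1>\snr^2$ (i.e.\ $\xi_L>1$). This does not break the argument — you only need $\alpha y<1$ for $y\in(0,y_L]$, and that range is empty precisely when $1-\xi_L\le 0$ — but the two-sided inclusion should be softened to the one-sided bound you actually use.
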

	\begin{proof}
		See Online Appendix.
	\end{proof}

	\begin{proof}[Proof of Lemma \ref{lemma:optimality-agent}]
		Suppose first that $r_1\geq r^*$. By Corollary \ref{cor:constantorhump} and Claim \ref{cl:ashaper}, any pseudo-best reply $a(\cdot)$ must be such that $a(z)=0$ for all $z\in \mathbb{R}$. Obviously, such function is unique. To verify that $a(z)=0$ for all $z\in \mathbb{R}$ is indeed a solution to \eqref{eq:agent-optimality}, note that we have shown in the proof of Claim \ref{cl:ashaper} that, together with the $v(\cdot)$ given by \eqref{eq:vLa=1} and \eqref{eq:vRa=1} and the coefficients given by \eqref{eq:A2a=1} through \eqref{eq:B1a=1}, it satisfies the agent's HJB equation \eqref{eq:HJBagent}.\footnote{Everywhere except at $z^*$ where $v''$ does not exist.} Since $v(\cdot)$ is bounded, we have
		$\lim_{t\rightarrow\infty}e^{-r_{1}t}\mathbb{E}\left[v\left(z_{t}\right)1_{\left\{ \tau\geq t\right\} }\right]=0$, 
		where $\tau$ is the stopping time when the relationship is terminated. Then by \citet[Theorem 3.3.5]{ross2018}, $a(z)=1$ for all $z\in \mathbb{R}$ is indeed a solution to \eqref{eq:agent-optimality}. %\footnote{This verification theorem implies that $a(\cdot)$ satisfies \eqref{eq:agent-optimality} even if he can maximize over all admissible controls in $\mathcal{A}$.} 
		In addition, $v(\cdot)$ is regular because the functions given by \eqref{eq:vLa=1} and \eqref{eq:vRa=1} are smooth, and value-matching and smooth-pasting conditions are imposed at $z^*$.
		
		Suppose now that $r_1<r^*$. By Claim \ref{cl:ashaper}, any psuedo-best reply $a(\cdot)$ is hump-shaped. We first show that such function, if it exists, must be unique. By Claims \ref{cl:La=1} to \ref{cl:Ra<1} and the proof of the ``only if" part of Claim \ref{cl:ashaper}, such a policy function and the associated value function must satisfy \eqref{eq:vLa=1} through \eqref{eq:aL<1}, \eqref{eq:vRa<1} and \eqref{eq:aR<1} with coefficients given by \eqref{eq:A2} through \eqref{eq:D2}, given the value $v^*:=v(z^*)$. 
		
		Recall that the functions $a_-^*$ and $a_+^*$ are defined in \eqref{eq:a*-} and \eqref{eq:a*+}, respectively. By Claim \ref{cl:La<1} and equations \eqref{eq:C1} and \eqref{eq:C2}, it is easy to verify that
		$   \lim_{z\uparrow z^*}a(z;v^*)=a_-^*(v^*).$ 
		Similarly, by Claim \ref{cl:Ra<1} and equations \eqref{eq:D1} and \eqref{eq:D2}, it is easy to verify that
		$   \lim_{z\downarrow z^*}a(z;v^*)=a_+^*(v^*).$ 
		Since $a(\cdot)$ is continuous at $z^*$, $v^*$ must satisfy
		\begin{equation}\label{eq:determinev*}
			a_-^*(v^*)=a_+^*(v^*).
		\end{equation}
		By Claim \ref{cl:v*unique}, there is a unique  $v^*\in (v_R,v_L)$ satisfying \eqref{eq:determinev*}, rendering the unique (candidate) policy function. Finally, take such unique $v^*$, and let $a(\cdot)$ and $v(\cdot)$ be defined by \eqref{eq:vLa=1} through \eqref{eq:aL<1}, \eqref{eq:vRa<1} and \eqref{eq:aR<1} with coefficients given by \eqref{eq:A2} through \eqref{eq:D2}. Exactly the same verification argument as in the case of $r_1\geq r^*$ confirms that $a(\cdot)$ indeed solves \eqref{eq:agent-optimality}. In addition, $v(\cdot)$ is regular because the functions given by \eqref{eq:vLa=1}, \eqref{eq:vRa=1}, \eqref{eq:vLa<1} and \eqref{eq:vRa<1} are smooth, and value-matching and smooth-pasting conditions are imposed at $z_L$, $z^*$ and $z_R$.%\footnote{After finding a humped-shaped solution in the $z-$space, we can extend the domain of $a(\cdot)$ in the $p-$space from $(0,1)$ to $[0,1]$ in the unique way that preserves its continuity by setting $a(p=0)=a(p=1)=0$.}
	\end{proof}
	
	\subsubsection{Proof of Theorem \ref{uniqueness}}
	Lemmas \ref{lemma:principalbestreply} and \ref{lemma:optimality-agent} establish the unique \textit{structure} of Markov equilibria. To prove Theorem \ref{uniqueness}, we still need an argument for equilibrium existence and uniqueness.

	\begin{proof}[Proof of Theorem \ref{uniqueness}]
		Suppose first that $r_1\geq r^*$. By Lemma \ref{lemma:optimality-agent}, the agent's pseudo-best reply to any cutoff termination rule satisfies that $a(p)=0$ for all $p\in (0,1)$. In fact, the verification theorem we invoke in proving Lemma \ref{lemma:optimality-agent} tells us that such $a(\cdot)$ satisfies the agent's optimality condition \eqref{eq:agent-optimality} in a stronger sense, even if we allow him to maximize over all strategies in $\mathcal{A}$ instead of over Markov controls in $\mathcal{P}$. On the other hand, given this Markovian strategy of the agent under which the belief span is $(0,1)$, the proof of Lemma \ref{lemma:principalbestreply} (in the Online Appendix) can be used verbatim to show that the principal has a unique best reply whose policy function $b$ admits a cutoff $p^*\in (0,1)$. Hence, such $(a,b)$ is the unique Markov equilibrium in this case.
		
		Suppose now that $r_1<r^*$. Lemmas \ref{lemma:principalbestreply}  and \ref{lemma:optimality-agent} imply that any Markov equilibrium $(a,b)$ must be such that $a$ is hump-shaped and $b$ has a cutoff structure. We now show that there exists a unique Markov equilibrium. 
		
		To show that a Markov equilibrium exists, note that the pseudo-best reply of the noninvestible agent can be described by a function \(\varphi_1\) that maps a conjecture \(\tilde{p}^*\) about the cutoff \(p^*\) used by the principal into a policy function $a(\cdot)$ defined on (0,1), the probability-domain version of the policy function in \(z-\)space constructed in Claim \ref{cl:ashaper}'s proof. The function \(\varphi_1\) is continuous.\footnote{This follows from Lemma %\ref{lemma:translation} 
			OA.1 in the Online Appendix, i.e., the translation invariance of the agent's problem.} Moreover, the best reply of the principal can be described by a function \(\varphi_2\) that maps any Markov strategy \({\alpha}\) of the noninvestible agent into a unique cutoff \(p^*\), and Lemma \ref{lemma:principalbestreply} also tells us that $p^*\in [p^{**},p_H]\subset (0,1)$.\footnote{Lemma \ref{lemma:principalbestreply} is stated for an equilibrium. However, its proof is applicable to the principal's best reply to any Markovian strategy of the agent.}
		
		%By Claim \ref{cl:cutoffstructure}'s proof, the best reply of the principal can be described by a function \(\varphi_2\) that maps any Markov strategy \({\alpha}\) of the noninvestible agent into a unique cutoff \(p^*\). In addition, Lemma \ref{lemma:principalbestreply}'s proof tells us $p^*\in [p^{**},p_H]\subset (0,1)$ where $p^{**}=R^{-1}(0)$ and $p_H=R^{-1}\left(\frac{\lambda}{r_1+\lambda}w_I\right)$.\footnote{Claim \ref{cl:cutoffstructure} and Lemma \ref{lemma:principalbestreply} are stated for an equilibrium. However, their proofs are applicable to the principal's best reply to any Markovian strategy of the agent.}
		
		Define the composition \(\varphi := \varphi_2 \cdot \varphi_1\) that maps each conjecture \(\tilde{p}^*\) into the associated optimal cutoff \(p^*\).\footnote{Note that $SP[\varphi_1(\tilde{p}^*)]=(0,1)$ for all $\tilde{p}^*\in (0,1)$, because the pseudo-best reply delivered by Lemma \ref{lemma:optimality-agent} (derived in Claim \ref{cl:ashaper}'s proof) is always bounded away from $1$ by a positive number and thus the diffusion coefficient of the belief process is bounded away from $0$.} The mapping \(\varphi\) satisfies
		\[
		\varphi(\tilde{p}^*) = \argmax_{p' \in [p^{**},p_H]} \hat{M}(p,p',\tilde{p}^*),
		\] 
		where 
		\[
		\hat{M}(p,p', \tilde{p}^*) := r_2 \int_0^{+\infty} \int_{p'}^1 e^{-r_2 t} R(q) d\Gamma\left(t,q\middle|\tilde{p}^*, p\right)
		\]
		and
		\(\Gamma\) is the joint probability measure of getting the first Poisson shock in the stopping region \([p',1)\) at time \(t\) and state \(\tilde{p}\), when the prior belief at time zero is \(p\). Notice that \(\varphi(p^*)\) is the unique solution to this maximization problem and is independent of the prior due to its Markovian nature. Moreover, \(\hat{M}(p, p', \tilde{p}^*)\) is jointly continuous in \((p', \tilde{p}^*)\). Since the choice space is compact and the objective is continuous in both the choice variable \(p'\) and the ``parameter'' \(\tilde{p}^*\), the Maximum Theorem implies that \(\varphi\) is continuous.
		
		Since \(\varphi\) is a continuous function mapping from $(0,1)$ to $[p^{**},p_H]$ such that $\liminf_{p\to 0}[\varphi(p)-p]\geq p^{**}>0$ and $\limsup_{p\to 1}[\varphi(p)-p]\leq p_H-1<0$, the intermediate value theorem implies that there must be \(p^* \in (0,1)\) such that \(\varphi(p^*)=p^*\). Taking any such $p^*$, it is easy to verify that $\left({a}^*:=\varphi_1(p^*),b^*:=1_{\{p_t\geq p^*\}}\right)$ represents a Markov equilibrium. This proves equilibrium existence.

		To show that the Markov equilibrium is unique, let \(({a},b)\) be a Markov equilibrium in which the principal uses a threshold \(p^*\), associated with a likelihood \(z^*\). Consider the payoff of the principal when deviating to a different threshold \(z'\) when the state is \(z\).
		\[
		M(z,z',z^*):= p(z) \mathbb{E}\left\{e^{- r_2 T_{NI}(z,z',z^*)}\right\} w_{NI} + (1-p(z)) \mathbb{E}\left\{e^{- r_2 T_I(z,z',z^*)}\right\} w_I,
		\]
		where \(T_\theta(z,z',z^*)\) is the random time of occurrence of the first Poisson shock that arrives while the state lies in the stopping interval \([z',+\infty)\), provided the initial state is \(z\) and the dynamics is conditioned on type \(\theta \in \{I,NI\}\). By the conditional translation invariance property proved in Lemma %\ref{lemma:conditionaltranslation} 
		OA.2 of the Online Appendix,
		\[
		\mathbb{E}\left\{e^{- r_2 T_\theta(z,z',z^*)}\right\} = \mathbb{E}\left\{e^{- r_2 T_\theta(0,z'-z,z^*-z)}\right\}
		\]
		for all \(z,z',z^* \in \mathbb{R}\). The FOC of the principal is
		\[
		\frac{\partial M(z,z',z^*)}{\partial z'} = p(z) D_{NI}(z,z',z^*) w_{NI} + (1-p(z)) D_I(z,z',z^*) w_I = 0,
		\]
		where we define
		$D_\theta(z,z',z^*) := \frac{\partial \mathbb{E}\left\{e^{- r_2 T_\theta(z,z',z^*)}\right\}}{\partial z'}$
		for each \(\theta \in \{I,NI\}\), \(z,z',z^* \in \mathbb{R}\). Note that this condition should hold for every \(z \in \mathbb{R}\). In equilibrium, the principal's choice of \(z'\) must coincide with $z^*$, so the FOC becomes:
		$p(z) D_{NI}(z,z^*,z^*) w_{NI} + (1-p(z)) D_I(z,z^*,z^*) w_I = 0,$ 
		which can be rewritten as:
		\[
		p(z) =  \frac{D_C(z,z^*,z^*) w_C}{D_C(z,z^*,z^*) w_C - D_S(z,z^*,z^*) w_S}.
		\]
		Evaluating the limit from below as \(z \uparrow z^*\), we have
		\[
		p^* = p(z^*) =  \frac{D_C(z^*,z^*,z^*) w_C}{D_C(z^*,z^*,z^*) w_C - D_S(z^*,z^*,z^*) w_S}  =  \frac{D_C(0,0,0) w_C}{D_C(0,0,0) w_C - D_S(0,0,0) w_S},
		\]
		where the last equality follows from Lemma %\ref{lemma:conditionaltranslation} 
		OA.2 in the Online Appeneix, i.e., the conditional translation invariance of the principal's payoff function. Since the RHS is independent of \(p^*\), we conclude that there can be at most one value \(p^*\) consistent with a Markov equilibrium, establishing the uniqueness claim.
	\end{proof}
	
	\begin{corollary}\label{cor:convexity}
		The following hold:
		\begin{enumerate}
			\item $W(\cdot)$ is (weakly) increasing, nonnegative and convex on $(0,1)$, and it satisfies $\lim_{p\to 0}W(p)=0$ and $\lim_{p\to 1}W(p)=\frac{\lambda}{r_2+\lambda}w_{NI}$.
			\item  $v(\cdot)$ is strictly decreasing on $\mathbb{R}$, and it satisfies $\lim_{z\to -\infty}v(z) = u+c$ and $\lim_{z\to \infty}v(z) = \frac{r_1}{r_1+\lambda}(u+c)$. Moreover, $v(\cdot)$ is concave on $(-\infty,z^*)$ and convex on $(z^*,\infty)$.
		\end{enumerate}
	\end{corollary}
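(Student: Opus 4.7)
My plan is to handle $W$ and $v$ separately, relying on the equilibrium characterization in Lemmas \ref{lemma:principalbestreply} and \ref{lemma:optimality-agent} and on the supporting Claims \ref{cl:aoptimal}--\ref{cl:v*unique}.

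For the principal's value $W$, nonnegativity is immediate because $\tilde{b}\equiv 0$ is feasible and yields payoff $0$ (the flow payoff is zero). The boundary limits follow from the integral expression for $U_2$: as $p_0\to 0$ the belief stays concentrated near $0$, so $R(p_t)$ is dominated by a negative number and the principal never wants to stop, yielding $W(0+)=0$; as $p_0\to 1$ she stops at the first Poisson arrival and her payoff reduces to $\mathbb{E}[e^{-r_2 T}]\,w_{NI}=\lambda w_{NI}/(r_2+\lambda)$. For convexity, I would apply the HJB \eqref{eq:HJBprincipal} on each side of the cutoff $p^*$: on $(0,p^*)$ it reduces to $r_2W=\tfrac{1}{2}\snr^2(1-a)^2\gamma^2 W''$, so $W''\geq 0$ by nonnegativity; on $(p^*,1)$ it gives $W''\geq 0$ iff $W\geq \tfrac{\lambda}{r_2+\lambda}R$, and this bound holds because the ``always stop'' strategy is feasible and, by the martingale property of $p_t$ combined with the linearity of $R$, delivers precisely $\tfrac{\lambda}{r_2+\lambda}R(p)$. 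The $C^1$ regularity of $W$ at $p^*$ (smooth pasting in Lemma \ref{lemma:principalbestreply}) then yields global convexity. Monotonicity follows by combining convexity with $W(0+)=0$ and $W\geq 0$: for any $0<p_1<p_2<1$ the secant inequality gives
\begin{equation*}
W(p_2)\geq W(p_1)+(p_2-p_1)\frac{W(p_1)-W(0+)}{p_1}=\frac{p_2}{p_1}W(p_1)\geq W(p_1).
\end{equation*}

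For the agent's value $v$, I would exploit the piecewise-explicit expressions from Claims \ref{cl:La=1}--\ref{cl:Ra<1} together with the hump shape of $a(\cdot)$ from Corollary \ref{cor:constantorhump}. Strict monotonicity: on any mixing interval, Claim \ref{cl:aoptimal} gives $v'(z)=-r_1c/[\snr^2(1-a(z))]<0$ directly; on the left tail $z\leq z_L$, $v(z)=u+c+A_1e^{\xi_Lz}$ and the coefficient formulas \eqref{eq:A1}/\eqref{eq:A1a=1} together with $\xi_L>0$ yield $A_1<0$, so $v'<0$; on the right tail $z\geq z_R$, the coefficient formulas \eqref{eq:B1}/\eqref{eq:B1a=1} together with $\xi_R<0$ yield $B_1>0$, again giving $v'<0$. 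The limits read off immediately by letting $z\to\mp\infty$ in these tail expressions. For the curvature claim, differentiating $v'=-r_1c/[\snr^2(1-a)]$ on each mixing region produces $v''=-r_1c\,a'/[\snr^2(1-a)^2]$, whose sign is opposite to that of $a'$; the hump shape then gives $v''<0$ on $(z_L,z^*)$ and $v''>0$ on $(z^*,z_R)$. In the exponential tails, $v''(z)=A_1\xi_L^2 e^{\xi_Lz}<0$ for $z<z_L$ and $v''(z)=B_1\xi_R^2 e^{\xi_Rz}>0$ for $z>z_R$. Because $v\in C^1$ at the stitching points $z_L,z^*,z_R$ (by the value-matching and smooth-pasting conditions imposed in the proof of Claim \ref{cl:ashaper}), $v'$ is continuous and monotone on each side of $z^*$, which is exactly what concavity on $(-\infty,z^*)$ and convexity on $(z^*,\infty)$ require. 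The fully separating case ($r_1\geq r^*$) is identical modulo collapsing the mixing regions to the single point $z^*$.

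The main obstacle I anticipate is the sign analysis of the tail coefficients $A_1,B_1$ and the curvature bookkeeping across the stitching points: each piece is elementary, but one must carefully track which HJB branch applies and verify that the constants of integration inherit the correct sign from value-matching/smooth-pasting. By contrast, the convexity of $W$ is straightforward once one notices that the ``always stop'' strategy pins down the critical inequality $W\geq\tfrac{\lambda}{r_2+\lambda}R$ on the stopping region, after which the HJB delivers $W''\geq 0$ on both sides of $p^*$ and monotonicity drops out of convexity plus the left-endpoint limit.
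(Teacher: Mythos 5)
Your proof is correct and follows essentially the same route as the paper: nonnegativity and the lower bound $W\geq\tfrac{\lambda}{r_2+\lambda}R$ feed into the HJB \eqref{eq:HJBprincipal} to give $W''\geq 0$ on each side of $p^*$ (with $C^1$ regularity at the cutoff gluing them together), and the explicit tail expressions from Claims \ref{cl:La=1}--\ref{cl:Ra<1} together with the sign of $A_1,B_1$ and the relation $v'=-r_1c/[\snr^2(1-a)]$ in the mixing region give the monotonicity, limits, and curvature of $v$. You spell out more of the coefficient bookkeeping than the paper does, but the logic and the ingredients are identical.
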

	
	\begin{proof}
		See Online Appendix.
	\end{proof}

	\subsection{Expected Performance: Toward a Proof of Theorem \ref{t:zigzag-shape}}\label{app:EP}
	In this section, we prove Theorem \ref{t:zigzag-shape} which is about the non-monotonicity of the expected performance. 
	
	Given a Markov equilibrium $(a,b)$ (where the equilibrium policy functions are defined on the $z-$space), let $v$ be the agent's value function, $z^*$ be the principal's termination cutoff, and recall that the agent's expected performance is given by
	\begin{equation}\label{eq:EPz}
		EP(z)=\snr \left[1-(1-a(z))p(z)\right],
	\end{equation}
	where 
	$p(z)=\frac{e^z}{1+e^z}.$
	Our analysis in this section fixes all model parameters, except $r_1$ and/or $\lambda$.\footnote{In Section \ref{sec:nonmonotone}, we defined $EP_t=\mu \left[1-(1-a_t)p_t\right]$. To ease notation, here we divide the original expression by $\sigma$, which is completely equivalent because for this exercise $\mu$ and $\sigma$ are fixed numbers.}

	\begin{lemma}\label{lem:EPpossibilities}
		If $r_1\geq r^*$, then $EP(\cdot)$ is strictly decreasing on $\mathbb{R}$. If $r_1<r^*$, then either $EP(\cdot)$ is strictly decreasing on $\mathbb{R}$, or $EP(\cdot)$ is
		\begin{itemize}
			\item strictly decreasing for $z<\underline{z}$, where $\underline{z}$ is some number in $[z_L,z^*)$;%\footnote{Recall that $z_L$ is the left cutoff from which the hump-shaped equilibrium policy function $a(\cdot)$ starts to be positive.}
			
			\item strictly increasing for $z\in (\underline{z},z^*)$;
			
			\item strictly decreasing for $z>z^*$. 
		\end{itemize}
	\end{lemma}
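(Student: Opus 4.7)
The plan is to treat the case $r_1 \geq r^*$ by direct substitution and then focus the substantive analysis on the case $r_1 < r^*$. In the first case, Theorem \ref{uniqueness} gives $a \equiv 0$, so $EP(z) = \snr(1-p(z))$ is strictly decreasing because $p(\cdot)$ is strictly increasing. In the second case, the hump-shaped structure of $a$ handles the boundary regions easily: on $(-\infty, z_L] \cup [z_R, \infty)$, $a \equiv 0$ and the same explicit formula applies; on $(z^*, z_R)$, $a$ is strictly decreasing and $p$ is strictly increasing, so $(1-a)p$ is strictly increasing and hence $EP(z) = \snr(1 - (1-a)p)$ is strictly decreasing. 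Concatenating these with continuity of $EP$ at $z_R$ yields strict decrease on $(z^*, \infty)$. All the substantive action takes place on the mixing interval $(z_L, z^*)$.

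On $(z_L, z^*)$ we have $a \in (0,1)$, so the identity $a'(z) = 1 - a(z) - 2(v(z)-u)/c$ from equation \eqref{eq:a+a'} in the proof of Claim \ref{cl:La<1} applies, and by Claim \ref{cl:aoptimal} we have $v'(z) < 0$ throughout. I introduce the auxiliary function
\[
g(z) := c\,p(z)(1-a(z)) - 2(v(z)-u) \;=\; c - \tfrac{c}{\snr}EP(z) - 2(v(z)-u).
\]
Using $p'(z) = p(z)(1-p(z))$ together with the identity for $a'$, a direct expansion of $q'(z) := [(1-a)p]'(z) = p'(1-a) - pa'$ yields $q'(z) = -(p(z)/c)\,g(z)$, so $q'(z)$ and $-g(z)$ share the same sign. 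Since $EP'(z) = -\snr\,q'(z)$, the sign of $EP'(z)$ coincides with that of $g(z)$. Differentiating the second expression for $g$ gives
\[
g'(z) = -\tfrac{c}{\snr}EP'(z) - 2v'(z).
\]
The key observation is that whenever $g(z) \leq 0$ one has $EP'(z) \leq 0$, so $-(c/\snr)EP'(z) \geq 0$, and combined with $-2v'(z) > 0$ this yields $g'(z) > 0$ strictly. In other words, at every point where $g$ is nonpositive, $g$ is strictly increasing.

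This one-sided monotonicity forces the set $\{z \in (z_L, z^*) : g(z) > 0\}$ to be either empty or of the form $(\underline{z}, z^*)$ for some $\underline{z} \in [z_L, z^*)$: if $g$ were to transition from positive back to nonpositive as $z$ increases, then at the first such crossing $g$ would equal $0$ while, by continuity of $g'$, satisfying $g' \leq 0$, contradicting the implication just established. In the empty case, $g < 0$ throughout $(z_L, z^*)$, so $EP' < 0$ there, which combined with strict decrease on $(-\infty, z_L]$ and $[z^*, \infty)$ gives strict decrease on all of $\mathbb{R}$. In the non-empty case, strict monotonicity of $g$ on $\{g \leq 0\}$ implies $g < 0$ strictly on $(z_L, \underline{z})$ and $g > 0$ strictly on $(\underline{z}, z^*)$, which translates into $EP$ being strictly decreasing on $(z_L, \underline{z})$ and strictly increasing on $(\underline{z}, z^*)$; concatenating with the boundary regions yields the three-piece description in the lemma.

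The main technical obstacle is the sign bookkeeping that links $EP'$, $q'$, $g$, and $g'$: one must carefully track which quantities are positive, negative, or ambiguous, and in particular recognize that only the one-directional implication $g \leq 0 \Rightarrow g' > 0$ is needed. The reverse direction $g > 0 \Rightarrow g' > 0$ actually fails in general (both terms of $g'$ have opposite signs there), but it is precisely this asymmetry that forces $g$ to be able to cross zero only from below, yielding the desired either/or structure of $EP$.
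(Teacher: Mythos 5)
Your proof is correct and rests on the same central identity as the paper's: your auxiliary function $g$ is exactly a reparametrization of equation \eqref{eq:EP+EP'}, since the relation $EP'(z) = \frac{\snr\, p(z)}{c}\,g(z)$ is equivalent to $\snr - EP - EP'/p = 2\snr(v-u)/c$. Where the paper deduces the decreasing/increasing/decreasing structure on $(z_L,z^*)$ by recycling the contradiction-comparison argument of Result~(i) in Claim~\ref{cl:La<1}, you instead differentiate $g$ to obtain $g'(z) = -p(z)g(z) - 2v'(z)$ and exploit the one-sided differential inequality $g\le 0 \Rightarrow g' > 0$, which makes the single-crossing property of $\{g>0\}$ immediate; this is a somewhat cleaner packaging of the same underlying mechanism (monotonicity of $v$), and the handling of the remaining intervals and the concatenation at $z_L$, $z^*$, $z_R$ is also correct.
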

	\begin{proof}
		If $r_1\geq r^*$, Theorem \ref{uniqueness} tells us that $a(z)=0$ for all $z\in \mathbb{R}$. Thus, $EP(z)=\snr[1-p(z)]$, which is strictly decreasing on $\mathbb{R}$.
		
		If $r_1<r^*$, by Theorem \ref{uniqueness} the agent's equilibrium policy function $a$ is hump-shaped, with cutoffs denoted by $z_L$ and $z_R$ such that $a(z)>0$ if and only if $z\in (z_L,z_R)$. Obviously, $EP(\cdot)$ is strictly decreasing on $(-\infty,z_L)$ and on $(z^*,\infty)$, because on each of these intervals $a(\cdot)$ is weakly decreasing and $p(\cdot)$ is strictly increasing in $z$. We now focus on the monotonicity of $EP(\cdot)$ on $(z_L,z^*)$.
		
		First, analogous to \eqref{eq:a+a'}, we establish the following equality which links $EP(z)$ to $EP'(z)$:
		\begin{equation}\label{eq:EP+EP'}
			\snr-EP(z)-\frac{EP'(z)}{p(z)}=2\snr\left(\frac{v(z)-u}{c}\right).
		\end{equation}
		To see this, note first that since $a(z)>0$ on $(z_L,z^*)$,  by equation \eqref{eq:EPz} and Claim \ref{cl:aoptimal} we have
		\begin{equation}\label{eq:mu-EP}
			\snr-EP(z) = \snr [1-a(z)]p(z)=- \frac{r_1c}{\snr }\frac{p(z)}{v'(z)}.
		\end{equation}
		Differentiating this expression, we obtain
		\[
		-EP'(z) = -\frac{r_1c }{\snr}\left(-\frac{v''(z)p(z)}{v'(z)^2}+\frac{p(z)[1-p(z)]}{v'(z)}\right) =\snr[1-a(z)]p(z) \left(- \frac{v''(z)}{v'(z)}+1-p(z)\right).
		\]
		where the first equality follows from $p'(z)=p(z)[1-p(z)]$ and the second equality follows from \eqref{eq:mu-EP}. Recall, from the agent's HJB \eqref{eq:ODEvLa<1} in this case, that
		\[
		\frac{v''(z)}{v'(z)} = 1 + \frac{\left[v(z)-u\right] v'(z)}{\kappa_L}  = 1 - 2\left(\frac{v(z)-u}{c}\right)\frac{1}{1-a(z)}, 
		\]
		where the second equality follows from Claim \ref{cl:aoptimal} and $\kappa_L=\frac{r_1c^2}{2\snr^2}$. Thus,
		\begin{align*}
			-\frac{EP'(z)}{p(z)}&=\snr[1-a(z)]\left[2\left(\frac{v(z)-u}{c}\right)\frac{1}{1-a(z)}-p(z)\right]\\
			&=2\snr\left(\frac{v(z)-u}{c}\right)- \snr [1-a(z)]p(z)\\
			&= 2\snr\left(\frac{v(z)-u}{c}\right) - \left[\snr-EP(z)\right],
		\end{align*}
		where the last equality follows from \eqref{eq:mu-EP}. Equation \eqref{eq:EP+EP'} then follows immediately.
		
		From equation \eqref{eq:EP+EP'}, we can apply the same argument as that after equation \eqref{eq:a+a'} to show that $EP(\cdot)$ must be either strictly increasing, or strictly decreasing, or first strictly decreasing and then strictly increasing on $(z_L,z^*)$, a property that echoes what we have shown for the policy function $a(\cdot)$ in Claim \ref{cl:La<1}. Since we know that $EP(\cdot)$ is strictly decreasing on $(-\infty,z_L)$ and on $(z^*,\infty)$, the result in the lemma follows.
	\end{proof}
	
	%The following Corollary is immediately implied by  Lemma \ref{lem:EPpossibilities}. 
	\begin{corollary}\label{cor:EPmonotone}
		$EP(\cdot)$ is non-monotone if and only if $r_1<r^*$ and $EP'_-(z^*)>0$.
	\end{corollary}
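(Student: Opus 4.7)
The plan is to leverage Lemma \ref{lem:EPpossibilities}, which enumerates exactly two possibilities for the shape of $EP$ when $r_1 < r^*$, combined with the smoothness of $EP$ on $(z_L, z^*)$ that equation \eqref{eq:EP+EP'} affords. The regime $r_1 \geq r^*$ is disposed of immediately: the lemma forces $EP$ to be strictly decreasing on $\mathbb{R}$, so both sides of the equivalence fail. Thus for the remainder I would focus on $r_1 < r^*$, where exactly one of the following holds: $EP$ is strictly decreasing on $\mathbb{R}$, or $EP$ has the non-monotone shape with strict increase on some interval $(\underline{z}, z^*)$.

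For the ``if'' direction, I would start from the observation that equation \eqref{eq:EP+EP'}, together with the regularity of $v$ and $p$ on $(z_L, z^*)$, implies $EP'$ is continuous on this open interval and admits a (finite) left limit $EP'_-(z^*)$. If this left limit is strictly positive, continuity produces a left-neighborhood of $z^*$ on which $EP' > 0$, so $EP$ strictly increases there. This contradicts the first alternative of Lemma \ref{lem:EPpossibilities}, leaving only the non-monotone shape, and therefore $EP$ is non-monotone.

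The ``only if'' direction is the subtler one. Non-monotonicity and the lemma force $r_1 < r^*$ and the second alternative, so $EP$ is strictly increasing on $(\underline{z}, z^*)$; continuity of $EP'$ then gives $EP'_-(z^*) \geq 0$. The key step is to rule out $EP'_-(z^*) = 0$. My plan is to differentiate equation \eqref{eq:EP+EP'} once in $z$ and take the left limit at $z^*$: under $EP'_-(z^*) = 0$ this yields
\[
EP''_-(z^*) = -\frac{2\snr\, p(z^*)\, v'(z^*)}{c} > 0,
\]
since $v$ is strictly decreasing by Corollary \ref{cor:convexity}. A second-order Taylor expansion then gives $EP'(z) < 0$ for $z$ slightly below $z^*$, contradicting the strict increase on $(\underline{z}, z^*)$. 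Hence $EP'_-(z^*) > 0$.

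The main obstacle is precisely this second-order analysis. Lemma \ref{lem:EPpossibilities} and the first-order identity \eqref{eq:EP+EP'} alone do not separate the genuine non-monotone case from a degenerate boundary case with vanishing left derivative at $z^*$; it is the sign of $EP''_-(z^*)$, extracted by differentiating the ODE and using that $v$ is strictly decreasing, that rules out the boundary case and closes the equivalence.
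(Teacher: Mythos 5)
Your proof is correct. The interesting point you identify is that the ``only if'' direction is not quite immediate from Lemma \ref{lem:EPpossibilities}: the zigzag shape gives $EP'>0$ on $(\underline{z},z^*)$, but strict monotonicity alone only yields $EP'_-(z^*)\geq 0$ (compare $x\mapsto x^3$ at the origin). The paper states the corollary without a separate proof, treating it as immediate from the lemma, and your second-order argument supplies exactly the rigor this step needs. Differentiating \eqref{eq:EP+EP'}, imposing $EP'_-(z^*)=0$, and reading off
\[
EP''_-(z^*) \;=\; -\frac{2\snr\,p(z^*)}{c}\,v'(z^*) \;>\; 0
\]
(using $v'(z^*)<0$, which holds strictly since $a(z^*)>0$ forces $-\snr^2 v'(z^*)>r_1c$ by Claim \ref{cl:aoptimal}) is a clean contradiction: $EP'$ would then be strictly increasing on a left-neighborhood of $z^*$ with limit $0$, hence negative there, contradicting the strict increase of $EP$. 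The ``if'' direction and the disposal of $r_1\geq r^*$ are routine and as you describe. In short, your argument fills a genuine (if minor) logical gap that the paper's presentation glosses over, and the mechanism you use — one more differentiation of the ODE \eqref{eq:EP+EP'} plus the strict monotonicity of $v$ from Corollary \ref{cor:convexity} — is consonant with the methods the paper itself employs for Claim \ref{cl:La<1} and Lemma \ref{lem:EPpossibilities}.
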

	%To see if $EP(\cdot)$ is non-monotone, by Corollary \ref{cor:EPmonotone} we only need to check the left derivative of $EP$ at $z^*$. %An analysis of $EP'_-(z^*)$ leads to the next lemma.
	
	\begin{lemma}\label{lem:EPmonotone}
		$EP(\cdot)$ is non-monotone if and only if $r_1<r^*$ and $[1-a(z^*)]p(z^*)>2\left(\frac{v^*-u}{c}\right)$, where $v^*:=v(z^*)$.
	\end{lemma}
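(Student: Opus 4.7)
The plan is to combine Corollary \ref{cor:EPmonotone}, which reduces non-monotonicity of $EP(\cdot)$ to the two conditions $r_1<r^*$ and $EP'_-(z^*)>0$, with the identity \eqref{eq:EP+EP'} derived in the previous lemma's proof, which holds on the mixing interval $(z_L,z^*)$. Given the corollary, it suffices to show that
\[
EP'_-(z^*)>0 \iff [1-a(z^*)]p(z^*) > 2\left(\tfrac{v^*-u}{c}\right).
\]

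The key step will be a direct computation. First I would rearrange \eqref{eq:EP+EP'} to isolate the derivative, giving
\[
\frac{EP'(z)}{p(z)} = \snr - EP(z) - 2\snr\left(\tfrac{v(z)-u}{c}\right)
\]
for $z \in (z_L,z^*)$. Using the defining formula $EP(z)=\snr[1-(1-a(z))p(z)]$, the first two terms collapse to $\snr(1-a(z))p(z)$, which yields
\[
EP'(z) = \snr\, p(z)\left\{[1-a(z)]p(z) - 2\left(\tfrac{v(z)-u}{c}\right)\right\}
\]
throughout $(z_L,z^*)$.

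The final step is to pass to the left-limit at $z^*$. By Lemma \ref{lemma:optimality-agent}, the equilibrium policy function $a(\cdot)$ is Lipschitz (hence continuous at $z^*$), and the value function $v(\cdot)$ is regular, so in particular continuous there. Thus the bracketed expression is continuous from the left at $z^*$, giving
\[
EP'_-(z^*) = \snr\, p(z^*)\left\{[1-a(z^*)]p(z^*) - 2\left(\tfrac{v^*-u}{c}\right)\right\}.
\]
Since $\snr>0$ and $p(z^*)\in(0,1)$, the sign of $EP'_-(z^*)$ coincides with the sign of the bracket, establishing the claimed equivalence and hence the lemma via Corollary \ref{cor:EPmonotone}.

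I do not anticipate a substantive obstacle here; the identity \eqref{eq:EP+EP'} does all the heavy lifting, and the only subtlety is making sure the one-sided limit at $z^*$ is justified. This is handled by the regularity properties guaranteed by the definition of a Markov equilibrium (smoothness of $v$ and Lipschitz continuity of $a$), which allow one-sided evaluation of the continuous bracketed expression without needing $EP'(z^*)$ to exist in the two-sided sense (which it need not, as the behavior of $a$ just to the right of $z^*$ is governed by a different ODE via Claim \ref{cl:Ra<1}).
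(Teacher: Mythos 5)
Your proof is correct, and it takes a slightly different route from the paper's. The paper re-derives the sign condition at $z^*$ from scratch by differentiating $\ln[1-a(z)]$, using the closed-form $1-a(z)\propto \phi\bigl(\tfrac{v(z)-u}{\sqrt{\kappa_L}}\bigr)/e^z$ from \eqref{eq:1-aprop} and the fact that $d\ln\phi(x)/dx=-x$, pushing through a chain of equivalences. You instead reuse the identity \eqref{eq:EP+EP'} already established in the proof of Lemma \ref{lem:EPpossibilities}, rearrange it to get the exact formula
\[
EP'(z) = \snr\, p(z)\left\{[1-a(z)]p(z) - 2\left(\tfrac{v(z)-u}{c}\right)\right\}, \qquad z\in(z_L,z^*),
\]
and then pass to the left limit at $z^*$, which is justified by the Lipschitz continuity of $a$ and the regularity of $v$ (both built into the definition of a Markov equilibrium). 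Both approaches rest on the same underlying ODE structure, but yours is more economical: it avoids re-deriving the same relation via the explicit normal-density form, and it delivers a clean closed form for $EP'_-(z^*)$ from which the equivalence is immediate. The one thing worth being explicit about (which you are) is that \eqref{eq:EP+EP'} applies only on the mixing interval $(z_L,z^*)$, which is nonempty precisely when $r_1<r^*$; your argument correctly conditions on this via the corollary.
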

	\begin{proof}
		Since $a(z)>0$ on $(z_L,z^*]$, by Claim \ref{cl:aoptimal} and equation \eqref{eq:vLa<1} we have
		\begin{equation}\label{eq:1-aprop}
			1- a(z) = -\frac{r_1c}{\snr^2 v'(z)}\propto \frac{\phi\left(\frac{v(z)-u}{\sqrt{\kappa_L}}\right)}{e^z},\forall z\in (z_L,z^*].
		\end{equation}
		Then,
		\begin{align*}
			&EP_-'(z^*)>0\\
			\iff\quad&\left.\tfrac{d}{dz}\left[(1-a(z))p(z)\right]\right|_{z=z_-^*}<0\tag{by \eqref{eq:EPz}}\\
			\iff\quad& \left.-a'(z)p(z)+[1-a(z)]p(z)[1-p(z)]\right|_{z=z_-^*}<0\tag{because $p'(z)=p(z)[1-p(z)]$}\\
			\iff\quad &\left.[1-a(z)]p(z) \left[-\tfrac{a'(z)}{1-a(z)}+1-p(z)\right]\right|_{z=z_-^*}<0\\
			\iff\quad &\left.[1-a(z)]p(z)\left[\tfrac{d\ln [1-a(z)]}{dz}+1-p(z)\right]\right|_{z=z_-^*}<0\\
			\iff\quad & \left.\tfrac{d\ln [1-a(z)]}{dz}+1-p(z)\right|_{z=z_-^*} < 0 \tag{because $a(z)<1$ for all $z$}\\
			\iff\quad & \left.\tfrac{d}{dz}\left[\ln\phi\left(\tfrac{v(z)-u}{\sqrt{\kappa_L}}\right)-z\right]+1-p(z)\right|_{z=z_-^*}<0\tag{by \eqref{eq:1-aprop}}\\
			\iff\quad & \left(-\tfrac{v^*-u}{\sqrt{\kappa_L}}\right)\left(\tfrac{v'(z^*)}{\sqrt{\kappa_L}}\right)-p(z^*)<0 \tag{because $d\ln\phi(x)/dx=-x$}\\
			\iff\quad & \tfrac{v^*-u}{\kappa_L}<-\tfrac{p(z^*)}{v'(z^*)}\tag{because $-v'(z^*)>0$}\\
			%\iff\quad & \frac{v^*-u}{\kappa_L}\frac{r_1c}{\snr^2}<-\frac{p(z^*)}{v'(z^*)}\frac{r_1c}{\snr^2}\\
			\iff\quad & 2\left(\tfrac{v^*-u}{c}\right)<[1-a(z^*)]p(z^*)\tag{by $\kappa_L = \frac{r_1c^2}{2\snr^2}$ and  \eqref{eq:1-aprop} }
		\end{align*}
		By Corollary \ref{cor:EPmonotone}, the result follows.
	\end{proof}
	
	%Lemma \ref{lem:EPmonotone} immediately implies the following corollary.
	\begin{corollary}\label{cor:EPmonotone2}
		$EP(\cdot)$ is non-monotone if $r_1<r^*$ and $v^*<u$.
	\end{corollary}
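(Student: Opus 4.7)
The plan is to invoke Lemma \ref{lem:EPmonotone} directly and observe that its sufficient condition reduces to a sign comparison under the hypothesis $v^* < u$. Concretely, Lemma \ref{lem:EPmonotone} characterizes non-monotonicity of $EP(\cdot)$ by the two conditions $r_1 < r^*$ and
\[
[1-a(z^*)]\, p(z^*) > 2\left(\frac{v^*-u}{c}\right).
\]
The first condition is given by hypothesis. For the second, I would note that the right-hand side is strictly negative whenever $v^* < u$ (since $c > 0$), whereas the left-hand side is strictly positive: $p(z^*) = e^{z^*}/(1+e^{z^*}) \in (0,1)$, and by Lemma \ref{lem:fullspan} (or equivalently by the bound $\sup_{p\in(0,1)} a(p) < 1$ established in the characterization), we have $1 - a(z^*) > 0$. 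Hence the inequality holds trivially, and Lemma \ref{lem:EPmonotone} yields non-monotonicity of $EP(\cdot)$.

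There is essentially no obstacle here; the corollary is an immediate one-line consequence of the preceding lemma once one checks the sign of each side. The only point worth flagging is that Lemma \ref{lem:fullspan} (via the strict bound on $a$ in any Markov equilibrium) is what guarantees strict positivity of $[1-a(z^*)]\, p(z^*)$, so the chain of inequalities $[1-a(z^*)]\,p(z^*) > 0 > 2(v^*-u)/c$ is strict, as required by Lemma \ref{lem:EPmonotone}.
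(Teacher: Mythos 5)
Your proof is correct and is exactly the intended (and essentially only) argument: under $v^*<u$ the right-hand side of the criterion in Lemma \ref{lem:EPmonotone} is strictly negative while the left-hand side $[1-a(z^*)]p(z^*)$ is strictly positive, using $a(z^*)<1$ from Lemma \ref{lem:fullspan} and $p(z^*)\in(0,1)$. The paper leaves the corollary unproved precisely because it is this one-line sign check.
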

	
	For a given $\lambda$, recall that $r^*(\lambda)$ is the unique solution to $\eqref{eq:r*}$. It is easy to see that there exists a unique $\lambda_1>0$ such that 
	\begin{equation}\label{eq:lambda1}
		r^*(\lambda_1)=\snr^2.
	\end{equation}
	Consequently, $r^*(\lambda)>\snr^2$ if and only if $\lambda>\lambda_1$.
	
	The following lemma deals with the agent's discount rates that are close to $r^*(\lambda)$.
	\begin{lemma}\label{lem:lambda1}
		If $\lambda>\lambda_1$ and $\snr^2< r_1<r^*(\lambda)$, then $EP(\cdot)$ is non-monotone.
	\end{lemma}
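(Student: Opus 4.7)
The plan is to apply Corollary \ref{cor:EPmonotone2}, which gives non-monotonicity of $EP(\cdot)$ whenever $r_1 < r^*$ and $v^* < u$. The hypothesis $r_1 < r^*(\lambda)$ is directly assumed; the role of the condition $\lambda > \lambda_1$ is only to ensure that the interval $(\snr^2, r^*(\lambda))$ is non-empty, which follows from the monotonicity of $r^*(\cdot)$ in $\lambda$ and the definition $r^*(\lambda_1) = \snr^2$ in \eqref{eq:lambda1}. Therefore the substantive task is to extract $v^* < u$ from the lower bound $r_1 > \snr^2$.

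My strategy is to bound $v^*$ via the simpler quantity $v_L$, using the explicit formula \eqref{eq:vL} from the proof of Claim \ref{cl:ashaper}. First I would show $v_L < u$ by reducing the inequality $v_L = u + c - \tfrac{r_1 c}{\xi_L \snr^2} < u$ to $\xi_L < r_1/\snr^2$, where $\xi_L > 0$ solves $\xi_L^2 + \xi_L = 2r_1/\snr^2$. Using the strict monotonicity of $\xi \mapsto \xi^2 + \xi$ on $(-1/2,\infty)$ and setting $x := r_1/\snr^2$, the desired inequality is equivalent to $2x < x^2 + x$, i.e.\ $x > 1$, which is exactly the hypothesis $r_1 > \snr^2$. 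To pass from $v_L$ to $v^*$, I would invoke that $v$ is strictly decreasing on the mixing interval $(z_L, z^*)$: indeed $a(z) \in (0,1)$ on this interval (by Theorem \ref{uniqueness} since $r_1 < r^*$), which by Claim \ref{cl:aoptimal} forces $v'(z) < 0$. Hence $v^* = v(z^*) < v(z_L) = v_L < u$, and Corollary \ref{cor:EPmonotone2} closes the argument.

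I do not anticipate a serious obstacle, as all the required ingredients are already in place: the closed-form expression for $v_L$ is available, and the strict monotonicity of $v$ on $(z_L, z^*)$ follows directly from the hump-shaped characterization in Theorem \ref{uniqueness}. The only delicate point is ensuring the mixing interval is non-degenerate so that the inequality $v^* < v_L$ is strict, but this is immediate from $z_L < z^* < z_R$ in the hump-shaped regime.
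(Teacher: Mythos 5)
Your proof is correct and follows essentially the same route as the paper: both reduce the statement to Corollary \ref{cor:EPmonotone2} via the chain $v^* < v_L < u$, using the closed form \eqref{eq:vL} for $v_L$ and the strict decrease of $v$ on $(z_L,z^*)$. The only difference is that you spell out the algebra behind the paper's ``it is easy to verify that $v_L<u$ if and only if $r_1>\snr^2$'' step via the substitution $x=r_1/\snr^2$ and the monotonicity of $\xi\mapsto\xi^2+\xi$, which is a correct and slightly more explicit rendering of the same computation.
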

	\begin{proof}
		Recall that, in the proof of Claim \ref{cl:ashaper}, we have shown that if $a(\cdot)$ is hump-shape, then $v_L:=v(z_L)$ and $v_R:=v(z_R)$ are calculated in \eqref{eq:vL} and \eqref{eq:vR}, respectively. Moreover, $v_L>v_R$ if (and only if) $r_1<r^*(\lambda)$, and since $v'(\cdot)<0$, we have $v^*\in (v_R,v_L)$.
		
		From \eqref{eq:vL}, it is easy to verify that $v_L<u$ if and only if $r_1>\snr^2$. Therefore, if $\lambda>\lambda_1$ and $\snr^2< r_1<r^*(\lambda)$, we have $v_R<v^*<v_L<u$. By Corollary \ref{cor:EPmonotone2}, $EP(\cdot)$ is non-monotone.
	\end{proof}
	
	What about $r_1\in (0,\mbox{ }\snr^2]$? %Let $\lambda_1$ be delivered by Lemma \ref{lem:lambda1}. 
	For $\lambda>\lambda_1$ and $r_1\leq \snr^2$, recall that the functions $a_-^*(\cdot;r_1),a_+^*(\cdot;r_1,\lambda):[v_R,v_L]\rightarrow\mathbb{R}$ are defined by \eqref{eq:a*-} and \eqref{eq:a*+}, respectively. Recall also, from the proof of Lemma \ref{lemma:optimality-agent}, that $v^*\in (v_R,v_L)$ is the unique solution to $a^*_-(x;r_1)=a^*_+(x;r_1,\lambda)$. 
	
	\begin{claim}\label{cl:v*<u}
		If $\lambda>\lambda_1$ and $r_1\leq \snr^2$, then $u\in [v_R,v_L]$. Moreover, $v^*<u$ if and only if $a_-^*(u;r_1)<a_+^*(u;r_1,\lambda)$.
	\end{claim}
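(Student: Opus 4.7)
The plan is to establish the two parts in order, using the explicit formulas \eqref{eq:vL}--\eqref{eq:vR} together with the characteristic equations satisfied by $\xi_L$ and $\xi_R$ for part~1, and then invoking the monotonicity structure of Claim \ref{cl:v*unique} for part~2.

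First, I would show $u \leq v_L$. From \eqref{eq:vL}, this reduces to $\xi_L \snr^2 \geq r_1$. Since $\xi_L > 0$ satisfies $\xi_L^2 + \xi_L = 2r_1/\snr^2$ and $\xi \mapsto \xi^2 + \xi$ is strictly increasing on $(0,\infty)$, the inequality $\xi_L \geq r_1/\snr^2$ is equivalent to $2r_1/\snr^2 \geq (r_1/\snr^2)^2 + r_1/\snr^2$, i.e.\ to $r_1 \leq \snr^2$, which is assumed.

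Next, I would show $v_R \leq u$. Setting $\eta_R := -\xi_R > 0$ (so that $\eta_R^2 - \eta_R = 2(r_1+\lambda)/\snr^2$), a short rearrangement of \eqref{eq:vR} shows that $v_R \leq u$ is equivalent to
\[
(\star) \qquad r_1 c \left(\sqrt{1 + \tfrac{8(r_1+\lambda)}{\snr^2}} + 3\right) \;\leq\; 4\lambda u.
\]
The LHS of $(\star)$ is strictly increasing in $r_1$, so it suffices to check $(\star)$ at the boundary $r_1 = \snr^2$, where it reads $\snr^2 c(\sqrt{9 + 8\lambda/\snr^2}+3) \leq 4\lambda u$. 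Plugging $r^*(\lambda_1) = \snr^2$ into the defining equation \eqref{eq:r*} of $r^*$ and simplifying shows that this becomes an \emph{equality} at $\lambda = \lambda_1$. I would then verify by a direct derivative computation that $\lambda \mapsto \snr^2 c(\sqrt{9+8\lambda/\snr^2}+3)/\lambda$ is strictly decreasing on $(0,\infty)$; this upgrades the boundary equality to a strict inequality for $\lambda > \lambda_1$, giving $v_R < u \leq v_L$.

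Finally, for the equivalence $v^* < u \iff a_-^*(u;r_1) < a_+^*(u;r_1,\lambda)$, note that part~1 guarantees $u \in [v_R,v_L]$ so both sides are well-defined. By Claim \ref{cl:v*unique}, $a_-^*(\cdot;r_1)$ is strictly decreasing and $a_+^*(\cdot;r_1,\lambda)$ is strictly increasing on $[v_R,v_L]$, and from the proof of Lemma \ref{lemma:optimality-agent} they coincide at the unique point $v^* \in (v_R,v_L)$. The equivalence then follows immediately from these monotonicities: $u > v^*$ iff at $x = u$ the decreasing function $a_-^*$ has dropped below the increasing function $a_+^*$. The main obstacle in the whole argument is the algebraic reduction leading to $(\star)$ and in particular identifying the boundary case $r_1 = \snr^2,\ \lambda = \lambda_1$ with the defining equation of $r^*$; once that link is made, everything else is routine monotonicity bookkeeping.
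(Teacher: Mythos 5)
Your proof is correct, and Part~2 coincides with the paper's argument. Part~1, however, follows a more computational route. The paper first reparametrizes $v_L$ and $v_R$ as explicit functions of $r_1$ (using $\xi_L,\xi_R$ expressed via radicals), observes that $v_L(r_1)$ is strictly decreasing and $v_R(r_1,\lambda)$ is strictly increasing in $r_1$, and exploits two identities: $v_L(\snr^2)=u$ and $v_L(r^*(\lambda))=v_R(r^*(\lambda),\lambda)$ (the latter being exactly the boundary case of the hump-shaped/separating dichotomy in Claim~\ref{cl:ashaper}). Since $r_1\leq\snr^2<r^*(\lambda)$ whenever $\lambda>\lambda_1$, the whole claim then falls out of a single monotonicity chain $v_R(r_1,\lambda)<v_R(r^*(\lambda),\lambda)=v_L(r^*(\lambda))<v_L(\snr^2)=u\leq v_L(r_1)$, with no reference to $\lambda$-monotonicity and no reference to $\lambda_1$ beyond the hypothesis. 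You instead reduce $v_R\leq u$ to the explicit inequality~$(\star)$, move to the corner $r_1=\snr^2$ by $r_1$-monotonicity of the left side, identify the equality case of~$(\star)$ at $(\snr^2,\lambda_1)$ with the defining equation~\eqref{eq:r*}, and then perform an additional derivative computation in $\lambda$ to push from $\lambda_1$ to all $\lambda>\lambda_1$. Both are valid and rest on the same underlying identity (that $r^*(\lambda_1)=\snr^2$ corresponds to $v_L=v_R=u$); the paper's version is tighter because comparing at $r^*(\lambda)$ for the given $\lambda$ avoids your extra $\lambda$-derivative step, while your version makes the role of~$(\star)$ fully explicit.
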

	\begin{proof}
		Suppose that $\lambda>\lambda_1$ and $r_1\leq \snr^2$. First, by definition of $\lambda_1$ in  \eqref{eq:lambda1}, we have $r^*(\lambda)>\snr^2\geq r_1$, so the equilibrium $a(\cdot)$ is hump-shaped and $v_L>v_R$. Substituting the expressions of $\xi_L$ and $\xi_R$ into \eqref{eq:vL} and \eqref{eq:vR}, we can rewrite $v_L$ and $v_R$ as
		\begin{align*}
			v_L(r_1)&=u+c\left(1-\frac{\sqrt{1+8r_1/\snr^2}+1}{4}\right),\\
			v_R(r_1,\lambda)&=\frac{r_1}{r_1+\lambda}\left[u+c\left(1+\frac{\sqrt{1+8(r_1+\lambda)/\snr^2}-1}{4}\right)\right].
		\end{align*}
		It is easy to verify that $v_L$ is strictly decreasing in $r_1$ and $v_R$ is strictly increasing in $r_1$. Thus, for $\lambda>\lambda_1$ and $r_1\leq \snr^2$,
		\begin{equation*}
			v_R(r_1,\lambda)<v_R(r^*(\lambda),\lambda)=v_L(r^*(\lambda))<v_L(\snr^2)\leq v_L(r_1)
		\end{equation*}
		where the inequalities follow from the monotonicity of $v_L$ and $v_R$ in $r_1$, and the equality follows from the definition of $r^*$. Note that $v_L(\snr^2)=u$, so we have $u\in [v_R,v_L]$. 
		
		The fact that $u\in [v_R,v_L]$ implies that $a_-^*(u;r_1)$ and $a_+^*(u;r_1,\lambda)$ are well-defined. By Claim \ref{cl:v*unique}, the function $g:[v_R,v_L]\to \mathbb{R}$ defined by
		$    g(x;r_1,\lambda):=a_-^*(x;r_1)-a_+^*(x;r_1,\lambda)
		$ 
		is strictly decreasing in $x$, and $v^*$ is the unique zero point of $g(\cdot;r_1,\lambda)$ on $[v_R,v_L]$. Therefore,
		\begin{equation*}
			v^*<u \iff g(u;r_1,\lambda)<0 \iff a_-^*(u;r_1)<a_+^*(u;r_1,\lambda),
		\end{equation*}
		as desired.
	\end{proof}
	
	The next lemma deals with the case where the agent's discount rate $r_1$ is small.
	\begin{lemma}\label{lem:lambda2}
		There exist $\lambda_2\geq \lambda_1$ and $\underline{r}>0$, such that if $\lambda>\lambda_2$ and $0<r_1<\underline{r}$, then $EP(\cdot)$ is non-monotone.
	\end{lemma}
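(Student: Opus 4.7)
By Corollary \ref{cor:EPmonotone2}, $EP(\cdot)$ is non-monotone whenever $r_1 < r^*(\lambda)$ and $v^* < u$. If $\lambda > \lambda_1$ then $r^*(\lambda) > \snr^2$ by the defining condition \eqref{eq:lambda1}, so the first condition is automatic whenever $r_1 \leq \snr^2$. Moreover, under $\lambda > \lambda_1$ and $r_1 \leq \snr^2$, Claim \ref{cl:v*<u} reduces the remaining condition $v^* < u$ to the inequality $a_-^*(u; r_1) < a_+^*(u; r_1, \lambda)$ involving the boundary functions defined in Claim \ref{cl:v*unique}. The plan is therefore to exhibit $\lambda_2 \geq \lambda_1$ and $\underline{r} \in (0, \snr^2]$ for which the strict inequality $a_-^*(u; r_1) < a_+^*(u; r_1, \lambda)$ holds for every $\lambda > \lambda_2$ and $r_1 \in (0, \underline{r})$, via matched asymptotics as $r_1 \downarrow 0$.

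\textbf{Asymptotic of $1 - a_-^*(u; r_1)$.} Substituting $x = u$ into \eqref{eq:a*-} and using $\sqrt{\kappa_L} = c\sqrt{r_1}/(\snr\sqrt{2})$ together with $v_L - u \to c/2$ (from \eqref{eq:vL}), the argument $(v_L - u)/\sqrt{\kappa_L}$ diverges at rate $1/\sqrt{r_1}$, making $\phi$ of it decay super-polynomially and $\Phi$ of it tend to $1$. A direct expansion then gives
\[
1 - a_-^*(u; r_1) \;\sim\; \frac{(\sqrt{2r_1}/\snr)\,\phi(0)}{1/2} \;=\; \frac{2\sqrt{r_1}}{\snr\sqrt{\pi}},
\]
so $1 - a_-^*(u; r_1) = \Theta(\sqrt{r_1})$, with constants independent of $\lambda$.

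\textbf{Asymptotic of $1 - a_+^*(u; r_1, \lambda)$.} Substituting $x = u$ into \eqref{eq:a*+} and using $\sqrt{\kappa_R} = r_1 c/(\snr\sqrt{2(r_1+\lambda)})$, the argument $(u - \tfrac{r_1}{r_1+\lambda}u)/\sqrt{\kappa_R}$ grows like $u\snr\sqrt{2\lambda}/(r_1 c)$ as $r_1 \downarrow 0$, so $\phi$ of it is bounded above by $\exp(-c(\lambda)/r_1^2)$ for some $c(\lambda) > 0$. Meanwhile $(v_R - \tfrac{r_1}{r_1+\lambda}u)/\sqrt{\kappa_R}$ converges to $T(\lambda) := \eta_0(\lambda)\,\snr\sqrt{2/\lambda}$, where $\eta_0(\lambda) := 1 + \tfrac{\sqrt{1+8\lambda/\snr^2}-1}{4}$. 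Consequently the denominator in the formula for $1 - a_+^*(u; r_1, \lambda)$ converges to
\[
D(\lambda) \;:=\; \tfrac{\sqrt{2\lambda}}{\snr}\,\phi(T(\lambda)) + \Phi(T(\lambda)) - 1.
\]
Using $\eta_0(\lambda) \sim \sqrt{\lambda/2}/\snr$ for large $\lambda$, I obtain $T(\lambda) \to 1$ and $\tfrac{\sqrt{2\lambda}}{\snr}\phi(T(\lambda)) \to \infty$, so $D(\lambda) \to \infty$. I therefore pick $\lambda_2 \geq \lambda_1$ large enough that $D(\lambda) \geq 1$ for $\lambda > \lambda_2$; this ensures the denominator stays above $1/2$ for all sufficiently small $r_1$, whence $1 - a_+^*(u; r_1, \lambda) \leq 2\exp(-c(\lambda)/r_1^2) = o(\sqrt{r_1})$.

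\textbf{Main obstacle.} Combining the two asymptotics shows $1 - a_+^*(u; r_1, \lambda) \ll 1 - a_-^*(u; r_1)$ for $\lambda > \lambda_2$ and $r_1$ small, yielding the required inequality and hence, via Claim \ref{cl:v*<u} and Corollary \ref{cor:EPmonotone2}, the non-monotonicity of $EP(\cdot)$. The main technical subtlety is controlling the limit $D(\lambda)$: the term $\Phi(T(\lambda)) - 1$ is negative, so one must exploit the large-$\lambda$ behavior $T(\lambda) \to 1$ (which keeps $\phi(T(\lambda))$ bounded away from zero) to ensure that the diverging prefactor $\sqrt{2\lambda}/\snr$ in front of $\phi$ overwhelms this negative contribution; without this large-$\lambda$ regime, cancellation in $D(\lambda)$ could invalidate the bound on the denominator. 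A secondary task is to make the estimate uniform in $\lambda > \lambda_2$, which is straightforward since the exponent $c(\lambda)/r_1^2$ bounding $1 - a_+^*$ only improves as $\lambda$ grows.
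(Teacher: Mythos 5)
Your argument takes essentially the same route as the paper's: reduce to $a_-^*(u;r_1)<a_+^*(u;r_1,\lambda)$ via Claim \ref{cl:v*<u} and Corollary \ref{cor:EPmonotone2}, then show that $1-a_+^*(u;r_1,\lambda)$ decays super-polynomially while $1-a_-^*(u;r_1)$ decays only like $\sqrt{r_1}$, with $\lambda_2$ tuned so that the denominator in the expression for $1-a_+^*$ is under control. There is no genuinely new idea relative to the paper.

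The one place where your write-up falls short of a proof is the uniformity over $\lambda$. The lemma asserts a single $\underline{r}>0$ that works for every $\lambda>\lambda_2$, but you only argue that the denominator exceeds $1/2$ ``for all sufficiently small $r_1$'' by appealing to its pointwise limit $D(\lambda)$ as $r_1\downarrow 0$. This does not on its own yield a $\lambda$-free threshold: you have not shown that the denominator approaches $D(\lambda)$ from above (its three summands move in opposite directions as $r_1\downarrow 0$, so monotonicity is unclear), nor that the rate of convergence is uniform in $\lambda$. The repair is exactly what the paper does: do not pass to the limit. Instead bound $-\Phi(q_u)\geq -1$ directly, and observe that $q_R$ is a strictly decreasing function of $r_1+\lambda$ alone with $q_R>1$ always, so $q_R$ lies in the fixed interval $(1,\,q_R|_{r_1+\lambda=\lambda_2})$ whenever $r_1+\lambda>\lambda_2$; this gives an $(r_1,\lambda)$-free positive lower bound on $\phi(q_R)$, and choosing $\lambda_2$ large enough then pushes the entire lower bound on the denominator above $1/2$ uniformly. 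With that patch your numerator estimate produces a $\lambda$-uniform $\underline{r}$ and the lemma follows.
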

	
	\begin{proof}%[Proof Sketch]
		See Online Appendix.
		%For a formal proof, see Online Appendix. Intuitively, when $r_1$ is near zero, both $a_-^*(u;r_1)$ and $a_+^*(u;r_1,\lambda)$ are close to $1$, but $a_+(u;r_1,\lambda)$ converges at a much faster rate than $a_-(u;r_1)$ does (i.e., $e^{-1/r_1^2}$ v.s. $\sqrt{r_1}$). So, when $r_1$ is close to zero, we easily have $a_-^*(u;r_1)<a_+^*(u;r_1,\lambda)$. Then, by Claim \ref{cl:v*<u} and Corollary \ref{cor:EPmonotone2}, the result follows.  In our proof, we show that the bounds $\lambda_2$ and $\underline{r}$ in the lemma can be obtained independent of each other.
	\end{proof}
	
	We note that the bound $\underline{r}$ obtained in Lemma \ref{lem:lambda2} is a fixed number independent of $\lambda$. We now turn to the last case where $r_1\in [\underline{r},\snr^2]$.
	\begin{claim}\label{cl:a+bound}
		There exist $\lambda_3'\geq\lambda_1$ and $A> 1$ such that if $\lambda>\lambda_3'$, then
		\begin{equation}\label{eq:a+bound}
			a_+^*(u;r_1,\lambda)>1-A\exp\left(-\frac{\snr^2}{c^2r_1^2}\frac{\lambda^2}{r_1+\lambda}u^2\right),\forall r_1\in [\underline{r},\snr^2].
		\end{equation}
	\end{claim}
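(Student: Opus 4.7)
The plan is to upper-bound $1 - a_+^*(u; r_1, \lambda)$ by separately controlling the numerator and denominator in the defining expression \eqref{eq:a*+}. Write $y_u := \frac{u - \frac{r_1}{r_1+\lambda}u}{\sqrt{\kappa_R}}$ and $y_R := \frac{v_R - \frac{r_1}{r_1+\lambda}u}{\sqrt{\kappa_R}}$. Using $\kappa_R = \frac{r_1^2 c^2}{2(r_1+\lambda)\snr^2}$, a direct computation yields $y_u^2 = \frac{2\snr^2 u^2 \lambda^2}{r_1^2 c^2(r_1+\lambda)}$, so that
\[
\phi(y_u) = \frac{1}{\sqrt{2\pi}}\exp\!\left(-\frac{\snr^2}{c^2 r_1^2}\frac{\lambda^2}{r_1+\lambda}u^2\right),
\]
which already produces the exponent appearing in the target inequality.

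Next, I would show that $y_R$ is bounded uniformly. Substituting the explicit formula for $\xi_R = \tfrac{-1-\sqrt{1+8(r_1+\lambda)/\snr^2}}{2}$ into \eqref{eq:vR} and simplifying gives
\[
y_R = \frac{\snr\sqrt{2}}{\sqrt{r_1+\lambda}} + \frac{2\sqrt{2(r_1+\lambda)}}{\snr + \sqrt{\snr^2+8(r_1+\lambda)}}.
\]
Both summands are nonnegative; the first is at most $\snr\sqrt{2/\lambda_1}$ (since $\lambda>\lambda_1$), and the second is at most $1$ (drop $\snr$ from its denominator and compare $2\sqrt{2(r_1+\lambda)}$ with $\sqrt{8(r_1+\lambda)}$). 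Hence there is a constant $K \geq 1$, independent of $r_1 \in [\underline{r}, \snr^2]$ and of $\lambda > \lambda_1$, such that $0 \leq y_R \leq K$, so $\phi(y_R) \geq \phi(K) > 0$ uniformly.

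Finally, the denominator of \eqref{eq:a*+} evaluated at $x = u$ equals $\frac{\sqrt{2(r_1+\lambda)}}{\snr}\phi(y_R) + \Phi(y_R) - \Phi(y_u)$, and satisfies $|\Phi(y_R) - \Phi(y_u)| \leq 1$ while $\frac{\sqrt{2(r_1+\lambda)}}{\snr}\phi(y_R) \geq \frac{\sqrt{2\lambda}}{\snr}\phi(K) \to \infty$ uniformly in $r_1$. Thus, choosing $\lambda_3' \geq \lambda_1$ large enough so that this leading term exceeds $2$, the denominator is at least $\tfrac{1}{2}\frac{\sqrt{2(r_1+\lambda)}}{\snr}\phi(y_R)$ for all $\lambda > \lambda_3'$. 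Cancelling the common factor $\frac{\sqrt{2(r_1+\lambda)}}{\snr}$ and using $\phi(y_u)/\phi(y_R) = \exp\!\left(\tfrac{y_R^2 - y_u^2}{2}\right)$ then gives
\[
1 - a_+^*(u; r_1, \lambda) \;\leq\; \frac{2\phi(y_u)}{\phi(y_R)} \;\leq\; 2 e^{K^2/2}\exp\!\left(-\frac{\snr^2}{c^2 r_1^2}\frac{\lambda^2}{r_1+\lambda}u^2\right),
\]
which delivers the claim with $A := 2 e^{K^2/2} > 1$.

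The main obstacle is ensuring that the constant $K$ and the threshold $\lambda_3'$ can be chosen uniformly in $r_1$ over the compact interval $[\underline{r},\snr^2]$. The lower bound $r_1 \geq \underline{r} > 0$ (supplied by Lemma \ref{lem:lambda2}) prevents degeneracies in $\kappa_R$ and in the denominator's leading term as $r_1 \to 0$, while $r_1 \leq \snr^2$ guarantees via Claim \ref{cl:v*<u} that $u \in [v_R, v_L]$ so that $a_+^*(u;r_1,\lambda)$ is even well defined. Once these two constraints are in place, every quantity above is controlled uniformly because $r_1+\lambda \to \infty$ with $\lambda$ regardless of where $r_1$ lies in the compact range, and the remaining dependence on $r_1$ enters only through bounded algebraic combinations.
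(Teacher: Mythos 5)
Your argument is correct and matches the paper's own proof in all essentials: both isolate $\phi(y_u)$ as the source of the target exponential, derive a uniform upper bound on $y_R$ (the paper's $q_R$) valid for all $\lambda>\lambda_1$, and then choose $\lambda_3'$ so that the $\phi(y_R)$ term in the denominator dominates the bounded $\Phi$-difference by a factor of two, yielding a constant of the form $2e^{K^2/2}$, which is exactly the paper's $A'=\tfrac{2}{\sqrt{2\pi}\,\phi(q_R^{\max})}$ in different notation. One peripheral inaccuracy in your closing remarks: the constraint $r_1\geq\underline{r}$ does not in fact prevent degeneracies in $\kappa_R$---your own closed form for $y_R$ shows it stays bounded as $r_1\downarrow 0$ because $\sqrt{\kappa_R}$ and $v_R-\tfrac{r_1}{r_1+\lambda}u$ vanish at the same rate---and indeed neither your constants nor the paper's actually depend on $\underline{r}$; that restriction is needed only downstream in Lemma~\ref{lem:lambda3}.
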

	\begin{proof}
		See Online Appendix.
	\end{proof}

	%The final lemma of this section deals with intermediate discount rates, $r_1\in [\underline{r},\snr^2]$.
	
	\begin{lemma}\label{lem:lambda3}
		There exists $\lambda_3\geq\lambda_1$ such that if $\lambda>\lambda_3$ and $\underline{r}\leq r_1\leq \snr^2$, then $EP(\cdot)$ is non-monotone.
	\end{lemma}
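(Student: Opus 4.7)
The plan is to show that for $\lambda$ sufficiently large (and $\lambda>\lambda_1$, so that $r^*(\lambda)>\snr^2\geq r_1$ and the equilibrium is hump-shaped), we have $v^*<u$ uniformly over $r_1\in[\underline{r},\snr^2]$; by Corollary \ref{cor:EPmonotone2} this will deliver the non-monotonicity of $EP(\cdot)$. Using Claim \ref{cl:v*<u}, the task reduces to proving $a_-^*(u;r_1)<a_+^*(u;r_1,\lambda)$ for every $r_1\in[\underline{r},\snr^2]$ once $\lambda$ is large enough.

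The first step is to bound $a_-^*(u;r_1)$ away from $1$ uniformly in $r_1$. Evaluating formula \eqref{eq:a*-} at $x=u$, where $\phi(0)>0$ and $\Phi(0)=\tfrac12$, yields
\[
a_-^*(u;r_1)=1-\frac{(\sqrt{2r_1}/\snr)\phi(0)}{(\sqrt{2r_1}/\snr)\phi\!\left((v_L-u)/\sqrt{\kappa_L}\right)+\Phi\!\left((v_L-u)/\sqrt{\kappa_L}\right)-\tfrac{1}{2}}.
\]
The numerator is strictly positive on the compact interval $[\underline{r},\snr^2]$. For $r_1<\snr^2$ one has $v_L>u$, so both summands in the denominator are strictly positive; for $r_1=\snr^2$ one has $v_L=u$, so the CDF-difference summand vanishes but the density summand $\sqrt{2}\,\phi(0)$ survives (giving in fact $a_-^*(u;\snr^2)=0$). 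Hence $a_-^*(u;r_1)\in[0,1)$ depends continuously on $r_1$, and by compactness $\bar{a}:=\max_{r_1\in[\underline{r},\snr^2]}a_-^*(u;r_1)<1$.

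The second step is to push $a_+^*(u;r_1,\lambda)$ above $\bar{a}$ uniformly using Claim \ref{cl:a+bound}. For $\lambda>\lambda_3'$,
\[
a_+^*(u;r_1,\lambda)>1-A\exp\!\left(-\tfrac{\snr^2}{c^2 r_1^2}\cdot\tfrac{\lambda^2}{r_1+\lambda}u^2\right).
\]
Since $r_1\leq\snr^2$ implies $\tfrac{\snr^2}{r_1^2(r_1+\lambda)}\geq\tfrac{1}{\snr^2(\snr^2+\lambda)}$, the exponent is bounded above by $-\tfrac{u^2\lambda^2}{c^2\snr^2(\snr^2+\lambda)}$, which tends to $-\infty$ as $\lambda\to\infty$. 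I can therefore choose $\lambda_3\geq\max\{\lambda_1,\lambda_3'\}$ large enough that the resulting uniform lower bound exceeds $\bar{a}$. For any $\lambda>\lambda_3$ and $r_1\in[\underline{r},\snr^2]$ we then obtain $a_+^*(u;r_1,\lambda)>\bar{a}\geq a_-^*(u;r_1)$, and Claim \ref{cl:v*<u} plus Corollary \ref{cor:EPmonotone2} finish the proof.

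The main obstacle is securing uniformity in $r_1$: the pointwise comparison $a_-^*(u;r_1)<1<\lim_{\lambda\to\infty}a_+^*(u;r_1,\lambda)$ is immediate, but the argument needs both estimates to be uniform over the full compact interval. The delicate point is the boundary $r_1=\snr^2$ where $v_L=u$ and the cumulative-distribution contribution to the denominator of $a_-^*(u;\cdot)$ collapses; the leading normal-density term nevertheless keeps the denominator positive, so $\bar{a}<1$ is preserved, and once this is secured the uniform decay of the exponent in Claim \ref{cl:a+bound}'s bound does the rest.
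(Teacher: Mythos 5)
Your proof is correct and follows essentially the same strategy as the paper: combine Claim \ref{cl:a+bound}'s exponential lower bound for $a_+^*(u;r_1,\lambda)$ with a comparison against $a_-^*(u;r_1)$, then apply Claim \ref{cl:v*<u} and Corollary \ref{cor:EPmonotone2}. The only difference is in how uniformity over $r_1\in[\underline{r},\snr^2]$ is secured — the paper defines $\lambda(r_1)$ implicitly and takes its maximum over the compact interval via the implicit function theorem, whereas you instead bound $a_-^*(u;\cdot)$ uniformly away from $1$ by compactness and continuity (including the correct treatment of the boundary case $r_1=\snr^2$ where $v_L=u$) and give an $r_1$-free upper bound on the exponent; both routes are equivalent in substance.
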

	\begin{proof}
		Let $\lambda_3'\geq \lambda_1$ and $A> 1$ be delivered  by Claim \ref{cl:a+bound}. For each $r_1\in [\underline{r},\snr^2]$, let $\lambda(r_1)$ be the unique solution on $\mathbb{R}_+$ to 
		$A\exp\left(-\frac{\snr^2}{c^2r_1^2}\frac{\lambda^2}{r_1+\lambda}u^2\right)=1-a_-^*(u;r_1).$ 
		Note that $\lambda(r_1)$ is well-defined for all $r_1\in [\underline{r},\snr^2]$ because the LHS is strictly decreasing in $\lambda$ while the RHS is independent of $\lambda$.\footnote{Moreover, when $\lambda=0$, the LHS is equal to $A>1-a^*_-(u;r_1)$; when $\lambda\to\infty$, the LHS converges to $0<1-a^*_-(u;r_1)$.} Hence,   we have
		\begin{equation}\label{eq:a-bound}
			A\exp\left(-\frac{\snr^2}{c^2r_1^2}\frac{\lambda^2}{r_1+\lambda}u^2\right)<1-a_-^*(u;r_1),\forall \lambda>\lambda(r_1).
		\end{equation}
		Also, $\lambda(r_1)$ is continuous in $r_1$ by the implicit function theorem. Let $\lambda_3'':=\max_{r\in [\underline{r},\snr^2]}\lambda(r)$ and $\lambda_3:=\max\left\{\lambda_3',\lambda_3''\right\}$. Combining \eqref{eq:a+bound} and \eqref{eq:a-bound}, we have 
		\begin{equation*}
			a_+^*(u;r_1,\lambda)>1-A\exp\left(-\frac{\snr^2}{c^2r_1^2}\frac{\lambda^2}{r_1+\lambda}u^2\right)>a_-(u;r_1),\text{ for all } \lambda>\lambda_3\text{ and }r_1\in [\underline{r},\snr^2].
		\end{equation*}
		Then, by Claim \ref{cl:v*<u} and Corollary \ref{cor:EPmonotone2}, we conclude that $EP(\cdot)$ is non-monotone if $\lambda>\lambda_3$ and $r_1\in [\underline{r},\snr^2]$.
	\end{proof}
	
	\begin{proof}[Proof of Theorem \ref{t:zigzag-shape}]
		Let $\lambda_1$ be defined in \eqref{eq:lambda1}, and let $\lambda_2$ and $\lambda_3$ be delivered by Lemmas \ref{lem:lambda2} and \ref{lem:lambda3}, respectively. Define  $\bar{\lambda}:=\max\{\lambda_1,\lambda_2,\lambda_3\}$. Lemmas \ref{lem:lambda1} through \ref{lem:lambda3} imply that if $\lambda>\bar{\lambda}$ and $r_1<r^*(\lambda)$, then $EP(\cdot)$ is non-monotone. Lemma \ref{lem:EPpossibilities} then leads to the conclusion of the theorem.
	\end{proof}

	\subsection{Effects of Better Transparency: Toward a Proof of Theorem \ref{t:noisehelps}}\label{app:transparency}
	In this section, we prove Theorem \ref{t:noisehelps} which is about the convergence of the principal's equilibrium value function when the signal-to-noise ratio $\snr$ grows without bound.
	
	Take any sequence $\{\snr_n\}_n$ such that $\lim_n\snr_n=+\infty$. For each \(n\in \mathbb{N}\), take the unique Markov equilibrium \(\left( a_{n}, b_{n}\right)\) associated with the signal-to-noise ratio $\snr_n$. %(Recall that the agent's discount rate is $r_1$ and the principal's discount rate is $r_2$.)
	Let \(V_{n}\left( \cdot \right)\) be the agent's value function in the equilibrium \(\left( a_{n},b_{n}\right)\) and \(W_{n}\left(\cdot \right)\) be the principal's value function. We will often use $z\equiv \log(p/1-p)$ as state variable when analyzing the agent's behavior. When doing so, we denote by $v_n(z):=V_n(p(z))$ the agent's value function in the $z-$space. Write \(z_{n}^{\ast }\) for the principal's equilibrium cutoff. Write \(z_{L,n}\) for the infimum belief \(z\) at which the agent plays \(a_{n}\left( z\right) >0\) and write \(z_{R,n}\) for the supremum. Write $\mathbb{T}$ for the equilibrium stopping time that stops the play of the game. \textit{Without labeling explicitly, we note that the distribution of $\mathbb{T}$ depends on $n$ and the current state $z$.}
	For \(i=1,2\), let \(\mathbb{E}_{n}^{\theta}\left\{ e^{-r_{i}\mathbb{T}}\right\}\) be the expected discount factor when the stopping action is taken in the equilibrium \(\left( a_{n}, b_{n}\right)\) discounted at rate \(r_{i}\) and given the equilibrium strategy of type \(\theta\in \left\{NI,I\right\}\). When the game starts at state \(z\), let 
	\[
	\mathbb{E}_{n}\left\{e^{-r_{i}\mathbb{T}}\right\} :=p(z) \mathbb{E}_{n}^{NI}\left\{ e^{-r_{i}\mathbb{T}}\right\} +(1-p(z))\mathbb{E}_{n}^{I}\left\{ e^{-r_{i}\mathbb{T}}\right\}.
	\]
	
	\subsubsection{Case 1: $\lambda<r_1\left(\frac{c}{u}\right)$, i.e., $u< \frac{r_1}{r_1+\lambda}(u+c)$
	}
	\begin{proof}[Proof of Part 1 of Theorem \ref{t:noisehelps}]
		We first show that there is $\varepsilon>0$ such that $a_n(z_n^*)<1-\varepsilon$ for all $n$. Without loss, assume that $a_n(z_n^*)>0$ for all $n$. Then condition \eqref{eq:a+a'} implies that
		\begin{equation*}
			a_{n}(z_n^*) = 1-2\left( \frac{v_{n}(z_n^*)-u}{c}\right)-a_{n}^{\prime }\left( z_{n-}^*\right)
			< 1-2\left(\frac{ \frac{r_1}{r_1+\lambda}(u+c)-u}{c}\right),
		\end{equation*}
		where the inequality follows from $v_n(\cdot)\geq \frac{r_1}{r_1+\lambda}(u+c)$ and $a_n'(z_{n-}^*)>0$. By the assumption of Case 1, we can set $\varepsilon=2\left(\frac{ \frac{r_1}{r_1+\lambda}(u+c)-u}{c}\right)>0$.
		
		Since $a_n(\cdot)$ is maximized at $z_n^*$,  the above implies that $a_n(\cdot)$ is uniformly bounded away from $1$. Consequently, as $\snr_n\to \infty$, the principal learns the agent's type almost immediately, and thus the principal's equilibrium value function $W_n(\cdot)$ converges uniformly to her full-information value function $\overline{W}(\cdot)$.
	\end{proof}
	
	\subsubsection{Case 2: $\lambda>r_1\left(\frac{c}{u}\right)$, i.e., $u> \frac{r_1}{r_1+\lambda}(u+c)$
	}

	\begin{claim}\label{cl:ahump}
		There exists $N\in \mathbb{N}$ such that whenever $n\geq N$, $a_n(\cdot)$ is hump-shaped.
	\end{claim}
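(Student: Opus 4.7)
The plan is to reduce the claim to the explicit closed-form cutoff $r^*(\snr)$ provided by Theorem \ref{uniqueness}. By that theorem, the agent's equilibrium policy $a_n$ is hump-shaped if and only if $r_1 < r^*(\snr_n)$, where $r^*(\snr)$ is defined implicitly by equation \eqref{eq:r*} (with $\lambda$, $u$, $c$ held fixed across $n$). So it suffices to show $r^*(\snr_n) > r_1$ for all sufficiently large $n$.

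To establish this, I would first observe that the LHS of \eqref{eq:r*}, viewed as a function of $r$, is strictly increasing (every summand is monotone in $r$). Hence $r^*(\snr_n) > r_1$ is equivalent to showing that, for large $n$,
\[
r_1\Bigl(\sqrt{1+8r_1/\snr_n^2}+\sqrt{1+8(r_1+\lambda)/\snr_n^2}\Bigr) + \lambda\Bigl(\sqrt{1+8r_1/\snr_n^2}+1\Bigr) < 4\lambda\Bigl(\tfrac{u}{c}+1\Bigr).
\]
As $\snr_n \to \infty$, both square roots tend to $1$, so the LHS converges to $2r_1 + 2\lambda = 2(r_1+\lambda)$. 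Meanwhile, the Case 2 hypothesis $\lambda > r_1(c/u)$ rearranges to $4\lambda u/c > 4r_1$, i.e., $4\lambda(u/c+1) > 4(r_1+\lambda) > 2(r_1+\lambda)$. Thus the displayed inequality holds strictly in the limit, and by continuity it holds for every sufficiently large $n$. This gives $r^*(\snr_n) > r_1$ for such $n$, and hence $a_n$ is hump-shaped by Theorem \ref{uniqueness}.

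There is essentially no analytic obstacle; the substantive work has already been done in proving Theorem \ref{uniqueness}, which produced the tractable closed-form characterization of $r^*$. The only content of this claim is to verify that the Case 2 parameter inequality $\lambda > r_1(c/u)$ is precisely the one that places $r_1$ below the large-$\snr$ limit of $r^*(\snr)$, so that for large $n$ the equilibrium falls in the ``patient-agent'' regime of Theorem \ref{uniqueness}.
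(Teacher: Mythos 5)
Your proposal is correct and takes essentially the same approach as the paper: the paper likewise reduces the claim to showing $r_1 < r^*(\snr_n)$ for large $n$, computes from equation \eqref{eq:r*} that $\lim_n r^*(\snr_n) = \lambda\bigl(\tfrac{2u}{c}+1\bigr)$, and notes that the Case 2 hypothesis gives $r_1 < \lambda\tfrac{u}{c} < \lambda\bigl(\tfrac{2u}{c}+1\bigr)$. Your version merely makes the monotonicity of the left side of \eqref{eq:r*} in $r$ explicit before passing to the limit, which is the same reasoning phrased a bit more carefully.
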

	
	\begin{proof}
		From condition \eqref{eq:r*}, it is easily verified that $\lim_n r^*_n = \lambda\left(\frac{2u}{c}+1\right)$. The assumption of Case 2 implies that $r_1<\lambda\left(\frac{u}{c}\right)< \lambda\left(\frac{2u}{c}+1\right)$. By Theorem \ref{uniqueness}, the result follows.
	\end{proof}
	
	We assume $n\geq N$ for the rest of the proofs in Case 2.
	\begin{claim}\label{cl:vbound}
		%For every fixed \(z_0\), we have \(\limsup_{n \to \infty} v_n\left(z_0\right) \leq u\).
		Take any compact set $[p_{1},p_{2}]\subset (0,1)$.  We have
		$\limsup_{n\to\infty} \left[\max_{z\in \left[ z(p_{1}),z(p_{2})\right] }v_{n}\left(z\right)\right] \leq u$.
	\end{claim}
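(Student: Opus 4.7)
The plan is to first show $\limsup_n v_n(z_n^*) \leq u$ by combining an HJB-derived inequality in the left mixing region with a drift-based argument, and then to propagate this bound to any fixed $z$ in the compact interval by the monotonicity of $v_n$ together with the Learning Lemma (Claim \ref{lem:learninglemma2}).

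\textbf{Step 1: Bound at the termination cutoff.} By Claim \ref{cl:ahump} the policy $a_n$ is hump-shaped for all large $n$, and by Corollary \ref{cor:convexity} $v_n$ is concave on $(-\infty, z_n^*]$, so $v_n'' \leq 0$ and $v_n' < 0$ on $(z_{L,n}, z_n^*)$. Substituting Claim \ref{cl:aoptimal} into the ODE \eqref{eq:ODEvLa<1} yields the algebraic identity
\[
v_n(z) - \bigl[u+(1-a_n(z))c\bigr] \;=\; \kappa_{L,n}\,\frac{v_n''(z)+v_n'(z)}{v_n'(z)^2} \;\leq\; 0
\]
on $(z_{L,n}, z_n^*)$; letting $z \uparrow z_n^*$ gives $v_n(z_n^*) \leq u + (1-a_n(z_n^*))c$. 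If $a_n(z_n^*) \to 1$ along a subsequence, this already delivers $\limsup_n v_n(z_n^*) \leq u$. Otherwise, along a subsequence with $a_n(z_n^*) \leq 1-\varepsilon$, equation \eqref{eq:a+a'} yields a uniform-in-$n$ Lipschitz bound on $a_n$ near $z_n^*$, so $1-a_n \geq \varepsilon/2$ on a fixed-width neighborhood. The NI belief's drift of order $\snr_n^2 \varepsilon^2$ there then dominates its variance on the time scale $1/\lambda$, so $p_t$ enters the stopping region $(p_n^*,1]$ before the first Poisson arrival with probability tending to one. This gives $\liminf_n \mathbb{E}^{NI}[e^{-r_1\mathbb{T}}\mid Z_0=z_n^*] \geq \lambda/(r_1+\lambda)$, hence $\limsup_n v_n(z_n^*) \leq (u+c)\,r_1/(r_1+\lambda) < u$, the strict inequality being precisely the Case 2 hypothesis.

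\textbf{Step 2: Propagation to fixed $z$.} Monotonicity of $v_n$ (Corollary \ref{cor:convexity}) reduces the claim to bounding $v_n(z(p_1))$. If $z(p_1) \geq z_n^*$ eventually, then $v_n(z(p_1)) \leq v_n(z_n^*)$ and Step 1 closes the argument; otherwise one passes to a subsequence with $z(p_1) < z_n^*$. Since $\sup_z a_n(z) < 1$ (Lemma \ref{lem:fullspan}), the NI belief has strictly positive drift in $z$-space and hence $\tau_n^*:=\inf\{t: Z_t = z_n^*\} < \infty$ almost surely under NI. The strong Markov property combined with $(1-a_n)c \leq \varepsilon c + c\,\mathbbm{1}_{\{a_n \leq 1-\varepsilon\}}$ yields
\[
v_n(z(p_1)) \;\leq\; u + \varepsilon c + c\,E_n(\varepsilon) + \bigl(v_n(z_n^*) - u - \varepsilon c\bigr)\,\mathbb{E}^{NI}\!\bigl[e^{-r_1\tau_n^*}\bigr],
\]
where $E_n(\varepsilon) := \mathbb{E}^{NI}\!\bigl[\int_0^{\tau_n^*} r_1 e^{-r_1 t}\,\mathbbm{1}_{\{a_n(Z_t)\leq 1-\varepsilon\}}\, dt\bigr]$. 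Applying the Learning Lemma to the fixed open interval $S = (p_1/2, (p_H+1)/2)$ (which contains $p_1$ and all $p_n^*$, since $p_n^* \in [p^{**}, p_H]$), and iterating via the strong Markov property to handle any excursions of $p_t$ below $S$ before reaching $p_n^*$ (which become vanishingly unlikely due to NI's positive drift), one obtains $E_n(\varepsilon)\to 0$. Combined with Step 1 this gives $\limsup_n v_n(z(p_1)) \leq u + \varepsilon c$; sending $\varepsilon \downarrow 0$ completes the proof.

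\textbf{Main obstacle.} The crux will be making rigorous the case-B argument of Step 1: the informal assertion that the NI belief jumps into the stopping region before the next Poisson opportunity must be converted into a precise lower bound on $\mathbb{E}^{NI}[e^{-r_1\mathbb{T}}]$, which demands hitting-time estimates for a diffusion whose drift and volatility both scale with $\snr_n$, together with uniform-in-$n$ Lipschitz control of $a_n$ in a neighborhood of $z_n^*$. The Learning-Lemma application in Step 2 is more routine but still requires care in handling possible excursions of $p_t$ outside the fixed interval $S$ before hitting $p_n^*$.
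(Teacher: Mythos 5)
Your proof takes a genuinely different route than the paper's. The paper never bounds $v_n(z_n^*)$: in its Case 1 (where $a_n(z_n^*)\to 1$) it drives the state from $z(p_1/2)$ up to the first point $z_n^\varepsilon$ where $a_n$ falls to $1-\varepsilon$ and reads off $v_n(z_n^\varepsilon)\le u+\varepsilon c$ from the HJB and concavity; in its Case 2 it drives the state all the way to a large $z^\dagger$ where stability of the relationship pins $v_n(z^\dagger)<\frac{r_1}{r_1+\lambda}(u+c)+\epsilon<u$. You instead first control the value at the cutoff and then propagate backward via the Learning Lemma. Your Step~1 is sound: the identity $v_n-[u+(1-a_n)c]=\kappa_{L,n}(v_n''+v_n')/(v_n')^2\le 0$ on the left mixing region is a correct consequence of \eqref{eq:ODEvLa<1} and concavity. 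Two minor imprecisions in Case 2 of Step~1: the Lipschitz detour is unnecessary since the hump shape already gives $a_n(z)\le a_n(z_n^*)\le 1-\varepsilon$ \emph{globally}; and the event you actually need is $\{Z_{T_\lambda}>z_n^*\}$ rather than ``enters the stopping region before the first Poisson arrival,'' which, starting at the boundary, is automatic and by itself does not deliver the discount-factor bound $\lambda/(r_1+\lambda)$.

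The real gap is in Step~2: the claim that excursions of $p_t$ below $S$ before hitting $p_n^*$ ``become vanishingly unlikely due to NI's positive drift'' is false. The scale function of the belief under the NI measure is $s(p)=-1/p$, independent of $\snr$ (the belief's drift-to-variance ratio does not scale with $\snr$ --- this is precisely what makes the Learning Lemma nontrivial). Hence the probability of reaching $p_1/2$ before $p_n^*$ starting at $p_1$ equals $\bigl(p_1^{-1}-(p_n^*)^{-1}\bigr)/\bigl(2p_1^{-1}-(p_n^*)^{-1}\bigr)$, bounded away from zero uniformly in $n$, and an iteration predicated on these excursion probabilities vanishing does not close. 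The fix is to drop the two-sided interval and invoke Claim~\ref{lem:learninglemma2} directly with $\bar{p}$ chosen slightly above $p_H$: since $p_n^*\le p_H<\bar{p}$ we have $\tau_n^*\le\bar{\mathbb{T}}$, so $E_n(\varepsilon)\le\mathbb{E}^{NI}\bigl[\int_0^{\bar{\mathbb{T}}}r_1 e^{-r_1 t}\mathbbm{1}_{\{a_n\le1-\varepsilon\}}dt\bigr]\to 0$ with no excursion bookkeeping at all (the proof of Lemma~\ref{lemma_a7} already installs a small lower barrier $\zeta$ internally and controls the probability of reaching it via the submartingale property). With that substitution, and using Step~1 to make $v_n(z_n^*)-u-\varepsilon c<0$ for large $n$, the displayed inequality in Step~2 closes as you intend.
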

	
	\begin{proof}
		Because $p^*\in [p^{**},p_H]$ (by Lemma \ref{lemma:principalbestreply}), we can without loss assume that $p_{n}^{\ast }\in \left[ p_{1},p_{2}\right].$ Since $v_{n}\left( \cdot \right) $
		is decreasing, it suffices to show that $\limsup v_{n}\left( z\left( 
		\frac{p_{1}}{2}\right) \right) \leq u.$
		
		By Claim \ref{cl:ahump}, $a_n(\cdot)$ is hump-shaped for all $n$. First assume that $a_{n}\left( z_{n}^{\ast }\right) \rightarrow 1$. Take any $\varepsilon >0$, and let $z_{n}^{\varepsilon }$ be the smallest $z$ such that $a_{n}\left( z\right)=1-\varepsilon $, which is well-defined for every large $n$ such that $a_{n}\left( z_{n}^{\ast }\right)>1-\varepsilon .$ Consider the stochastic process  $Z_{t}$ in the equilibrium $(a_n,b_n)$ under the noninvestible-type strategy and the initial condition $Z_{0}=z\left( \frac{p_{1}}{2}\right)$. Let $\mathbb{T}_{n}^{\dag }$ be the stopping time that stops the game at the first time that $Z_{t}\geq z_{n}^{\varepsilon }.$ From the law of motion \eqref{eq:dZt}, as $\snr_n\to \infty$, we have $\mathbb{E}_{n}^{NI}\left[e^{-r_{1}\mathbb{T}_{n}^{\dag }}\right] \rightarrow 1$ and hence $v_{n}\left( z\left( \frac{p_{1}}{2}\right) \right) \rightarrow v_{n}\left(z_{n}^{\varepsilon }\right) .$ Moreover, since $v_n(\cdot)$ is decreasing and concave to the left of $z^n_\varepsilon$ (by Corollary \ref{cor:convexity}), we have
		\begin{eqnarray*}
			r_{1}v_{n}\left( z_{\varepsilon }^{n}\right)  &=&r_{1}\left[ u+\left( 1-a_{n}\left( z_{\varepsilon}^{n}\right) \right) c\right]+\frac{1}{2}\snr^2\left[ 1-a_{n}\left(z_{\varepsilon }^{n}\right) \right] ^{2}\left[ v^{\prime }\left( a_{n}\left(z_{\varepsilon- }^{n}\right) \right) +v^{\prime \prime }\left( a_{n}\left(z_{\varepsilon- }^{n}\right) \right) \right]  \\
			&\leq &r_{1}\left[ u+\left( 1-a_{n}\left( z_{\varepsilon}^{n}\right) \right) c\right] ,
		\end{eqnarray*}
		which implies $v_{n}\left( z_{\varepsilon }^{n}\right) \leq u+\varepsilon c$, delivering the result as $\varepsilon 
		$ is arbitrary.
		
		Next assume that  $\lim \inf a_{n}\left( z_{n}^{\ast }\right) <1$. Take an $\epsilon>0$ such that  $u>\frac{r_1}{r_1+\lambda}(u+c)+\epsilon$; such an $\epsilon$ exists because of the assumption of Case 2. Notice
		that we can find a $z^{\dag }$ sufficiently large such that 
		$   v_{n}\left( z^{\dag}\right) <\frac{r_1}{r_1+\lambda}(u+c)+\epsilon, \forall n.$
		\footnote{To see this, note first that
			$p_{n}^{*}$ is bounded above by $p_{H}<1$ (Lemma \ref{lemma:principalbestreply}). Next observe that the posterior is a submartingale according to the
			strategy of the noninvestible type. This implies that for every $\eta >0$ we can find 
			$p_{\eta }<1$ such that, conditional on the noninvestible-type strategy, $p\in \left( p_{\eta },1\right)$ implies that the
			posterior goes below $p_{H}$  with probability less than
			$\eta$. This immediately easily implies the existence of said $z^{\dagger}$.}
		Let $\mathbb{T}_{n}^{\dag }$ be the stopping time that stops the game at the
		first time that $Z_{t}=z^{\dag }$. As $\snr_n\to \infty$, we have $v_{n}\left( z\left( 
		\frac{p_{1}}{2}\right) \right) \rightarrow v_{n}\left( z^{\dag
		}\right) <\frac{r_1}{r_1+\lambda}(u+c)+\epsilon<u$.  
	\end{proof}

	\begin{claim}\label{cl:zLn->-infty}
		\(\lim_{n \to \infty} z_{L,n} =-\infty\). 
	\end{claim}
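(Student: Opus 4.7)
The plan is to argue by contradiction. Suppose that $z_{L,n}\not\to -\infty$. Since by definition $z_{L,n}<z_n^\ast$, and by Lemma \ref{lemma:principalbestreply} we have $p_n^\ast\leq p_H<1$, the sequence $\{z_{L,n}\}$ is bounded above. Hence some subsequence $\{z_{L,n_k}\}$ converges to a finite limit $z_L^{\infty}\in\mathbb{R}$. My goal is to show this is incompatible with the value $v_n$ attains at $z_{L,n}$.

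Next I would pin down the limit of $v_n(z_{L,n})$ and show it strictly exceeds $u$. By Claim \ref{cl:ahump}, for all $n$ large enough the equilibrium policy $a_n$ is hump-shaped, so the formula \eqref{eq:vL} derived in the proof of Claim \ref{cl:ashaper} applies (note that $v_L$ there does not depend on $v^\ast$), giving
\[
v_n(z_{L,n}) \;=\; u+c \;-\; \frac{r_1 c}{\xi_{L,n}\,\snr_n^{2}},
\]
where $\xi_{L,n}$ is the positive root of $\xi^{2}+\xi=2r_1/\snr_n^{2}$. A direct Taylor expansion of $\xi_{L,n}=\tfrac{-1+\sqrt{1+8r_1/\snr_n^{2}}}{2}$ around $\snr_n^{-2}=0$ yields $\xi_{L,n}\,\snr_n^{2}\to 2r_1$ as $\snr_n\to\infty$. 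Consequently,
\[
\lim_{n\to\infty} v_n(z_{L,n}) \;=\; u+\tfrac{c}{2} \;>\; u.
\]

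Finally, I would invoke Claim \ref{cl:vbound} to obtain the contradiction. Because $z_{L,n_k}\to z_L^{\infty}\in\mathbb{R}$, I can pick a compact interval $[p_1,p_2]\subset(0,1)$ with $z(p_1)<z_L^{\infty}<z(p_2)$, so that $z_{L,n_k}\in[z(p_1),z(p_2)]$ for all large $k$. Then
\[
\limsup_{k\to\infty} v_{n_k}(z_{L,n_k}) \;\leq\; \limsup_{k\to\infty}\;\max_{z\in[z(p_1),z(p_2)]} v_{n_k}(z) \;\leq\; u,
\]
contradicting the preceding display. Therefore $z_{L,n}\to -\infty$, as claimed. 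The only delicate step is the asymptotic $\xi_{L,n}\,\snr_n^{2}\to 2r_1$, but this is a one-line Taylor expansion; everything else just combines the closed-form boundary condition \eqref{eq:vL} with the uniform upper bound already established in Claim \ref{cl:vbound}.
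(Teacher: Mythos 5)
Your proof is correct and follows essentially the same route as the paper: use the closed-form boundary value $v_n(z_{L,n})=v_{L,n}$ from \eqref{eq:vL}, compute $v_{L,n}\to u+c/2$, and contradict the uniform bound from Claim~\ref{cl:vbound} on a compact interval around the would-be finite limit point. The only cosmetic difference is that you first record that $\{z_{L,n}\}$ is bounded above (via $z_{L,n}<z_n^*\le z(p_H)$) to extract a convergent subsequence, whereas the paper simply passes to a subsequence with $z_{L,n}\to\underline{z}>-\infty$; the substance is identical.
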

	\begin{proof}
		Recall that $v_n(z_{L,n})=v_{L,n}$ whose expression is given by \eqref{eq:vL}, i.e., $v_{L,n}= u+c\left(1-\frac{\sqrt{1+8r_{1}/\snr_n^2}+1}{4}\right)$. It is easy to see that $\lim_{n}v_n(z_{L,n})=u+\frac{c}{2}$. Assume toward a contradiction, taking a subsequence if necessary, that $\lim_{n} z_{L,n} = \underline{z}>-\infty$. Then for any $\varepsilon>0$, we have $z_{L,n}\in [\underline{z}-\varepsilon,\underline{z}+\varepsilon]$ and $v_n(z_{L,n})\geq u+\frac{c}{4}$ when $n$ is sufficiently large. Take any $\varepsilon<(0,c/4)$. By Claim \ref{cl:vbound} and the monotonicity of $v_n(\cdot)$, there exists $n^*$ such that for every $n>n^*$ and for every $z\in [\underline{z}-\varepsilon,\underline{z}+\varepsilon]$, we have $v_n(z)< u+\varepsilon<u+\frac{c}{4}$, a contradiction to $v_{L,n}\to u+\frac{c}{2}$ and $z_{L,n}\to \underline{z}$.
	\end{proof}

	\begin{claim}\label{cl:alimitleft}
		%For every \(z_{0}<z_{n}^{\ast }\) for every \(n\) $z_0< \liminf z_n^*$, we have \(\lim_{n \to \infty} a_{n}\left( z_{0}\right) =1\).
		For any $\kappa>0$, we have $\lim_{n\to\infty} a_{n}(z_n^*-\kappa)=1.$
	\end{claim}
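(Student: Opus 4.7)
The plan is to combine the ODE \eqref{eq:a+a'} with the upper bound on $v_n$ from Claim \ref{cl:vbound}. Setting $h_n(z) := 1 - a_n(z)$, on the mixing region $(z_{L,n}, z_n^*)$ equation \eqref{eq:a+a'} becomes the first-order linear ODE
\[
h_n'(z) + h_n(z) = \tfrac{2}{c}\bigl(v_n(z) - u\bigr).
\]
Multiplying by the integrating factor $e^z$ and integrating from an arbitrary base point $z_0 \in (z_{L,n}, z_n^*)$ to $z$ yields the closed form
\[
h_n(z) = h_n(z_0)\,e^{z_0 - z} + \tfrac{2}{c}\, e^{-z}\int_{z_0}^z e^\zeta\,(v_n(\zeta) - u)\,d\zeta.
\]

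I would then fix $K > 0$ (to be sent to infinity later), set $z_0 := z_n^* - \kappa - K$, and consider $n$ large. Since $z_n^* \in [z(p^{**}), z(p_H)]$ is bounded by Lemma \ref{lemma:principalbestreply} and $z_{L,n} \to -\infty$ by Claim \ref{cl:zLn->-infty}, the point $z_0$ eventually lies strictly inside the mixing region, and the interval $[z_0, z_n^* - \kappa]$ is contained in a fixed compact subset of $\mathbb{R}$. By Claim \ref{cl:vbound}, for every $\epsilon > 0$ and every $n$ large enough, $v_n(\zeta) \leq u + \epsilon$ uniformly on $[z_0, z_n^* - \kappa]$. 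Substituting into the explicit solution at $z = z_n^* - \kappa$, the first term is bounded by $e^{-K}$ (since $h_n \leq 1$), while the integral term is bounded above by $(2\epsilon/c)(1 - e^{-K}) \leq 2\epsilon/c$, giving
\[
h_n(z_n^* - \kappa) \leq e^{-K} + \tfrac{2\epsilon}{c}
\]
for all sufficiently large $n$. Letting $n \to \infty$ (so that $\epsilon$ may be taken arbitrarily small) and then $K \to \infty$ delivers $a_n(z_n^* - \kappa) \to 1$.

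The crucial observation is to treat \eqref{eq:a+a'} not merely as a pointwise identity but as a linear ODE whose solution decays exponentially in the horizon $z - z_0$; this exponential decay is what lets us discard the initial value $h_n(z_0) \leq 1$ by taking $K$ large. All remaining ingredients—boundedness of $z_n^*$, divergence $z_{L,n} \to -\infty$, and the compact-set upper bound $\limsup v_n \leq u$—are supplied by the preceding lemmas, so the proof reduces to the explicit integrating-factor computation above, with no additional technical machinery required.
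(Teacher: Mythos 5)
Your proof is correct and draws on exactly the same ingredients as the paper's: the identity \eqref{eq:a+a'} on the mixing region, the uniform upper bound $\limsup_n \max v_n \leq u$ over compact $z$-intervals from Claim \ref{cl:vbound}, the divergence $z_{L,n}\to -\infty$ from Claim \ref{cl:zLn->-infty}, and the $[p^{**},p_H]$ bound on $z_n^*$ from Lemma \ref{lemma:principalbestreply}. The difference is one of execution rather than substance. The paper argues by contradiction: it assumes $\limsup a_n(z_n^*-\kappa) \leq 1-2\varepsilon$, invokes monotonicity of $a_n$ on the mixing region below $z_n^*$ to extend this bound to a whole interval $[z_n^*-\kappa-M,\,z_n^*-\kappa]$, reads off $a_n'>\varepsilon/2$ from \eqref{eq:a+a'}, and then drives $a_n$ below $0$ by taking $M$ large. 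You instead treat \eqref{eq:a+a'} as a linear ODE in $h_n=1-a_n$, integrate it exactly with the factor $e^z$, discard the initial data via $h_n(z_0)\leq 1$ and the exponential weight $e^{-K}$, and bound the forcing term directly by Claim \ref{cl:vbound}; this makes the argument direct rather than by contradiction, and sidesteps the monotonicity of $a_n$ entirely. The explicit integrating-factor formula is the cleaner of the two and packages the same information — that $h_n$ decays at rate one in $z$ whenever $v_n-u$ is small — in a form that avoids the auxiliary ``$a_n<1-\varepsilon$ on the whole interval'' step. Both proofs are valid; yours is marginally more self-contained.
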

	\begin{proof}
		Assume toward a contradiction that, taking a subsequence if necessary, $\lim a_{n}(z_n^*-\kappa)=1-2\varepsilon$ for some $\varepsilon>0$.
		Take any large $M>0$ and notice that Claim \ref{cl:zLn->-infty} tells us that  $z_{L,n}<z_{n}^*-\kappa-M$ for large $n$. Since $a_n(\cdot)$ is increasing on $\left[ z_{n}^*-\kappa-M,z_{n}^*-\kappa\right]$ and $\lim a_{n}(z_n^*-\kappa)=1-2\varepsilon$, we know that $a_{n}(z)<1-\varepsilon $ (infinitely
		often) for all $z\in \left[ z_{n}^*-\kappa-M,z_{n}^*-\kappa\right] $. Recall from condition \eqref{eq:a+a'} that for $z\in \left[ z_{n}^*-\kappa-M,z_{n}^*-\kappa\right] $,
		\begin{equation*}
			a_{n}^{\prime }\left( z\right) = 1-a_{n}(z)-2\left( \frac{v_{n}(z)-u}{c}\right),
		\end{equation*}
		implying, in light of Claim \ref{cl:vbound}, that for $n$ sufficiently large, we have $a_{n}'(z)>\frac{\varepsilon }{2}$ for all $z\in \left[ z_{n}^*-\kappa-M,z_{n}^*-\kappa\right]$. But then, we can take $M$ large enough such that $a_{n}\left(
		z_{n}^*-\kappa-M\right) <0$, a contradiction.
	\end{proof}
	
	\begin{claim}\label{cl:vlimitleft}
		%For every \(z_{0}<\liminf_{n \to \infty} z_{n}^{\ast}\), we have \(\lim_{n \to \infty} v_{n}\left( z_{0}\right) =u\).
		For any $\kappa>0$, we have $\lim_{n\to\infty} v_{n}(z_n^*-\kappa)=u.$
	\end{claim}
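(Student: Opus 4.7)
The plan is to sandwich $v_n(z_n^*-\kappa)$ between $u$ from above and $u$ from below. The upper bound is a direct consequence of Claim \ref{cl:vbound}: since Lemma \ref{lemma:principalbestreply} guarantees $p_n^*\in[p^{**},p_H]$, the shifted sequence $\{z_n^*-\kappa\}_n$ lies in a fixed compact subset of $\mathbb{R}$, which is the $z$-image of some compact interval $[p_1,p_2]\subset(0,1)$. Applying Claim \ref{cl:vbound} to $[p_1,p_2]$ and using the monotonicity of $v_n$ yields $\limsup_n v_n(z_n^*-\kappa)\leq u$.

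The substantive part is the matching lower bound, which I would prove by contradiction using the equilibrium identity \eqref{eq:a+a'}, namely $a_n'(z)=1-a_n(z)-2(v_n(z)-u)/c$, valid throughout the left mixing region $(z_{L,n},z_n^*]$. Suppose toward contradiction that along a subsequence $v_n(z_n^*-\kappa)\to L<u$. Claim \ref{cl:zLn->-infty} drives $z_{L,n}\to-\infty$ while $z_n^*$ stays bounded, so for all large $n$ the entire interval $[z_n^*-\kappa,z_n^*]$ lies inside the left mixing region, making \eqref{eq:a+a'} applicable there. Since $v_n$ is decreasing (Corollary \ref{cor:convexity}) and $a_n\leq 1$, I estimate
\[
a_n'(z)\;\geq\;\frac{2\bigl(u-v_n(z)\bigr)}{c}\;\geq\;\frac{2\bigl(u-v_n(z_n^*-\kappa)\bigr)}{c}\qquad\text{for all }z\in[z_n^*-\kappa,z_n^*],
\]
and for $n$ large this lower bound is at least $c_0:=(u-L)/c>0$.

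Integrating over $[z_n^*-\kappa,z_n^*]$ gives $a_n(z_n^*)-a_n(z_n^*-\kappa)\geq c_0\kappa$. Claim \ref{cl:alimitleft} forces $a_n(z_n^*-\kappa)\to 1$, while $a_n(z_n^*)\leq 1$ always, so passing to the limit yields $1\geq 1+c_0\kappa>1$, a contradiction. Hence $\liminf_n v_n(z_n^*-\kappa)\geq u$, and combined with the upper bound the claim follows.

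The only point requiring care---though it is not a serious obstacle---is to certify that $[z_n^*-\kappa,z_n^*]$ sits inside the left mixing region so that \eqref{eq:a+a'} is legitimately applicable; this is ensured by combining Claim \ref{cl:ahump} (hump-shape with peak at $z_n^*$), Claim \ref{cl:zLn->-infty} ($z_{L,n}\to-\infty$), and the boundedness of $z_n^*$. The argument itself is a clean direct contradiction that sidesteps any detailed asymptotic analysis of the closed-form expressions for $v_n$.
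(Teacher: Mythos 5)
Your proof is correct and follows essentially the same route as the paper's: upper bound from Claim \ref{cl:vbound}, then a contradiction argument that feeds a supposed gap $u-\liminf v_n(z_n^*-\kappa)>0$ through the identity \eqref{eq:a+a'} to produce a uniform positive lower bound on $a_n'$ over a fixed-length subinterval of the left mixing region, which upon integration contradicts $a_n(z_n^*-\kappa)\to 1$ from Claim \ref{cl:alimitleft}. The only cosmetic difference is that the paper integrates over $[z_n^*-\kappa,\,z_n^*-\kappa/2]$ rather than all the way to $z_n^*$, which makes no material change.
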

	\begin{proof}
		From Claim \ref{cl:vbound} and Lemma \ref{lemma:principalbestreply} , we know $\limsup v_n(z_n^*-\kappa)\leq u$. Assume toward a contradiction, taking a subsequence if necessary, that $\lim_{n\to\infty} v_{n}(z_n^*-\kappa)=u-2\varepsilon$ for some $\varepsilon>0$. This implies that $v_{n}(z_n^*-\kappa)<u-\varepsilon$ for $n$ sufficiently large. Note also that Claim \ref{cl:zLn->-infty} tells us that $z_n^*-\kappa>z_{L,n}$ for $n$ sufficiently large. From condition \eqref{eq:a+a'}, 
		$a_{n}^{\prime }\left( z\right) = 1-a_{n}(z)-2\left(\frac{v_{n}(z)-u}{c}\right)$
		for all $z\in [z_n^*-\kappa,z_n^*-\frac{\kappa}{2}]$. Then by Claim \ref{cl:alimitleft} and the monotonicity of $a_n(\cdot)$, we know that for $n$ sufficiently large, $a_n'(z)>\frac{\varepsilon}{c}$ for all $z\in [z_n^*-\kappa,z_n^*-\frac{\kappa}{2}]$. But then, we have $a_n(z_n^*-\kappa)<1-\left(\frac{\kappa}{2}\right)\left(\frac{\varepsilon}{c}\right)$, a contradiction to Claim \ref{cl:alimitleft}.
	\end{proof}

	\begin{claim}\label{lem:learninglemma2}
		Fix a prior \(p_{0}\in \left( 0,1\right)\) and some \(\bar{p}\in \left( p_{0},1\right)\). For each \(\snr >0\), consider an adapted Markov function \(\alpha_{\snr}\left( \cdot \right)\) and a belief process defined by substituting $\alpha_\snr(\cdot)$ into \eqref{eq:dptagent}. Take \(\varepsilon >0\) and let \(\mathbb{\bar{T}}\) be the random time that stops the play in the first time that \(p\geq \bar{p}\). Then we have:
		\[
		\limsup_{\snr\uparrow \infty}\mathbb{E}^{NI}\left\{ r_1\int_{0}^{\mathbb{\bar{T}}}e^{-r_1t}\mathbb{I}_{\left\{ \alpha _{\snr}\left( p_{t}\right) \leq 1-\varepsilon \right\} }dt\right\} =0.
		\]
	\end{claim}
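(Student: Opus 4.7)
The plan is to move to the log-odds coordinate $z_t:=\log(p_t/(1-p_t))$, which maps the threshold event $p_t\geq \bar p$ to $z_t\geq \bar z$ with $\bar z:=\log(\bar p/(1-\bar p))$. Substituting $a=\tilde a=\alpha_\snr$ into \eqref{eq:dZt} yields the very clean SDE
\[
dz_t=\tfrac12\snr^2(1-\alpha_\snr(z_t))^2\,dt-\snr(1-\alpha_\snr(z_t))\,dB_t.
\]
The key structural observation is that in these coordinates the drift is exactly one half of the quadratic-variation density of the martingale part $\mathcal{M}_t:=-\snr\int_0^t(1-\alpha_\snr(z_s))\,dB_s$; equivalently, $z_t=z_0+\tfrac12[\mathcal{M}]_t+\mathcal{M}_t$. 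This identity is what will convert a trivial pathwise bound on the terminal value of $z$ (namely $z_{\bar T}\leq \bar z$) into an expected bound on $[\mathcal{M}]_{\bar T}=\snr^2\int_0^{\bar T}(1-\alpha_\snr(z_s))^2\,ds$, which dominates the occupation-time integral I want to control.

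The proof then proceeds in three short steps. First, for each $n\in\mathbb{N}$ the integrand of $\mathcal{M}$ is uniformly bounded by $\snr$, so the stopped process $\mathcal{M}_{\cdot\wedge\bar T\wedge n}$ is a square-integrable martingale, and optional stopping gives $\mathbb{E}^{NI}[\mathcal{M}_{\bar T\wedge n}]=0$. Integrating the SDE and taking expectations gives $\mathbb{E}^{NI}[z_{\bar T\wedge n}]=z_0+\tfrac12\mathbb{E}^{NI}[[\mathcal{M}]_{\bar T\wedge n}]$; because $z_{\bar T\wedge n}\leq \bar z$ pointwise by path continuity and the definition of $\bar T$, this yields $\mathbb{E}^{NI}[[\mathcal{M}]_{\bar T\wedge n}]\leq 2(\bar z-z_0)$, and monotone convergence in $n$ delivers $\mathbb{E}^{NI}[[\mathcal{M}]_{\bar T}]\leq 2(\bar z-z_0)$. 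Second, I use the elementary pointwise bound $\mathbb{I}_{\{\alpha_\snr\leq 1-\varepsilon\}}\leq (1-\alpha_\snr)^2/\varepsilon^2$ to obtain
\[
\int_0^{\bar T}\mathbb{I}_{\{\alpha_\snr(z_t)\leq 1-\varepsilon\}}\,dt\leq \frac{1}{\snr^2\varepsilon^2}\int_0^{\bar T}\snr^2(1-\alpha_\snr(z_t))^2\,dt=\frac{[\mathcal{M}]_{\bar T}}{\snr^2\varepsilon^2}.
\]
Third, dropping the discount via $r_1e^{-r_1t}\leq r_1$ and combining the previous two displays gives
\[
\mathbb{E}^{NI}\!\left\{r_1\int_0^{\bar T}e^{-r_1t}\mathbb{I}_{\{\alpha_\snr(p_t)\leq 1-\varepsilon\}}\,dt\right\}\leq \frac{2r_1(\bar z-z_0)}{\snr^2\varepsilon^2},
\]
which tends to $0$ as $\snr\to\infty$.

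The main obstacle I anticipate is that $\bar T$ may be infinite with positive probability, since for a general Markov function the process can be absorbed strictly below $\bar p$ (for instance, if $\alpha_\snr\equiv 1$ on some interval below $\bar p$). The localization through $\bar T\wedge n$ in the first step handles this cleanly: optional stopping is invoked only at bounded stopping times, and the passage to the limit uses monotone convergence since both $[\mathcal{M}]_t$ and the occupation-time integral are non-decreasing in $t$. Note, as a byproduct, that the final bound depends on $\alpha_\snr$ only through $\snr$, $r_1$, $\varepsilon$, $p_0$, and $\bar p$, so the convergence is in fact uniform over all Markov strategies of the noninvestible type, which is the form in which the lemma will be applied in the asymptotic arguments that follow.
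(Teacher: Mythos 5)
Your proof is correct, and it takes a genuinely different and cleaner route than the one in the paper. The paper's argument (given in the Online Appendix for the $r_1\downarrow 0$ companion version and invoked verbatim for the present claim) works directly in the $p$ variable: it first truncates near $p=0$ with a small threshold $\zeta$ so that the diffusion coefficient of $p_t$ stays bounded away from zero on the surviving region, then partitions the accumulated occupation time into unit blocks, and uses the submartingale property of $p_t$ under $\theta=NI$ together with a lower bound relating the expected block-length to the expected block-increment of $p$. Your route instead passes to log-odds $z$, where the on-path drift $\tfrac12\psi^2(1-\alpha)^2$ is exactly half the quadratic-variation density of the martingale part; the pathwise bound $z_{\bar{\mathbb{T}}\wedge n}\le\bar z$ combined with optional stopping at the bounded time $\bar{\mathbb{T}}\wedge n$ then forces $\mathbb{E}^{NI}\bigl[[\mathcal{M}]_{\bar{\mathbb{T}}}\bigr]\le 2(\bar z-z_0)$ with no truncation or partitioning. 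This buys several things: it eliminates the $\zeta$-step entirely, because in $z$-coordinates the drift-to-quadratic-variation ratio is $1/2$ uniformly and there is no degeneracy near the lower edge of $(0,1)$ to quarantine; it yields an explicit rate $2r_1(\bar z-z_0)/(\psi^2\varepsilon^2)$, uniform over all Markov controls $\alpha_\psi$, and the rate is tight (it matches the Laplace transform of the hitting time when $\alpha_\psi\equiv 0$); and, since the bound is linear in $r_1$, the identical computation delivers the patient-limit ($r_1\downarrow 0$, $\psi$ fixed) variant as an immediate corollary, unifying the two Learning Lemmas used in Sections~6 and~7. Your localization at $\bar{\mathbb{T}}\wedge n$, followed by monotone convergence, is the right device for handling the event $\bar{\mathbb{T}}=\infty$, which can indeed have positive probability.
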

	\begin{proof}
		See Online Appendix (Section \ref{onlineapp:B3}, and Lemma %\ref{lemma_a7} 
		OA.5 in Section \ref{app:patientlimit}). In words, this lemma says that if the noninvestible type does not mimic too often, then as the noise in the signal vanishes, the principal can learn the agent's type almost immediately.
	\end{proof}
	
	\begin{lemma}\label{lem:noregimechange}
		Fix any $\kappa >0$ and, for each $n\in \mathbb{N}$,  assume that the game starts at the prior $z_{n}^{\ast }-\kappa.$ Let $
		\mathbb{T}_{n}^{\ast }$ be the stopping time that stops the play in the
		first time that a posterior reaches $[z_{n}^{\ast },\infty).$ We have:
		\begin{equation*}
			\limsup_{n\to\infty} \mathbb{E}_{n}^{NI}\left[ e^{-r_{1}\mathbb{T}_{n}^{\ast }}
			\right] =0.
		\end{equation*}
	\end{lemma}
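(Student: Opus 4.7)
The plan is to combine the strong Markov decomposition of $v_n$ at $\mathbb{T}_n^*$ with the Learning Lemma (Claim \ref{lem:learninglemma2}) and the convergence $v_n(z_n^*-\kappa)\to u$ from Claim \ref{cl:vlimitleft}, then close the loop by showing that $v_n(z_n^*)$ is bounded away from $u$.

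First, since the game cannot be stopped before the belief enters $[z_n^*,\infty)$, the strong Markov property at $\mathbb{T}_n^*$ gives
\[
v_n(z_n^*-\kappa)=\mathbb{E}_n^{NI}\!\left[\int_0^{\mathbb{T}_n^*} e^{-r_1 t} r_1\bigl[u+(1-a_n(z_t))c\bigr]dt\right]+v_n(z_n^*)\,q_n,
\]
where $q_n:=\mathbb{E}_n^{NI}[e^{-r_1\mathbb{T}_n^*}]$ is the object we need to drive to zero. Fixing $\varepsilon>0$, I would split the integrand according to whether $a_n(z_t)>1-\varepsilon$: on that event the flow payoff is at most $u+\varepsilon c$, contributing at most $(u+\varepsilon c)(1-q_n)$; on the complement the flow payoff is bounded by $u+c$, but the expected discounted measure of the complement vanishes as $n\to\infty$ by Claim \ref{lem:learninglemma2} (applied with an open interval $S\subset(0,1)$ whose upper boundary is $p_n^*$ and whose lower boundary is chosen small enough that the NI-type belief, which has positive drift, reaches it with negligible discounted probability). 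Combining with Claims \ref{cl:vbound} and \ref{cl:vlimitleft}, after rearrangement I obtain
\[
\bigl(u-v_n(z_n^*)\bigr)q_n\leq \varepsilon c(1-q_n)+o(1),\qquad \forall \varepsilon>0,
\]
which yields $(u-v_n(z_n^*))q_n\to 0$.

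The main obstacle is then to show $\limsup_n v_n(z_n^*)<u$, since then $q_n\to 0$ follows. I would extract this gap from the semi-closed-form characterization in Claim \ref{cl:v*unique}, which identifies $v_n(z_n^*)$ as the unique root of $a_-^*(v;r_1)=a_+^*(v;r_1,\lambda)$ in $(v_{R,n},v_{L,n})$. Since $\kappa_L=r_1c^2/(2\snr^2)$ and $\kappa_R=r_1^2c^2/(2(r_1+\lambda)\snr^2)$ both scale like $\snr^{-2}$, the arguments of $\phi,\Phi$ in \eqref{eq:a*-}-\eqref{eq:a*+} diverge to $\pm\infty$, so the denominators tend to $1$ while the $\phi$-numerators decay as Gaussian tails. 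Matching the dominant $\Theta(\snr^2)$ exponents of the two tails forces
\[
\tfrac{r_1+\lambda}{r_1}\bigl(v_n(z_n^*)-\tfrac{r_1 u}{r_1+\lambda}\bigr)^2-\bigl(v_n(z_n^*)-u\bigr)^2=O(\snr^{-2}),
\]
from which $\lim_n v_n(z_n^*)=u\sqrt{r_1/(r_1+\lambda)}<u$. This positive gap, combined with the previous display, completes the proof.
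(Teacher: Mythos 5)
Your decomposition
\[
v_n(z_n^*-\kappa)=\mathbb{E}_n^{NI}\Bigl[\int_0^{\mathbb{T}_n^*} e^{-r_1 t} r_1\bigl[u+(1-a_n(z_t))c\bigr]dt\Bigr]+v_n(z_n^*)\,q_n
\]
is valid (the principal never stops for $z<z_n^*$), and the bound on the running integral via Claim \ref{lem:learninglemma2} together with $v_n(z_n^*-\kappa)\to u$ (Claim \ref{cl:vlimitleft}) does deliver $(u-v_n(z_n^*))\,q_n\to 0$. So far this is a mild streamlining of the paper's two-sided decomposition with $\underline{z}_n$ and $z_n^*$, and the whole argument correctly reduces to one thing: showing $\limsup_n v_n(z_n^*)<u$.

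It is on that key step that your proof has a genuine gap. You invoke the closed-form fixed-point equation $a_-^*(v^*)=a_+^*(v^*)$ and claim ``the denominators tend to $1$ while the $\phi$-numerators decay as Gaussian tails,'' then match exponents. But look at the denominator of the fraction in $a_+^*$ at $x=v^*$: it is
\[
D_R=\tfrac{\sqrt{2(r_1+\lambda)}}{\snr}\phi(q_R)+\Phi(q_R)-\Phi(m_R),
\qquad q_R=\tfrac{v_R-\frac{r_1}{r_1+\lambda}u}{\sqrt{\kappa_R}},\quad m_R=\tfrac{v^*-\frac{r_1}{r_1+\lambda}u}{\sqrt{\kappa_R}}.
\]
Since $v^*>v_R>\frac{r_1}{r_1+\lambda}u$, both $q_R$ and $m_R$ go to $+\infty$ linearly in $\snr$; hence $\Phi(q_R)-\Phi(m_R)\to 0$ from below and the $\phi$-term is also exponentially small, so $D_R\to 0$, \emph{not} $1$. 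Worse, the two exponentially small contributions to $D_R$ almost cancel: from \eqref{eq:qR>mu/r} one has $q_R=\frac{1-\xi_R}{\sqrt{2(r_1+\lambda)}}\snr$ with $\xi_R=O(\snr^{-2})$, so $\tfrac{\sqrt{2(r_1+\lambda)}}{\snr}\phi(q_R)$ and $1-\Phi(q_R)\approx \phi(q_R)/q_R$ agree to two orders in $\snr^{-1}$. The position of $v^*$ is therefore determined by a delicate competition that your ``matching dominant exponents'' step never sees, and the conclusion $v^*\to u\sqrt{r_1/(r_1+\lambda)}$ is in fact wrong: since $v_R\to\frac{r_1}{r_1+\lambda}(u+\tfrac{c}{2})$, taking $u=c=1$ and $\frac{r_1}{r_1+\lambda}\in(\tfrac{4}{9},\tfrac12)$ keeps us in the low-friction Case 2 yet places your claimed limit strictly below $v_R$, contradicting $v^*\in(v_R,v_L)$ (Claim \ref{cl:v*unique}).

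The paper establishes $\limsup_n v_n(z_n^*)<u$ by a completely different and more robust argument that you should consider adopting: condition on the arrival $\mathbb{T}_\lambda$ of the next Poisson stopping opportunity; on the event $\{z_{\mathbb{T}_\lambda}>z_n^*\}$ the game is terminated and the agent earns at most $\frac{r_1}{r_1+\lambda}(u+c)$, on the complementary event his continuation value is asymptotically at most $u$ (Claim \ref{cl:vbound} together with the Learning Lemma to kill the $(1-a_n)c$ flow term), and the submartingale property of $z_t$ under the noninvestible type forces $\mathbb{P}_n^{NI}[z_{\mathbb{T}_\lambda}>z_n^*]\geq\tfrac12$. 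This yields $\limsup v_n(z_n^*)\leq\tfrac12\frac{r_1}{r_1+\lambda}(u+c)+\tfrac12 u<u$ in Case 2, with no need to unravel the closed-form tails at all.
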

	
	\begin{proof}
		Let $\hat{z}_{n}:=z_{n}^{\ast }-\kappa $ and $\underline{z}_{n}:=z_{n}^{\ast
		}-2\kappa .$ Let $\mathbb{T}_{n}\left( \underline{z}_{n}\right) $ be the
		stopping time that stops the play in the first time that the posterior
		reaches $\underline{z}_{n}$ and $\mathbb{T}_{n}\left( z_{n}^{\ast }\right) $
		be the stopping time that stops the play in the first time that the
		posterior reaches $z_{n}^{\ast }.$ Observe that for any $n$, $\mathbb{P}_{n}^{NI }\left[ \mathbb{T}_{n}\left( \underline{z}_{n}\right) <\infty \right] +\mathbb{P}_{n}^{NI }\left[ \mathbb{T}_{n}\left( z_{n}^{\ast }\right)<\infty \right] =1$.
		
		By Claim \ref{cl:alimitleft}, we know that for any $\varepsilon>0$, there exists $n_{1}\in \mathbb{N}$ such that $n>n_{1}$ implies that $v_n(\hat{z}_n^*)$ is bounded above by:
		\begin{equation}\label{eq:vhatpbound}
			\begin{aligned}
				& \mathbb{P}_{n}^{NI}\left[ \mathbb{T}_{n}\left( \underline{z}%
				_{n}\right) <\infty \right] \mathbb{E}_{n}^{NI}\left[ \int_{0}^{%
					\mathbb{T}_{n}\left( \underline{z}_{n}\right) }ue^{-r_{1}t}dt+e^{-r_{1}%
					\mathbb{T}_{n}\left( \underline{z}_{n}\right) }v_{n}\left( 
				\underline{z}_{n} \right) \mid \mathbb{T}_{n}\left( \underline{z}%
				_{n}\right) <\infty \right] \\
				&\quad +\mathbb{P}_{n}^{NI}\left[ \mathbb{T}_{n}\left( z_{n}^{\ast
				}\right) <\infty \right] \mathbb{E}_{n}^{NI}\left[ \int_{0}^{%
					\mathbb{T}_{n}\left( z_{n}^{\ast }\right) }ue^{-r_{1}t}dt+e^{-r_{1}%
					\mathbb{T}_{n}\left( z_{n}^{\ast }\right) }v_{n}\left(  z_{n}^{\ast
				} \right) \mid \mathbb{T}_{n}\left( z_{n}^{\ast }\right) <\infty %
				\right] +\varepsilon .
			\end{aligned}
		\end{equation}
		
		Next we obtain an upper bound for $v_{n}(z_{n}^{\ast } ).$
		For that, we let $\mathbb{T}_{\lambda }$ be the random time of the next
		Poisson shock. Note that
		\begin{align*}
			v_{n}(z_{n}^{\ast } ) =&\phantom{.}\mathbb{P}_{n}^{NI}\left[
			z_{\mathbb{T}_{\lambda }}>z_{n}^{\ast }\right] \mathbb{E}_{n}^{\theta _{S}}%
			\left[ \int_{0}^{\mathbb{T}_{\lambda }}e^{-r_{1}t}\left[ u+\left( 1-a_{n}(z_t)\right)c\right] dt\mid z_{\mathbb{T}_{\lambda
			}}>z_{n}^{\ast }\right] \\
			&+\mathbb{P}_{n}^{NI}\left[ z_{\mathbb{T}_{\lambda }}\leq
			z_{n}^{\ast }\right] \mathbb{E}_{n}^{\theta _{S}}\left[ \int_{0}^{\mathbb{T}%
				_{\lambda }}e^{-r_{1}t}\left[ u+\left( 1-a_{n}(z_t)\right)c\right]dt+e^{-r_{1}\mathbb{T}_{\lambda }}v_{n}\left( z_{\mathbb{T}%
				_{\lambda }} \right) \mid z_{\mathbb{T}_{\lambda }}\leq z_{n}^{\ast }%
			\right].
		\end{align*}
		
		Now we use the following facts to bound the expected value above:
		
		i) Because $a_n(z_t)\geq 0$, 
		\begin{equation*}
			\mathbb{E}_{n}^{NI}\left[ \int_{0}^{\mathbb{T}_{\lambda
			}}e^{-r_{1}t}\left[ u+\left( 1-a_{n}(z_t)\right) c\right]
			dt\mid z_{\mathbb{T}_{\lambda }}>z_{n}^{\ast }\right] \leq \mathbb{E}%
			_{n}^{NI}\left[ \int_{0}^{\mathbb{T}_{\lambda
			}}e^{-r_{1}t}(u+c)dt\mid z_{\mathbb{T}_{\lambda }}>z_{n}^{\ast }\right].
		\end{equation*}
		
		ii) From Claim \ref{cl:vbound},
		\begin{equation*}
			\limsup \mathbb{E}_{n}^{NI}\left[ \mathbb{I}_{\left\{ z_{\mathbb{T%
					}_{\lambda }}\leq z_{n}^{\ast }\right\} }v_{n}\left( z_{\mathbb{T}%
				_{\lambda }} \right) \right] \leq \limsup \mathbb{E}_{n}^{NI}\left[ \mathbb{I}_{\left\{ z_{\mathbb{T}_{\lambda }}\leq z_{n}^{\ast
				}\right\} }u\right].
		\end{equation*}
		
		iii) For every $\varepsilon >0,$ from Claim \ref{lem:learninglemma2}, 
		\begin{equation*}
			\limsup \mathbb{P}_{n}^{NI}\left[ \left\{ z_{\mathbb{T}_{\lambda
			}}\leq z_{n}^{\ast }\right\} \cap \left\{ \int_{0}^{\mathbb{T}_{\lambda
			}}e^{-r_{1}t}(1-a(z_t))dt > \varepsilon \right\} \right] =0.
		\end{equation*}
		Conditions (i), (ii) and (iii) above imply that for every $\epsilon >0$, we can find $n_{2}\in \mathbb{N}$ with $n_{2}>n_{1}$ such that $n>n_{2}$ implies %
		%\begin{equation*}
		$v_{n}(z_{n}^{\ast } )\leq \mathbb{P}_{n}^{NI}\left[z_{\mathbb{T}_{\lambda }}>z_{n}^{\ast }\right] \mathbb{E}_{n}^{NI}\left[ \int_{0}^{\mathbb{T}_{\lambda}}e^{-r_{1}t}(u+c)dt\mid z_{\mathbb{T}_\lambda}>z_{n}^{\ast }\right]
		+\mathbb{P}_{n}^{\theta _{S}}\left[ z_{\mathbb{T}_{\lambda }}\leq
		z_{n}^{\ast }\right] u+\epsilon.$  
		%\end{equation*}
		Since Poisson shocks are independent of the Brownian motion, $\mathbb{E}%
		_{n}^{NI}\left[ \int_{0}^{\mathbb{T}_{\lambda
		}}e^{-r_{1}t}(u+c)dt\mid z_{\mathbb{T}_{\lambda }}>z_{n}^{\ast }\right]
		=\left( \frac{r}{r+\lambda }\right) (u+c)<u$. Moreover, notice that the $z_t$ (and $p_t$) are submartingales conditional on $\theta=NI$, so $\mathbb{P}%
		_{n}^{NI}\left[ z_{\mathbb{T}_{\lambda }}>z_{n}^{\ast }\right] \geq 
		\frac{1}{2}.$ Therefore, the last two observations imply%
		\begin{equation*}
			v_{n}(z_{n}^{\ast })\leq \frac{1}{2}\left( 
			\frac{r_1}{r_1+\lambda }\right) (u+c)+ \frac{1}{2}u+\epsilon.
		\end{equation*}
		Since $\left( 
		\frac{r_1}{r_1+\lambda }\right) (u+c)<u$ and $\epsilon$ is arbitrary, we conclude that $\limsup v_n(z_n^*)<u$.
		
		Going back to the upper bound \eqref{eq:vhatpbound} for $v_n(\hat{z}_n^*)$, we have shown that $\limsup v_n(z_n^*)<u$, and from Claim \ref{cl:vlimitleft}, we know that $\lim v_n(\hat{z}_n^*)= \lim v_n(\underline{z}_n^*)=u$. So for \eqref{eq:vhatpbound} to be an valid upper bound of $v_n(\hat{z}_n^*)$ for arbitrary $\varepsilon$, we must have $\limsup_{n\to\infty} \mathbb{E}_{n}^{NI}\left[ e^{-r_{1}\mathbb{T}_{n}(z_n^*)}\right] =0$, i.e., $\limsup_{n\to\infty} \mathbb{E}_{n}^{NI}\left[ e^{-r_{1}\mathbb{T}_{n}^{\ast }}\right] =0$, as desired.
	\end{proof}
	
	\begin{lemma}\label{lem:znconvergence2}
		$\lim_{n\to \infty} z_n^* = z^{**}$, where $z^{**}$ is the myopic cutoff satisfying $R(p(z^{**}))=0$
	\end{lemma}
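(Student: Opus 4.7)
Since $p_n^* \in [p^{**}, p_H]$ for every $n$ by Lemma \ref{lemma:principalbestreply}, the sequence $\{z_n^*\}$ is bounded and satisfies $z_n^* \geq z^{**}$, so the plan is to show $\limsup_n z_n^* \leq z^{**}$. I proceed by contradiction: suppose along a subsequence (relabeled $n$) that $z_n^* \to \tilde z$ for some $\tilde z > z^{**}$. Pick any $z^0 \in (z^{**}, \tilde z)$, so that $R(p(z^0)) > 0$, and fix $\kappa > 0$ small enough that $z^0 \leq z_n^* - \kappa$ for all sufficiently large $n$. The strategy is to show that the equilibrium payoff $W_n(z^0)$ vanishes as $n\to\infty$, while a simple deviation yields a strictly positive limit payoff, contradicting the optimality of $b_n$.

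To bound $W_n(z^0)$, let $\mathbb{T}_n^*$ be the first hitting time of $[z_n^*, \infty)$ starting from $z^0$. Any path from $z^0$ to $z_n^*$ must first pass through $z_n^* - \kappa$, so by the strong Markov property $\mathbb{E}^{NI}_n[e^{-r_1 \mathbb{T}_n^*}] \leq \mathbb{E}^{NI}_n[e^{-r_1 \hat{\mathbb{T}}_n}]$, where $\hat{\mathbb{T}}_n$ is the hitting time of $[z_n^*,\infty)$ starting from $z_n^*-\kappa$; by Lemma \ref{lem:noregimechange}, the right-hand side tends to $0$. I then translate this noninvestible-type bound into an unconditional bound via the tower property: $\mathbb{E}_n[p_{\mathbb{T}_n^*} e^{-r_1 \mathbb{T}_n^*}] = p(z^0)\, \mathbb{E}^{NI}_n[e^{-r_1 \mathbb{T}_n^*}]$, and since $p_{\mathbb{T}_n^*} \geq p_n^* \geq p^{**} > 0$ on $\{\mathbb{T}_n^* < \infty\}$ (with the exponential vanishing otherwise), it follows that $\mathbb{E}_n[e^{-r_1 \mathbb{T}_n^*}] \to 0$. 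A Markov-inequality argument then gives $\mathbb{T}_n^* \to \infty$ in the principal's probability, so dominated convergence yields $\mathbb{E}_n[e^{-r_2 \mathbb{T}_n^*}] \to 0$. Since the equilibrium stopping time $\mathbb{T}_n \geq \mathbb{T}_n^*$ pathwise and $R$ is bounded, I conclude $W_n(z^0) \to 0$.

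For the lower bound, consider the principal's deviation of stopping at the first Poisson opportunity regardless of the belief. Because $p_t$ is a martingale under the principal's measure and $R$ is linear, $\mathbb{E}[R(p_t)] = R(p(z^0))$ for every $t$, so this deviation yields
\[
\int_0^\infty \lambda\, e^{-(r_2+\lambda)t}\, R(p(z^0))\, dt \;=\; \frac{\lambda}{r_2+\lambda}\, R(p(z^0)) \;>\; 0,
\]
which for large $n$ strictly exceeds $W_n(z^0)$, contradicting the equilibrium optimality of $b_n$. This delivers $\limsup_n z_n^* \leq z^{**}$ and completes the argument.

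The main obstacle is the transition in the second step: Lemma \ref{lem:noregimechange} only controls the discount factor under the noninvestible-type measure and at rate $r_1$, whereas the principal's payoff is formed under the unconditional measure and discounted at rate $r_2$. Bridging this gap relies crucially on (i) the Bayesian identity involving the terminal belief $p_{\mathbb{T}_n^*}$, (ii) the uniform lower bound $p_n^* \geq p^{**} > 0$ furnished by Lemma \ref{lemma:principalbestreply}, and (iii) the passage from an $r_1$-Laplace bound on a family of stopping times to an $r_2$-Laplace bound via convergence-in-probability plus dominated convergence.
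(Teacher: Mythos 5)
Your proof is correct and follows the same basic plan as the paper's: assume a subsequence with $z_n^* \to \tilde z > z^{**}$, invoke Lemma \ref{lem:noregimechange} to show the discounted hitting time of the stopping region vanishes, conclude $W_n$ vanishes at an interior point above $z^{**}$, and derive a contradiction by comparing with the ``stop at the next opportunity'' payoff $\frac{\lambda}{r_2+\lambda}R(p(z^0))>0$. The one place you work harder than the paper is the bridge from $\mathbb{E}_n^{NI}[e^{-r_1 \mathbb{T}_n^*}]\to 0$ to $\mathbb{E}_n[e^{-r_2 \mathbb{T}_n^*}]\to 0$: you go through the Bayesian identity $\mathbb{E}_n[p_{\mathbb{T}_n^*}e^{-r_1\mathbb{T}_n^*}]=p(z^0)\,\mathbb{E}_n^{NI}[e^{-r_1\mathbb{T}_n^*}]$ together with the uniform floor $p_n^*\geq p^{**}$, and then switch rates via Markov's inequality and dominated convergence. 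The paper instead leaves this step compressed in the lemma's proof (``implying that $\limsup W_n\leq 0$'') and justifies it in a nearby footnote by a payoff-sign argument: if $\mathbb{E}_n^{I}[e^{-r_2\mathbb{T}}]$ stayed bounded away from zero while the NI analogue vanished, the principal's equilibrium payoff $p_0\mathbb{E}_n^{NI}[e^{-r_2\mathbb{T}}]w_{NI}+(1-p_0)\mathbb{E}_n^{I}[e^{-r_2\mathbb{T}}]w_I$ would be negative, contradicting $W_n\geq 0$. Both bridges are valid here because $r_1,r_2$ are fixed in Section~\ref{sec:tranparency}; your explicit route has the small advantage of not relying on the optimality of $b_n$ in this step, and it is closer in spirit to the paper's remark that ``the termination region is more likely reached by the noninvestible type.'' No gaps.
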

	\begin{proof}
		Recall that we always have $z_n^*\geq z^{**}$.
		Hence assume (toward a contradiction) that we can find some $\varepsilon>0$ such that, taking a subsequence if necessary, $z_{n}^{\ast
		}>z^{**}+\varepsilon$ for every $n$. For every $n$, consider the game starts at $z_n^*-\frac{\varepsilon}{2}>z^{**}+\frac{\varepsilon}{2}$. Let $\mathbb{T}_{n}^{\ast }$ be the stopping time that stops the play in the
		first time that a posterior reaches $[z_{n}^{\ast },\infty).$ By Lemma \ref{lem:noregimechange}, we have $\mathbb{E}_{n}^{NI}\left[ e^{-r_{1}\mathbb{T}_{n}^{\ast }}\right] \rightarrow 0$, implying that $\limsup W_n\left(p\left(z_n^*-\frac{\varepsilon}{2}\right)\right)\leq 0$. But since $z_{n}^{\ast
		}-\frac{\varepsilon}{2}>z^{**}+\frac{\varepsilon}{2}$, the principal can get a strictly positive payoff by terminating the relationship when the next stopping opportunity arrives. So the principal has a profitable deviation at $z_n^*-\frac{\varepsilon}{2}$ when $n$ is sufficiently large, a contradiction.
	\end{proof}
	
	\begin{proof}[Proof of Part 2 of Theorem \ref{t:noisehelps}]
		In light of Corollary \ref{cor:convexity}, we extend each $W_n$ continuously from $(0,1)$ to $[0,1]$ by setting $W_n(0)=0$ and $W_n(1)=\frac{\lambda}{r_2+\lambda}w_{NI}$.
		
		We first show that $W_n(\cdot)$ converges to $\underline{W}(p^{**})=0$. By Lemmas \ref{lem:noregimechange} and \ref{lem:znconvergence2}, it is easy to see that $\lim_{n\to\infty}W_n(p)=0$ for all $p<p^{**}$.\footnote{Note that Lemma \ref{lem:noregimechange} enables us to conclude that $\limsup_{n} \mathbb{E}_{n}\left[ e^{-r_{2}\mathbb{T}}\right] =0$. This is because the principal only derives positive payoff from terminating against the noninvestible type. As a result, if $\limsup_{n} \mathbb{E}^{NI}_{n}\left[ e^{-r_{2}\mathbb{T}}\right]=0$, then we must have $\limsup_{n} \mathbb{E}^{I}_{n}\left[ e^{-r_{2}\mathbb{T}}\right]=0$, for otherwise the principal's equilibrium payoff would be negative.}  
		Suppose toward a contradiction, taking a subsequence if necessary, that  $\lim_{n\to \infty}W_n(p^{**})=\delta>0$. Let $\epsilon=\frac{\delta(1-p^{**})}{2w_{NI}}$. For $n$ large enough, we have $W_n(p^{**}-\epsilon)<\frac{\delta}{2}$. Since $W_n(\cdot)$ is convex (by Corollary \ref{cor:convexity}), we have
		%\begin{align*}
		$   \frac{W_n(1)- W_n(p^{**})}{1-p^{**}}\geq \frac{W_n(p^{**})-W_n(p^{**}-\epsilon)}{\epsilon}\geq \frac{w_{NI}}{1-p^{**}},$
		%\end{align*}
		which implies $W_n(1)>w_{NI}$, a contradiction.
		
		Next, for each $n$, because $W_n(0)=\underline{W}(0)$, $W_n(1)=\underline{W}(1)$, and $W_n(\cdot)$ is increasing and convex, we have $0\leq W_n'(\cdot)\leq \frac{\lambda}{r_1+\lambda}$. But then, we always have $p^{**}=\argmax_{p\in [0,1]}|W_n(p)-\underline{W}(p)|$. Hence, uniform convergence of $W_n$ follows immediately from its pointwise convergence at $p^{**}$.
	\end{proof}
	
	%\singlespacing
	%\bibliography{references.bib}
	%\bibliographystyle{aer}

	%\onehalfspacing
	\newpage
	\section{Online Appendix}
	\setcounter{lemma}{0}
	\setcounter{claim}{0}
	\renewcommand{\thelemma}{OA.\arabic{lemma}}
	\renewcommand{\theclaim}{OA.\arabic{claim}}
	\subsection{Omitted Proofs for Theorem \ref{uniqueness}}\label{onlineapp:B1}
	\subsubsection{Proofs of Lemmas \ref{lem:fullspan} and \ref{lemma:principalbestreply}}
	Consider a Markovian equilibrium, $(\alpha,\beta)$, and the underlying probability space $\left( \Omega ,\mathfrak{F,}%
	\mathbb{P}\right) $.  For each $p\in \left( 0,1\right) ,$ we
	define $\Phi \left( p\right) :=\left\{ \omega \in \Omega :\exists t\leq 
	\mathbb{T}\text{ such that }p_{t}(\omega )=p\right\} $, where $\mathbb{T}$ is the random stopping time of the relationship induced by $(\alpha,\beta)$. The belief span, $\text{SP}(\alpha)$,
	is the set of all $p$ such that $\mathbb{P}\left( \Phi \left( p\right)
	\right) >0$. Clearly, $\text{SP}(\alpha)$ is a connected set because the sample path of $X_t$ is almost surely continuous. Let $\underline{p}:=\inf \text{SP}(\alpha)$, and $\bar{p}:=\sup \text{SP}(\alpha)$. Define the principal's value function $W$ as in the main text on the domain of $\text{SP}(\alpha)$. The function $W$ is continuous because the agent's equilibrium policy function $a\in \mathcal{P}$.
	
	\begin{claim}\label{cl:spanopen}
		$SP(\alpha)$ is an open interval. That is, $SP(\alpha)=(\underline{p},\bar{p})$.
	\end{claim}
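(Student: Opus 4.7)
Since $p_0 \in SP(\alpha)$ and the sample paths of $p_t$ are almost surely continuous, $SP(\alpha)$ is a nonempty connected subset of $(0,1)$, hence an interval with endpoints $\underline{p}$ and $\bar{p}$. The cases $\underline{p} = 0$ and $\bar{p} = 1$ are automatic since $SP(\alpha) \subseteq (0,1)$; the work is to rule out $\bar{p} \in SP(\alpha)$ when $\bar{p} < 1$ (and, symmetrically, $\underline{p} \in SP(\alpha)$ when $\underline{p} > 0$). I outline the argument for $\bar{p}$; the argument for $\underline{p}$ is the mirror image.

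Suppose for contradiction that $\bar{p} \in SP(\alpha)$ with $\bar{p} < 1$, and let $\tau := \inf\{t \geq 0 : p_t = \bar{p}\}$; by assumption $\mathbb{P}(\tau < \infty) > 0$. By \eqref{eq:law-of-p}, the belief $(p_t)$ is a bounded martingale under the unconditional measure; by the strong Markov property, $(p_{\tau+s})_{s\geq 0}$ is a martingale starting from $\bar{p}$ on the event $\{\tau < \infty\}$. Since $\bar{p} = \sup SP(\alpha)$, we must have $p_{\tau+s} \leq \bar{p}$ almost surely. A martingale bounded above by its initial value equals that value almost surely, so $p_{\tau+s} = \bar{p}$ for all $s \geq 0$: the point $\bar{p}$ is absorbing. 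Looking again at \eqref{eq:law-of-p}, absorption forces the diffusion coefficient $\snr(1-a(\bar{p}))\gamma(\bar{p})$ to vanish, and since $\gamma(\bar{p}) > 0$ (as $\bar{p} \in (0,1)$), we conclude $a(\bar{p}) = 1$.

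Next I will contradict $a(\bar{p}) = 1$ using the agent's optimality. By \eqref{eq:dptagent}, from the noninvestible type's viewpoint the belief obeys $dp_t = 0$ whenever $a_t = 1$, regardless of his own action $\tilde{a}_t$—Bayesian updating uses the principal's conjecture, not the realized $\alpha$. Hence, on the positive-probability event $\{\tau < \infty\}$, the noninvestible type can deviate from $\alpha_s = a(\bar{p}) = 1$ to $\alpha_s = 0$ for all $s \geq \tau$ without altering the belief path (and therefore without altering the distribution of the stopping time $\mathbb{T}$), while strictly raising his flow payoff by $c$ per unit time until termination. Because the Poisson stopping frictions guarantee $\mathbb{P}(\mathbb{T} > \tau + \epsilon) > 0$ for some $\epsilon > 0$, this deviation yields a strictly positive expected gain in $U_1$, contradicting the equilibrium optimality of $\alpha$.

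The main obstacle I anticipate is the careful deployment of the martingale-absorption step: the strong Markov property for $(p_t)$ must be invoked with a merely piecewise-Lipschitz coefficient $a$, and $(p_t)$ must be shown to be a genuine martingale rather than only a local one—both follow from $p_t \in [0,1]$ together with the construction of $(p_t)$ as an innovation-driven process, but they deserve explicit attention. A secondary care point is verifying that the deviation in the last step lies in $\mathcal{A}$ (which $\alpha \equiv 0$ trivially does) and produces payoff gains over a set of positive joint measure in time and probability—guaranteed here by the Poisson arrivals.
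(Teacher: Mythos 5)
Your proof is correct and follows essentially the same approach as the paper's: both establish via the bounded-martingale property that $\bar{p}$ would be an absorbing state, and both derive a contradiction from the observation that absorption requires $a(\bar{p})=1$ while the agent's optimality (saving the mimicking cost $c$ when the belief cannot move) requires $a(\bar{p})=0$. The paper merely phrases the contradiction in the reverse order — optimality forces $a(\bar{p})=0$, which makes the diffusion coefficient strictly positive and thus contradicts absorption — but the ingredients are identical; your explicit remarks on the strong Markov property and on $(p_t)$ being a true (not merely local) martingale are sound and fill in details the paper leaves implicit.
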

	\begin{proof}
		Since $SP(\alpha)$ is a connected set, we only need to show that $\bar{p},\underline{p}\notin SP(\alpha)$. Suppose, toward a contradiction, that $\bar{p}\in SP(\alpha)$. Then, consider a history that leads to the belief $\bar p$ and the continuation play starting from this history. Since the belief process is a martingale, we must have $%
		p_{t}=\bar p$ for all $t\leq \mathbb{T}$ and almost all sample paths. Agent's optimality then
		implies $a(\bar p)=0$,%\footnote{This is because no matter what the agent does, the future belief will stay constant (at $\bar{p}$) with probability $1$, so does the principal's termination probability. Thus, the agent should optimally take the least costly action.} 
		and thus the diffusion coefficient of the belief process at $\bar{p}$ is strictly positive. This contradicts $p_{t}=\bar p$ for all $t\leq \mathbb{T}$ and almost all sample paths. The same argument proves that $\underline{p}\notin SP(\alpha)$.
	\end{proof}
	
	\begin{claim}\label{cl:cutoffstructure}
		The principal's equilibrium policy function $b$ has a cutoff structure on $SP(\alpha)$. That is, there exists a unique $p^*\in [\underline{p},\bar{p}]$ such that $p\in(\underline{p},p^*)$ implies $b(p)=0$ and $p\in (p^*,\bar{p})$ implies $%
		b(p)=1$.
	\end{claim}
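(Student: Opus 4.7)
The plan is to prove cutoff structure by establishing convexity of $W$ on $SP(\alpha)$, combining this with the linearity of $R$ so that $\{p : W(p) \le R(p)\}$ is a subinterval of $SP(\alpha) = (\underline{p}, \bar{p})$, and finally using boundary estimates to identify this subinterval as right-aligned in $SP(\alpha)$.

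First, I would derive convexity of $W$ from the HJB
\[
r_2 W(p) = \max_{\tilde b \in [0,1]}\Bigl\{ \tfrac{1}{2} \snr^2 [1-a(p)]^2 \gamma(p)^2 W''(p) + \lambda \tilde b [R(p) - W(p)] \Bigr\}.
\]
In the continuation region $\{b = 0\}$, this reduces to $r_2 W = \tfrac{1}{2}\snr^2[1-a]^2\gamma^2 W''$, and since $W \ge 0$ (choosing $\tilde{\beta}\equiv 0$ always yields payoff $0$) we get $W'' \ge 0$. In the stopping region $\{b=1\}$, the HJB gives $(r_2+\lambda)W - \lambda R = \tfrac{1}{2}\snr^2[1-a]^2\gamma^2 W''$; to sign the left-hand side I would invoke the deviation of stopping at the very first Poisson opportunity $\tau_1$: because $\{p_t\}$ is a martingale under the principal's filter, $\tau_1$ is independent of the Brownian noise, and $R$ is affine, this strategy yields exactly $\tfrac{\lambda}{r_2+\lambda}R(p)$. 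Optimality therefore gives $W(p) \ge \tfrac{\lambda}{r_2+\lambda}R(p)$, hence $W'' \ge 0$ in the stopping region as well. Continuity of $W$ and smooth pasting across regime boundaries (available because the control $\tilde b$ enters the HJB linearly) extend convexity to all of $SP(\alpha)$.

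Second, since $R$ is affine, $W - R$ is convex. From the HJB, $b(p) = 0$ whenever $W(p) > R(p)$ and $b(p) = 1$ whenever $W(p) < R(p)$, so the stopping region coincides with $\{W \le R\}$ up to indifference points, and by convexity of $W - R$ it is a subinterval of $SP(\alpha)$. To identify it as right-aligned, I would use boundary estimates: near $\underline{p}$, $W \ge 0$ while $R$ is small or negative (negative if $\underline{p} \le p^{**}$), forcing $W > R$; near $\bar{p}$, either $R(\bar{p}) \le 0$ (empty stopping region, so $p^* := \bar{p}$) or $R(\bar{p}) > 0$, in which case the continuation value for $p$ close to $\bar{p}$ is strictly below $R(p)$ because the belief process has positive diffusion (since $a < 1$ by Lemma \ref{lem:fullspan}), so stopping arrives in finite expected time and discounting strictly reduces the expected future $R$ relative to $R(p)$. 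Either way, the stopping region---if non-empty---extends to $\bar{p}$, yielding a unique cutoff $p^* \in [\underline{p}, \bar{p}]$ with $b = 0$ to its left and $b = 1$ to its right.

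The main obstacle will be making the boundary estimate near $\bar{p}$ rigorous: quantifying how far below $R(\bar{p})$ the continuation value sits requires controlling the belief dynamics under the as-yet-uncharacterized equilibrium strategy $a$. A secondary technical point is justifying convexity of $W$ across regime boundaries in a merely Markovian (not necessarily smooth Markov) equilibrium where $W$ may have kinks; this can be handled via smooth pasting from optimality, or more abstractly by noting that $W$ is the value of an optimal stopping problem with affine terminal reward and martingale state, which is classically convex.
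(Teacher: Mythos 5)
Your route — convexity of $W$, hence $\{W \le R\}$ is an interval, then boundary estimates to pin its position — is a genuinely different strategy from the paper's. The paper never uses convexity here; it works directly with $F := W - R$ and proves two structural properties (Property 1: if $F(\tilde p)>0$ then $F>0$ to the left of $\tilde p$; Property 2: $F<0$ above $p^* := \sup\{F>0\}$), each by a contradiction argument that combines the martingale property of beliefs, the affinity of $R$, $W\ge 0$, and the fact that discounting strictly hurts whenever $R$ is positive. Your plan could in principle work, but as written it has two real gaps.

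First, the HJB-based convexity argument presupposes that $W$ is $C^2$ (or at least $C^1$ with $C^2$ pieces), which is precisely the "smoothness" baked into the definition of a \emph{Markov} equilibrium; Claim~\ref{cl:cutoffstructure}, however, is proved for an arbitrary \emph{Markovian} equilibrium, where the paper only establishes that $W$ is continuous (since $a\in\mathcal{P}$ is merely piecewise Lipschitz, the diffusion coefficient of the belief process may be discontinuous). Your fallback remark --- that $W$ is the value of optimal stopping of a martingale diffusion with affine reward, ``classically convex'' --- is the right instinct but needs to be made precise in this setting: the naive ``sup of affine functions of $p_0$'' argument fails because, conditional on $\theta$, the signal law depends on the belief path and hence on $p_0$ (the agent's action is $a(p_t)$); and the convexity-preservation theorems for martingale diffusions require some regularity of the coefficient. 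This is not fatal, but right now it reads as an assertion rather than an argument.

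Second, and more substantively, the boundary estimate near $\bar p$ does not work as stated. ``Positive diffusion plus discounting reduces $E[R]$'' is not a valid reason for $W<R$: the belief is a martingale and $R$ is affine, so absent further structure one could imagine the principal adopting a strategy with value arbitrarily close to $R(p)$. What actually rules out a stopping interval $[p_1,p_2]$ with $p_2<\bar p$ is this: by continuity $W(p_2)=R(p_2)$, by $W\ge 0$ we get $R(p_2)\ge 0$, so $R(p)>0$ for $p>p_2$. For $\tilde p\in(p_2,\bar p)$, the belief must pass through $p_2$ before stopping (or else escape to $\bar p$ and never stop, yielding $0$), so $W(\tilde p)\le \mathbb{E}[e^{-r_2\mathbb{T}'}]W(p_2)<W(p_2)=R(p_2)<R(\tilde p)$ --- contradicting $\tilde p$ being in the continuation region. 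This is essentially the paper's Property~2 (and the engine behind Case~1 of Property~1). Note also that your near-$\underline p$ estimate silently assumes $R(\underline p)<0$, i.e.\ $\underline p\le p^{**}$, which is not available at this stage (Claims OA.3 and OA.4 establishing $\underline p=0$ and $\bar p=1$ come \emph{after} this claim); fortunately that estimate turns out to be unnecessary once the near-$\bar p$ argument is fixed, since the interval $\{W\le R\}$ then either reaches $\bar p$ or is empty, and in either case a cutoff exists.
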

	\begin{proof}
		Recall that $R(p):=pw_{NI}+(1-p)w_I$ is the principal's expected payoff if the relationship is terminated when her belief is $p$. For any $p\in SP(\alpha)$, define 
		$F(p):=W(p)-R(p).$ 
		At any time $t$ such that the stopping opportunity arrives, given her belief $p_t=p\in SP(\alpha)$, if the principal terminates the relationship, her expected payoff is $R(p)$; if the principal continues the relationship, her continuation value is $W(p)$.	Thus, principal's optimality requires that $b(p)=1$ if $F(p)<0$ and that $b(p)=0$ if $F(p)>0$. 
		
		We first establish two useful properties of $F$. 
		
		\noindent\textbf{Property 1:} If $F(\tilde{p})>0$ at some $\tilde{p}\in SP(\alpha)$, then $F(p)>0$ for all $p<\tilde{p}$. 
		
		To see this, suppose that $F(\tilde{p})>0$ at some $\tilde{p}\in SP(\alpha)$. Let $(p_a,p_b)$ be the largest interval containing $\tilde{p}$ such that $F(p)>0$ for all $p\in (p_a,p_b)$. We want to show that $p_a=\underline{p}$. Suppose, toward a contradiction, that $p_a>\underline{p}$. Since $W$ is continuous, we have $F(p_a)=0$, i.e., $W(p_a)=R(p_a)$. Moreover, principal's optimality requires that $b(p)=0$ for all $p\in (p_a,p_b)$. We consider two cases, and will reach a contradiction in each of these cases.
		
		\noindent\textbf{Case 1:} $p_b<\bar{p}$. 
		
		In this case, continuity of $W$ also implies that $F(p_b)=0$, i.e., $W(p_b)=R(p_b)$. Consider now a history that leads to the belief $\tilde{p}$ and the continuation play starting from this history. Let  $\mathbb{T}^{\dag }$ be the first time that the posterior belief reaches $\left( p_{a},p_{b}\right)
		^{c}$ (setting $\mathbb{T}^{\dag }=\infty $ if this event is not reached in
		finite time). Let $\varphi $ represent the probability measure (from the principal's perspective) induced by
		the distribution of $p_{\mathbb{T}^{\dag }}$. Then,
		\begin{align*}
			R(\tilde{p}) <W(\tilde{p}) &= \int_{p_{a}}^{p_{b}}W(p)\mathbb{E}\left[ e^{-r_2\mathbb{T}^{\dag }}\mid p_{%
				\mathbb{T}^{\dag }}=p\right] \varphi \left( dp\right)  \\
			&=\int_{p_{a}}^{p_{b}}R(p)\mathbb{E}\left[ e^{-r_2\mathbb{T}^{\dag }}\mid p_{%
				\mathbb{T}^{\dag }}=p\right] \varphi \left( dp\right) 
			<\int_{p_{a}}^{p_{b}}R(p)\varphi \left( dp\right)
			= R(\tilde{p}).
		\end{align*}%
		The second equality trivially holds if $\mathbb{E}\left[ e^{-r_2\mathbb{T}%
			^{\dag }}\mid p_{\mathbb{T}^{\dag }}=p\right] =0$, and otherwise if $ \mathbb{T}%
		^{\dag }<\infty$, then it holds because $W(p_{\mathbb{T}^{\dag }})=R(p_{\mathbb{T}%
			^{\dag }})$. The last inequality uses the facts that  $0\leq W(p_a)=R(p_{a})$
		implies  $R(p)>0$ for all $p>p_{a}$, and that $\mathbb{T}^{\dag }>0$ almost surely. The final equality holds because $p_t$ is a bounded martingale and $R(\cdot)$ is an affine function. But then, we have an obvious contradiction.
		
		\noindent\textbf{Case 2:} $p_b=\bar{p}$. 
		
		In this case, we have $b(p)=0$ for all $p\in (p_a,\bar{p})$. Consider again a history that leads to the belief $\tilde{p}$ and the continuation play starting from this history. Let  $\mathbb{T}^{\dag }$ be the first time that the posterior reaches $p_a$ (setting $\mathbb{T}^{\dag }=\infty $ if this event is not reached in
		finite time). Let $\varphi $ represent the probability measure (from the principal's perspective) induced by
		the distribution of $p_{\mathbb{T}^{\dag }}$. Then,
		\[
		R(\tilde{p}) <W(\tilde{p})=W(p_a)\mathbb{E}\left(e^{-r\mathbb{T}^{\dag }}\right)<R(p_a),
		\]
		which contradicts $R$ being increasing.
		
		%\vspace{0.2cm}
		\noindent\textbf{Property 2:} Let $p^*:=\sup\{p\in SP(\alpha): F(p)>0\}$.\footnote{By convention, if $\{p\in SP(\alpha): F(p)>0\}=\emptyset$, we set $\sup\{p\in SP(\alpha): F(p)>0\}=\underline{p}$.} Then, $F(p)<0$ for all $p>p^*$.
		
		By definition of $p^*$, we know that $F(p)\leq 0$ for all $p\geq p^*$, i.e., $W(p)\leq R(p)$ for all $p\geq p^*$, so it is weakly optimal for the principal to terminate the relationship whenever $p\in (p^*,\bar{p})$. Suppose, toward a contradiction, that $F(\tilde{p})=0$ for some $\tilde{p}>p^*$. Consider a history that leads to the belief $\tilde{p}$ and the continuation play starting from this history. Let  $\mathbb{T}^{\dag }$ be the first time that the stopping opportunity arrives or the posterior reaches $p^*$ (setting $\mathbb{T}^{\dag }=\infty $ if this event is not reached in
		finite time). Let $\varphi $ represent the probability measure (from the principal's perspective) induced by
		the distribution of $p_{\mathbb{T}^{\dag }}$. Then,
		\begin{align*}
			R(\tilde{p}) =W(\tilde{p}) 
			&=\int_{p^*}^{\bar{p}}W(p)\mathbb{E}\left[ e^{-r_2\mathbb{T}^{\dag }}\mid p_{%
				\mathbb{T}^{\dag }}=p\right] \varphi \left( dp\right)  \\
			&\leq\int_{p^*}^{\bar{p}}R(p)\mathbb{E}\left[ e^{-r_2\mathbb{T}^{\dag }}\mid p_{%
				\mathbb{T}^{\dag }}=p\right] \varphi \left( dp\right)  
			<\int_{p^*}^{\bar{p}}R(p)\varphi \left( dp\right)
			=R(\tilde{p}).
		\end{align*}
		The first equality follows from the contradiction assumption that $F(\tilde{p})=0$, the first inequality follows from the definition of $p^*$ such that $F(p)\leq 0$ for all $p\geq p^*$, the last inequality holds because $0\leq W(p^*)\leq R(p^*)$ implies that $R(p)>0$ for all $p>p^*$, and the final equality holds because $p_t$ is a bounded martingale and $R(\cdot)$ is an affine function. But then, we have an obvious contradiction, establishing Property 2.
		
		%\vspace{0.2cm}
		These two properties of $F$ immediately deliver our result. Specifically, let $p^*:=\sup\{p\in SP(\alpha): F(p)>0\}$. Then, Property 1 implies that $F(p)>0$ (and thus $b(p)=0$) for all $p\in (\underline{p},p^*)$, and Property 2 implies that $F(p)<0$ (and thus $b(p)=1$) for all $p\in (p^*,\bar{p})$.
	\end{proof}

	We continue with a technical result that will be later used.
	\begin{claim}
		\label{claim_3}Fix a positive integer $T.$ For any $\varepsilon >0$ there
		exists $\eta >0$ satisfying the following property: 
		Take  any pair of adapted processes $dY_{t}^{1}=\mu _{1,t}dt+ \sigma dB_{t}$
		and $dY_{t}^{2}=\mu _{2,t}dt+ \sigma dB_{t}$   such that $\mu _{j,t}\in \left[
		0,1\right] $ for $j=1,2$ and for every $t.$ Let $\mathbb{P}_{1}$ and $%
		\mathbb{P}_{2}$ be the probability distributions over $\left( C([0,T]),%
		\mathbb{B}\left( C([0,T])\right) \right)$\footnote{$\mathbb{B}$ stands for
			the Borel sigma-field, and $C([0,T])$ is the set of continuous functions over $[0,T]$.} generated by such stochastic processes. If $A\in 
		\mathbb{B}\left( C([0,T])\right) $ is such that $\mathbb{E}_{\mathbb{P}_{1}}%
		\left[ \mathbb{I}_{A}\right] <\eta $ then $\mathbb{E}_{\mathbb{P}_{2}}\left[ 
		\mathbb{I}_{A}\right] <\varepsilon$. %\footnote{$C([0,T])$ is the set of all continuous functions from $[0,T]$ to $\mathbb{R}.$}
	\end{claim}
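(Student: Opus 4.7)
The plan is to invoke Girsanov's theorem with a common reference measure and then chain two Cauchy--Schwarz inequalities. Let $\mathbb{P}_0$ denote the law on $(C([0,T]),\mathbb{B}(C([0,T])))$ of $\sigma W$, where $W$ is a standard Brownian motion. Because each $\mu_{j,\cdot}$ is adapted and uniformly bounded by $1$, Novikov's condition is trivially satisfied, and Girsanov's theorem yields that $\mathbb{P}_j$ is equivalent to $\mathbb{P}_0$ with Radon--Nikodym derivative
\[
L_T^j := \frac{d\mathbb{P}_j}{d\mathbb{P}_0} = \exp\left(\int_0^T \frac{\mu_{j,t}}{\sigma^2}\, d\omega_t - \frac{1}{2}\int_0^T \frac{\mu_{j,t}^2}{\sigma^2}\, dt\right),
\]
where $\omega$ is the canonical coordinate process on $C([0,T])$. (Since $\sigma B$ coincides with the continuous-martingale part of $Y^j$, the natural filtration of $\omega$ agrees with that of $B$, so each $\mu_{j,t}$ is legitimately a functional of $\omega$.)

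The next step is to establish moment bounds on $L_T^j$ and $(L_T^j)^{-1}$ that are uniform in the drifts. Writing $M^j := \int_0^\cdot (\mu_{j,t}/\sigma)\, dW_t$ under $\mathbb{P}_0$, I would use the decompositions
\[
(L_T^j)^2 = \mathcal{E}(2 M^j)_T \cdot \exp\bigl(\langle M^j\rangle_T\bigr), \qquad (L_T^j)^{-1} = \mathcal{E}(-M^j)_T \cdot \exp\bigl(\langle M^j\rangle_T\bigr),
\]
where $\mathcal{E}(\cdot)$ denotes the stochastic exponential. Since $\mu_{j,t}^2 \leq 1$, one has $\langle M^j\rangle_T \leq T/\sigma^2$ deterministically, and both $\mathcal{E}(2 M^j)$ and $\mathcal{E}(-M^j)$ are genuine $\mathbb{P}_0$-martingales of mean one. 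Setting $C := e^{T/\sigma^2}$, this yields $\mathbb{E}_{\mathbb{P}_0}[(L_T^j)^2] \leq C$ and $\mathbb{E}_{\mathbb{P}_1}[(L_T^1)^{-2}] = \mathbb{E}_{\mathbb{P}_0}[(L_T^1)^{-1}] \leq C$, with $C$ depending only on $T$ and $\sigma$.

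Two applications of Cauchy--Schwarz then give, for any $A \in \mathbb{B}(C([0,T]))$,
\[
\mathbb{P}_2(A) = \mathbb{E}_{\mathbb{P}_0}[\mathbb{I}_A L_T^2] \leq \sqrt{C}\,\sqrt{\mathbb{P}_0(A)}, \qquad \mathbb{P}_0(A) = \mathbb{E}_{\mathbb{P}_1}\bigl[\mathbb{I}_A (L_T^1)^{-1}\bigr] \leq \sqrt{C}\,\sqrt{\mathbb{P}_1(A)},
\]
and chaining these yields $\mathbb{P}_2(A) \leq C^{3/4}\, \mathbb{P}_1(A)^{1/4}$. Choosing $\eta := \varepsilon^4 / C^3$ then delivers the conclusion, with $\eta$ depending only on $\varepsilon$, $T$, and $\sigma$. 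The main obstacle is the \emph{drift-uniform} moment bound: a direct Cauchy--Schwarz on $d\mathbb{P}_2 / d\mathbb{P}_1$ would produce exponential moments that mix $\mu_1$ and $\mu_2$, and the decomposition trick above --- peeling off $\exp(\langle M^j\rangle_T)$ as a deterministically bounded factor in front of a mean-one martingale --- is precisely what makes the estimate independent of the particular drifts.
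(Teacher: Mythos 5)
Your proof is correct, and it uses a genuinely different mechanism from the paper's. The paper normalizes so that $\mathbb{P}_1$ becomes Wiener measure, applies Girsanov to write $\mathbb{P}_2(A)=\mathbb{E}_{\mathbb{P}_1}[\mathbb{I}_A M_T]$, bounds $M_T$ by $F_{\mu_2}:=\exp(\int_0^T\mu_{2,t}\,dB_t)$, and then invokes \emph{uniform integrability} of the family $\{F_{\mu_2}:\|\mu_2\|_\infty\le \sigma^{-1}\}$ to truncate: pick $n^*$ so the tail above $n^*$ contributes less than $\varepsilon/2$, and set $\eta=\varepsilon/(2n^*)$. Your route instead passes through a common reference measure $\mathbb{P}_0$, peels off $\exp(\langle M^j\rangle_T)\le e^{T/\sigma^2}$ as a deterministic factor in front of a mean-one stochastic exponential to get the drift-uniform second-moment bounds $\mathbb{E}_{\mathbb{P}_0}[(L_T^j)^2]\le C$ and $\mathbb{E}_{\mathbb{P}_1}[(L_T^1)^{-2}]\le C$, and then chains two Cauchy--Schwarz inequalities to obtain $\mathbb{P}_2(A)\le C^{3/4}\mathbb{P}_1(A)^{1/4}$. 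The payoff of your approach is an explicit, fully quantitative $\eta=\varepsilon^4/C^3$ with $C=e^{T/\sigma^2}$, whereas the paper's uniform-integrability step is non-constructive (the existence of $n^*$ is asserted, not exhibited). The paper's approach is slightly shorter and avoids introducing a third measure. One small imprecision on your side: the parenthetical claim that ``the natural filtration of $\omega$ agrees with that of $B$'' is not generally true for a process with path-dependent drift; what is actually needed (and what both proofs implicitly rely on, given the paper's Markovian applications) is that each $\mu_{j,t}$ is a progressively measurable functional of the observed path so that $L_T^j$ is a well-defined $\mathbb{B}(C([0,T]))$-measurable density. This does not affect the validity of the argument.
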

	
	\begin{proof}
		Dividing both processes by $\sigma $ and subtracting the same drift from
		both processes if necessary we may assume that $dY_{t}^{1}=dB_{t}$ and $%
		dY_{t}^{2}=\mu _{2,t}dt+dB_{t}$ with $\mu _{2,t}\in \left[ -\sigma
		^{-1},\sigma ^{-1}\right] .$ Since the drift is bounded we can invoke
		Girsanov's theorem to obtain%
		\begin{equation*}
			\mathbb{E}_{\mathbb{P}_{2}}[ \mathbb{I}_{A}] =\mathbb{E}_{\mathbb{%
					P}_{1}}[ \mathbb{I}_{A}M_{T}] ,
		\end{equation*}%
		where $M_{T}=\exp \left( \int_{0}^{T}\mu _{2,t}dB_{t}-\frac{1}{2}%
		\int_{0}^{T}\mu _{2,t}^{2}dt\right) .$ Notice that $M_{T}\leq F_{\mu
			_{2}}:=\exp \left( \int_{0}^{T}\mu _{2,t}dB_{t}\right) $. Since this class
		of processes is uniformly integrable we can take $n^{\ast }\in \mathbb{N}$ such that $\mathbb{E}_{\mathbb{P}_{1}}\left[ F_{\mu _{2}}\mathbb{I}%
		_{\left\{ F^{\mu _{2}}>n^{\ast }\right\} }\right] <\frac{\varepsilon }{2}$
		(holding for every process in this class) and consequently%
		%\begin{equation*}
		$   \mathbb{E}_{\mathbb{P}_{1}}\left[ M_{T}\mathbb{I}_{\left\{ M_{T}>n^{\ast
			}\right\} }\right]\leq\mathbb{E}_{\mathbb{P}_{1}}\left[ F_{\mu _{2}}\mathbb{I}_{\left\{
			F_{\mu _{2}}>n^{\ast }\right\} }\right] <\frac{\varepsilon }{2}.$
		%\end{equation*}
		
		Therefore, taking $\eta =\frac{\varepsilon }{2n^{\ast }}$, we obtain that
		\begin{align*}
			\mathbb{E}_{\mathbb{P}_{2}}\left[ \mathbb{I}_{A}\right]  =\mathbb{E}_{%
				\mathbb{P}_{1}}\left[ \mathbb{I}_{A}M_{T}\mathbb{I}_{\left\{ M_{T}\leq
				n^{\ast }\right\} }\right] +\mathbb{E}_{\mathbb{P}_{1}}\left[ \mathbb{I}%
			_{A}M_{T}\mathbb{I}_{\left\{ M_{T}>n^{\ast }\right\} }\right] 
			&\leq \mathbb{E}_{\mathbb{P}_{1}}\left[ \mathbb{I}_{A}M_{T}\mathbb{I}%
			_{\left\{ M_{T}\leq n^{\ast }\right\} }\right] +\frac{\varepsilon }{2} \\
			&<n^{\ast }\mathbb{E}_{\mathbb{P}_{1}}\left[ \mathbb{I}_{A}\right] +\frac{%
				\varepsilon }{2}
			<\varepsilon. \qedhere
		\end{align*}%
	\end{proof}

	\begin{claim}
		\label{claim_2} $\bar{p}=1$
	\end{claim}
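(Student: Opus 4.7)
The plan is to argue by contradiction. Suppose $\bar p<1$. The approach has three steps: (i) reduce the statement ``$\bar p=\sup SP(\alpha)$'' to a measure-zero event about signal paths on a short horizon $T$; (ii) use Claim \ref{claim_3} to transfer this to an equivalent null-set statement under pure Brownian motion; and (iii) derive a contradiction by showing that, under pure Brownian motion, the principal's posterior exceeds $\bar p$ with strictly positive probability.

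For (i), I fix any $p_0\in SP(\alpha)$ and a short horizon $T>0$. With probability at least $e^{-\lambda T}>0$ no Poisson stopping opportunity arrives in $[0,T]$, so $\mathbb{T}>T$ on this event. Because the posterior $p_t$ is a deterministic functional of the signal path $\{X_s\}_{s\le t}$ (given $p_0$ and the principal's conjecture $a(\cdot)$), the event
\begin{equation*}
A:=\Bigl\{\sup_{t\in[0,T]}p_t>\bar p\Bigr\}
\end{equation*}
is measurable with respect to the signal $\sigma$-algebra on $C([0,T])$ and has equilibrium probability zero by the definition of $\bar p$. For (ii), the equilibrium signal drift takes values in $[0,\mu]$ under either type, so Claim \ref{claim_3} yields mutual absolute continuity of the equilibrium signal law and the pure-Brownian-motion law $dX_t=\sigma\,dB_t$ on $C([0,T])$; hence $A$ is also null under the latter.

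For (iii), I compute the SDE for the log-odds $z_t:=\log(p_t/(1-p_t))$ under pure Brownian motion, noting that the principal still filters using her conjecture $a$. Substituting $dX_t=\sigma\,dB_t$ into the innovation $d\tilde B_t=dX_t/\sigma-\snr(p_ta(p_t)+1-p_t)\,dt$ and applying It\^o to $p\mapsto\log(p/(1-p))$ gives, after simplification,
\begin{equation*}
dz_t=\tfrac{\snr^2}{2}\bigl(1-a(p_t)^2\bigr)\,dt-\snr\bigl(1-a(p_t)\bigr)\,dB_t.
\end{equation*}
The drift is nonnegative, so $z_t\ge z_0+M_t$ with $M_t:=-\snr\int_0^t(1-a(p_s))\,dB_s$ a continuous martingale. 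It therefore suffices to show $\mathbb{P}(\sup_{t\le T}M_t>z(\bar p)-z_0)>0$, where $z(\bar p):=\log(\bar p/(1-\bar p))<\infty$. Using the companion statement $\sup_p a(p)<1$ of Lemma \ref{lem:fullspan}, I fix $\eta>0$ with $1-a(p)\ge\eta$ for every $p$, so that $\langle M\rangle_T\ge\snr^2\eta^2 T$ almost surely. Setting $\tau:=\inf\{t\ge 0:\langle M\rangle_t=\snr^2\eta^2 T\}\le T$ and invoking the Dambis--Dubins--Schwarz time change, $M_\tau$ is centred Gaussian with variance $\snr^2\eta^2 T$, so $\mathbb{P}(M_\tau>z(\bar p)-z_0)>0$. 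Since $\sup_{t\le T}M_t\ge M_\tau$, we obtain $\mathbb{P}_{\text{BM}}(A)>0$, contradicting (ii).

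The main obstacle is the deterministic lower bound $\langle M\rangle_T\ge\snr^2\eta^2 T$ in the last step, which requires the uniform estimate $\sup_p a(p)<1$. That uniform bound is the other half of Lemma \ref{lem:fullspan} and must be established in a preliminary claim---for instance by arguing that if $a\equiv 1$ on any interval then the belief would be frozen there and the agent could profitably deviate to $a=0$ to save the mimicking cost, and then leveraging right-continuity and piecewise-Lipschitzness to upgrade the pointwise bound to a uniform one. Once that uniform bound is in hand, the Girsanov--DDS chain above closes the argument.
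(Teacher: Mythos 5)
Your overall architecture---argue by contradiction, reduce to a null-set of signal paths on a finite horizon, transfer to the pure-Brownian law via the uniform Girsanov estimate (Claim OA.3), and then use a Dambis--Dubins--Schwarz time change to show the log-odds exceed any finite level with positive probability---is clean and would work if the uniform bound $\sup_{p}a(p)<1$ were available. You correctly identify this as the obstacle, and that is exactly where the gap is fatal.

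The uniform bound you need is not available at this stage, and the sketch you offer for establishing it as a preliminary claim does not close the hole. The ``frozen-belief'' deviation argument only rules out $a\equiv 1$ on an open interval (equivalently, gives the pointwise bound $a(p)<1$ at every $p$ in the belief span). It does not rule out $a(p)\uparrow 1$ as $p\to\bar{p}^{-}$ with $a(p)<1$ strictly. Right-continuity and piecewise Lipschitzness do not forbid this: on the maximal piece containing $\bar p$ the one-sided limit $\lim_{p\to\bar p^-}a(p)$ exists, but nothing you have established so far constrains it to be less than $1$, and beliefs at or above $\bar p$ are precisely the ones for which no equilibrium payoff argument is yet available. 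If $1-a(p)\asymp(\bar p-p)$ near $\bar p$ (the natural Lipschitz case), then $\langle M\rangle$ can accumulate arbitrarily slowly as $p_t\uparrow\bar p$ and the DDS variance lower bound $\snr^2\eta^2 T$ fails; $\bar p$ can then be an inaccessible barrier, exactly the scenario you must exclude. Note also the logical ordering in the paper: the claims $\lim_{p\to 1}a(p)<1$ and $\lim_{p\to 0}a(p)<1$ (Claims OA.6 and OA.7) are proved \emph{after} $\bar p=1$ and $\underline p=0$, and their proofs use the fact that beliefs near the boundary are actually visited. You cannot invoke them here without circularity.

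The paper's proof dodges this entirely by replacing your pointwise volatility bound with an equilibrium-payoff bound that holds in expectation. In the case $\bar p>p^*$, the (sub)martingale property implies that near $\bar p$ the posterior stays above $p^*$ with high probability, so the game ends at the next Poisson arrival almost surely; this forces the agent's continuation value near $\bar p$ to be close to the give-up value $(u+c)\tfrac{r_1}{r_1+\lambda}$, which in turn forces $\mathbb{E}^{NI}\!\int_t^{\min\{t+1,\mathbb{T}\}}(1-a(p_s))\,ds$ to be bounded away from zero. That gives a lower bound on the expected quadratic variation of $p$ over a unit window that is strictly positive, even though there is no claim about $\inf_p(1-a(p))$, and from there the mean-zero increment $p_{\min\{t+1,\mathbb{T}\}}-p_t$ must put positive mass above $\bar p$. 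The case $\bar p\leq p^*$ is handled separately (the agent best-responds with $a\equiv 0$ because no termination ever occurs, using Claim OA.3 directly). So your route is genuinely different in spirit---it tries to bound the diffusion coefficient pathwise, while the paper bounds the realized variance in expectation via the equilibrium value---but the pathwise bound is precisely what the statement you are trying to prove is needed to deliver, and the argument as written is circular.
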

	\begin{proof}
		Assume towards a contradiction that $\bar{p}<1$.
		
		\noindent{\textbf{Case 1: }} $\bar p>p^{\ast }$.
		
		The belief process $p_t$ is a martingale, so for every $\varepsilon >0$ there exists an $\epsilon>0$ such that if $p_{t}>\bar p-\epsilon $, then $%
		\mathbb{P}\left( \inf_{s>t}p_{s}>p^{\ast }+\varepsilon \phantom{.} | \phantom{.} \theta=NI\right)
		>1-\varepsilon$. This implies that $\mathbb{P}\left( \mathbb{%
			T=T}_{\lambda } \phantom{.} | \phantom{.} \theta=NI \right) >1-\varepsilon $ where $\mathbb{T}_{\lambda }$ is the
		arrival of the next Poisson-shock. Notice that for every $\eta >0$ we can
		take $\varepsilon _{\eta }>0$ such that the agent's payoff at $p_{t}$ is no more than $(u+c)\left( \frac{r_{1}}{r_{1}+\lambda }%
		\right) +\eta .$ This implies that for every $\nu >0$ we can take $\eta $
		small enough (taking $\varepsilon _{\eta }$ to satisfy the condition above) 
		so that $\mathbb{E}\left( \int_{t}^{\min \left\{ t+1,\mathbb{T}%
			\right\} }\mathbb{(}a(p_t))dt \phantom{.} | \phantom{.} \theta=NI \right) <\nu $. Hence, there exists $\varpi>0 $ such that $\mathbb{E} \left( \int_{t}^{\min
			\left\{ t+1,\mathbb{T}\right\} }(1-a(t))dt \phantom{.} | \phantom{.} \theta=NI \right) >\varpi$.

		Consider the law of motion \eqref{eq:dptagent} when $p_{t}\in \left[ p^{\ast },\bar p \right]$. Observe that the instantaneous variance of the belief process when the (noninvestible type) agent
		plays $a(\cdot)$ is bounded below by a positive constant times $(1-a_t)^{2}\min
		\left\{ p^{\ast }(1-p^{\ast }),\bar p(1-\bar p)\right\} ^{2}>0$. Because $\bar p<1$ and because  $p_{t}-p_{0}=\int_{0}^{t}dp_{t}$, we obtain  that $%
		\mathbb{E}\left[ \left\vert p_{\min \left\{ t+1,\mathbb{T}\right\}
		}-p_{t}\right\vert ^{2}\right] \geq \delta$ for some positive constant $%
		\delta$, hence $\mathbb{E}\left[ \left\vert p_{\min \left\{ t+1,\mathbb{T}%
			\right\} }-p_{t}\right\vert \right] \geq \delta$. Because $\left(
		p_{\min \left\{ t+1,\mathbb{T}\right\} }-p_{t}\right) $ has mean zero, we obtain that  $%
		\mathbb{E}\left[ \left( p_{\min \left\{ t+1,\mathbb{T}\right\}
		}-p_{t}\right) ^{+}\right] \geq \frac{\delta}{2}.$ Taking $\varepsilon <%
		\frac{\delta}{4}$ we conclude that $\mathbb{P}\left( p_{\min \left\{ t+1,%
			\mathbb{T}\right\} }>\bar p+\frac{\varepsilon }{2}\right) >0$, which is a
		contradiction.

		\noindent{\textbf{Case 2: }} $\bar{p}\leq p^*$.
		
		Assume $\bar{p}\leq p^{\ast }.$ Then, $b(p)=0$ for all $p\in SP(\alpha,\beta)$. Claim \ref*{claim_3} implies that, for every $T>0,$
		if the noninvestible agent plays $a_{t}=0$ for every $t\in \left[ 0,T\right]$, 
		then the relationship terminates before $T$ with probability zero, which
		implies that the agent's best response must satisfy $a_{t}=0$  for every $t>0$. This
		contradicts the assumption that $\bar{p}$ is never reached.
	\end{proof}
	
	\begin{claim}
		\label{claim_4}$\underline{p}=0$.
	\end{claim}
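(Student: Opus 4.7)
The plan is to assume $\underline{p} > 0$ for contradiction and construct, in each of two cases, an event of positive probability on which the belief reaches some $\underline{p} - \eta > 0$ before the game stops, contradicting $\underline{p} = \inf \text{SP}(\alpha)$. By Claims \ref{cl:cutoffstructure} and \ref{claim_2}, the equilibrium cutoff satisfies $p^* \in [\underline{p}, 1]$, so I would split according to whether $\underline{p} = p^*$ or $\underline{p} < p^*$.

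In the easy case $\underline{p} = p^*$, we have $b(p) = 1$ on all of $\text{SP}(\alpha) = (\underline{p}, 1)$, so the game stops at the first Poisson arrival regardless of the agent's action; optimality then forces $a(p) = 0$ on $\text{SP}(\alpha)$ (costly mimicking cannot delay stopping). The belief evolves as a martingale diffusion with volatility $\snr \gamma(p_t)$ uniformly bounded below near $\underline{p}$, and since $\underline{p}$ is a limit point of $\text{SP}(\alpha)$, the belief enters $(\underline{p}, \underline{p}+\epsilon)$ with positive probability at some stopping time $\tau$. The strong Markov property, the independence of the Poisson clock from the Brownian innovation, and a standard gambler's ruin argument in $(\underline{p} - \eta, \underline{p} + \epsilon)$ then produce a positive probability of exit at $\underline{p} - \eta$ before the next Poisson shock, a contradiction.

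The harder case $\underline{p} < p^*$ presents $(\underline{p}, p^*) \subset \text{SP}(\alpha)$ as a continuation region ($b = 0$), and the main obstacle is that a priori $a(p)$ might approach $1$ as $p \downarrow \underline{p}$, making the belief's volatility vanish at $\underline{p}$. I would rule this out by an optimality argument: if $a(\underline{p}) = 1$, then by right-continuity $a(p) \to 1$ as $p \to \underline{p}^+$, both coefficients in \eqref{eq:dptagent} vanish at $\underline{p}$, and the agent's HJB \eqref{eq:HJBagent} forces $v(\underline{p}) = u$; but deviating to $\tilde{a} = 0$ near $\underline{p}$ secures flow $u + c$ while the dynamics (whose drift and diffusion scale with $1 - a(p)$) remain arbitrarily slow as the starting point is taken close to $\underline{p}$, so the deviation value approaches $u + c > u$, contradicting optimality of $a$. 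Hence $a(\underline{p}) < 1$, and by right-continuity together with the piecewise-Lipschitz structure of $a \in \mathcal{P}$ there exist $\bar{a} < 1$ and $\delta > 0$ with $a(p) \leq \bar{a}$ on $[\underline{p}, \underline{p} + \delta)$. With the belief volatility uniformly bounded below on this neighborhood, the same strong Markov / gambler's ruin argument as in the easy case, now run in $(\underline{p} - \eta, \underline{p} + \delta/2) \subset (0, p^*)$, which lies entirely in the continuation region so the game cannot stop during the exit, yields a positive probability of downward exit at $\underline{p} - \eta$, again contradicting $\underline{p} = \inf \text{SP}(\alpha)$.
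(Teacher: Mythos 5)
Your overall structure mirrors the paper's: split on whether $\underline{p}$ is (weakly) above or strictly below $p^*$, and in the easy case use that $b\equiv 1$ on the span to force $a\equiv 0$ and then push the belief below $\underline{p}$. Your Case A corresponds to the paper's Case 2 (handled by reference to Claim~OA.2's Case 2), and your high-level plan for Case B — argue that the belief volatility cannot degenerate at $\underline{p}$, then run a gambler's-ruin exit — is the right target.

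The gap is in how you rule out the degeneracy at $\underline{p}$. You assert that if $a(p) \to 1$ as $p\downarrow\underline{p}$ then ``both coefficients in \eqref{eq:dptagent} vanish at $\underline{p}$, and the agent's HJB \eqref{eq:HJBagent} forces $v(\underline{p}) = u$.'' But the HJB does not give this directly: on any set where $a\in(0,1)$, the indifference condition from Claim~\ref{cl:aoptimal} is $v'(z) = -\tfrac{r_1 c}{\snr^2(1-a(z))}$, so as $a\to 1$ we have $v'\to -\infty$ (and correspondingly $v''$ blows up). The coefficients multiplying $v'$ and $v''$ in \eqref{eq:HJBagent} vanish, but the derivatives diverge, so the products are indeterminate — in fact, plugging in, one finds $v(z) = u + \tfrac{c}{2}\bigl(1-a(z)-a'(z)\bigr)$ (equation~\eqref{eq:a+a'}), so $v(\underline{p})$ is $u - \tfrac{c}{2}a'(z_\underline^+)$, and the HJB alone does not pin $a'(z_\underline^+)$ to zero. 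Your companion claim that ``the dynamics … remain arbitrarily slow'' similarly needs more than the pointwise vanishing of $(1-a)$: with $a$ Lipschitz near $\underline{p}$, the belief near $\underline{p}$ under a deviation behaves like a geometric Brownian motion whose escape time grows only logarithmically in the distance to $\underline{p}$, and it takes a boundary-layer analysis (or a change-of-measure argument) to convert this into the limit statements you use. These steps are fixable, but they are where the real work lives, and the sketch as written glosses over them.

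The paper avoids this entirely. In its Case 1 ($\underline{p}<p^*$), Steps 1--2 use the martingale/Doob bound plus a Girsanov comparison (Claim~\ref*{claim_3}) to show that, starting near $\underline{p}$, even a full-separation deviation keeps the game alive for a long time with high probability; Step 3 then converts this into a statement about the \emph{integrated} non-mimicking time (at least one unit out of any sufficiently long horizon), by comparing the agent's equilibrium payoff with the deviation payoff. That $L^1$ bound on $1-a(p_t)$ is what guarantees nontrivial quadratic variation of the belief in Step 4, without ever needing to determine the pointwise limit of $a$ at $\underline{p}$ or analyze the degenerate diffusion there. So the paper's route is more robust precisely where your sketch is thin. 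If you want to salvage your pointwise approach, you must either (i) justify $v(\underline{p})=u$ from a genuine path-space argument (the belief process near $\underline{p}$ has $q:=p-\underline{p}$ drift of order $q^2$ and diffusion of order $q$, so trajectories collapse to $\underline{p}$ with probability approaching one as $\epsilon\to0$, forcing the discounted flow toward $u$), rather than from the raw HJB, or (ii) follow the paper's Step 3 and argue in terms of the time average of $1-a$ rather than its pointwise limit.
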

	
	\begin{proof}
		Assume towards a contradiction that $%
		\underline{p}>0$. 
		
		\noindent{\textbf{Case 1:}} $\underline{p}<p^{\ast }$.

		\noindent{\textbf{Step 1:}} For every $\eta \in \left( 0,\frac{p^{\ast
			}-\underline{p}}{2}\right) $ there exists  $\epsilon >0$ such that if $%
		p_{t}< \underline{p}+\epsilon$ then
		%\[
		$\mathbb{P}\left(
		\sup_{s>t}p_{s}\geq p^{\ast } \phantom{.} | \phantom{.} \theta=NI\right) <\eta, $
		%\]
		and consequently  $\mathbb{P}%
		\left( \sup_{s>t}p_{s}\geq p^{\ast }\right) <\eta$.
		
		This follows from the martingale property of the belief process.
		
		\noindent{\textbf{Step 2:}} For every $\varepsilon >0$ and $T\in 
		\mathbb{N}
		$ there exists $\epsilon >0$  such that if $p_{t}<\underline{p}+\epsilon $ and if the agent plays a strategy $\tilde \sigma$ that plays $a_t=0$  for all
		$t>0$,  then
		$\mathbb{P}^{\tilde \sigma}\left( \left\{ \mathbb{T}<T\right\} \right) <\varepsilon$.
		
		This follows from Step 1 and Claim %\ref{claim_3}
		OA.3.
		
		\noindent{\textbf{Step 3:}} There exists $T^{\ast }\in 
		\mathbb{N}
		$  and $\varepsilon >0$  such that if $\mathbb{P}^{\tilde \sigma
		}\left( \left\{ \mathbb{T}<T^*\right\} \right) <\varepsilon $, then 
		
		\[ \mathbb{E}\left( \int_{t}^{\min \left\{
			t+T^{\ast },\mathbb{T}\right\} }(1-a_t)dt \phantom{.}|\phantom{.} \theta=NI\right) \geq 1.
		\]
		
		To see this, take an arbitrary $T\in \mathbb{N}.$ If $\mathbb{E}\left( \int_{t}^{\min \left\{ t+T,\mathbb{T}\right\}
		}\left( 1-a_t\right) dt\mid \theta =NI\right) <1$, then agent gets no
		more than% 
		\begin{equation*}
			(u+c)\left( \int_{0}^{1}r_1e^{-r_1t}ds\right) +u\left(
			\int_{1}^{T}r_1e^{-r_1t}ds\right) +(u+c)\left( \int_{T}^{\infty
			}r_1e^{-r_1t}ds\right) <(u+c)\left( \int_{0}^{T}re^{-r_1t}ds\right) 
		\end{equation*}%
		for $T$ large enough. We can thus let $T^{\ast }$ be the smallest positive
		integer satisfying the inequality above and then pick $\varepsilon $ so that%
		\begin{equation*}
			(u+c)\left( \int_{0}^{1}r_1e^{-r_1t}ds\right) +u\left( \int_{1}^{T^{\ast
			}}r_1e^{-r_1t}ds\right) +(u+c)\left( \int_{T^{\ast }}^{\infty }r_1e^{-r_1t}ds\right)
			<\left( 1-\varepsilon \right) (u+c)\left( \int_{0}^{T^{\ast
			}}r_1e^{-r_1t}ds\right) ,
		\end{equation*}
		in which case $\mathbb{E}\left( \int_{t}^{\min \left\{ t+T^*,\mathbb{T}%
			\right\} }\left( 1-a_t\right) dt\mid \theta =NI\right) <1$ and $\mathbb{P}^{\tilde{%
				\sigma}}\left( \left\{ \mathbb{T}<T^{\ast }\right\} \right) <\varepsilon $
		would imply that $\tilde{\sigma}$ is a profitable deviation.
		
		\noindent{\textbf{Step 4:}} There exists $\epsilon ^{\ast }>0$ and $%
		\varepsilon ^{\ast }>0$  such that if:
		
		\begin{enumerate}
			\item $p_{t}<\underline{p}+\epsilon ^{\ast }$,
			
			\item $\mathbb{P}\left( \sup_{s>t}p_{s}\geq p^{\ast }\right)
			<\varepsilon ^{\ast }$,
			
			\item $\mathbb{E}\left( \int_{t}^{\min \left\{
				t+T^{\ast },\mathbb{T}\right\} }(1-a(t))dt \phantom{.} | \phantom{.} \theta=NI\right) \geq 1$,
			
		\end{enumerate}
		then $\mathbb{P}\left( \inf_{s>t}p_{s}<\underline{p}\right) >0$.
		
		The argument is analogous to that used in the Case 1 of Claim
		\ref*{claim_2}'s proof, and is thus omitted.
		
		\noindent{\textbf{Step 5:}} Step 3 guarantees that we can find $\varepsilon ^{\ast }>0$
		\ and $T^{\ast }$ such that if $\mathbb{P}^{\tilde{\sigma}}\left( \left\{ \mathbb{T}%
		<T^{\ast }\right\} \right) <\varepsilon ^{\ast }$ then   $\mathbb{E}\left( \int_{t}^{\min \left\{
			t+T^{\ast },\mathbb{T}\right\} }\left( 1-a_t\right) dt\mid \theta
		=NI\right) \geq 1.$ Steps 1 and 2 guarantee that we can find $\epsilon
		^{\ast }>0$ such that if $p_{t}<$ $\underline{p}+\epsilon ^{\ast }$ then $\mathbb{P}^{%
			\tilde{\sigma}}\left( \left\{ \mathbb{T}<T^{\ast }\right\} \right)
		<\varepsilon ^{\ast }.$ Therefore, using Step 4, we conclude that  $\mathbb{P}\left( \inf_{s>t}p_{s}<\underline{p}\right) >0$, which contradicts
		the definition of $\underline{p}.$
		
		\noindent{\textbf{Case 2:}} $p^{\ast }\leq \underline{p}$.
		This case is analogous to the Case 2 of Claim  \ref*{claim_2}'s proof, and is thus omitted.
	\end{proof}
	
	\begin{claim}\label{claim_5}
		In any Markovian equilibrium policy profile $(a,b)$, $\lim_{p \to 1}a(p)<1.$
	\end{claim}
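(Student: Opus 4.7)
The plan is to argue by contradiction, supposing $\lim_{p\to 1}a(p)=1$. A preliminary remark: since $a\in\mathcal{P}$ is piecewise Lipschitz, on the rightmost piece of its partition $a$ agrees with a Lipschitz function, so the limit as $p\uparrow 1$ genuinely exists and for every $\epsilon>0$ there is $\delta>0$ with $a(p)>1-\epsilon$ throughout $[1-\delta,1)$. First I would dispose of the degenerate subcase $p^*=1$: then $b\equiv 0$, so the noninvestible agent strictly prefers $\tilde{a}\equiv 0$ (yielding flow payoff $u+c$ forever), forcing $a\equiv 0$ in equilibrium and contradicting $a(p)\to 1$. Henceforth assume $p^*<1$ and pick $\delta$ small enough that $1-\delta>p^*$.

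The core of the argument is a value-function comparison. The agent can always deviate to $\tilde{a}\equiv 0$; using $\beta_s\leq 1$ in the discount exponent $\Lambda_1$ yields the unconditional lower bound
\[
V^{\text{dev}}(p_0)\;\geq\;r_1(u+c)\int_0^{\infty}e^{-(r_1+\lambda)t}\,dt\;=\;\frac{r_1(u+c)}{r_1+\lambda},
\]
so equilibrium optimality forces $V(p_0)\geq\tfrac{r_1(u+c)}{r_1+\lambda}$. On the other hand, for $p_0\in[1-\delta,1)$ let $\tau:=\inf\{t:p_t\leq 1-\delta\}$. On $[0,\tau]$ the belief lies in the termination region, so $\beta=1$ and the flow payoff is at most $r_1(u+\epsilon c)$; combined with the trivial continuation bound $V(p_\tau)\leq u+c$,
\[
V(p_0)\;\leq\;\frac{r_1(u+\epsilon c)}{r_1+\lambda}\bigl(1-\mathbb{E}[e^{-(r_1+\lambda)\tau}]\bigr)+(u+c)\,\mathbb{E}[e^{-(r_1+\lambda)\tau}].
\]
Combining with the deviation bound and choosing $\epsilon$ small enough, one obtains a constant $c_0=c_0(\epsilon)>0$ such that $\mathbb{E}[e^{-(r_1+\lambda)\tau}]\geq c_0$ uniformly over $p_0\in[1-\delta,1)$.

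The contradiction will come from showing $\mathbb{E}[e^{-(r_1+\lambda)\tau}]\to 0$ as $p_0\uparrow 1$. Under the equilibrium dynamics of the noninvestible type, the belief SDE has drift $\snr^2(1-a)^2(1-p)\gamma(p)$ and diffusion $\snr(1-a)\gamma(p)$, both vanishing near $p=1$ since $\gamma(p)=p(1-p)$ and $1-a(p)$ are small there. Taking $\phi(p)=(1-p)^2$ as a Lyapunov candidate, a direct computation yields
\[
\mathcal{L}\phi(p)-(r_1+\lambda)\phi(p)\;=\;\bigl[\snr^2(1-a(p))^2\,p\,(3p-2)-(r_1+\lambda)\bigr](1-p)^2,
\]
which is nonpositive on $[1-\delta,1)$ once $\epsilon<\sqrt{(r_1+\lambda)/\snr^2}$. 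Then It\^o's formula and optional stopping at $\tau$ give
\[
\delta^2\,\mathbb{E}[e^{-(r_1+\lambda)\tau}]\;\leq\;\mathbb{E}[e^{-(r_1+\lambda)\tau}\phi(p_\tau)]\;\leq\;\phi(p_0)=(1-p_0)^2\;\to\;0,
\]
contradicting the uniform lower bound $c_0>0$.

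The main obstacle is the Lyapunov verification: because $\mathcal{L}\phi$ and $(r_1+\lambda)\phi$ have the same order $(1-p)^2$, the small factor $(1-a)^2\leq\epsilon^2$ must be made to do the work, which is why $\epsilon$ has to be taken sufficiently small. A cleaner alternative would be to pass to the log-odds coordinate $z=\log\frac{p}{1-p}$, where near $z=+\infty$ the process is locally a Brownian motion with small positive drift and small diffusion, so the classical Laplace-transform formula for hitting times of drifted Brownian motion gives $\mathbb{E}[e^{-(r_1+\lambda)\tau}]\to 0$ as $z_0\to+\infty$ directly. Either route delivers the contradiction and yields $\lim_{p\to 1}a(p)<1$.
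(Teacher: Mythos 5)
Your argument is correct, and it reaches the conclusion by a genuinely different route than the paper's. Both proofs start from the same anchor: the noninvestible type can always play $\tilde{a}\equiv 0$ and, using $\beta_s\leq 1$, secure at least $\tfrac{r_1}{r_1+\lambda}(u+c)$. The paper's proof then works directly with a fixed time horizon $T_\varepsilon$ in $z$-space: since the log-likelihood process has bounded drift and diffusion, one can start high enough so that with probability at least $1-\varepsilon$ the belief stays above a level $\hat{Z}$ where $a>1/2$ for the whole horizon; conditional on this, a Poisson stopping time arrives by $T_\varepsilon$ with probability $1-\varepsilon$, and during that window the flow payoff is at most $r_1(u+\tfrac{c}{2})$. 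Averaging gives a payoff strictly below the floor $\tfrac{r_1}{r_1+\lambda}(u+c)$ for small $\varepsilon$. You instead introduce the hitting time $\tau$ of $\{p\leq 1-\delta\}$ as the pivot. Decomposing the value at $\tau$ and comparing against the same floor payoff yields a uniform lower bound $\mathbb{E}\bigl[e^{-(r_1+\lambda)\tau}\bigr]\geq c_0(\epsilon)>0$ for all starting points in $[1-\delta,1)$; your $c_0$ reduces to $\tfrac{r_1(1-\epsilon)c}{r_1c(1-\epsilon)+\lambda(u+c)}>0$, and it is indeed positive for any $\epsilon<1$. You then exhibit the Lyapunov function $\phi(p)=(1-p)^2$, and the computation $\mathcal{L}\phi-(r_1+\lambda)\phi=\bigl[\snr^2(1-a(p))^2p(3p-2)-(r_1+\lambda)\bigr](1-p)^2$ is correct under the equilibrium dynamics of the noninvestible type; since $p(3p-2)<1$ on $(0,1)$, the bracket is nonpositive once $\epsilon<\sqrt{r_1+\lambda}/\snr$, which you can impose because the contradiction hypothesis lets you choose $\epsilon$ arbitrarily small and then size $\delta$ accordingly (and shrink $\delta$ so that $1-\delta>p^*$). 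The resulting supermartingale bound $\mathbb{E}\bigl[e^{-(r_1+\lambda)\tau}\bigr]\leq(1-p_0)^2/\delta^2\to 0$ as $p_0\uparrow 1$ then contradicts the uniform lower bound $c_0$. Your treatment of the degenerate case $p^*=1$ (forcing $b\equiv 0$ and hence $a\equiv 0$) parallels the paper's footnote and is sound. The trade-off between the two approaches: the paper's fixed-horizon argument only needs $a>\tfrac{1}{2}$ near $p=1$ and avoids Lyapunov calculus, at the cost of a more delicate probability bookkeeping over the $(1-\varepsilon)^2$ event; yours isolates the escape probability $\mathbb{E}[e^{-(r_1+\lambda)\tau}]$ as the operative quantity and kills it with a one-line Lyapunov estimate, but needs the sharper $a>1-\epsilon$ with $\epsilon$ tuned to the model primitives.
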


	\begin{proof}
		Assume towards a contradiction that $\lim_{p \to 1}a(p)=1.$ Fix an $\varepsilon >0$. Take $T_{\varepsilon }\in\mathbb{N}$ such that $e^{-r_{1}T_{\varepsilon }}<\varepsilon $ and $\mathbb{P}\left( 
		\mathbb{T}_{\lambda }>T_{\varepsilon }\right) <\varepsilon $ where $\mathbb{T%
		}_{\lambda }$ is the random time that the next stopping opportunity arrives. Let $Z(p):=\ln(p/(1-p))$, for $p\in (0,1)$. Under the contradiction assumption, take $\hat{Z}>Z(p^{*})$ such that $z>\hat{Z}$ implies $a (z)>\frac{1}{2}$.\footnote{%
			Here, $p^*$ is the principal's equilibrium termination threshold delivered by Claim \ref*{cl:cutoffstructure}. It satisfies
			$p^*<1$, for otherwise the principal would never stop the game and the agent would choose $a_t=0$, to which the principal's best reply would be to stop when $p$ is close to 1.}  Recall from the law of motion \eqref{eq:dZt} that $Z_t$ has bounded drift. So there exists $Z_{\varepsilon }>\hat{Z}$ such that $Z_{0}\geq Z_{\varepsilon }$
		implies $\mathbb{P}\left( \inf_{t<T_{\varepsilon }}Z_{t}<\hat{Z}\mid \theta
		=NI\right) <\varepsilon.$
		
		Under the contradiction assumption, if $Z_{0}\geq
		Z_{\varepsilon }$, the payoff of the agent is no greater than 
		\begin{eqnarray*}
			&&\mathbb{P}\left( \mathbb{T}_{\lambda }\leq T_{\varepsilon }\right) \left[ 
			\begin{array}{c}
				\mathbb{P}\left( \inf_{t<T_{\varepsilon }}Z_{t}<\hat{Z}\mid \theta
				=NI\right) (u+c) \\ 
				+\mathbb{P}\left( \inf_{t<T_{\varepsilon }}Z_{t}>\hat{Z}\mid \theta
				=NI\right) \int_{0}^{T_{\varepsilon }}\lambda e^{-\lambda t}\left(
				1-e^{-r_1t}\right) \left[ \frac{1}{2}u+\frac{1}{2}(u+c)\right] dt%
			\end{array}%
			\right]  \\
			&&+\mathbb{P}\left( \mathbb{T}_{\lambda }>T_{\varepsilon }\right) (u+c) \\
			&<&\left( 1-\varepsilon \right) ^{2}\left( \frac{r_1}{r_1+\lambda }\right)
			\left( \frac{1}{2}u+\frac{1}{2}(u+c)\right) +\left[ 1-\left(
			1-\varepsilon \right) ^{2}\right] (u+c) \\
			&<&\left( \frac{r_1}{r_1+\lambda }\right) (u+c)
		\end{eqnarray*}%
		for $\varepsilon $ small, where the expression in the first line uses the fact that $a(z)>\frac{1}{2}$ for $z>\hat Z$. This leads to a contradiction as the agent can guarantee him a payoff of $\left( \frac{r_1}{r_1+\lambda }\right) (u+c)$ by never mimicking. 
	\end{proof}
	
	\begin{claim}\label{claim_6}
		In any Markovian equilibrium policy profile $(a,b)$, $\lim_{p \to 0}a(p)<1$.
	\end{claim}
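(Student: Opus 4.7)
The plan is to mirror the contradiction argument of Claim~\ref{claim_5}, but now exploiting the fact that in the region $p<p^\ast$ the principal never terminates (by Claim~\ref{cl:cutoffstructure}), so the relevant comparison is between the noninvestible type's equilibrium payoff (which is close to $u$ if $a\to 1$) and the payoff from the deviation $\tilde\alpha\equiv 0$ (which is close to $u+c$ because no stopping occurs during the long time the posterior lingers near zero).

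Because $a\in\mathcal{P}$ is right-continuous and piecewise Lipschitz, its leftmost piece contains an interval $(0,x_{2})$ on which $a$ is Lipschitz, so $\lim_{p\downarrow 0}a(p)$ exists. Assume toward a contradiction that this limit equals $1$. Fix a small $\varepsilon>0$, set $T_\varepsilon:=(1/r_1)\log(1/\varepsilon)$ so $e^{-r_1 T_\varepsilon}=\varepsilon$, and choose $\hat Z<Z(p^\ast)$ sufficiently negative that $a(z)>1-\varepsilon$ for every $z\leq\hat Z$. Since $z\leq\hat Z<Z(p^\ast)$ implies $p<p^\ast$, Claim~\ref{cl:cutoffstructure} gives $b(p)=0$ throughout $(-\infty,\hat Z]$, so no termination can occur while $Z_t$ stays in this region.

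In the first step I show $v(Z_\varepsilon)\leq u+O(\varepsilon)$ with $Z_\varepsilon:=\hat Z-M$ for a fixed constant $M>0$. Under the equilibrium strategy ($\tilde a_t=a(Z_t)$), the drift and diffusion in \eqref{eq:dZt} on $\{z\leq\hat Z\}$ are bounded by $\tfrac{1}{2}\snr^2(1-a)^2\leq\tfrac{1}{2}\snr^2\varepsilon^2$ and $\snr(1-a)\leq\snr\varepsilon$. A direct calculation shows that $e^{-r_1 t}\exp(\theta(Z_t-\hat Z))$ is a nonnegative supermartingale for $\theta=\theta_1/\varepsilon$ whenever $\theta_1\leq\sqrt{2r_1/\snr^2}$, since then $\mu_Z\theta+\tfrac{1}{2}\sigma_Z^2\theta^2\leq r_1$. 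Optional stopping at the hitting time $\tau^{\mathrm{eq}}$ of $\hat Z$ yields $\mathbb{E}^{NI}[e^{-r_1\tau^{\mathrm{eq}}}]\leq e^{-\theta_1 M/\varepsilon}$, and Markov's inequality then gives $\mathbb{P}^{NI}(\tau^{\mathrm{eq}}\leq T_\varepsilon)\leq e^{r_1 T_\varepsilon}\cdot e^{-\theta_1 M/\varepsilon}<\varepsilon$ for all sufficiently small $\varepsilon$. On the complementary high-probability event the flow payoff never exceeds $u+\varepsilon c$ during $[0,T_\varepsilon]$, hence $v(Z_\varepsilon)\leq u+\varepsilon c+2\varepsilon(u+c)$.

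In the second step I bound the payoff from the deviation $\tilde\alpha_t\equiv 0$ from below. The substitution $\tilde a_t=0$ in \eqref{eq:dZt} yields drift $\tfrac{1}{2}\snr^2(1-a)(1+a)\leq\snr^2\varepsilon$ with the same diffusion bound $\snr\varepsilon$. An analogous supermartingale with exponent $\theta=\theta_0/\varepsilon$ for any $\theta_0\leq\sqrt{1+2r_1/\snr^2}-1$ satisfies $\snr^2\theta_0+\tfrac{1}{2}\snr^2\theta_0^{2}\leq r_1$ and thus gives $\mathbb{E}^{NI}[e^{-r_1\tau^{\mathrm{dev}}}]\leq e^{-\theta_0 M/\varepsilon}<\varepsilon$ for the same fixed $M$ and $\varepsilon$ small. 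Because no termination occurs while $Z_t\leq\hat Z$ and the flow under $\tilde\alpha\equiv 0$ equals $u+c$, the deviation payoff is at least $(u+c)(1-\varepsilon)$. Agent optimality then forces $u+\varepsilon c+2\varepsilon(u+c)\geq v(Z_\varepsilon)\geq(u+c)(1-\varepsilon)$; sending $\varepsilon\downarrow 0$ yields $u\geq u+c$, a contradiction. The main obstacle is producing a single constant $M$ for which both supermartingale estimates hold simultaneously for all small $\varepsilon$, since Step~1 requires the equilibrium dynamics to stay in $(-\infty,\hat Z]$ over a horizon of order $\log(1/\varepsilon)$ while Step~2 requires the expected discounted hitting time of the faster-drifting deviation dynamics to be smaller than $\varepsilon$. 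Both requirements are met by any fixed $M>0$ because in this region the drift and diffusion under either strategy vanish at rate at most $\snr\varepsilon$, so the natural exponential decay scale is $e^{-M/\varepsilon}$, which easily beats any polynomial-in-$\varepsilon$ target.
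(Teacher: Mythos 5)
Your proof is correct and reaches the same contradiction as the paper, but by a genuinely different technical route. The paper's proof of this claim takes $\hat{Z}$ so that $a>\tfrac12$ below $\hat{Z}$, notes that $Z_t$ has a bounded drift so the process starting far below $\hat{Z}$ stays below $\hat{Z}$ for time $T_\varepsilon$ with high probability, and then invokes Claim OA.3 (a Girsanov/uniform-integrability estimate comparing measures induced by bounded-drift controls) to transfer this high-probability confinement from the equilibrium dynamics to the deviation $\tilde\alpha\equiv 0$. Your argument instead pushes $\hat Z$ further so that $a>1-\varepsilon$, which makes both the drift and the diffusion of $Z_t$ vanish at rate $O(\varepsilon)$ in this region, and you exploit this by constructing explicit exponential supermartingales $e^{-r_1 t}e^{\theta(Z_t-\hat Z)}$ with $\theta\asymp 1/\varepsilon$ to get hitting-time estimates of size $e^{-\mathrm{const}/\varepsilon}$ directly for \emph{each} control; this is self-contained and avoids the change-of-measure lemma entirely, at the cost of having to re-derive the drift bound from \eqref{eq:dZt} separately for the deviation. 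Your approach also tightens the flow-payoff bound in the continuation region to $u+\varepsilon c$ (versus the paper's $u+\tfrac{c}{2}$), which makes the final comparison against $(u+c)(1-\varepsilon)$ cleaner. Two small points of hygiene: the supermartingale inequality in your Step~1 needs $\theta_1<\sqrt{2r_1/\snr^2}$ strictly, since the residual $\tfrac12\snr^2\varepsilon\theta_1$ term is positive; and since $\hat Z$ moves to $-\infty$ with $\varepsilon$, it is worth saying explicitly that the contradiction is obtained at the ($\varepsilon$-dependent) state $Z_\varepsilon=\hat Z-M$, which always lies in $(0,p^\ast)$ and is therefore an on-path state by Lemma~\ref{lem:fullspan}.
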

	
	\begin{proof}
		Assume towards a contradiction that $\lim_{p\to 0}a(p)=1.$ Fix an $\varepsilon >0.$ Under the contradiction assumption, take $\hat{Z}<Z(p^*)$ such
		that $Z<\hat{Z}$ implies $a (Z)>\frac{1}{2}$.\footnote{%
			Note that $p^*\geq p^{**}>0$.} By Claim \ref*{claim_3} and because $Z_t$ has bounded drift, there exists $Z_{\varepsilon }<\hat Z$ such that  $
		Z_{0}<Z_{\varepsilon }$ implies $\mathbb{P}\left( \sup_{t<T_{\varepsilon
		}}Z_{t}\geq \hat{Z}\mid \theta =NI\right) <\varepsilon $ and $\mathbb{P}^{%
			\tilde{\sigma}}\left( \sup_{t<T_{\varepsilon }}Z_{t}\geq \hat{Z}\mid \theta
		=NI\right) <\varepsilon $ where $\tilde{\sigma}$ is the strategy that
		prescribes $a_{t}=0$ for every $t>0.$ 
		
		Under the contradiction assumption, if $Z_{0}<Z_{%
			\varepsilon }$, the payoff of the agent is no greater than%
		\begin{eqnarray*}
			&&\mathbb{P}\left( \sup_{t<T_{\varepsilon }}Z_{t}<\hat{Z}\mid \theta
			=NI\right) \left( \varepsilon \left[ \left( \frac{1}{2}\right) (u+c)+\left( 
			\frac{1}{2}\right) u \right] +\left( 1-\varepsilon \right) (u+c) \right) 
			\\
			&&+\mathbb{P}\left( \sup_{t<T_{\varepsilon }}Z_{t}\geq \hat{Z}\mid \theta
			=NI\right) (u+c) \\
			&<&\left( 1-\varepsilon \right) \left( \varepsilon \left[ \left( \frac{1}{2}%
			\right) (u+c)+\left( \frac{1}{2}\right) u\right] +\left( 1-\varepsilon
			\right) (u+c)\right) +\varepsilon (u+c).
		\end{eqnarray*}
		On the other hand, the strategy  $\tilde{\sigma}$ yields a payoff at least
		as large as $\left( 1-\varepsilon \right) ^{2}(u+c).$ So taking $%
		\varepsilon $ such that 
		\begin{equation*}
			\left( 1-\varepsilon \right) \left( \varepsilon \left[ \left( \frac{1}{2}%
			\right) (u+c)+\left( \frac{1}{2}\right) u\right] +\left( 1-\varepsilon
			\right) (u+c)\right) +\varepsilon (u+c)<\left( 1-\varepsilon \right)
			^{2}(u+c),
		\end{equation*}%
		we conclude that the agent can profitably deviate by playing $\tilde{\sigma}.$
	\end{proof}

	\begin{proof}[Proof of Lemma \ref{lem:fullspan}] 
		That $SP(\alpha)=(0,1)$ follows directly from Claims \ref*{claim_2} and \ref*{claim_4}. Consequently, $a(p)<1$ for all $p\in (0,1)$, for otherwise there would be an absorbing state, contradicting $SP(\alpha)=(0,1)$. By definition of an Markovian equilibrium, $a(\cdot)$ is piecewise Lipschitz, so $a(\cdot)$ only has finite discontinuities and its one-sided limit always exists. Hence, $\sup_{p\in (0,1)}a(p)<1$ if and only if $\lim_{p\to 1}a(p)<1$ and $\lim_{p\to 0}a(p)<1$, so by Claims \ref*{claim_5} and \ref*{claim_6} we are done.
	\end{proof}
	
	\begin{proof}[Proof of Lemma \ref{lemma:principalbestreply}]
		Take any Markov equilibrium $(a,b)$. By Claim \ref*{cl:cutoffstructure} and Lemma \ref{lem:fullspan}, $b$ has a cutoff structure on $(0,1)$. So we only need to argue that the cutoff belief $p^*$ satisfies $0<p^*<1$. First, since $R(p)<0$ for all $p\in (0,p^{**})$, principal's optimality requires that $b(p)=0$ for all such $p$, and so $p^*\geq p^{**}>0$. Moreover, since $W$ is bounded above by $\frac{\lambda}{r_2+\lambda}w_{NI}$,\footnote{Recall that $W(p)$ is the principal's value at $p$ conditional on the stopping opportunity \textit{not} arriving.} $R(p)>W(p)$ for all $p\in (p_H,1)$, so principal's optimality requires that $b(p)=1$ for all $p\in (p_H,1)$, and thus $p^*\leq p_H<1$.
	\end{proof}
	
	\subsubsection{Properties of $a_+^*(x)$ and $a_-^*(x)$}
	\begin{proof}[Proof of Claim \ref{cl:v*unique}]
		Recall the definition of $a_-^*(\cdot)$ and $a_+^*(\cdot)$ in \eqref{eq:a*-} and \eqref{eq:a*+}. In this proof, we focus on the properties of $a_+^*(\cdot)$, which is the more difficult case. The properties of $a_-^*(\cdot)$ can be established analogously.
		
		For ease of notation, define
		$q:=\frac{x-\frac{r_1}{r_1+\lambda}u}{\sqrt{\kappa_R}}$ and $q_{R}:=\frac{v_{R}-\frac{r_{1}}{r_{1}+\lambda}u}{\sqrt{\kappa_{R}}}.$
		Consequently, we can rewrite $a_+^*(\cdot)$ as
		$$a_+^*(x(q))=1-\frac{\frac{\sqrt{2(r_1+\lambda)}}{\snr}\phi(q)}{\frac{\sqrt{2(r_1+\lambda)}}{\snr}\phi(q_R)+\Phi(q_R)-\Phi(q)}.$$
		
		First, notice that
		\begin{equation}\label{eq:qR>mu/r}
			q_R=\sqrt{\frac{2}{r_{1}+\lambda}}\left(1-\frac{r_{1}+\lambda}{\xi_{R}\snr^{2}}\right)\snr=\sqrt{\frac{2}{r_{1}+\lambda}}\left(1-\frac{1+\xi_R}{2}\right)\snr>\frac{\snr}{\sqrt{2(r_1+\lambda)}},
		\end{equation}
		where the first equality uses the definition of $v_R$ in \eqref{eq:vR} and the fact that $\kappa_R=\frac{r_1^2c^2}{2(r_1+\lambda)\snr^2}$, and the rest follows from the fact that $\xi_R$ is the positve root of $\xi^2+\xi=\frac{2(r_1+\lambda)}{\snr^2}$ so that $\frac{r_1+\lambda}{\xi_R\snr^2}=\frac{1+\xi_R}{2}<\frac{1}{2}$.
		
		Next, notice that given \eqref{eq:qR>mu/r}, we have
		\begin{equation}\label{eq:decreasing}
			\left[\frac{\sqrt{2\left(r_{1}+\lambda\right)}}{\snr}\phi\left(q\right)+\Phi\left(q\right)\right]^{\prime}=\phi\left(q\right)\left[1-\frac{\sqrt{2\left(r_{1}+\lambda\right)}}{\snr}q\right]<0,\phantom{.} \forall q\in [q_R,\infty).
		\end{equation}
		
		Now, let us take derivative of $1-a_+^*(x(q))$ with respect to $q$. Using the fact that $\phi'(q)=-q\phi(q)$, we have
		\begin{align*}
			\left[\frac{\frac{\sqrt{2\left(r_{1}+\lambda\right)}}{\snr}\phi\left(q\right)}{\frac{\sqrt{2\left(r_{1}+\lambda\right)}}{\snr}\phi\left(q_{R}\right)+\Phi\left(q_{R}\right)-\Phi\left(q\right)}\right]^{\prime} & =\frac{-\frac{\sqrt{2\left(r_{1}+\lambda\right)}}{\snr}q\phi\left(q\right)\left[\frac{\sqrt{2\left(r_{1}+\lambda\right)}}{\snr}\phi\left(q_{R}\right)+\Phi\left(q_{R}\right)-\Phi\left(q\right)\right]+\frac{\sqrt{2\left(r_{1}+\lambda\right)}}{\snr}\phi^{2}\left(q\right)}{\left[\frac{\sqrt{2\left(r_{1}+\lambda\right)}}{\snr}\phi\left(q_{R}\right)+\Phi\left(q_{R}\right)-\Phi\left(q\right)\right]^{2}}\\
			& =\frac{\sqrt{2\left(r_{1}+\lambda\right)}}{\snr}\phi\left(q\right)\frac{-q\left[\frac{\sqrt{2\left(r_{1}+\lambda\right)}}{\snr}\phi\left(q_{R}\right)+\Phi\left(q_{R}\right)-\Phi\left(q\right)\right]+\phi\left(q\right)}{\left[\frac{\sqrt{2\left(r_{1}+\lambda\right)}}{\snr}\phi\left(q_{R}\right)+\Phi\left(q_{R}\right)-\Phi\left(q\right)\right]^{2}}.
		\end{align*}
		So the sign of $a_+^{*\prime}(x(q))$ is the same as that of 
		$$S(q):=q\left[\frac{\sqrt{2\left(r_{1}+\lambda\right)}}{\snr}\phi\left(q_{R}\right)+\Phi\left(q_{R}\right)-\Phi\left(q\right)\right]-\phi\left(q\right).$$
		Note that, given condition \eqref{eq:qR>mu/r}, we have 
		$
		S\left(q_{R}\right)  =\left[\frac{\sqrt{2\left(r_{1}+\lambda\right)}}{\snr}q_{R}-1\right]\Phi\left(q_{R}\right)>0.
		$
		Moreover, given condition \eqref{eq:decreasing}, we have 
		$
		S'(q)=\left[\frac{\sqrt{2\left(r_{1}+\lambda\right)}}{\snr}\phi\left(q_{R}\right)+\Phi\left(q_{R}\right)-\Phi\left(q\right)\right]>0.
		$
		Therefore, $S(q)>0$ for all $q\geq q_R$, implying $a_+^*(\cdot)$ is strictly increasing on $[v_R,v_L]$. 
		
		Finally, inspecting \eqref{eq:a*+} we have $a_+^*(v_R)=0$; applying condition \eqref{eq:decreasing} we have $0<a_+^*(v_R)<1$.
	\end{proof}

	\subsubsection{Translation Invariance}
	%The following two lemmas demonstrate an interesting property of the players' best replies to each other, which are useful in proving Theorem \ref{uniqueness}.
	
	The analysis in Claim \ref{cl:ashaper} implies that value function and pseudo-best reply of the agent is \textit{translation-invariant}: Should the principal displace her threshold from \(z^*\) to say \(z^* + \epsilon\), then the agent's previous value and mixing behavior at \(z\) would coincide with the new value and mixing behavior at \(z+\epsilon\). More formally, this an implication of the (more general) translation invariance of agent's payoff function in the \(z\)-space.
	\begin{lemma}[Translation Invariance]
		\label{lemma:translation}
		Fix an arbitrary strategy profile \(\left(\alpha,\beta\right)\) and some \(\epsilon \in \mathbb{R}\). Consider a new profile \(\left(\alpha',\beta'\right)\), defined by \(\alpha'_t :=\left.\alpha_t\middle| \left\{Z_s - \epsilon\right\}_{s \leq t}\right.\) and \(\beta'_t:=\left.\beta_t \middle| \left\{Z_s - \epsilon\right\}_{s \leq t}\right.\). Then, the payoff of the noninvestible agent satisfies
		\[
		\mathbb{E}\left\{U_1(t,\alpha,\beta)\middle| Z_0 = z\right\} = \mathbb{E}\left\{U_1(t,\alpha',\beta')\middle| Z_0 = z+\epsilon \right\},
		\]
		almost surely, for all \(t \geq 0\) and \(z \in \mathbb{R}\).
	\end{lemma}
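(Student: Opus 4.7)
The plan is to establish the equality via a pathwise coupling argument that exploits the fact that the law of motion of $Z_t$ is translation invariant in the state. Specifically, I will fix a probability space supporting both the Brownian motion $\{B_t\}_{t\geq 0}$ and the Poisson stopping-opportunity process $\{J_t\}_{t\geq 0}$, and construct the two processes (one started at $z$ under $(\alpha,\beta)$ and the other started at $z+\epsilon$ under $(\alpha',\beta')$) using the \textit{same} realizations of $B$ and $J$. Let $Z_t$ denote the first process; I will show that the second process, call it $Z_t'$, satisfies $Z_t'=Z_t+\epsilon$ pathwise.

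The key observation, which I would verify first, is that equation \eqref{eq:dZt}—conditional on the noninvestible type—has drift and diffusion coefficients depending only on the instantaneous actions $a_t$ and $\tilde{a}_t$ (the principal's conjecture and the agent's actual mimicking choice), and not on the level of $Z_t$. Hence the SDE is invariant under the shift $Z_t\mapsto Z_t+\epsilon$, provided the same actions are played. The second step is to check that under the coupling $Z_t'=Z_t+\epsilon$, the prescribed actions indeed match: by definition of $\alpha'$, the action at time $t$ along the path $\{Z_s'\}_{s\leq t}$ is $\alpha_t\bigl|\{Z_s'-\epsilon\}_{s\leq t}=\alpha_t\bigl|\{Z_s\}_{s\leq t}$, which is exactly the action taken in the original process at time $t$; the analogous identity holds for $\beta'$, using that the filtration generated by $\{X_s,J_s\}_{s\leq t}$ is (a.s.) equivalent to the filtration generated by $\{Z_s,J_s\}_{s\leq t}$ on continuing histories, since the agent's conjectured strategy pins down how $X$ is converted into $Z$. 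A pathwise induction (formally, an application of the uniqueness of strong solutions to the SDE driven by the same $B$ and the same action process) then yields $Z_t'=Z_t+\epsilon$ for all $t$ up to the common stopping time.

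Having established the coupling, the conclusion follows readily. The stopping time $\mathbb{T}$ is determined by $\beta$ together with $\{J_t\}$; since $\beta'$ produces the identical stopping-action sequence along the coupled paths and the Poisson arrivals are shared, the induced stopping times coincide. The flow payoff $r_1[u+(1-\alpha_\tau)c]$ depends on the state only through the action $\alpha_\tau$, which is also equal across the coupled paths. Therefore the integrands in the definition of $U_1$ agree pathwise, and taking expectations over the common $(B,J)$ yields
\[
\mathbb{E}\bigl\{U_1(t,\alpha,\beta)\bigm|Z_0=z\bigr\}=\mathbb{E}\bigl\{U_1(t,\alpha',\beta')\bigm|Z_0'=z+\epsilon\bigr\}.
\]

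The hard part will be the measurability/information bookkeeping in Step 2: making sense of the notation $\alpha_t\bigl|\{Z_s-\epsilon\}_{s\leq t}$ as a bona fide progressively measurable process on the shifted path space, and arguing that the principal's conditioning information $\{X_s,J_s\}_{s\leq t}$ translates cleanly under the shift (since $X_t$ enters only via the belief $Z_t$, up to a change of the prior). Once the shifted strategies are properly defined as pullbacks along the translation map on path space, the coupling works essentially by inspection, because the whole construction depends on the Brownian increments and Poisson arrivals rather than on the absolute level of $Z$.
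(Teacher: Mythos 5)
Your proof is correct and takes essentially the same approach as the paper's: both rest on the observation that the SDE for $Z_t$ (from the noninvestible agent's perspective) has drift and diffusion coefficients depending only on the instantaneous actions and not on the level of $Z_t$, so a constant shift of the state commutes with the dynamics once the strategies are shifted accordingly. The paper presents this as a one-line remark, while you spell out the pathwise coupling and the information bookkeeping; the substance is identical.
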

	
	\begin{proof}
		The law of motion of the process \(\{Z_t\}_{t \geq 0}\) from the perspective of the noninvestible agent is:
		\[
		dZ_t = \tfrac{1}{2} \snr^2 (1-\alpha_t)^2 dt + \snr (1-\alpha_t) dB_t.
		\]
		This means that, if the principal perturbs her strategy using a constant displacement, the agent can maintain his distribution of payoffs intact by imitating the perturbation.
	\end{proof}

	The payoff of the principal is not translation-invariant in the \(z\)-space. Note, however, that we can write her payoff conditioning on the type of the agent:
	\[
	U_2(t,\alpha,\beta) = p_t \mathbb{E}\left\{U_2(t,\alpha,\beta) \middle| \theta=NI \right\} + (1-p_t)\mathbb{E}\left\{U_2(t,\alpha,\beta) \middle| \theta=I \right\}.
	\]
	The conditional payoff inside the outer expectation satisfy a \textit{conditional} translation invariance property.
	\begin{lemma}[Conditional Translation Invariance]
		\label{lemma:conditionaltranslation}
		Fix an arbitrary strategy profile \(\left(\alpha,\beta\right)\) and some \(\epsilon \in \mathbb{R}\). Consider a new profile \(\left(\alpha',\beta'\right)\), defined by \(\alpha'_t :=\left.\alpha_t\middle| \left\{Z_s - \epsilon\right\}_{s \leq t}\right.\) and \(\beta'_t:=\left.\beta_t \middle| \left\{Z_s - \epsilon\right\}_{s \leq t}\right.\) for all \(t \geq 0\). Then, the conditional payoffs of the principal satisfy
		\[
		\mathbb{E}\left\{U_2(t,\alpha,\beta)\middle| \theta=NI, Z_t = z\right\} = \mathbb{E}\left\{U_2(t,\alpha',\beta')\middle| \theta=NI, Z_t = z+\epsilon \right\},
		\]
		\[
		\mathbb{E}\left\{U_2(t,\alpha,\beta)\middle| \theta=I, Z_t = z\right\} = \mathbb{E}\left\{U_2(t,\alpha',\beta')\middle|  \theta=I, Z_t = z+\epsilon \right\},
		\]
		almost surely, for all \(t \geq 0\) and \(z \in \mathbb{R}\).
	\end{lemma}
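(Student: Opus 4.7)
The plan is to prove the statement by exhibiting a pathwise coupling of the type-conditional diffusion of $Z_t$ under the original profile starting from $z$ and under the shifted profile starting from $z+\epsilon$. Since the principal's payoff $U_2$ is determined by the discount factor, the stopping time $\mathbb{T}$ and the type, it suffices to show that the type-conditional distribution of $\mathbb{T}$ is identical in the two scenarios. By the Markov property of $Z_t$ conditional on $\theta$, I can without loss replace time $t$ with time $0$ and reduce to comparing $\mathbb{E}\{U_2(0,\alpha,\beta)\mid \theta,Z_0=z\}$ with $\mathbb{E}\{U_2(0,\alpha',\beta')\mid \theta,Z_0=z+\epsilon\}$.

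The first step is to write down the conditional SDE for $Z_t$. Applying It\^o's lemma to $Z_t = \log(p_t/(1-p_t))$ together with \eqref{eq:law-of-p}--\eqref{eq:law-of-tilde-B}, one finds that conditional on $\theta = I$,
\[
dZ_t = -\tfrac{1}{2}\snr^2(1-a_t)^2\, dt - \snr(1-a_t)\,dB_t,
\]
and conditional on $\theta=NI$ with actual action $\alpha_t$,
\[
dZ_t = \snr^2(1-a_t)\left[(1-\alpha_t) - \tfrac{1}{2}(1-a_t)\right] dt - \snr(1-a_t)\,dB_t.
\]
The decisive feature is that in each case the drift and diffusion coefficients depend on $(a_t,\alpha_t)$ only, and $\alpha_t,\beta_t$ (as well as the principal's conjecture $a_t$ which feeds into the belief update) are progressively measurable functionals of $\{Z_s\}_{s\leq t}$; the coefficients never involve the current \emph{level} of $Z_t$ directly. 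This is the translation-invariance of the state dynamics.

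Next I construct the coupling. Let $\{Z_t\}$ solve the conditional SDE under $(\alpha,\beta)$ starting at $Z_0=z$, driven by Brownian motion $B$. Define $Z'_t := Z_t + \epsilon$, so $Z'_0 = z+\epsilon$. By construction, $\{Z'_s-\epsilon\}_{s\leq t} = \{Z_s\}_{s\leq t}$, hence the shifted strategies evaluated along $Z'$ satisfy $\alpha'_t = \alpha_t$ and $\beta'_t=\beta_t$ pathwise. Substituting into the type-conditional SDE shows that $Z'_t$ solves the same SDE driven by the same $B$ but starting at $z+\epsilon$, which is exactly the SDE governing the conditional process under $(\alpha',\beta')$. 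Strong uniqueness of solutions then identifies the two conditional laws.

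Finally, the stopping time $\mathbb{T}$ is the first arrival of the Poisson process $\{J_t\}$ at which $\beta$ chooses to stop. Since $\{J_t\}$ is independent of $B$ and of $\theta$, its law is unaffected by the coupling; and because $\beta'$ applied to $Z'$ yields the same stop decisions as $\beta$ applied to $Z$, the conditional distribution of $\mathbb{T}$ agrees in both scenarios, delivering the claim. The main subtlety to handle carefully is that the principal's conjecture $a$ enters through the belief update, so one must verify that applying the shift to $a$ as well (so that the conjecture remains consistent with the filtration $\{Z_s-\epsilon\}$) preserves the coupling; once this bookkeeping is in place, the argument is routine.
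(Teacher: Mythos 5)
Your argument is correct and rests on the same key observation as the paper's proof: conditional on each type, the drift and diffusion coefficients of the $z$-space dynamics depend only on the strategy components $(a_t,\alpha_t)$ and never on the level of $Z_t$ itself, so a constant shift of the initial state together with the matching shift of strategies produces a pathwise-identical process. You have simply spelled out the coupling, the role of the independent Poisson clock, and the bookkeeping for the principal's conjecture that the paper compresses into one sentence.
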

	
	\begin{proof}
		In the case of conditioning on \(\theta=NI\), the law of motion of \(\{Z_t\}_{t \geq 0}\) is as in the proof of Lemma~\ref*{lemma:translation}. In the case of conditioning on \(\theta=I\), the dynamics of \(\{Z_t\}_{t \geq 0}\) satisfies
		\begin{equation*}
			dZ_t = -\tfrac{1}{2} \snr^2 (1-\alpha_t)^2 dt + \snr (1-\alpha_t) dB_t.
		\end{equation*}
		In both cases the dynamics are linear given \(\alpha\), so if the agent perturbs his strategy using a constant displacement in \(z-\)space, the principal can maintain her payoff distribution intact by imitating the perturbation.
	\end{proof}
	
	\subsubsection{Monotonicity and Curvature of Value Functions}
	\begin{proof}[Proof of Corollary \ref{cor:convexity}]
		We first establish the claimed properties of $W(\cdot)$. From the law of motion of $p_t$ given by \eqref{eq:law-of-p}, we know that the diffusion coefficient converges to $0$ as $p\to 0$ or $p\to 1$. So it is easy to verify that $\lim_{p\to 0}W(p)=0$ and $\lim_{p\to 1}W(p)=\frac{\lambda}{r_2+\lambda}w_{NI}$. Also, since the principal can always ignore any information, $W(p)$ is bouneded below by $\underline{W}(p)\equiv \frac{\lambda}{r_2+\lambda}\max\{0,R(p)\}$. Then, from the principal's HJB given by \eqref{eq:HJBprincipal}, we always have $W''(p)\geq 0$ no matter whether $W(p)>R(p)$ or $W(p)<R(p)$; that is, $W(\cdot)$ is convex on $(0,1)$. Since $\lim_{p\to 0}W(p)=0$ and $W(\cdot)\geq 0$, $W(\cdot)$ must be (weakly) increasing at $0$, and because it is convex,  $W(\cdot)$ is increasing on $(0,1)$.
		
		Now we turn to $v(\cdot)$. Suppose first that $r_1\geq r^*$, so that in equilibrium $a(\cdot)\equiv 0$. From Claim \ref{cl:La=1} and conditions \eqref{eq:A2a=1} and \eqref{eq:A1a=1}, $v(\cdot)$ is strictly decreasing and concave on $(-\infty,z^*)$, with $\lim_{z\to-\infty}v(z)=u+c$. From Claim \ref{cl:Ra=1} and conditions \eqref{eq:B2a=1} and \eqref{eq:B1a=1}, $v(\cdot)$ is strictly decreasing and convex on $(z^*,\infty)$, with $\lim_{z\to\infty}v(z)=\frac{r_1}{r_1+\lambda}(u+c)$. Suppose now that $r_1< r^*$, so that in equilibrium $a(\cdot)$ is hump-shaped. In light of Claims \ref{cl:La=1} and \ref{cl:Ra=1} and conditions \eqref{eq:A2}, \eqref{eq:A1}, \eqref{eq:B2} and \eqref{eq:B1}, it suffices to show that $v(\cdot)$ is strictly decreasing and concave on $(z_L,z^*)$, and strictly decreasing and convex on $(z^*,z_R)$. But these properties follow immediately from Claim \ref{cl:aoptimal} and the fact that $a(\cdot)$ is hump-shaped with $0<a(z)<1$ for $z\in (z_L,z_R)$.
	\end{proof}

	\subsection{Omitted Proofs for Theorem \ref{t:zigzag-shape}}\label{onlineapp:B2}
	\begin{proof}[Proof of Lemma \ref{lem:lambda2}]
		Recall from the proof of Claim \ref{cl:ashaper} that $r_1<r^*$ implies $v_{R}<v_{L}$. Then by Claim \ref{cl:v*<u} and Corollary \ref{cor:EPmonotone2}, we are done if we can find an $\lambda_2\geq \lambda_1$ and an $\underline{r}$ such that $\lambda>\lambda_2$ and $r_1<\underline{r}$ imply that $a_-^*(u;r_1)<a_+^*(u;r_1,\lambda)$.
		
		Using \eqref{eq:a*-} and \eqref{eq:a*+}, we have
		\begin{align}\label{eq:a*-<a*+}
			& a_{-}^{*}\left(u;r_1\right)<a_{+}^{*}\left(u;r_1,\lambda\right)\nonumber\\
			\iff & \frac{\frac{\sqrt{2}}{\snr}\phi\left(0\right)}{\frac{\sqrt{2r_{1}}}{\snr}\phi\left(\frac{v_{L}-u}{\sqrt{\kappa_{L}}}\right)+\Phi\left(\frac{v_{L}-u}{\sqrt{\kappa_{L}}}\right)-\Phi\left(0\right)}>\frac{\frac{1}{\sqrt{r_{1}}}\phi\left(\sqrt{\frac{2}{r_{1}+\lambda}}\frac{\lambda\snr}{r_{1}c}u\right)}{\phi\left(\frac{v_{R}-\frac{r_{1}}{r_{1}+\lambda}u}{\sqrt{\kappa_{R}}}\right)+\frac{\snr}{\sqrt{2\left(r_{1}+\lambda\right)}}\left[\Phi\left(\frac{v_{R}-\frac{r_{1}}{r_{1}+\lambda}u}{\sqrt{\kappa_{R}}}\right)-\Phi\left(\sqrt{\frac{2}{r_{1}+\lambda}}\frac{\lambda\snr}{r_{1}c}u\right)\right]}
		\end{align}
		Since $\phi\left(\frac{v_{L}-u}{\sqrt{\kappa_{L}}}\right)\leq\phi\left(0\right)$
		and $\Phi\left(\frac{v_{L}-u}{\sqrt{\kappa_{L}}}\right)\leq1$, we
		can find a lower bound for the LHS of \eqref{eq:a*-<a*+} whenever $r_1<1$:
		\begin{equation*}
			\frac{\frac{\sqrt{2}}{\snr}\phi\left(0\right)}{\frac{\sqrt{2r_{1}}}{\snr}\phi\left(\frac{v_{L}-u}{\sqrt{\kappa_{L}}}\right)+\Phi\left(\frac{v_{L}-u}{\sqrt{\kappa_{L}}}\right)-\Phi\left(0\right)}\geq\frac{\frac{\sqrt{2}}{\snr}\phi\left(0\right)}{\frac{\sqrt{2}}{\snr}\phi\left(0\right)+1-\Phi\left(0\right)}.
		\end{equation*}
		
		Now let us find an upper bound for the RHS of \eqref{eq:a*-<a*+}. First, when $\lambda\geq 1$,
		we know 
		\begin{equation}\label{eq:first}
			\frac{1}{\sqrt{r_{1}}}\phi\left(\sqrt{\frac{2}{r_{1}+\lambda}}\frac{\lambda\snr}{r_{1}c}u\right)\leq\frac{1}{\sqrt{r_{1}}}\phi\left(\sqrt{\frac{2}{r_{1}+1}}\frac{\snr}{r_{1}c}u\right).
		\end{equation}
		Second, by direct calculation we have 
		\[
		\frac{v_{R}-\frac{r_{1}}{r_{1}+\lambda}u}{\sqrt{\kappa_{R}}}=\frac{\sqrt{2}\snr}{4}\left(\frac{3}{\sqrt{\lambda+r_{1}}}+\sqrt{\frac{1}{r_{1}+\lambda}+\frac{8}{\snr^{2}}}\right),
		\]
		and when $\lambda\geq 1$, we have 
		\[
		\frac{v_{R}-\frac{r_{1}}{r_{1}+\lambda}u}{\sqrt{\kappa_{R}}}\leq\frac{\sqrt{2}\snr}{4}\left(\frac{3}{\sqrt{1+0}}+\sqrt{\frac{1}{1+0}+\frac{8}{\snr^{2}}}\right)=\frac{\sqrt{2}}{4}\left(3\snr+\sqrt{\snr^{2}+8}\right).
		\]
		Then 
		\begin{equation}\label{eq:second}
			\phi\left(\frac{v_{R}-\frac{r_{1}}{r_{1}+\lambda}u}{\sqrt{\kappa_{R}}}\right)\geq\phi\left(\frac{\sqrt{2}}{4}\left(3\snr+\sqrt{\snr^{2}+8}\right)\right).
		\end{equation}
		Third, 
		\[
		\frac{\snr}{\sqrt{2\left(r_{1}+\lambda\right)}}\left[\Phi\left(\frac{v_{R}-\frac{r_{1}}{r_{1}+\lambda}u}{\sqrt{\kappa_{R}}}\right)-\Phi\left(\sqrt{\frac{2}{r_{1}+\lambda}}\frac{\lambda\snr}{r_{1}c}u\right)\right]\geq-\frac{\snr}{\sqrt{2\lambda}}\left(1-\Phi\left(0\right)\right).
		\]
		Apparently, there exists $\underline{\lambda}\geq\lambda_1$, such
		that for all $\lambda>\underline{\lambda}$, we have 
		\begin{equation}\label{eq:third}
			\phi\left(\frac{\sqrt{2}}{4}\left(3\snr+\sqrt{\snr^{2}+8}\right)\right)-\frac{\snr}{\sqrt{2\lambda}}\left(1-\Phi\left(0\right)\right)\geq\frac{1}{2}\phi\left(\frac{\sqrt{2}}{4}\left(3\snr+\sqrt{\snr^{2}+8}\right)\right).
		\end{equation}
		
		Defining $\lambda_{2}=\max\left\{ 1,\underline{\lambda}\right\} $ and applying conditions \eqref{eq:first}, \eqref{eq:second} and \eqref{eq:third}, we know that whenever $\lambda>\lambda_{2}$, we have the following upper bound for the RHS of \eqref{eq:a*-<a*+}: 
		\[
		\frac{\frac{1}{\sqrt{r_{1}}}\phi\left(\sqrt{\frac{2}{r_{1}+\lambda}}\frac{\lambda\snr}{r_{1}c}u\right)}{\phi\left(\frac{v_{R}-\frac{r_{1}}{r_{1}+\lambda}u}{\sqrt{\kappa_{R}}}\right)+\frac{\snr}{\sqrt{2\left(r_{1}+\lambda\right)}}\left[\Phi\left(\frac{v_{R}-\frac{r_{1}}{r_{1}+\lambda}u}{\sqrt{\kappa_{R}}}\right)-\Phi\left(\sqrt{\frac{2}{r_{1}+\lambda}}\frac{\lambda\snr}{r_{1}c}u\right)\right]}\leq\frac{\frac{1}{\sqrt{r_{1}}}\phi\left(\sqrt{\frac{2}{r_{1}+1}}\frac{\snr}{r_{1}c}u\right)}{\frac{1}{2}\phi\left(\frac{\sqrt{2}}{4}\left(3\snr+\sqrt{\snr^{2}+8}\right)\right)}.
		\]
		
		Now we compare the lower bound for the LHS of \eqref{eq:a*-<a*+} with the upper bound for the RHS of \eqref{eq:a*-<a*+}. Taking limit $r_{1}\rightarrow0$, we have 
		\[
		\lim_{r_{1}\rightarrow0}\frac{\frac{1}{\sqrt{r_{1}}}\phi\left(\sqrt{\frac{2}{r_{1}+1}}\frac{\snr}{r_{1}c}u\right)}{\frac{1}{2}\phi\left(\frac{\sqrt{2}}{4}\left(3\snr+\sqrt{\snr^{2}+8}\right)\right)}=0<\frac{\frac{\sqrt{2}}{\snr}\phi\left(0\right)}{\frac{\sqrt{2}}{\snr}\phi\left(0\right)+1-\Phi\left(0\right)},
		\]
		recalling that $\phi$ is the pdf of the standand normal distribution. So there exists $\underline{r}'>0$ such that  for all $r<\underline{r}'$,
		\[
		\frac{\frac{1}{\sqrt{r_{1}}}\phi\left(\sqrt{\frac{2}{r_{1}+1}}\frac{\snr}{r_{1}c}u\right)}{\frac{1}{2}\phi\left(\frac{\sqrt{2}}{4}\left(3\snr+\sqrt{\snr^{2}+8}\right)\right)}<\frac{\frac{\sqrt{2}}{\snr}\phi\left(0\right)}{\frac{\sqrt{2}}{\snr}\phi\left(0\right)+1-\Phi\left(0\right)}.
		\] 
		Letting $\underline{r}=\min\left\{ 1,\underline{r}',r^*\right\} $,
		we have 
		$
		a_{-}^{\star}\left(u;r_1\right)<a_{+}^{\star}\left(u;r_1,\lambda\right)
		$
		whenever $\lambda>\lambda_2$ and $r_1<\underline{r}$, as desired.
	\end{proof}
	
	\begin{proof}[Proof of Claim \ref{cl:a+bound}]
		Note that
		\begin{equation}\label{eq:a*+expand}
			a_{+}^{*}\left(u;r_{1},\lambda\right)  =1-\frac{1}{\phi\left(\frac{v_{R}-\frac{r_{1}}{r_{1}+\lambda}u}{\sqrt{\kappa_{R}}}\right)+\frac{\snr}{\sqrt{2\left(r_{1}+\lambda\right)}}\left[\Phi\left(\frac{v_{R}-\frac{r_{1}}{r_{1}+\lambda}u}{\sqrt{\kappa_{R}}}\right)-\Phi\left(\sqrt{\frac{2}{r_{1}+\lambda}}\frac{\lambda\snr}{r_{1}c}u\right)\right]}\frac{1}{\sqrt{2\pi}}e^{-\frac{\lambda^{2}}{\lambda+r_{1}}\frac{\snr^{2}}{r_{1}^{2}c^{2}}u^{2}}.
		\end{equation}
		Note also that, for all $\lambda>\lambda_1$,
		\[
		\frac{v_{R}-\frac{r_{1}}{r_{1}+\lambda}u}{\sqrt{\kappa_{R}}}=\frac{\sqrt{2}\snr}{4}\left(\frac{3}{\sqrt{\lambda+r_{1}}}+\sqrt{\frac{1}{\lambda+r_{1}}+\frac{8}{\snr^{2}}}\right)\leq\frac{\sqrt{2}\snr}{4}\left(\frac{3}{\sqrt{\lambda_{1}}}+\sqrt{\frac{1}{\lambda_{1}}+\frac{8}{\snr^{2}}}\right)
		\]
		Thus for all $\lambda>\lambda_1$,
		\begin{align}\label{eq:ineq1}
			&\phi\left(\frac{v_{R}-\frac{r_{1}}{r_{1}+\lambda}u}{\sqrt{\kappa_{R}}}\right)+\frac{\snr}{\sqrt{2\left(r_{1}+\lambda\right)}}\left[\Phi\left(\frac{v_{R}-\frac{r_{1}}{r_{1}+\lambda}u}{\sqrt{\kappa_{R}}}\right)-\Phi\left(\sqrt{\frac{2}{r_{1}+\lambda}}\frac{\lambda\snr}{r_{1}c}u\right)\right]\nonumber\\
			\geq\quad&\phi\left(\frac{\sqrt{2}\snr}{4}\left(\frac{3}{\sqrt{\lambda_{1}}}+\sqrt{\frac{1}{\lambda_{1}}+\frac{8}{\snr^{2}}}\right)\right)-\frac{\snr}{\sqrt{2\lambda}}.
		\end{align}
		Let $A'$ to be such that 
		\[
		\frac{1}{\sqrt{2\pi}A'}=\frac{1}{2}\phi\left(\frac{\sqrt{2}\snr}{4}\left(\frac{3}{\sqrt{\lambda_{1}}}+\sqrt{\frac{1}{\lambda_{1}}+\frac{8}{\snr^{2}}}\right)\right).
		\]
		Since $\lim_{\lambda\rightarrow\infty}\frac{\snr}{\sqrt{2\lambda}}=0$,
		there must exist $\lambda_{3}^{\prime}\geq \lambda_{1}$, such that  for all $\lambda>\lambda_{3}^{\prime}$,
		\begin{equation}\label{eq:ineq2}
			\phi\left(\frac{\sqrt{2}\snr}{4}\left(\frac{3}{\sqrt{\lambda_{1}}}+\sqrt{\frac{1}{\lambda_{1}}+\frac{8}{\snr^{2}}}\right)\right)-\frac{\snr}{\sqrt{2\lambda}}>\frac{1}{\sqrt{2\pi}A'}.
		\end{equation}
		Finally, let $A=\max\{A',2\}$. Conditions \eqref{eq:a*+expand}, \eqref{eq:ineq1} and \eqref{eq:ineq2} then tell us that whenever $\lambda>\lambda_{3}^{\prime}$,
		we have 
		\[
		a_{+}^{*}\left(u;r_{1},\lambda\right) >1-A'e^{-\frac{\lambda^{2}}{\lambda+r_{1}}\frac{\snr^{2}}{r_{1}^{2}c^{2}}u^{2}}\geq1-Ae^{-\frac{\lambda^{2}}{\lambda+r_{1}}\frac{\snr^{2}}{r_{1}^{2}c^{2}}u^{2}},
		\]
		as desired.
	\end{proof}
	
	\subsection{Omitted Proofs for Theorem \ref{t:noisehelps}}\label{onlineapp:B3}
	\begin{proof}[Proof of Claim \ref{lem:learninglemma2}]
		The proof is almost identical to Lemma \ref*{lemma_a7}'s proof in the next section, and is thus omitted.
	\end{proof}
	
	\subsection{Patient Limit: Toward a Proof of Theorem \ref{t:patientlimit}}\label{app:patientlimit}
	In this section, we prove Theorem \ref{t:patientlimit} which is about the convergence of equilibrium value functions when players get arbitrarily patient at comparable rates.

	For each \(n\in \mathbb{N}\), take the unique Markov equilibrium \(\left( a_{n}, b_{n}\right)\) associated with the discount factor \(r_{i,n}\) for \(i=1,2\). %(Recall that the agent's discount rate is $r_1$ and the principal's discount rate is $r_2$.)
	Assume that \(\lim_{n \to \infty}  r_{i,n}=0\) and \(\lim_{n \to \infty} \left(r_{2,n}/r_{1,n}\right) :=\chi \in (0,\infty)\). Let \(V_{n}\left( \cdot \right)\) be the agent's value function in the equilibrium \(\left( a_{n},b_{n}\right)\) and \(W_{n}\left(\cdot \right)\) be the principal's value function. We will often use $z\equiv \log(p/1-p)$ as state variable when analyzing the agent's behavior. When doing so, we denote by $v_n(z):=V_n(p(z))$ the agent's value function in the $z-$space. Write \(z_{n}^{\ast }\) for the principal's equilibrium cutoff. Write \(z_{L,n}\) for the infimum belief \(z\) at which the agent plays \(a_{n}\left( z\right) >0\) and write \(z_{R,n}\) for the supremum. Write $\mathbb{T}$ for the equilibrium stopping time that stops the play of the game. \textit{Without labeling explicitly, we note that the distribution of $\mathbb{T}$ depends on $n$ and the current state $z$.}
	For \(i=1,2\), let \(\mathbb{E}_{n}^{\theta}\left\{ e^{-r_{i,n}\mathbb{T}}\right\}\) be the expected discount factor when the stopping action is taken in the equilibrium \(\left( a_{n}, b_{n}\right)\) discounted at rate \(r_{i,n}\) and given the equilibrium strategy of type \(\theta\in \left\{NI,I\right\}\).\footnote{Here, we interpret the strategy of the investible type as always setting $a=1$.} When the game starts at state \(z\), let 
	\[
	\mathbb{E}_{n}\left\{e^{-r_{i,n}\mathbb{T}}\right\} :=p(z) \mathbb{E}_{n}^{NI}\left\{ e^{-r_{i,n}\mathbb{T}}\right\} +(1-p(z))\mathbb{E}_{n}^{I}\left\{ e^{-r_{i,n}\mathbb{T}}\right\}.
	\]
	
	%To prove Theorem \ref{t:patientlimit}, we start by establishing several useful technical properties (Claims \ref{claim_a0} through \ref{claim_a2}). We then analyze the limiting values and behavior at the termination cutoff $z_n^*$ (Lemmas \ref{lemma_a1} and \ref{lemma_a2}) and at any arbitrary state $z_0$ (Lemmas \ref{lemma_a3} through \ref{lemma_a9}). Putting these together delivers the theorem.
	
	%\subsubsection{Useful Technical Claims}
	\begin{claim}
		\label{claim_a0} 
		Take \(\left( r_{1},r_{2}\right) \in \mathbb{R}_{++}^{2}\) and let $\tau$ be any stopping time. Assume that \(\mathbb{E}\left\{ e^{-r_{1}\tau}\right\}
		=\xi \in \left(0,1\right)\).
		
		i) If \(r_{2}\leq r_{1}\) then
		\[
		\xi \leq \mathbb{E}\left\{ e^{-r_{2}\tau}\right\} \leq \xi ^{\left({r_{2}}/{r_{1}}\right) }.
		\]
		
		ii) If $r_{2}>r_{1}$ then
		\[
		\xi ^{\left( {r_{2}}/{r_{1}}\right) }\leq \mathbb{E}\left\{e^{-r_{2}\tau}\right\} \leq \xi .
		\]
		Moreover, for each \(\xi \in (0,1)\) and any inequality above, there exists a distribution over stopping times for which this inequality is tight.
	\end{claim}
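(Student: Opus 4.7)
The plan is to reduce everything to two simple inequalities applied to the random variable $X := e^{-r_1\tau} \in [0,1]$, which satisfies $\mathbb{E}[X] = \xi$ by hypothesis. Writing $\alpha := r_2/r_1 > 0$, the quantity of interest becomes $\mathbb{E}[e^{-r_2\tau}] = \mathbb{E}[X^\alpha]$. The two ingredients are: (a) Jensen's inequality applied to the map $x \mapsto x^\alpha$ on $[0,1]$, which is concave when $\alpha \leq 1$ and convex when $\alpha \geq 1$; and (b) the pointwise comparison on $[0,1]$, namely $X^\alpha \geq X$ when $\alpha \leq 1$ and $X^\alpha \leq X$ when $\alpha \geq 1$.

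For part (i), when $r_2 \leq r_1$ (so $\alpha \leq 1$), Jensen on the concave function yields $\mathbb{E}[X^\alpha] \leq (\mathbb{E}[X])^\alpha = \xi^\alpha = \xi^{r_2/r_1}$, while the pointwise bound $X^\alpha \geq X$ integrates to $\mathbb{E}[X^\alpha] \geq \xi$. For part (ii), when $r_2 > r_1$ (so $\alpha > 1$), the same two ingredients swap roles: Jensen on the convex $x^\alpha$ gives $\mathbb{E}[X^\alpha] \geq \xi^\alpha = \xi^{r_2/r_1}$, and the pointwise bound $X^\alpha \leq X$ integrates to $\mathbb{E}[X^\alpha] \leq \xi$.

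For tightness, the bound coming from Jensen is saturated by a degenerate distribution: take $\tau \equiv -(\log \xi)/r_1$ almost surely, so that $\mathbb{E}[e^{-r_1\tau}] = \xi$ and $\mathbb{E}[e^{-r_2\tau}] = \xi^{r_2/r_1}$. This gives tightness of the upper bound in case (i) and the lower bound in case (ii). The bound coming from the pointwise comparison is saturated by a ``two-point'' distribution concentrated on the endpoints of $[0,1]$: take $\tau = 0$ with probability $\xi$ and $\tau = +\infty$ with probability $1-\xi$ (using the convention $e^{-r\cdot\infty} = 0$). Then $\mathbb{E}[e^{-r_1\tau}] = \xi = \mathbb{E}[e^{-r_2\tau}]$, which saturates the lower bound in case (i) and the upper bound in case (ii).

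There is no real obstacle here: the whole statement is a one-variable inequality on $[0,1]$ once we change variables to $X = e^{-r_1\tau}$. The only minor bookkeeping point is handling the endpoint $\tau = +\infty$ in the tightness construction, which is absorbed by interpreting $e^{-r\tau}$ as $0$ on $\{\tau = \infty\}$ and noting that this preserves both $X \in [0,1]$ and all the inequalities above.
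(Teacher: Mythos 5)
Your proof is correct and follows essentially the same route as the paper: both change variables to $X=e^{-r_1\tau}\in[0,1]$, derive one side of each inequality from Jensen applied to $x\mapsto x^{r_2/r_1}$ (concave or convex according to the sign of $r_2-r_1$) and the other from the pointwise comparison of $x^{r_2/r_1}$ with $x$ on $[0,1]$, and exhibit tightness via an atom at a single point (for Jensen) and a two-point distribution on $\{0,1\}$ (for the pointwise bound).
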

	
	\begin{proof}
		Assume that \(\mathbb{E}\left\{e^{-r_{1}\tau}\right\} =\xi \in \left(0,1\right)\). Let $F$ be the CDF of $\tau$. Let \(y=e^{-r_{1}\tau}\) and \(H\) be its CDF. We have \(\tau=-(\log y)/r_1\) and hence 
		%\[
		$\int_{0}^{\infty }e^{-r_{2}t}dF(t)=\int_{0}^{1}y^{\left({r_{2}}/{r_{1}}\right) }dH(y).$ 
		%\]
		%Note that \(y\rightarrow y^{\left( {r_{2}}/{r_{1}}\right) }\) is concave if \(r_{2}\leq r_{1}\) and convex if \(r_{2} \geq r_{1}\).
		
		i) Suppose that $r_2\leq r_1$. On the one hand, since \(y^{\left({r_{2}}/{r_{1}}\right)}\) is concave, we have
		\[
		\int_{0}^{1}y^{\left({r_{2}}/{r_{1}}\right) }dH(y)\leq \left(\int_{0}^{1}y dH(y)\right) ^{\left( {r_{2}}/{r_{1}}\right) }=\xi ^{\left({r_{2}}/{r_{1}}\right) },
		\]
		with equality if \(H\) has an atom of mass one.
		
		On the other hand, \(y^{\left({r_{2}}/{r_{1}}\right) }\geq y\) for every \(y\in \left[ 0,1\right]\) and hence \(\int_{0}^{1}y^{\left({r_{2}}/{r_{1}}\right) }dH(y)\geq \int_{0}^{1}ydH(y)\), with equality if the support of \(H\) is \(\left\{ 0,1\right\}\).
		
		ii) Suppose that $r_2\geq r_1$. On the one hand, since \(y^{\left({r_{2}}/{r_{1}}\right)}\) is convex, we have 
		\[
		\int_{0}^{1}y^{\left({r_{2}}/{r_{1}}\right) }dH(y)\geq \left(\int_{0}^{1}ydH(y)\right) ^{\left( {r_{2}}/{r_{1}}\right) }=\xi ^{\left({r_{2}}/{r_{1}}\right) },
		\]
		with equality if \(H\) has an atom of mass one.
		
		On the other hand, \(y^{\left({r_{2}}/{r_{1}}\right) }\leq y\) for every \(y\in \left[ 0,1\right]\) and hence \(\int_{0}^{1}y^{\left({r_{2}}/{r_{1}}\right) }dH(y)\leq \int_{0}^{1} y dH(y)\), with equality if the support of \(H\) is \(\left\{ 0,1\right\}\).
	\end{proof}

	\begin{claim}
		\label{claim_a2}
		For every \(\varepsilon >0\) there exists \(z^{\dag }\in \mathbb{R}\) and $\tilde{n}_{1}\in \mathbb{N}$ such that, if \(z\geq z^{\dag }\) and \(n\geq \tilde{n}_{1}\), then the continuation payoff of the agent at \(z\) is less than \(\varepsilon\) in the equilibrium $(a_n,b_n)$.
	\end{claim}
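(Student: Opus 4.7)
The plan is to route through the principal's equilibrium value rather than attempt a direct estimate of the belief process, and then use Claim \ref{claim_a0} to convert the principal's discount-factor control into the agent's. Since $v_n(\cdot)$ is decreasing (Corollary \ref{cor:convexity}), it suffices to fix a single $p^{\dag}\in(p^{**},1)$ close to $1$, set $z^{\dag}:=\log(p^{\dag}/(1-p^{\dag}))$, and prove that $v_n(z^{\dag})\to 0$. Replacing the flow $u+(1-a_t)c$ by its maximum $u+c$ yields the crude upper bound
\[
v_n(z^{\dag}) \leq (u+c)\left(1-\mathbb{E}_n^{NI}\!\left[e^{-r_{1,n}\mathbb{T}}\right]\right),
\]
where here and below the expectation is taken with initial belief $p^{\dag}$. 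The task therefore reduces to showing $\mathbb{E}_n^{NI}[e^{-r_{1,n}\mathbb{T}}]\to 1$.

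The next step is to sandwich the principal's expected discount factor at $p^{\dag}$. Because $p^{\dag}>p^{**}$, her feasible strategy of ``stop at the first Poisson arrival'' yields, by the martingale property of $p_t$ under her filtration (together with independence of $\mathbb{T}_\lambda$ from the belief) and the affinity of $R$, a value of $\lambda R(p^{\dag})/(\lambda+r_{2,n})$; hence $W_n(p^{\dag})\geq \lambda R(p^{\dag})/(\lambda+r_{2,n})$. On the other hand, using $R(p_{\mathbb{T}})\leq w_{NI}$ gives $W_n(p^{\dag})\leq w_{NI}\,\mathbb{E}_n[e^{-r_{2,n}\mathbb{T}}]$. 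Combining these,
\[
\mathbb{E}_n\!\left[e^{-r_{2,n}\mathbb{T}}\right] \geq \frac{\lambda R(p^{\dag})}{(\lambda+r_{2,n})\,w_{NI}} \xrightarrow{n\to\infty} \frac{R(p^{\dag})}{w_{NI}},
\]
which can be made arbitrarily close to $1$ by taking $p^{\dag}$ sufficiently close to $1$. Since $\mathbb{E}_n[\cdot]=p^{\dag}\mathbb{E}_n^{NI}[\cdot]+(1-p^{\dag})\mathbb{E}_n^{I}[\cdot]$ and $\mathbb{E}_n^{I}[\cdot]\leq 1$, one extracts a matching lower bound on $\mathbb{E}_n^{NI}[e^{-r_{2,n}\mathbb{T}}]$.

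I would then apply Claim \ref{claim_a0} with $\tau=\mathbb{T}$ under the noninvestible-type measure: since $r_{2,n}/r_{1,n}\to\chi\in(0,\infty)$ keeps $r_{1,n}/r_{2,n}$ bounded, its sandwich converts $\mathbb{E}_n^{NI}[e^{-r_{2,n}\mathbb{T}}]\to 1$ into $\mathbb{E}_n^{NI}[e^{-r_{1,n}\mathbb{T}}]\to 1$, and plugging back into the displayed inequality gives $v_n(z^{\dag})<\varepsilon$ for all $n$ sufficiently large.

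The main obstacle I would want to sidestep is a purely Doob-type estimate on the supermartingale $(1-p_t)$ under the noninvestible type: keeping $p_t$ above the cutoff $p_n^{\ast}$ with high probability via a maximal inequality would require $1-p_n^{\ast}$ to be uniformly bounded below, but $p_n^{\ast}\leq p_{H,n}$ and $p_{H,n}\to 1$ as $r_{2,n}\to 0$ rule this out. Routing through $W_n$ and Claim \ref{claim_a0} instead requires nothing about the precise location of $p_n^{\ast}$, which is the key reason this approach succeeds in the patient limit.
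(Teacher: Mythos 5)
Your proposal is correct and is essentially the contrapositive of the paper's own argument: the paper assumes the agent's payoff is bounded away from zero far to the right, converts this via the bound $v_n\leq(u+c)\bigl(1-\mathbb{E}^{NI}_n[e^{-r_{1,n}\mathbb{T}}]\bigr)$ and Claim \ref{claim_a0} into an upper bound on $W_n$ that falls short of the ``stop at the next arrival'' payoff, yielding a contradiction; you run the same chain forward, sandwiching $W_n$ between the myopic value and $w_{NI}\mathbb{E}_n[e^{-r_{2,n}\mathbb{T}}]$ and then transferring to $\mathbb{E}^{NI}_n[e^{-r_{1,n}\mathbb{T}}]$ via Claim \ref{claim_a0}. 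The ingredients, and in particular the decision to route through the principal's value and the discount-rate comparison lemma rather than estimate the belief process directly, are the same.
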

	
	\begin{proof}
		Otherwise we can find a sequence of equilibria \(\left( a _{n},b_{n}\right)\) starting at \(\left( z_{n}\right) \to +\infty\) in which the agent obtains a payoff weakly greater than \(\varepsilon\). Since the agent's equilibrium payoff is bounded above by $(u+c)\left[ 1-\mathbb{E}_{n}^{NI}\left( e^{-r_{1,n}\mathbb{T}}\right) \right]$, we have
		\[
		(u+c)\left[ 1-\mathbb{E}_{n}^{NI}\left( e^{-r_{1,n}\mathbb{T}}\right) \right] \geq \varepsilon \hspace{10pt}\Rightarrow \hspace{10pt} \mathbb{E}_{n}^{NI}\left( e^{-r_{1,n}\mathbb{T}}\right) \leq \left( 1-\tfrac{\varepsilon}{u+c}\right),
		\]
		and hence the principal's payoff in equilibrium $\left( a_{n},b_{n}\right) $ at $z_{n}$ is at most
		\[
		\max \left\{ \left( 1-\tfrac{\varepsilon }{u+c}\right) ,\left( 1-\tfrac{\varepsilon }{u+c}\right) ^{\left({r_{2,n}}/{r_{1,n}}\right) }\right\} p\left( z_n\right) w_{NI},
		\]
		which is always strictly less than $w_{NI}$. Meanwhile, since $r_{2,n}\to 0$ and $z_n\to \infty$ as $n\to \infty$, the principal's payoff at $z_n$ by terminating the relationship in the first opportunity satisfies
		\begin{equation*}
			\lim_{n\to \infty}\left(\tfrac{\lambda}{r_{2,n}+\lambda}\right)\left[(1-p(z_n))w_{I}+p(z_n)w_{NI}\right]= w_{NI}.
		\end{equation*}
		So the principal  has a profitable deviation when \(n\) is sufficiently large.
	\end{proof}
	We assume that \(n\geq \tilde{n}_{1}\) for the remainder of this proof.
	
	\begin{claim}
		\label{lemma_a4}
		For every fixed \(z_0\), we have \(\limsup_{n \to \infty} v_n\left(z_0\right) \leq u\). 
	\end{claim}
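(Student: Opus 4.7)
The plan is to argue by contradiction. Fix $\delta>0$ and suppose, along some subsequence $\{n_k\}$, $v_{n_k}(z_0)>u+\delta$. Passing to a further subsequence, two cases arise depending on the behavior of the mimicking peak $a_n(z_n^*)$: either (A) $a_n(z_n^*)\to 1$, or (B) $a_n(z_n^*)\le 1-2\eta$ for some fixed $\eta>0$. Since $a_n$ is hump-shaped (Theorem~\ref{uniqueness}), the latter implies the uniform bound $\sup_z a_n(z)\le 1-2\eta$.

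\textbf{Case A.} Set $\varepsilon:=\delta/c$ and, for $n$ large enough that $a_n(z_n^*)>1-\varepsilon$, let $z_n^\varepsilon\in(z_{L,n},z_n^*)$ be the smallest $z$ with $a_n(z)=1-\varepsilon$ (which exists and is unique by the strict monotonicity of $a_n$ on $(z_{L,n},z_n^*)$). Substituting $a_n(z_n^\varepsilon)=1-\varepsilon$ into the agent's HJB~\eqref{eq:HJBagent} and using that $v_n$ is decreasing and concave on $(-\infty,z_n^*)$ (Corollary~\ref{cor:convexity}), the diffusion term is nonpositive, which yields $v_n(z_n^\varepsilon)\le u+\varepsilon c=u+\delta$. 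If $z_0\ge z_n^\varepsilon$ eventually, monotonicity of $v_n$ immediately gives the contradiction. Otherwise, let $\tau_n$ be the first hitting time of $z_n^\varepsilon$ by $Z_t$ starting at $z_0$; since $a_n<1-\varepsilon$ on $[z_0,z_n^\varepsilon]$, the drift of $Z_t$ under the noninvestible type there is at least $\tfrac{1}{2}\snr^2\varepsilon^2>0$, so optional sampling gives $\mathbb{E}^{NI}[\tau_n]\le (z_n^\varepsilon-z_0)/(\tfrac{1}{2}\snr^2\varepsilon^2)$. Because the principal does not stop below $z_n^*$, the dynamic programming identity yields
\[
v_n(z_0)\le (u+c)\bigl(1-\mathbb{E}^{NI}[e^{-r_{1,n}\tau_n}]\bigr)+v_n(z_n^\varepsilon),
\]
so once $r_{1,n}\mathbb{E}^{NI}[\tau_n]\to 0$, Jensen's inequality gives $\limsup v_{n_k}(z_0)\le u+\delta$, contradicting the subsequence assumption.

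\textbf{Case B.} The uniform drift bound $\tfrac{1}{2}\snr^2(2\eta)^2=2\snr^2\eta^2$ for $Z_t$ under NI lets me apply Claim~\ref{claim_a2} (with tolerance $u/2$) to obtain a fixed $z^\dagger$ with $v_n(z^\dagger)<u/2$ for $n$ large. If $z_0\ge z^\dagger$ eventually, monotonicity alone finishes. Otherwise, the hitting time $\tau^\dagger$ of $z^\dagger$ from $z_0$ satisfies $\mathbb{E}^{NI}[\tau^\dagger]\le(z^\dagger-z_0)/(2\snr^2\eta^2)$, bounded uniformly in $n$. Using the decomposition with stopping time $\mathbb{T}\wedge\tau^\dagger\le\tau^\dagger$ to absorb the possibility of earlier stopping, one obtains $v_n(z_0)\le (u+c)(1-\mathbb{E}^{NI}[e^{-r_{1,n}\tau^\dagger}])+v_n(z^\dagger)$. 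Since $r_{1,n}\to 0$ and $\mathbb{E}^{NI}[\tau^\dagger]$ is bounded, Jensen gives $\mathbb{E}^{NI}[e^{-r_{1,n}\tau^\dagger}]\to 1$, whence $\limsup v_{n_k}(z_0)\le u/2<u+\delta$, again a contradiction.

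\textbf{Main obstacle.} The delicate step is verifying $r_{1,n}\mathbb{E}^{NI}[\tau_n]\to 0$ in Case A, where $z_n^\varepsilon$ may diverge with $n$. This is resolved via Lemma~\ref{lemma:principalbestreply}, which gives $z_n^*\le z_H^n$; a direct computation yields $1-p_H^n=r_{2,n}w_{NI}/[(\lambda+r_{2,n})(w_{NI}-w_I)]$, hence $z_H^n=\log(1/r_{2,n})+O(1)$. Combined with $r_{2,n}/r_{1,n}\to\chi\in(0,\infty)$, one obtains $r_{1,n}(z_n^\varepsilon-z_0)\le r_{1,n}z_H^n+O(r_{1,n})=O\bigl(r_{1,n}\log(1/r_{1,n})\bigr)\to 0$, closing the gap. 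Letting $\delta\downarrow 0$ then gives $\limsup_n v_n(z_0)\le u$.
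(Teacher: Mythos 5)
Your argument is correct but takes a genuinely different route from the paper's. The paper fixes \(\varepsilon\in(0,u/2)\), defines \(z_\varepsilon^n:=\inf\{z:a_n(z)=1-\varepsilon\}\), and splits on \(z_\varepsilon^n\le z^\dagger\) vs.\ \(z_\varepsilon^n>z^\dagger\) (the latter covering the case where the level is never reached). The key trick in the paper's Case 1 is to \emph{shift} the starting belief: it replaces \(z_0\) by \(z_0^n:=z_\varepsilon^n-m\), where \(m\) is a fixed integer chosen so that \(z_0^n\le z^\dagger-m<z_0\); since \(v_n\) is decreasing this is without loss, and the hitting time of \(z_\varepsilon^n\) from \(z_0^n\) then crosses a \emph{bounded} distance \(m\), so no estimate on the divergence rate of \(z_n^*\) is needed. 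You instead split on whether \(a_n(z_n^*)\to 1\) or stays bounded away from \(1\), keep the original starting point \(z_0\), and close the gap by explicitly bounding the drift of the termination cutoff: \(z_n^*\le z_H^n=\log(1/r_{2,n})+O(1)\) via Lemma~\ref{lemma:principalbestreply} and the formula for \(p_H\), so that \(r_{1,n}\mathbb{E}^{NI}[\tau_n]=O(r_{1,n}\log(1/r_{1,n}))\to 0\) (using \(r_{2,n}/r_{1,n}\to\chi\)). Both arguments are valid; the paper's shift avoids any rate calculation and is somewhat slicker, while yours is more explicit and makes transparent exactly where the requirement \(r_{2,n}/r_{1,n}\to\chi\in(0,\infty)\) is used (it rules out \(z_H^n\) escaping faster than \(1/r_{1,n}\)). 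Your Case B (uniform drift bound) and Case A at \(z_n^\varepsilon\) (HJB plus concavity of \(v_n\) on \((-\infty,z_n^*)\) giving \(v_n(z_n^\varepsilon)\le u+\varepsilon c\)) match the paper's ingredients. One tiny note: strictly the dynamic-programming bound reads \(v_n(z_0)\le (u+c)(1-\mathbb{E}^{NI}[e^{-r_{1,n}\tau_n}])+\mathbb{E}^{NI}[e^{-r_{1,n}\tau_n}]\,v_n(z_n^\varepsilon)\); dropping the discount factor in the second term only weakens the bound, so your inequality stands.
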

	
	\begin{proof}
		Take any small \(\varepsilon \in \left( 0,u/2\right)\). For each \(n\in \mathbb{N}\), let 
		%\[
		$z_{\varepsilon }^{n}:=\inf \left\{ z \middle| a_{n}\left( z\right)=1-\varepsilon \right\}.$ 
		%\]
		There are two cases to consider. Let \(z^{\dag }\) be defined and delivered by Claim \ref*{claim_a2}. Every sequence can be
		split into (at most) two subsequence, each one of them satisfying one of the cases below.
		
		\textbf{Case 1 }\(z_{\varepsilon }^{n}\leq z^{\dag }\) for every \(n\in \mathbb{N}\). 
		
		In this case, take \(m\in \mathbb{N}\) such that \(z^{\dag }-m<z_{0}\) and let \(z_{0}^{n}:=z_{\varepsilon }^{n}-m\). Since \(v_{n}(\cdot)\) is decreasing, it suffices to show that \(\limsup_{n \to \infty} v_{n}\left(z_{0}^{n}\right) \leq u\).
		
		Take any \(\zeta >0\). Suppose that the game starts at \(z_{0}^{n}\) and consider the stopping time \(\mathbb{\hat{T}}_{n}\) that stops the play of the game at the first time \(Z_n(t) =z_{\varepsilon }^{n}\) (setting \(\mathbb{\hat{T}}_{n}=+\infty\) if this event does not happen in finite time). Note that \(Z_{n}\left( t\right)\) is a submartingale under the strategy of the noninvestible type and that  \(a_{n}\left( z\right) \leq 1-\varepsilon\) with probability one before \(\mathbb{\hat{T}}_{n}\). Using this observation and $Z_n(t)$'s law of motion \eqref{eq:dZt}, it is straightforward to show that \(\mathbb{\hat{T}}_{n}<+\infty\) with probability one under the strategy of the noninvestible type and that  \(\mathbb{E}_{n}^{NI}\left[e^{-r_{1,n}\mathbb{\hat{T}}_{n}}\right] \rightarrow 1\). Take \(n^{\ast \ast
		}\in \mathbb{N}\) for which \(n>n^{\ast \ast }\) implies \(\mathbb{\ E}_{n}^{NI}\left[e^{-r_{1,n}\mathbb{\hat{T}}_{n}}\right] >1-\varepsilon\). Next notice that, at the state \(z_{\varepsilon }^{n}\), $v_n$ is decreasing and concave (by Corollary \ref{cor:convexity}), and hence
		\begin{align*}
			r_{1,n} v_{n}\left( z_{\varepsilon }^{n}\right) &= r_{1,n}\left[ u+\left( 1-a_{n}\left( z_{\varepsilon}^{n}\right) \right) c\right]+\frac{1}{2}\snr^2\left[ 1-a_{n}\left(z_{\varepsilon }^{n}\right) \right] ^{2}\left[ v^{\prime }\left( a_{n}\left(z_{\varepsilon }^{n}\right) \right) +v^{\prime \prime }\left( a_{n}\left(z_{\varepsilon }^{n}\right) \right) \right]\\
			&\leq r_{1,n}\left[ u+\left( 1-a_{n}\left( z_{\varepsilon}^{n}\right) \right) c\right],
		\end{align*}
		which implies  
		%\[
		$v_{n}\left( z_{n}^\varepsilon\right) \leq u+\varepsilon c,$ 
		%\]
		because $a_n(z_n^\varepsilon)=1-\varepsilon$. 
		It follows that the payoff of the noninvestible type converges to a number not greater than
		%\[
		$\left( 1-\varepsilon \right) \left( u+\varepsilon c\right) +\varepsilon (u+c),$ 
		%\]
		which proves the result as \(\varepsilon\) is arbitrary.
		
		\textbf{Case 2 }\(z_{\varepsilon }^{n}>z^{\dag }\) for every \(n\in \mathbb{N}\).
		
		We may assume that \(z_{0}<z^{\dag }\) for every \(n\) as otherwise the claim follows from Claim \ref*{claim_a2}. Suppose that the game starts at \(z_{0}\) and consider the stopping time \(\mathbb{\hat{T}}_{n}\) that stops the play of the game at \(z^{\dag }\). As in Case 1, we have \(\mathbb{\hat{T}}_{n}<+\infty\) with probability one under the noninvestible-type's strategy and \(\mathbb{E}_{n}^{NI}\left\{e^{-r_{1,n}\mathbb{\hat{T}}_{n}}\right\}\rightarrow 1\). Since \(\limsup_{n \to \infty} v_{n}\left( z^{\dag }\right) \leq \varepsilon < u/2\), the rest of the proof follows the same argument as in Case 1.
	\end{proof}
	
	\begin{claim}
		\label{lemma_a3}
		\(\lim_{n \to \infty} z_{L,n} =-\infty\). 
	\end{claim}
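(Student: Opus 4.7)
The plan is to mirror the structure of Claim \ref{cl:zLn->-infty} from the high-$\snr$ analysis and exploit two ingredients already at hand: the semi-closed-form value of $v_n$ at the left endpoint of the mimicking region given by \eqref{eq:vL}, and the pointwise upper bound on the agent's continuation value supplied by Claim \ref{lemma_a4}. I will proceed by contradiction, showing that if some subsequence of $z_{L,n}$ stays bounded below, then the limit of $v_n$ at some fixed lower point exceeds $u$, contradicting Claim \ref{lemma_a4}.

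First I would observe that, because $r^*$ depends only on the parameters other than $r_1$ and is bounded away from $0$ while $r_{1,n}\to 0$, we have $r_{1,n}<r^*$ for all $n$ sufficiently large. By Claim \ref{cl:ashaper} the equilibrium policy $a_n$ is then hump-shaped and $z_{L,n}$ is well defined and finite. Plugging $r_{1,n}$ into \eqref{eq:vL} and simplifying $\xi_L=\tfrac{-1+\sqrt{1+8r_1/\snr^2}}{2}$ gives
\[
v_n(z_{L,n}) \;=\; u + c\!\left(1 - \tfrac{1+\sqrt{1+8r_{1,n}/\snr^2}}{4}\right) \;\longrightarrow\; u + \tfrac{c}{2}.
\]

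Second, I would suppose toward a contradiction that, passing to a subsequence if necessary, $z_{L,n}\geq -M$ for some finite $M>0$. Picking any fixed $z_0<-M$, I would have $z_0<z_{L,n}$ along the subsequence, and strict monotonicity of $v_n$ (Corollary \ref{cor:convexity}) yields $v_n(z_0)\geq v_n(z_{L,n})$. Taking $\liminf_n$ and combining with the previous step gives $\liminf_n v_n(z_0)\geq u+c/2>u$, contradicting Claim \ref{lemma_a4}, which states that $\limsup_n v_n(z_0)\leq u$ at every fixed $z_0$. Hence $z_{L,n}\to -\infty$.

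I do not anticipate any substantive obstacle here, since the heavy lifting has already been done by the semi-closed-form expression \eqref{eq:vL} and by Claim \ref{lemma_a4}. The only points requiring care are (i) confirming that we are in the hump-shaped regime so that \eqref{eq:vL} applies for large $n$, and (ii) keeping the direction of the monotonicity argument correct, so that the fixed-point upper bound genuinely contradicts the lower limit inherited from $v_n(z_{L,n})\to u+c/2$.
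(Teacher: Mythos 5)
Your proof is correct and follows essentially the same route as the paper, which simply declares that Claim OA.6's proof ``follows verbatim'' from the proof of Claim \ref{cl:zLn->-infty}: both rely on the closed-form \eqref{eq:vL} giving $v_n(z_{L,n})\to u+c/2$, monotonicity of $v_n$, and the upper bound $\limsup_n v_n(z_0)\leq u$ from Claim \ref{lemma_a4} (the patient-limit analogue of Claim \ref{cl:vbound}). Your phrasing in terms of a bounded-below subsequence and a fixed $z_0<-M$ is a minor cosmetic simplification of the paper's convergent-subsequence version, and your preliminary check that $r_{1,n}<r^*$ eventually (since $r^*$ is here a fixed positive constant while $r_{1,n}\to 0$) is the correct way to ensure $a_n$ is hump-shaped so that $z_{L,n}$ is well-defined.
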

	\begin{proof}
		The proof follows verbatim from Claim \ref{cl:zLn->-infty}'s proof.
	\end{proof}

	\begin{lemma}
		\label{lemma_a5}
		For every $z_0< \liminf z_n^*$, we have \(\lim_{n \to \infty} a_{n}\left( z_{0}\right) =1\).
	\end{lemma}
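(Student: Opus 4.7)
The plan is to adapt the argument of Claim \ref{cl:alimitleft} from the large-$\snr$ analysis to the patient-limit regime. Since $r_{1,n} \to 0 < r^*$ for $n$ large, Theorem \ref{uniqueness} gives that each equilibrium policy $a_n$ is hump-shaped with peak at $z_n^*$, so $a_n$ is strictly increasing on $(z_{L,n}, z_n^*)$ and the identity
$$a_n'(z) = [1 - a_n(z)] - \frac{2}{c}[v_n(z) - u]$$
from \eqref{eq:a+a'} holds throughout the mixing region $(z_{L,n}, z_{R,n})$. By hypothesis $z_0 < \liminf_n z_n^*$, and by Claim \ref{lemma_a3} we have $z_{L,n} \to -\infty$, so for any fixed $M > 0$ and all $n$ large the interval $[z_0 - M, z_0]$ lies inside $(z_{L,n}, z_n^*)$, on which the identity applies and $a_n$ is increasing.

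Suppose for contradiction that, along a subsequence, $a_n(z_0) \leq 1 - 2\varepsilon$ for some $\varepsilon > 0$. Monotonicity of $a_n$ on $(z_{L,n}, z_n^*)$ then yields $a_n(z) \leq 1 - \varepsilon$ for every $z \in [z_0 - M, z_0]$ and every sufficiently large $n$. Applying Claim \ref{lemma_a4} at the fixed point $z_0 - M$ gives $\limsup_n v_n(z_0 - M) \leq u$, and since $v_n$ is decreasing (Corollary \ref{cor:convexity}), we upgrade this to the uniform bound $v_n(z) \leq u + \varepsilon c/4$ on $[z_0 - M, z_0]$ for all $n$ large. Substituting into the identity gives $a_n'(z) \geq \varepsilon - \varepsilon/2 = \varepsilon/2$ on the whole interval, so integration yields $a_n(z_0 - M) \leq a_n(z_0) - M\varepsilon/2 \leq 1 - \varepsilon - M\varepsilon/2$. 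Choosing $M > 2/\varepsilon$ makes the right side strictly negative, contradicting $a_n \geq 0$.

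The argument is essentially a direct transcription of Claim \ref{cl:alimitleft}, and there is no serious obstacle. The one subtlety is that, unlike in the large-$\snr$ analysis, we do not have the uniform upper bound on $v_n$ over a compact $p$-interval that Claim \ref{cl:vbound} provided there: $z_n^*$ need not be confined to a fixed compact set of $z$-values. However, because the target point $z_0$ is fixed, applying the pointwise bound of Claim \ref{lemma_a4} at $z_0 - M$ and invoking monotonicity of $v_n$ produces exactly the uniform estimate required on $[z_0 - M, z_0]$. The two essential inputs, $z_{L,n} \to -\infty$ (Claim \ref{lemma_a3}) and $\limsup_n v_n(z) \leq u$ pointwise (Claim \ref{lemma_a4}), are thereby leveraged to force $a_n(z_0) \to 1$.
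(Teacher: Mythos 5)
Your proof is correct and follows the paper's own proof of Lemma \ref{lemma_a5} almost verbatim: fix $M$, use Lemma \ref{lemma_a3} and the hypothesis $z_0<\liminf_n z_n^*$ to place $[z_0-M,z_0]$ inside $(z_{L,n},z_n^*)$ for $n$ large, invoke Claim \ref{lemma_a4} at the fixed left endpoint $z_0-M$ together with the monotonicity of $v_n$ to get the uniform bound $v_n\leq u+\varepsilon c/4$ on that interval, then integrate the ODE for $a_n$ to drive $a_n(z_0-M)$ negative when $M$ is large---which is exactly the argument in the appendix. One minor slip worth flagging: the identity \eqref{eq:a+a'} holds only on $(z_{L,n},z_n^*)$, not on all of $(z_{L,n},z_{R,n})$ (for $z>z_n^*$ the relevant relation is \eqref{eq:a+a'R}, whose right-hand side carries an extra factor $(r_1+\lambda)/r_1$), but since you apply it only on $[z_0-M,z_0]\subset(z_{L,n},z_n^*)$ this has no bearing on the argument.
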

	\begin{proof}
		By Claim \ref*{lemma_a3}, $z_0\in (z_{L,n},z_n^*)$ for $n$ sufficiently large. 
		Then from condition \eqref{eq:a+a'}, we know that \(a_{n}(\cdot )\) eventually satisfies the following differential equation 
		\begin{equation}
			\label{ode}
			a_{n}^{\prime }\left( z\right) = 1-a_{n}(z)-2\left( \tfrac{v_{n}(z)-u}{c}\right).
		\end{equation}
		
		Assume toward a contradiction that we can find a subsequence such that \(\lim_{n\rightarrow \infty }a_{n}\left( z_{0}\right) =\bar{a}<1\). Take \(m\in\mathbb{N}\) such that \(\frac{1-\bar{a}}{4}m>2\). Claim \ref*{lemma_a3} implies that $[z_0-m,z_0]\subset (z_{L,n},z_n^*)$ for $n$ sufficiently large.  Claim \ref*{lemma_a4} and the monotonicity of $v_n(\cdot)$ imply that we can find \(n^{\dag }\in \mathbb{N}\) such that for every \(n\geq n^{\dag}\), for every \(z\in \left[ z_{0}-m,z_{0}\right]\), we have \(2\left( \frac{v_{n}(z)-u}{c}\right) <\frac{1-\bar{a}}{4}\). Given the contradiction assumption, we can find \(n^{\dag }\in \mathbb{N}\) such that for every \(n\geq n^{\dag }\) we have \(a_{n}(z_{0})<\frac{1+\bar{a}}{2}\). Since \(a_{n}\left( \cdot \right)\) is strictly increasing on $[z_0-m,z_0]$, this implies \(a_{n}\left( z\right) <\frac{1+\bar{a}}{2}\) for all \(z\in \left[ z_{0}-m,z_{0}\right]\). So (\ref{ode}) implies \(a_{n}^{\prime }\left( z\right) >\frac{1-\bar{a}}{4}\) for all \(z\in \left[ z_{0}-m,z_{0}\right]\) and hence 
		\[
		a_{n}\left( z_{0}-m\right) <a_{n}\left( z_{0}\right) - \tfrac{1-\bar{a}}{4}m< a_{n}\left( z_{0}\right) -2 < 0,
		\]
		which leads to a contradiction as \(a_{n}\) is bounded below by \(0\).
	\end{proof}

	\begin{lemma}
		\label{lemma_a6}
		For every \(z_{0}<\liminf_{n \to \infty} z_{n}^{\ast}\), we have \(\lim_{n \to \infty} v_{n}\left( z_{0}\right) =u\).
	\end{lemma}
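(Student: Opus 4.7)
The goal is to show $\lim_{n\to\infty} v_n(z_0) = u$ for every fixed $z_0 < \liminf_n z_n^*$. Since Claim \ref*{lemma_a4} already delivers $\limsup_{n} v_n(z_0) \leq u$, the task reduces to establishing the matching lower bound $\liminf_{n} v_n(z_0) \geq u$. The plan is to leverage three ingredients already in hand: the pointwise convergence $a_n \to 1$ from Lemma \ref*{lemma_a5}, the closed-form identity \eqref{eq:a+a'} which on the left mixing branch reads $v_n(z) = u + \tfrac{c}{2}\bigl(1 - a_n(z) - a_n'(z)\bigr)$, and the fact (Corollary \ref{cor:convexity}) that $v_n$ is monotonically decreasing on $\mathbb{R}$. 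The subtle point is that the identity \eqref{eq:a+a'} involves $a_n'$, and pointwise convergence of $a_n$ does not by itself control the derivative at the chosen point $z_0$; I would circumvent this by integrating on a small interval and exploiting that $a_n$ is monotone there, so the increment of $a_n$ (and hence the integral of $a_n'$) is small.

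Concretely, I would fix any $z_2 \in (z_0, \liminf_n z_n^*)$ and note that, by Claim \ref*{lemma_a3} ($z_{L,n} \to -\infty$) together with the choice of $z_2$, for all $n$ large the interval $[z_0, z_2]$ lies inside $(z_{L,n}, z_n^*)$. On this branch $a_n(\cdot) \in (0,1)$ is strictly increasing, $b_n = 0$, and identity \eqref{eq:a+a'} holds pointwise, so
\[
v_n(z) \;=\; u + \tfrac{c}{2}\bigl[1 - a_n(z) - a_n'(z)\bigr], \qquad z\in[z_0,z_2].
\]
Integrating over $[z_0, z_2]$ gives
\[
\int_{z_0}^{z_2} v_n(z)\,dz \;=\; u\,(z_2 - z_0) \;+\; \tfrac{c}{2}\!\int_{z_0}^{z_2}\bigl(1-a_n(z)\bigr)dz \;-\; \tfrac{c}{2}\bigl[a_n(z_2) - a_n(z_0)\bigr].
\]
Since $a_n$ is increasing on $[z_0,z_2]$, we have $1-a_n(z) \leq 1-a_n(z_0)$ for every such $z$, hence the middle term is at most $(z_2-z_0)(1-a_n(z_0))$; likewise $a_n(z_2)-a_n(z_0) \leq 1 - a_n(z_0)$. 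By Lemma \ref*{lemma_a5} applied at $z_0$, $1 - a_n(z_0) \to 0$, so both correction terms vanish in the limit, yielding $\int_{z_0}^{z_2} v_n(z)\,dz \to u(z_2 - z_0)$.

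To conclude, I would invoke that $v_n$ is decreasing, so $v_n(z_0) \geq v_n(z)$ for every $z \in [z_0, z_2]$, whence
\[
v_n(z_0) \;\geq\; \tfrac{1}{z_2 - z_0}\int_{z_0}^{z_2} v_n(z)\,dz \;\longrightarrow\; u.
\]
This gives $\liminf_n v_n(z_0) \geq u$, and combining with $\limsup_n v_n(z_0) \leq u$ from Claim \ref*{lemma_a4} completes the proof.

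The main obstacle here is purely structural: translating pointwise information about $a_n$ into information about $v_n$ through the identity \eqref{eq:a+a'} would ordinarily require bounding $a_n'(z_0)$ at a single point, which pointwise convergence plus smoothness does not yield directly. The integration-and-monotonicity trick sidesteps this by using only the \emph{increment} $a_n(z_2) - a_n(z_0)$, which is automatically small because $a_n$ is increasing and bounded by $1$; the monotonicity of $v_n$ then transfers the average lower bound to a pointwise lower bound at $z_0$. No new analysis of the belief dynamics or additional deviation arguments are needed.
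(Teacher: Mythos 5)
Your argument is correct and is essentially the paper's own approach in a direct form: the paper assumes $\lim v_n(z_0)<u-\varepsilon$, uses \eqref{eq:a+a'} to get $a_n'\geq\varepsilon/c$ on a fixed interval $[z_0,z_0+\vartheta/2]\subset(z_{L,n},z_n^*)$, then integrates to contradict Lemma~\ref*{lemma_a5}. You merely run the same integration of \eqref{eq:a+a'} in the forward direction, solving for $v_n$ rather than $a_n'$, and use monotonicity of $v_n$ to pass from the average to the endpoint; the key inputs (Claims~\ref*{lemma_a3}, \ref*{lemma_a4}, Lemma~\ref*{lemma_a5}, the identity \eqref{eq:a+a'}, and monotonicity of $v_n$) are identical.
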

	
	\begin{proof}
		By Claim \ref*{lemma_a3}, $z_0\in (z_{L,n},z_n^*)$ for $n$ sufficiently large.  Take \(\vartheta >0\) such that \(z_{0}+2\vartheta <\lim \inf z_{n}^{\ast }\) and, taking a subsequence if necessary, assume
		that \(z_{0}+\vartheta <z_{n}^{\ast }\) for each one of its elements.
		
		Assume toward a contradiction, taking a subsequence if necessary, that \(\lim_{n \to \infty} v_{n}\left( z_{0}\right) <u-\varepsilon\), for some \(\varepsilon >0\). Because $v_n(\cdot)$ is strictly decreasing, we may take \(n^{\ast }\) such that \(n\geq n^{\ast }\) implies \(v_{n}\left( z\right) <u-\frac{\varepsilon }{2}\) for all \(z\in \left[ z_{0},z_{0}+\frac{\vartheta }{2}\right]\). In this case, we have 
		\[
		a_{n}^{\prime }\left( z\right) =1-a_{n}(z)-2\left( \tfrac{v_{n}(z)-u}{c}\right) \geq \tfrac{\varepsilon}{c}
		\]
		for every \(z\in \left[ z_{0},z_{0}+\frac{\vartheta }{2}\right]\). This implies that \(\limsup_n a_{n}\left( z_{0}\right) \leq 1-\left( \frac{\vartheta}{2}\right) \frac{\varepsilon }{c}\), contradicting Lemma~\ref*{lemma_a5}.
	\end{proof}

	\begin{lemma}
		\label{lemma_a7}
		Fix a prior \(p_{0}\in \left( 0,1\right)\) and some \(\bar{p}\in \left( p_{0},1\right)\). For each \(r>0\), consider an adapted Markov function \(\alpha_{r}\left( \cdot \right)\) and a belief process defined by substituting $\alpha_r(\cdot)$ into \eqref{eq:dptagent}. Take \(\varepsilon >0\) and let \(\mathbb{\bar{T}}\) be the random time that stops the play in the first time that \(p\geq \bar{p}\). Then we have:
		\[
		\limsup_{r\downarrow 0}\mathbb{E}^{NI}\left\{ r\int_{0}^{\mathbb{\bar{T}}}e^{-rt}\mathbb{I}_{\left\{ \alpha _{r}\left( p_{t}\right) \leq 1-\varepsilon \right\} }dt\right\} =0.
		\]
	\end{lemma}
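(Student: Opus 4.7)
The plan is to work in the log-odds coordinate $Z_t := \log\bigl(p_t/(1-p_t)\bigr)$. Substituting $\alpha_r(\cdot)$ for both the principal's conjecture and the noninvestible type's action in the SDE (\ref{eq:dZt}) gives, under $\mathbb{P}^{NI}$,
\[
dZ_t = \tfrac{1}{2}\snr^2[1-\alpha_r(p_t)]^2 dt - \snr[1-\alpha_r(p_t)] dB_t.
\]
The drift is non-negative, so $Z_t$ is a $\mathbb{P}^{NI}$-submartingale; moreover, on the event $\{\alpha_r(p_t)\leq 1-\varepsilon\}$ this drift is bounded below by $\tfrac{1}{2}\snr^2\varepsilon^2$. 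This is the whole economic content driving the lemma: whenever the agent fails to mimic intensively, the log-odds move upward at a uniformly positive rate, so the process races toward the cutoff $\bar p$.

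Set $\bar z := \log\bigl(\bar p/(1-\bar p)\bigr)$ and let $L_T := \int_0^T \mathbb{I}_{\{\alpha_r(p_t) \leq 1-\varepsilon\}} dt$ denote the occupation time of the ``low-mimicking'' region. For each fixed $t>0$, $t \wedge \mathbb{\bar{T}}$ is a bounded stopping time, and since $|1-\alpha_r|\leq 1$, the stochastic integral $\int_0^{t \wedge \mathbb{\bar{T}}} \snr[1-\alpha_r(p_s)] dB_s$ is a true $L^2$-martingale of zero expectation. Taking expectations in the SDE above yields
\[
\mathbb{E}^{NI}[Z_{t \wedge \mathbb{\bar{T}}}] - Z_0 = \tfrac{1}{2}\snr^2 \, \mathbb{E}^{NI}\!\left[\int_0^{t \wedge \mathbb{\bar{T}}}[1-\alpha_r(p_s)]^2 ds\right] \geq \tfrac{1}{2}\snr^2\varepsilon^2 \, \mathbb{E}^{NI}[L_{t \wedge \mathbb{\bar{T}}}].
\]
Path continuity of $Z_t$ and the definition of $\mathbb{\bar{T}}$ give $Z_{t \wedge \mathbb{\bar{T}}} \leq \bar z$, so the left-hand side is at most $\bar z - Z_0$. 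Letting $t \uparrow \infty$ and invoking monotone convergence produces the key bound, \emph{uniform} in $r$:
\[
\mathbb{E}^{NI}[L_{\mathbb{\bar{T}}}] \leq \frac{2(\bar z - Z_0)}{\snr^2\varepsilon^2}.
\]

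The conclusion is then immediate from the pointwise estimate $e^{-rt} \leq 1$:
\[
\mathbb{E}^{NI}\!\left\{r\int_0^{\mathbb{\bar{T}}} e^{-rt}\mathbb{I}_{\{\alpha_r(p_t) \leq 1-\varepsilon\}} dt\right\} \leq r\,\mathbb{E}^{NI}[L_{\mathbb{\bar{T}}}] \leq \frac{2r(\bar z - Z_0)}{\snr^2\varepsilon^2} \longrightarrow 0 \quad\text{as } r \downarrow 0.
\]
The only subtlety is that $\mathbb{\bar{T}}$ may fail to be integrable --- for instance if $\alpha_r \equiv 1$ on an interval containing $p_0$, so the belief gets stuck below $\bar p$ --- which is why I localize at $t \wedge \mathbb{\bar{T}}$ before passing to the limit. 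Since every integrand inside the expectations is non-negative, this passage is handled cleanly by monotone convergence and requires no uniform integrability argument. I expect this localization step to be the main thing to write carefully; the rest is a direct submartingale computation driven by the positivity of the drift.
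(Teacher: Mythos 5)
Your proof is correct, and it takes a genuinely different and notably cleaner route than the paper's. The paper works in $p$-space, where the drift of $p_t$ under $\theta=NI$ on the low-mimicking set degenerates near $p=0$; they must therefore first truncate below at some level $\zeta>0$ to get a uniform drift bound $\vartheta>0$, argue via the submartingale property of $p_t$ that the excursion below $\zeta$ occurs with small probability, and then control the occupation time of the low-mimicking region via a block decomposition (building an auxiliary process $\xi^*_{r,m}$ that partitions occupation time into unit-length slabs, each contributing at least $\vartheta$ to the telescoping bound $1\geq p_{\mathbb{T}}-p_0$). Your change of variables to log-odds dissolves all of that machinery at once: the drift of $Z_t$ under $\theta=NI$ is $\tfrac{1}{2}\snr^2[1-\alpha_r(p_t)]^2$, which is globally bounded below by $\tfrac{1}{2}\snr^2\varepsilon^2$ on the low-mimicking event, regardless of how close the belief gets to $0$; since $Z_0$ and $\bar z$ are both finite and $Z_{t\wedge\mathbb{\bar T}}$ is bounded above by $\bar z$, a single optional-stopping computation plus monotone convergence gives the uniform occupation-time bound $\mathbb{E}^{NI}[L_{\mathbb{\bar T}}]\leq 2(\bar z - Z_0)/(\snr^2\varepsilon^2)$, and the result follows from $e^{-rt}\leq 1$. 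In short: both proofs exploit the same economic fact --- not mimicking pushes the belief toward the termination region at a positive rate --- but your choice of coordinate makes that rate uniform, so you avoid the $\zeta$-truncation and the block argument entirely. Same conclusion, simpler mechanism. The only expository point worth tightening is that a brief word should be devoted to why $\mathbb{E}^{NI}[Z_{t\wedge\mathbb{\bar T}}]$ is finite (boundedness above by $\bar z$ plus $L^2$-integrability of the stopped stochastic integral and boundedness of the drift term), which you've essentially flagged with the localization but should state outright.
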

	
	\begin{proof}
		Take a small \(\epsilon >0\). Next, take \(\zeta >0\) and let \(\mathbb{T}^{\zeta }\) be the stopping time that stops the play in the first time that the
		posterior reaches \(\left( \zeta ,\bar{p}\right)^{c}\). Using the margingale property of beliefs whose law of motion is given by \eqref{eq:dptagent}, it is straightforward to show that we can take \(\zeta\) small enough so that
		\[
		\mathbb{P}^{NI}\left\{ \mathbb{T}^{\zeta }<\infty ,p(\mathbb{T}^{\zeta})=\zeta \right\} <\frac{\epsilon }{2}.
		\]
		Therefore, we have:
		\begin{align*}
			\mathbb{E}^{NI}\left\{ r\int_{0}^{\mathbb{\bar{T}}}e^{-rt}\mathbb{I}_{\left\{ \alpha _{r}\left( p_{t}\right) \leq 1-\varepsilon \right\}}dt\right\} = \phantom{.}&\mathbb{E}^{NI}\left\{ r\int_{0}^{\mathbb{\bar{T}}}e^{-rt}\mathbb{I}_{\left\{ \inf_{t\leq \mathbb{\bar{T}}}p_{t}\leq \zeta \right\} }\mathbb{I}_{\left\{\alpha _{r}\left( p_{t}\right) \leq 1-\varepsilon \right\}}dt\right\}\\
			& +\mathbb{E}^{NI}\left\{ r\int_{0}^{\mathbb{\bar{T}}}e^{-rt}\mathbb{I}_{\left\{ \inf_{t\leq \mathbb{\bar{T}}}p_{t}>\zeta \right\} }\mathbb{I}_{\left\{ \alpha _{r}\left( p_{t}\right) \leq 1-\varepsilon \right\}}dt\right\}\\
			\leq \phantom{.}&\frac{\epsilon }{2}+\mathbb{E}^{NI}\left\{ r\int_{0}^{\mathbb{\bar{T}}}e^{-rt}\mathbb{I}_{\left\{ \inf_{t\leq \mathbb{\bar{T}}}p_{t}>\zeta \right\} }\mathbb{I}_{\left\{ \alpha _{r}\left( p_{t}\right)\leq 1-\varepsilon \right\} }dt\right\}\\
			\leq\phantom{.}& \frac{\epsilon }{2}+\mathbb{E}^{NI}\left\{ r\int_{0}^{\mathbb{T}^{\zeta }}e^{-rt}\mathbb{I}_{\left\{ \alpha _{r}\left( p_{t}\right)\leq 1-\varepsilon \right\} }dt\right\}\\
			\leq\phantom{.}& \frac{\epsilon }{2}+\mathbb{E}^{NI}\left\{ r\int_{0}^{\mathbb{T}^{\zeta }}\mathbb{I}_{\left\{ \alpha _{r}\left( p_{t}\right) \leq 1-\varepsilon \right\} }dt\right\}.
		\end{align*}
		We must then show that
		%\[
		$\limsup_{r\downarrow 0}\mathbb{E}^{NI}\left\{ r\int_{0}^{\mathbb{T}^{\zeta }}\mathbb{I}_{\left\{ \alpha _{r}\left( p_{t}\right) \leq 1-\varepsilon \right\} }dt\right\} <\frac{\epsilon }{2}.$ 
		%\]
		Let \(\xi_r (t)\) be a function that is \(1\) whenever \(a_r \left( p_{t}\right)\leq 1-\varepsilon\) and \(0\) otherwise. It suffices to show that 
		%\[
		$\limsup_{r\downarrow 0}r\mathbb{E}^{NI}\left\{ \int_{0}^{\mathbb{T}^{\zeta }}\xi_r (t)dt\right\} <\frac{\epsilon }{2}.$ 
		%\]
		
		For that we will consider a different stopping time \(\mathbb{T}^{\ast }\) and a new process \(\xi_r ^{\ast }(t)\) which are built from \(\mathbb{T}^{\zeta }\)
		and \(\xi_r (t)\) in the following way. 
		Whenever \(\mathbb{T}^{\zeta }<\infty\) and \(\int_{0}^{\mathbb{T}^{\zeta }}\xi_r(t)dt\in \left( m-1,m\right)\) for some \(m\in \mathbb{N}\), we will set 
		\[
		\xi_r ^{\ast }(t) :=
		\begin{cases}
			\xi_r (t) & t\leq \mathbb{T}^{\zeta },\\ 
			1 & t>\mathbb{T}^{\zeta }.
		\end{cases}
		\]
		We will also set \(\mathbb{T}^{\ast }:=\mathbb{T}^{\zeta }+\tilde{t}\), where \(\tilde{t}\) is defined by \(\int_{0}^{\mathbb{T}^{\zeta }}\xi_r (t)dt+\tilde{t} =m\). Whenever \(\mathbb{T}^{\zeta }<+\infty\) and \(\int_{0}^{\mathbb{T}^{\zeta }}\xi_r(t)dt = m-1\) for some \(m\in \mathbb{N}\), we set \(\xi_r ^{\ast }(t):=\xi_r (t)\) and \(\mathbb{T}^{\ast }:=\mathbb{T}^{\zeta}\). 
		Clearly it suffices to show that 
		\[
		\limsup_{r\downarrow 0}\mathbb{E}^{NI}\left\{ r \int_{0}^{\mathbb{T}^{\ast }}\xi_r ^{\ast }(t)dt\right\} <\frac{\epsilon }{2}.
		\]
		Next, we build a family of stochastic processes \(\left\{ \xi _{r,m}^{\ast}(t)\right\} _{m\in \mathbb{N}}\) from \(\xi_r ^{\ast }(t)\) by setting
		\[
		\xi _{r,m}^{\ast }(t):=
		\begin{cases}
			\xi_r ^{\ast }(t) & \int_{0}^{t}\xi ^{\ast }(t)dt\in (m-1,m],\\ 
			0 & \text{ otherwise}.
		\end{cases}
		\]
		This immediately implies that 
		%\[
		$r\mathbb{E}^{NI}\left\{ \int_{0}^{\mathbb{T}^{\ast }}\xi_r ^{\ast}(t)dt\right\} =r\sum_{m=1}^{\infty }\mathbb{E}^{NI}\left\{\int_{0}^{\mathbb{T}^{\ast }}\xi _{r,m}^{\ast }(t)dt\right\}.$
		%\]
		
		Next, observe that, conditional on \(\theta = NI\), \(p_{t}\) is a bounded submartingale. Thus, for any adapted function \(\tilde{\xi}\left( t\right)\in \left\{0,1\right\}\) and any stopping time \(\mathbb{\tilde{T}}\), we have 
		\begin{align*}
			1 \geq p_{\mathbb{T}}-p_{0}=\mathbb{E}^{NI}\left[ \int_{0}^{\mathbb{T}}dp_{t}\right] &= \mathbb{E}^{NI}\left[ \int_{0}^{\mathbb{T}}\tilde{\xi}%
			\left( t\right) dp_{t}\right] +\mathbb{E}^{NI}\left[ \int_{0}^{\mathbb{T}%
			}\left( 1-\tilde{\xi}\left( t\right) \right) dp_{t}\right] \\
			&\geq \mathbb{E}^{NI}\left[ \int_{0}^{\mathbb{T}}\tilde{\xi}\left( t\right)
			dp_{t}\right] 
		\end{align*}
		because $\left( 1-\tilde{\xi}\left( t\right) \right) $ being an adapted
		process and $p_t$ being a
		submartingale jointly imply that $\mathbb{E}^{NI}\left[ \int_{0}^{\mathbb{T}%
		}\left( 1-\tilde{\xi}\left( t\right) \right) dp_{t}\right] \geq 0$. As a result, we have
		\begin{equation}\label{ineq1}
			1\geq \mathbb{E}^{NI}\left[ \int_{0}^{\mathbb{T}}\xi _{r}^{\ast }\left(
			t\right) dp_{t}\right] =\sum_{m=1}^{\infty }\mathbb{E}^{NI}\left[ \int_{0}^{%
				\mathbb{T}}\xi _{r,m}^{\ast }\left( t\right) dp_{t}\right].
		\end{equation}
		
		Next, since $0<\zeta <\bar{p}<1,$ from condition \eqref{eq:dptagent} it is straightforward to show that there
		exists a positive constant $\vartheta >0$ such that, for any $m\in \mathbb{N}$, we have
		\begin{equation}
			\label{ineq2}
			\mathbb{E}^{NI}\left\{ \int_{0}^{\mathbb{T}^{\ast }}\xi _{r,m}^{\ast}(t)dp_{t}\right\} \geq \vartheta \mathbb{E}^{NI}\left\{ \int_{0}^{\mathbb{T}^{\ast }}\xi _{r,m}^{\ast }(t)dt\right\}.
		\end{equation}
		
		Therefore, combining (\ref{ineq1}) and (\ref{ineq2}) we have 
		\[
		\sum_{m=1}^{\infty }\mathbb{E}^{NI}\left\{ \int_{0}^{\mathbb{T}^{\ast }}\xi _{r,m}^{\ast }(t)dt\right\} \leq \frac{1}{\vartheta }\sum_{m=1}^{\infty }\mathbb{E}^{NI}\left\{ \int_{0}^{\mathbb{T}^{\ast }}\xi _{r,m}^{\ast }(t)dp_{t}\right\} \leq \frac{1}{\vartheta },
		\]
		implying that \(\sum_{m=1}^{\infty }r\mathbb{E}^{NI}\left\{\int_{0}^{\mathbb{T}^{\ast }}\xi _{r,m}^{\ast }(t)dt\right\} \leq \frac{r}{\vartheta }\), which is smaller than \(\frac{\epsilon }{2}\) when $r$ is sufficiently small.
	\end{proof}

	\begin{lemma}
		\label{lemma_a8}
		\(\lim_{n \to \infty} z_{n}^{\ast }=z^{**}\).
	\end{lemma}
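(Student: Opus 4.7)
The plan is to mirror the proof of Lemma~\ref{lem:znconvergence2} in Section~\ref{app:transparency}, but with the patient-limit machinery (Lemma~\ref{lemma_a6} and Claim~\ref{lemma_a4}) in place of the large-$\snr$ machinery. Since $p_n^* \geq p^{**}$ always (Lemma~\ref{lemma:principalbestreply}), it suffices to show $\limsup_n z_n^* \leq z^{**}$. I would suppose for contradiction that $z_n^* > z^{**} + \varepsilon$ along a subsequence, for some $\varepsilon > 0$, and consider the game starting from $z_0 \in (z^{**}, z^{**} + \varepsilon/2)$, so that $z_0 < z_n^*$ eventually and $R(p(z_0)) > 0$.

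The core step is to establish a patient-limit analogue of the ``no-regime-change'' statement in Lemma~\ref{lem:noregimechange}: letting $\mathbb{T}_n^*$ be the first time that $z_t$ reaches $[z_n^*, \infty)$ under the equilibrium $(a_n, b_n)$ starting from $z_0$, show that $\lim_n \mathbb{E}^{NI}_n\!\left[e^{-r_{1,n} \mathbb{T}_n^*}\right] = 0$. Set $\hat z_n := z_0$ and $\underline z_n := z_0 - \kappa$ for some small $\kappa > 0$, and decompose $v_n(\hat z_n)$ according to whether the process first reaches $\underline z_n$ or $z_n^*$. Lemma~\ref{lemma_a6} supplies $v_n(\hat z_n), v_n(\underline z_n) \to u$, so the decomposition forces the discounted weight on the event of hitting $z_n^*$ to vanish, provided that one also obtains a strict upper bound $\limsup_n v_n(z_n^*) < u$.

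The main obstacle is establishing this strict upper bound on $v_n(z_n^*)$. I would run the two-event decomposition used in the proof of Lemma~\ref{lem:noregimechange}: conditional on the next Poisson arrival $\mathbb{T}_\lambda$, the game either terminates (if $z_{\mathbb{T}_\lambda} > z_n^*$) or continues (if $z_{\mathbb{T}_\lambda} \leq z_n^*$). The expected discounted flow payoff accrued before $\mathbb{T}_\lambda$ is bounded by $\tfrac{r_{1,n}}{r_{1,n}+\lambda}(u+c)$, which vanishes as $r_{1,n} \to 0$, so the Learning Lemma is not needed here (in contrast to Section~\ref{app:transparency}, where $r_1$ is fixed and the analogous flow-cost bound requires it). The continuation value upon falling back below $z_n^*$ is at most $u + o(1)$ by Claim~\ref{lemma_a4}. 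Since $p_t$ is a submartingale under the noninvestible strategy, $\mathbb{P}^{NI}_n[z_{\mathbb{T}_\lambda} > z_n^*] \geq 1/2$, which yields $\limsup_n v_n(z_n^*) \leq u/2 < u$, as desired.

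Finally, once $\mathbb{E}^{NI}_n[e^{-r_{1,n}\mathbb{T}_n^*}] \to 0$ is in hand, Claim~\ref{claim_a0} combined with $r_{2,n}/r_{1,n} \to \chi \in (0, \infty)$ delivers $\mathbb{E}^{NI}_n[e^{-r_{2,n} \mathbb{T}_n^*}] \to 0$. Nonnegativity of $W_n$ together with $p_0\, \mathbb{E}^{NI}_n[e^{-r_{2,n}\mathbb{T}_n^*}]\, w_{NI} \geq (1-p_0)\,\mathbb{E}^{I}_n[e^{-r_{2,n}\mathbb{T}_n^*}]\,|w_I|$ forces the analogous statement conditional on type $I$, so $W_n(p(z_0)) \to 0$. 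But the principal can deviate by stopping at the first Poisson arrival, yielding $\tfrac{\lambda}{r_{2,n}+\lambda} R(p(z_0)) \to R(p(z_0)) > 0$ by the martingale property of $p_t$ and the affinity of $R$. This profitable deviation contradicts equilibrium of $(a_n, b_n)$, completing the proof.
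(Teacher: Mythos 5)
Your proposal takes a genuinely different route from the paper's own proof of Lemma~\ref{lemma_a8}, and the overall structure is sound, but there is a real gap in the decomposition step that you have not closed.

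Your approach transplants the argument from Lemma~\ref{lem:noregimechange} (the ``no-regime-change'' lemma used in the $\snr\to\infty$ analysis) into the patient limit. That is a legitimate strategy, and your observation that the expected discounted flow accrued before the next Poisson arrival is bounded by $\tfrac{r_{1,n}}{r_{1,n}+\lambda}(u+c)\to 0$ is a genuine simplification: it lets you bound $\limsup_n v_n(z_n^*)$ strictly below $u$ directly from Claim~\ref{lemma_a4} and the submartingale property, rather than invoking the Learning Lemma at that spot. The final contradiction step, including the argument that $\mathbb{E}^I_n[e^{-r_{2,n}\mathbb{T}}]\to 0$ follows from $\mathbb{E}^{NI}_n[e^{-r_{2,n}\mathbb{T}}]\to 0$ via nonnegativity of $W_n$, is also correct. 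However, the paper itself takes a shorter path: it applies the Learning Lemma (Lemma~\ref{lemma_a7}) together with $v_n(z^m)\to u$ to conclude directly that $\mathbb{E}^{NI}_n\{e^{-r_{1,n}\mathbb{T}}\}\to 0$, avoiding the no-regime-change machinery entirely.

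The gap is in the sentence ``Lemma~\ref{lemma_a6} supplies $v_n(\hat z_n), v_n(\underline z_n)\to u$, so the decomposition forces the discounted weight on the event of hitting $z_n^*$ to vanish, provided that one also obtains a strict upper bound $\limsup_n v_n(z_n^*)<u$.'' Write $v_n(\hat z_n) = \text{flow}_n + q_n v_n(z_n^*) + q_n' v_n(\underline z_n)$, where $q_n$ is the discounted probability of exiting $(\underline z_n, z_n^*)$ at $z_n^*$ and $\eta_n := 1-q_n-q_n'$ is the discounted time until exit. Since $\text{flow}_n \geq u\eta_n$ always (the agent gets at least $u$ per unit time), the endpoint conditions $v_n(\hat z_n),v_n(\underline z_n)\to u$ and $\limsup v_n(z_n^*)<u$ only yield $q_n\bigl(v_n(z_n^*)-u\bigr)\leq o(1)$, which is vacuous because the left side is already nonpositive. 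To force $q_n\to 0$ one needs an \emph{upper} bound on $\text{flow}_n$ of the form $(u+o(1))\eta_n$, which requires knowing that $a_n\to 1$ uniformly on the transition interval $[\underline z_n, z_n^*]$. In the paper's proof of Lemma~\ref{lem:noregimechange} this is supplied by Claim~\ref{cl:alimitleft}; the patient-limit analogue is Lemma~\ref{lemma_a5}, which you do not invoke. Without it, the conclusion $q_n\to 0$ does not follow from what you have assembled. (There is also a smaller technical point: bounding the continuation value $v_n(z_{\mathbb{T}_\lambda})$ by $u+o(1)$ via Claim~\ref{lemma_a4} requires confining $z_{\mathbb{T}_\lambda}$ to a compact set, since $v_n(z)\to u+c$ as $z\to-\infty$; this needs a tail estimate on how far the belief can travel before an exponential clock rings.) With Lemma~\ref{lemma_a5} added and the tail handled, your route goes through; as written, it is incomplete.
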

	
	\begin{proof}
		Suppose toward a contradiction that we can find a subsequence for which \(\lim_{n \to \infty} z_{n}^{\ast }:=\bar{z}>z^{**}\). Let \(z^{m}\) be the midpoint between \(\bar{z}\) and \(z^{**}\). Take \(\varepsilon >0\). Consider the game starting at \(z^{m }\). Notice that Lemma~\ref*{lemma_a6} implies that \(\lim_{n \to \infty} v_{n}\left( z^{m }\right)=u\), while Lemma~\ref*{lemma_a7} implies that, for each \(\nu >0\), we have 
		\[
		\lim_{n \to \infty} \mathbb{E}_{n}^{NI}\left\{ r_{1,n}\int_{0}^{\mathbb{T}}e^{-r_{1,n}t}\mathbb{I}_{\left\{ a _{n}\left( z_{t}\right) \leq 1-\nu\right\} }dt\right\} =0.
		\]
		These two observations imply that \(\lim_{n \to \infty}\mathbb{E}_{n}^{NI}\left(e^{-r_{1,n}\mathbb{T}}\right) =0\), which, by the same argument as Claim \ref*{claim_a0}'s proof, implies that \(\lim_{n \to \infty} \mathbb{E}^{NI}_n\left( e^{-r_{2,n}\mathbb{T}}\right) =0\); that is, conditional on $\theta=NI$, the principal derives zero discounted payoff from the game. It follows that the principal obtains a limit payoff bounded above by zero at \(z^{m}\). But then, for $n$ sufficiently large, if the stopping opportunity arrives at $z=z^m$, the principal can profitably deviate by stopping the game to obtain \(p\left( z^{m }\right)w_{NI}+\left( 1-p\left( z^{m }\right) \right) w_{I}>0\), a contradiction.
	\end{proof}

	\begin{lemma}
		\label{lemma_a9}
		For every \(z_0>z^{**}\) and \(i=1,2\), we have \(\lim_{n \to \infty} \mathbb{E}_{n}\left\{e^{-r_{i,n}\mathbb{T}}\right\} =1\).
	\end{lemma}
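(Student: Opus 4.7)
The plan is to first reduce the $i=2$ case to the $i=1$ case via Claim \ref{claim_a0} (using $r_{2,n}/r_{1,n}\to\chi\in(0,\infty)$): once we know $\mathbb{E}_n\{e^{-r_{1,n}\mathbb{T}}\}\to 1$, the bounds in Claim \ref{claim_a0} force $\mathbb{E}_n\{e^{-r_{2,n}\mathbb{T}}\}\to 1$. For $i=1$, by Lemma \ref{lemma_a8}, $z_n^{*}\to z^{**}<z_0$, so for $n$ large the prior lies strictly in the principal's stopping region, and the game ``wants'' to terminate at the next Poisson opportunity.

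The core decomposition, using $1-e^{-rt}=r\int_0^t e^{-rs}ds$, is
\begin{equation*}
1-\mathbb{E}_n\{e^{-r_{1,n}\mathbb{T}}\}=r_{1,n}\mathbb{E}_n\!\left[\int_0^{\mathbb{T}} e^{-r_{1,n}t}\mathbb{I}_{\{z_t\geq z_n^{*}\}}dt\right]+r_{1,n}\mathbb{E}_n\!\left[\int_0^{\mathbb{T}} e^{-r_{1,n}t}\mathbb{I}_{\{z_t<z_n^{*}\}}dt\right].
\end{equation*}
For the first (stopping-region) term, Poisson thinning gives $\mathbb{E}_n[\int_0^{\mathbb{T}}\mathbb{I}_{\{z_t\geq z_n^{*}\}}dt]=1/\lambda$, since the aggregate time the belief spends in the stopping region before termination is $\mathrm{Exp}(\lambda)$; so this contribution is at most $r_{1,n}/\lambda\to 0$. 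For the second term I would split according to whether $a_n(z_t)\leq 1-\varepsilon$ (handled by the Learning Lemma \ref{lemma_a7} together with its symmetric I-type analogue, which applies since $1-p_t$ is a submartingale under the investible type) or $a_n(z_t)>1-\varepsilon$ with $z_t<z_n^{*}$ (where the belief has vanishingly small diffusion and drift, so such events become rare).

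The main obstacle is controlling this high-mimicking continuation-region piece: the belief, starting at $z_0>z_n^{*}$, should rarely enter $\{z<z_n^{*}\}$ while $a_n(z)>1-\varepsilon$. This amounts to showing $a_n\to 1$ on a \emph{two-sided} neighborhood of $z^{**}$. On the left this is Lemma \ref{lemma_a5} (or a local version of it). On the right, the analogous statement does not follow from the preceding lemmas; I would derive it from the closed-form expression for $a_n$ on $(z_n^{*},z_{R,n})$ in Claim \ref{cl:Ra<1}, using $v_n(z_n^{*})\to u$ (which follows from Lemma \ref{lemma_a6} and continuity of $v_n$ together with $z_n^{*}\to z^{**}$) along with the value-matching and smooth-pasting conditions at the cutoff to pin down the boundary coefficients $D_1,D_2$. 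Once this two-sided mimicking convergence is in hand, the belief process is nearly frozen near $z^{**}$, excursions from $z_0$ across the cutoff become negligible, all three pieces of the decomposition vanish, and the lemma follows.
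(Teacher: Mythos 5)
Your proposal takes a genuinely different route from the paper. You attempt a direct bound on $1-\mathbb{E}_n\{e^{-r_{1,n}\mathbb{T}}\}$, decomposing the discounted occupation time into the stopping region (killed by a Poisson-thinning bound $r_{1,n}/\lambda$) and the continuation region (killed by the Learning Lemma plus a ``frozen belief'' claim). The paper instead argues by contradiction using \emph{both} sides of the equilibrium: assuming $\lim_n\mathbb{E}_n\{e^{-r_{2,n}\mathbb{T}}\}<1$, principal's optimality at $z_0$ (Step 1) forces a nonnegligible \emph{discounted} dwell time at beliefs just above $z_n^*$, and the agent's optimality (Step 2) rules this out via a profitable deviation to full mimicking, with a Bayes-rule comparison between $Q_n^I$ and $Q_n^{NI}$.

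The clean part of your argument is the bound on the stopping-region piece: conditional on being in the stopping region, termination is governed by an $\mathrm{Exp}(\lambda)$ time-changed clock, so $\mathbb{E}_n[\int_0^\mathbb{T}\mathbb{I}\{z_t\geq z_n^*\}dt]\leq 1/\lambda$, hence that piece is $O(r_{1,n})$. This is correct and nicer than anything explicit in the paper.

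The gap is exactly where you flag it, but it is bigger than you acknowledge. Your entire continuation-region bound rests on establishing $a_n\to 1$ on a fixed \emph{right}-neighborhood of $z^{**}$. From condition \eqref{eq:a+a'R}, for fixed $z>z^{**}$ one has
$a_n(z)+a_n'(z)=1-\tfrac{2(r_{1,n}+\lambda)}{r_{1,n}c}\,v_n(z)+\tfrac{2u}{c}$,
and while $v_n(z)\to 0$ for $z>z^{**}$, the prefactor $(r_{1,n}+\lambda)/r_{1,n}\to\infty$, so the limit is indeterminate; nothing established in the paper (Lemmas \ref{lemma_a3}--\ref{lemma_a6} all live to the \emph{left} of the cutoff) pins it down, and it is not a routine read-off from the $D_1,D_2$ formulas in Claim \ref{cl:Ra<1}, since both the agent's marginal value and $r_{1,n}$ vanish simultaneously. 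Moreover, the paper's asymptotic in Theorem \ref{t:noisehelps} gives $\lim_{\snr\to\infty}a(p;\snr)<1$ for $p>p^*$ in the analogous low-friction regime, so the corresponding right-side convergence can fail in the companion limit; you cannot safely assume it here without proof. The paper's contradiction argument is valuable precisely because it never needs any control of $a_n$ to the right of $z_n^*$: Step 1 uses only $W_n(p(z_n^*))\to 0$, $R(p(z_n^*))\to 0$, and $b_n(z_0)=1$; Step 2 uses the Learning Lemma only on the \emph{left} of $z_n^*$ together with the martingale/Bayes comparison. If you want to repair your direct decomposition, you would have to (i) actually prove the right-side mimicking asymptotic, or (ii) replace it with a stopping-time argument that compares the occupation time of $(z_n^*-\delta, z_n^*)$ to the (already bounded) occupation time of $(z_n^*,z_n^*+\delta)$ via the dynamics' near-symmetry at the cutoff; as written, ``nearly frozen $\Rightarrow$ excursions negligible'' does not close the loop, since a frozen belief entering the continuation region would also leave it slowly.
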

	
	\begin{proof}
		Fix $z_0>z^{**}$. By Claim \ref*{claim_a0}, it suffices to show that \(\lim_{n \to \infty} \mathbb{E}_{n}\left\{ e^{-r_{2,n}\mathbb{T}}\right\} =1\). Taking a subsequence if necessary, assume toward a contradiction that \(\lim_{n \to \infty} \mathbb{E}_{n}\left\{e^{-r_{2,n}\mathbb{T}}\right\} <1\).
		
		Let \(\tilde{\tau}\) be the stopping time that stops the play in the first time that either the state reaches \(\left[ 0,p\left( z_{n}^{\ast }\right)\right]\) or when \(\mathbb{T}\) happens. Let $x=e^{-r_{2,n}t}$. Let \(Q_{n}\) be the distribution of \(p_{\tilde{\tau}}\) and  \(H_{n}\left( \cdot \mid p_{\tilde{\tau}}\right)\) be the conditional distribution of $x$ given $p_{\tilde{\tau}}$.
		
		\medskip
		\noindent{\textbf{Step 1.}} We show that the contradiction assumption implies that, the discounted amount of time that the relationship continues with beliefs close to $p(z_n^*)$ is nonnegligible (i.e., condition \eqref{limit*} holds).
		
		Note that
		\begin{equation*}
			\label{wn}
			W_{n}\left( p\left( z_0\right) \right) =\int_{p\left( z_{n}^{\ast }\right)}^{1}\int_{0}^{1} x \left[ \mathbb{I}_{\left\{ p_{\tilde{\tau}}>p\left(z_{n}^{\ast }\right) \right\} }R\left( p_{\tilde{\tau}}\right) +\mathbb{I}_{\left\{ p_{\tilde{\tau}}\leq p\left( z_{n}^{\ast }\right) \right\}}W_{n}\left( p\left( z_{n}^{\ast }\right) \right) \right] H_{n}\left( dx\mid p_{\tilde{\tau}}\right) dQ_{n}(dp_{\tilde{\tau}}).
		\end{equation*}
		Because \(\lim_{n \to \infty} W_n\left( p\left( z_{n}^{\ast }\right) \right) = \lim_{n\to\infty}R \left(p\left( z_n^*\right) \right) =0\), we have
		\begin{equation}
			\label{wn}
			\limsup_{n\to\infty}W_n(p(z_0))= \limsup_{n\to\infty} \int_{p\left( z_{n}^{\ast }\right) }^{1}\int_{0}^{1} x R\left( p_{\tilde{\tau}}\right) H_{n}\left( dx\mid p_{\tilde{\tau}}\right) Q_{n}(dp_{\tilde{\tau}}).
		\end{equation}
		Moreover, since $R(p^{**})=0$ and $p(z_n^{*})\to p^{**}$ (by Lemma \ref*{lemma_a8}), for every \(\varepsilon >0\) there exists \(\zeta >0\) such that when $n$ is sufficiently large, \(R\left( p\right) >\zeta\) for every \(p>p\left( z_{n}^{\ast }\right)+\varepsilon\). Combining this observation with condition \eqref{wn}, it is easy to show that, for every $\varepsilon>0$, if
		\[
		\limsup_{n \to \infty} \int_{p\left( z_{n}^{\ast }\right) +\varepsilon}^{1}\int_{0}^{1}\left( 1-x\right) H_{n}\left( dx\mid p_{\tilde{\tau}}\right) Q_{n}(dp_{\tilde{\tau}})>0,
		\]
		then we would have \(\limsup_{n \to \infty} W_n\left( p\left( z_0\right) \right) <R\left( p\left(z_0\right) \right)\), which contradicts $b_n(z_0)=1$ (principal's optimality) when $n$ is sufficiently large. Hence, for every \(\varepsilon >0\), we have\begin{equation}
			\label{limit}
			\limsup_{n \to \infty}\int_{p\left( z_{n}^{\ast }\right) +\varepsilon}^{1}\int_{0}^{1}\left( 1-x\right) H_{n}\left( dx\mid p_{\tilde{\tau}}\right) Q_{n}(dp_{\tilde{\tau}})=0.
		\end{equation}
		Therefore, the assumption that \(\lim_{n \to \infty} \mathbb{E}_{n}\left( e^{-r_{2,n}\mathbb{T}}\right) <1\) implies that, for every \(\varepsilon >0\), we have 
		\begin{equation}
			\label{limit*}
			\limsup_{n \to \infty} \int_{p\left( z_{n}^{\ast }\right) }^{p\left( z_{n}^{\ast }\right)+\varepsilon }\int_{0}^{1}\left( 1-x\right) H_{n}\left( dx\mid p_{\tilde{\tau}}\right) Q_{n}(dp_{\tilde{\tau}})>0.
		\end{equation}
		For the remainder of this
		proof, we take \(\varepsilon >0\) such that \(p\left( z_{n}^{\ast }\right)+\varepsilon <\left( \frac{z_0+z^{** }}{2}\right)\). 
		\medskip
		
		\noindent\textbf{Step 2.} We show that condition \eqref{limit*} implies that, the noninvestible type has a profitable deviation by fully mimicking the investible type.
		
		Lemma~\ref*{lemma_a7} implies that if we let \(\mathbb{\bar{T}}_{m}\) be the random time that stops the play in the first time that the
		posterior leaves \(\left( m^{-1},1-m^{-1}\right)\) or that \(\mathbb{T}\) happens, then for each \(\upsilon >0\), we have 
		\[
		\lim_{n \to \infty} \mathbb{E}_{n}^{NI}\left\{ r_{1,n}\int_{0}^{\mathbb{\bar{T}}_{m}}e^{-r_{1,n}t}\mathbb{I}_{\left\{ a_{n}\left( p_{t}\right) \leq 1-\upsilon\right\} }dt\right\} =0.
		\]
		By the martingale property of beliefs we can take \(m\in \mathbb{N}\) large enough to make \(\limsup_{n \to \infty} \mathbb{P}^{NI}\left\{ \inf_{t\leq \mathbb{T}} p_{t}\leq m^{-1}\right\}\) as small as we want. Analogously, we can take \(m\) large enough to guarantee that whenever the posterior starts at \(\left( 1-m^{-1},1\right)\) then \(\limsup_{n \to \infty} \mathbb{P}^{NI}\left\{\inf_{t\leq \mathbb{T}}p_{t}\leq p\left( z_{n}^{\ast }\right) +\varepsilon\right\}\) is as small as we want. These two observations then imply that 
		\begin{equation}
			\label{limit2}
			\limsup_{n \to \infty} \mathbb{E}_{n}^{NI}\left\{ r_{1,n}\int_{0}^{\mathbb{T}}e^{-r_{1,n}t}(1-a_{n}\left( p_t\right)) dt\right\} =0.
		\end{equation}
		
		Next, let $y=e^{-r_{1,n}t}$. For \(\theta\in \{NI,I\}\), let \(Q_{n}^{\theta}\) stand for the distribution of \(p_{\mathbb{T}}\) (not \(p_{\tilde{\tau}}\) as above) given the strategy of type \(\theta\) and let \(H_{n}^{\theta}\left( \cdot \mid p_{\mathbb{T}}\right)\) stand for the conditional distribution of $y$ given \(p_{\mathbb{T}}\) and the strategy of type \(\theta\). On the one hand, using (\ref{limit}) and (\ref{limit2}), it is straightforward to see that, taking a subsequence if necessary, the limit payoff of the noninvestible type from following his equilibrium strategy is given by:
		\begin{equation}
			\label{limit3}
			\lim_{n \to \infty} u\int_{0}^{p\left( z_{n}^{\ast }\right) +\varepsilon}\int_{0}^{1}\left( 1-y\right) H_{n}^{NI}\left( dy\mid p_{\mathbb{T}}\right)Q_{n}^{NI}(dp_{\mathbb{T}})>0,
		\end{equation}
		where the positive sign follows from (\ref{limit*}). On the other hand, the limit payoff of the noninvestible type from following the strategy of the investible type (i.e., always boosting performance with probability $1$) is given by:
		\begin{equation}
			\label{limit4}
			\lim_{n \to \infty} u\int_{0}^{p\left( z_{n}^{\ast }\right) +\varepsilon }\int_{0}^{1}\left( 1-y\right) H_{n}^{I}\left( dy\mid p_{\mathbb{T}}\right) Q_{n}^{I}(dp_{\mathbb{T}})>0.
		\end{equation}
		
		Next, a straightforward application of Bayes rule implies that \(H_{n}^{NI}\left( \cdot \mid p_{\mathbb{T}}\right) = H_{n}^{I}\left( \cdot
		\mid p_{\mathbb{T}}\right) \) for every $p_{\mathbb{T}}\in \left(0,1\right)$. Moreover, using \(p\left( z_{n}^{\ast }\right) +\varepsilon <\left( \frac{z_0+z^{**}}{2}\right)\) and Bayes rule, one can find \(\xi >1\) such that \(Q_{n}^{I}(A)\geq \xi Q_{n}^{NI}(A)\) for every (Borel-measurable) \(A\subset \left[0,p\left( z_{n}^{\ast }\right) +\varepsilon \right]\). Hence, subtracting (\ref{limit3}) from (\ref{limit4}) we obtain an expression at least as large as 
		\begin{equation*}
			\label{limit5}
			\lim_{n \to \infty} \left( \xi -1\right) u \int_{0}^{p\left( z_{n}^{\ast }\right)+\varepsilon }\int_{0}^{1}\left( 1-y\right) H_{n}^{NI}\left( dy\mid p_{\mathbb{T}}\right) Q_{n}^{NI}(dp_{\mathbb{T}})>0.
		\end{equation*}
		This implies that the noninvestible type can profitably deviate by fully mimicking, which leads to a contradiction and concludes the proof.
	\end{proof}
	
	\begin{proof}[Proof of Theorem \ref{t:patientlimit}]
		First, for the agent, Lemmas \ref*{lemma_a6} and \ref*{lemma_a8} tell us that $\lim_{n\to\infty}v_n(z)=u$ for all $z<z^{**}$, and Lemma  \ref*{lemma_a9} implies that  $\lim_{n\to\infty}v_n(z)=0$ for all $z>z^{**}$.
		
		Next, for the principal, we first argue  that $W_n(\cdot)$ converges pointwise to $\max\{0,R(\cdot)\}$. In light of Corollary \ref{cor:convexity}, we continuously extend $W_n(\cdot)$ from $(0,1)$ to $[0,1]$ by setting $W_n(0)=0$ and $W_n(1)=\frac{\lambda}{r_{2,n}+\lambda}w_{NI}$. Lemma \ref*{lemma_a9} implies that $\lim_{n\to\infty} W_n(p(z))= R(p(z))$ for all $z>z^{**}$. We now show that $\lim_{n\to\infty}W_n(p(z))=0$ for all $z\leq z^{**}$. To see this, fix any $z\leq z^{**}$ and take any $\varepsilon>0$. Since $R(p(z^{**}))=0$, there exists $\delta>0$ such that $R(p(z^{**})+\delta)<\frac{\varepsilon}{2}$. But then, we can find $n^*$ such that for every $n>n^*$, $W_n(p(z^{**})+\delta)<\varepsilon$. Since $W_n(\cdot)$ is increasing, it follows that $W_n(p(z))<\varepsilon$ for every $n>n^*$. So we must have $\lim_{n\to\infty}W_n(p(z))=0$, because $\varepsilon$ is arbitrary and $W_n(\cdot)$ is bounded below by $0$. 
		
		To show uniform convergence, note that for any fixed $n$, $W_n(\cdot)$ is bounded below by $\frac{\lambda}{r_{2,n}+\lambda}\max\{0,R(\cdot)\}$ such that $W_n(1)=\frac{\lambda}{r_{2,n}+\lambda}R(1)$. Because $W_{n}(\cdot)$ is convex and increasing, $|W_{n}'(\cdot)|$ is bounded above by $(w_{NI}-w_I)$, and hence $\{W_{n}\}_n$ is uniformly equicontinuous. Since $W_n$ converges pointwise to $\max\{0,R\}$, invoking Arzel\`{a}-Ascoli theorem we conclude that the convergence is uniform.
	\end{proof}
	
	%\bibliographystyle{abbrvnat}
	%\bibliography{References}
	
\end{document}